\pgfplotsset{compat=1.18}
\newcommand{\prvar}{\phi}
\newcommand{\prstate}[1]{\ket{\prvar_{#1}}}
\newcommand{\tshovar}{\psi} 
\newcommand{\tshostate}[1] {\ket{\Psi_{#1}}}%
\newcommand{\bratshostate}[1] {\bra{\Psi_{#1}}}%
\newcommand{\dshostate}[1] {\ket{\tshovar_{#1}}} %
\newcommand{\bradshostate}[1]{\bra{\tshovar_{#1}}} %
\newcommand{\dshermite}[1] {\ket{\psi_{#1}}}
\newcommand{\apxdshostate}[1]{\ket{\overline{\tshovar_{#1}}}} %
\newcommand{\braapxdshostate}[1]{\bra{\overline{\tshovar_{#1}}}} %
\newcommand{\sommaconstant}{\gamma}
\newcommand{\wt}{\widetilde}
\newcommand{\wh}{\widehat}
\definecolor{newcolor}{hsb}{0.6,1,0.75}
\definecolor{lowcol}{RGB}{6, 172, 211}
\definecolor{midcol}{RGB}{188, 207, 183}    
\definecolor{highcol}{RGB}{220, 83, 24}     %
\setlist{
	listparindent=\parindent,
	parsep=0pt,
}
\crefname{equation}{}{}
\newtheorem{theorem}{Theorem}
\newtheorem{lemma}[theorem]{Lemma}
\newtheorem{claim}[theorem]{Claim}
\newtheorem{proposition}[theorem]{Proposition}
\newtheorem{corollary}[theorem]{Corollary}
\newtheorem{conjecture}[theorem]{Conjecture}
\newtheorem{fact}[theorem]{Fact}
\crefname{fact}{Fact}{Facts}
\theoremstyle{definition}
\newtheorem{definition}[theorem]{Definition}
\let\Pr\relax
\DeclareMathOperator*{\Pr}{Pr}
\DeclareMathOperator*{\E}{\mathbb{E}}
\DeclareMathOperator*{\Ex}{\E}
\DeclareMathOperator*{\Var}{Var}
\DeclareMathOperator{\poly}{poly}
\DeclareMathOperator{\polylog}{polylog}
\DeclareMathOperator{\sgn}{sgn}
\DeclareMathOperator{\dist}{dist}
\newcommand{\tracedistance}[1]{\dist_{\text{tr}}\left(#1\right)}
\mathchardef\mhyphen="2D %
\newcommand{\calD}{\mathcal{D}}
\newcommand{\calN}{\mathcal{N}}
\newcommand{\calS}{\mathcal{S}}
\renewcommand{\abs}[1]{\left\lvert #1 \right\rvert}
\renewcommand{\norm}[1]{\left\lVert #1 \right\rVert}
\renewcommand{\commutator}[1]{\left[ #1 \right]}
\renewcommand{\anticommutator}[1]{\left\{ #1 \right\}}
\DeclareRobustCommand\bfseries{%
  \not@math@alphabet\bfseries\mathbf
  \fontseries\bfdefault\selectfont\boldmath}
\newcommand{\R}{\mathbb{R}}
\newcommand{\C}{\mathbb{C}}
\newcommand{\N}{\mathbb{N}}
\newcommand{\Z}{\mathbb{Z}}
\newcommand{\eps}{\varepsilon}
\newcommand{\pmone}{\{-1,+1\}}
\newcommand{\gaussianpdf}{\nu}
\newcommand{\distortion}{\kappa}
\def\rD{{\rm d}}
\def\ri{{\rm i}}
\def\cO{\mathcal{O}}
\def\cU{\mathcal{U}}
\def\cV{\mathcal{V}}
\def\Pr{\mathrm{Pr}}
\def\poly{\mathrm{poly}}
\def\one{{\mathchoice {\rm 1\mskip-4mu l} {\rm 1\mskip-4mu l} {\rm
1\mskip-4.5mu l} {\rm 1\mskip-5mu l}}}
\newcommand{\position}{\hat{x}}
\newcommand{\momentum}{\hat{p}}
\newcommand{\hamiltonian}{\hat{\mathrm{H}}}
\newcommand{\discreteposition}{\overline x}
\newcommand{\discretemomentum}{\overline p}
\newcommand{\discretehamiltonian}{\overline{\mathrm{H}}}
\newcommand{\expbound}[1]{\exp(-#1)}
\newcommand{\degree}{D}
\newcommand{\lowenergy}{N}
\newcommand{\lowenergyprojector}{\Pi_{\lowenergy}}
\newcommand{\mediumenergy}{\lowenergy^\prime}
\newcommand{\mediumenergyprojector}{\Pi_{\mediumenergy}}
\newcommand{\nestedcommutator}[3]{\commutator{#1,#2}_{#3}}
\newcommand{\poisson}[1]{\text{Poisson}_{#1}}
\title{
    \huge
    Quantum Hermite Transform and\\ Gaussian Goldreich-Levin
}
\title{Efficient Quantum Hermite Transform}
\author[1,2,*]{Siddhartha~Jain}
\author[2,*]{Vishnu~Iyer}
\author[1]{Rolando~D.~Somma}
\author[3,4]{Ning~Bao}
\author[1]{Stephen~Jordan}
\affil[1]{\footnotesize Google Quantum AI, Venice, CA 90291, United States} 
\affil[2]{The University of Texas at Austin, Austin, TX 78712, United States}
\affil[3]{Northeastern University, Boston, MA 02115, United States}
\affil[4]{Brookhaven National Laboratory, Upton, NY 11973, United States}
\affil[*]{These authors contributed equally to this work. \vspace{-24pt}}
\date{}
\renewcommand{\th}{^{\textrm{th}}}
\def\rD{{\rm d}}
\def\ri{{\rm i}}
\def\cO{\mathcal{O}}
\def\cU{\mathcal{U}}
\def\cV{\mathcal{V}}
\def\Pr{\mathrm{Pr}}
\def\poly{\mathrm{poly}}
\def\one{{\mathchoice {\rm 1\mskip-4mu l} {\rm 1\mskip-4mu l} {\rm
1\mskip-4.5mu l} {\rm 1\mskip-5mu l}}}
\def\dqc1{\textsc{DQC1}}
\newcommand{\byref}[1]{\text{(#1)}\nonumber}
\begin{document}

\maketitle

\begin{abstract}
We present a new primitive for quantum algorithms
that implements a discrete Hermite transform efficiently, in time that depends logarithmically in both the dimension and the inverse of the allowable error. 
This transform, which maps basis states to states whose amplitudes are proportional to the Hermite functions,  
can be interpreted as the Gaussian analogue of the Fourier transform.
Our  algorithm is based on 
a method to exponentially fast forward the evolution of the quantum harmonic oscillator, which significantly improves over prior art.
We apply this Hermite transform to give examples of provable quantum query advantage in property testing and learning.
In particular, we show how to efficiently test the property of being close to a low-degree in the Hermite basis when inputs are sampled from the Gaussian distribution, and how to solve a Gaussian analogue of the Goldreich-Levin learning task efficiently. We also comment on other potential uses of this transform to simulating time dynamics of quantum systems in the continuum.
\end{abstract}

\newpage
\setlength{\cftbeforesecskip}{2pt} 
\setlength{\cftbeforesubsecskip}{1pt} 
\setcounter{tocdepth}{2}
\tableofcontents

\newpage
\section{Introduction}

Hermite polynomials and their corresponding Hermite functions are central in physics, signal analysis, statistics, and beyond. For example, Hermite functions are the well-known eigenfunctions of one of the fundamental models in quantum mechanics known as the quantum harmonic oscillator (QHO)~\cite{griffiths2018introduction}.  Hermite functions can also be used to approximate arbitrary functions or probability distributions, since they form an orthonormal basis of functions weighted by a Gaussian envelope (cf.~\cite{blinnikov1998expansions}). 
Therefore, an efficient Hermite transform that performs a change of basis 
to the Hermite functions would provide a fundamental tool to quickly switch into the ``natural'' basis of many quantum systems and functions.
Moreover, the number of primitives from which most quantum algorithms are constructed, such as the quantum Fourier transform (QFT) and amplitude amplification, is relatively limited~\cite{NCbook}. This scarcity restricts the range of problems for which quantum algorithms offer speedups, thereby motivating the search of other novel quantum primitives. Given that the QFT represents a quantization of a pre-existing classical numerical transform, it is logical to attempt to quantize other classical transforms, such as the Hermite transform \cite{LRP08}, and to consider its applications in quantum computing.

This work is concerned with a quantum Hermite transform (QHT): 
constructing a quantum circuit that maps basis states into `Hermite' states. These Hermite states are superposition states with amplitudes given by the Hermite functions. While formally a Hermite transform would be defined in the continuum (i.e., a mapping $L^2(\mathbb R) \rightarrow L^2(\mathbb R)$), 
we consider a discrete and finite-dimensional version that can be implemented by a quantum circuit. 

For a QHT to be useful and provide quantum advantage, we require it to be efficient and hence implemented using a quantum circuit of complexity that is logarithmic in the dimension of the Hilbert space on which it acts. It is also desirable to 
achieve complexity scaling polylogarithmically in the inverse of the allowable error. Remarkably, we are able to achieve this: our main result is a quantum circuit $U$ of $\cO((\log N + \log 1/\eps)^3 \times \log(1/\eps))$ gates that performs the desired basis change into the Hermite states labeled by $n \in \{0,\ldots,N\}$ to within additive error $\eps$. Note that a classical discrete Hermite transform would require time polynomial in $N$ to be implemented. Our result is then comparable to the QFT, in that the QFT also transforms into another `natural' basis given by Fourier states, in time logarithmic in $N$.

Throughout this manuscript, we primarily consider the one-dimensional Hermite transform, acting on an $N$-dimensional Hilbert space. That is, we consider the position basis on a line, discretized as a one-dimensional lattice of $N$ points, and convert to the the Hermite functions of degree zero to $N-1$, which are also interpretable as the $N$ lowest-energy eigenstates of the quantum Harmonic oscillator. In some contexts, particularly property testing of oracles, it is also useful to apply the $n$-dimensional Hermite transform. This is the $n$-fold tensor power of the one-dimensional Hermite transform, much as the $n$-dimensional Fourier transform is the $n$-fold tensor power of the one-dimensional Fourier transform.

The existence of an efficient QHT is also a fundamental algorithmic question in Harmonic analysis, since it allows one to perform Hermite sampling. That is, given a quantum state $\sum_{\vec{x}} f(\vec{x}) \ket{\vec{x}}$ for some function $f:\R^n \to \R$, one can use the ($n$-dimensional) QHT to sample from the probability distribution $\mathrm{Pr}(\vec{k}) = |\wh{f}(\vec{k})|^2$, where $\wh{f}(\vec{k})$ is the coefficient of the Hermite polynomial with index $\vec{k}$ in the expansion.  
In the finite field setting of QFT, the analogous Fourier sampling subroutine features in many of the most prominent quantum algorithms~\cite{Simon97,Shor97,Regev09,Aar10,RazTal22,YZ24,jordan2025}.  Using the QHT we expect that one can design continuous analogues of problems with quantum query advantage like Forrelation~\cite{AA18,wu2020stochasticcalculusapproachoracle}. We also expect QHT to be useful for learning geometric concepts over the Gaussian distribution~\cite{KlivansOS08}, and simulating nonlinear differential equations with Gaussian noise which wecan be done by projecting onto low-degree Hermite polynomials~\cite{bravyi2025quantumsimulationnoisyclassical}. A recent work of Marwaha et al.~\cite{marwaha2025complexitydecodedquantuminterferometry} also showed that the recent DQI algorithm~\cite{jordan2025} can be thought of as performing a \emph{Kravchuk} transform, which is a particular discretization of the Hermite transform. While closely related, we do not believe our work easily gives an efficient Kravchuk transform, which would scale logarithmically in the degree.

\subsection{Problem statement}

In the continuum, a Hermite transform is a basis transformation that maps into the Hermite functions defined as
\begin{align}
    \psi_n(x):=\frac {(-1)^n} {\sqrt{2^n n! \sqrt \pi}}
    e^{-x^2/2}H_n(x) \;,
\end{align}
where $H_n(x)$ is the $n\th$ Hermite polynomial\footnote{There are two widely used normalizations for the Hermite polynomials: the physicist's Hermite polynomials and the probabilist's Hermite polynomials. We use the physicist's normalization throughout.}.
The goal of a QHT is then to construct a quantum circuit that performs a similar basis transformation but in a discrete, finite dimensional space. To this end, we introduce the Hermite states
\begin{align}
\label{eq:dshermite}
    \dshermite{n}:=\left( \frac{2\pi}M\right)^{1/4}\sum_{j=-M/2}^{M/2-1} \psi_n(x_j) \ket j \in \mathbb C^M \;,
\end{align}
where $x_j:=j \sqrt{2\pi/M}$ and $M=2^m$ is the dimension. These can be interpreted as discretizations of the Hermite functions, where the discretization size is 
$\sqrt{2\pi/M}$. We labeled the $M$ basis states of $\mathbb C^M$ from $-M/2$ to $M/2-1$ for convenience. Note that the above states both discretize and truncate the Hermite functions. The above choice of lattice spacing ensures that the truncation only cuts off a region in which the first $M$ Hermite functions all have exponentially decaying tails.

The problem of a QHT is to find a quantum circuit that maps $\ket n \mapsto \dshermite{n}$. By taking the gate-by-gate inverse one also obtains a quantum circuit for the inverse quantum Hermite transform, of the same complexity.

\paragraph{Quantum Hermite Transform Problem.}
Let $N>0$ be the dimension 
and $\eps >0$ the error. The goal of a QHT is to construct a quantum circuit $U$ that performs the map
\begin{align}
\label{eq:QHT}
    \sum_{n=0}^{N-1} \alpha_n \ket n \mapsto 
     \sum_{n=0}^{N-1} \alpha_n \dshermite{n} 
\end{align}
within additive error $\eps$,
where $\alpha_n \in \mathbb C$ are arbitrary and satisfy $\sum_{n=0}^N |\alpha_n|^2=1$.
The quantum circuit acts on a Hilbert space of dimension $M \ge N$.

\vspace{0.2cm}
In principle, such transformations cannot be exactly unitary since the states $\dshermite{n}$ are not exactly orthonormal. However, for sufficiently large $M=\Omega(N)$, they can be shown to be almost orthonormal within error that is exponentially small in $M$, due to~\cite{somma2016}. Hence, by fixing the error $\eps$ and the dimension $N$ where the QHT occurs, we can choose $M$ properly and satisfy~\cref{eq:QHT} within the precision requirements. For our specific construction, we will require $M=\poly(N,1/\eps)$. 

\subsection{Summary of results}

Our QHT relies on a novel result to fast-forward the evolution of the QHO. 
In this context, a bounded Hamiltonian $H$ is said to be fast-forwardable if one can construct a quantum circuit $W$ that approximates the time-evolution operator $e^{-\ri H t}$ for time $t$ as $\| W - e^{-\ri  H t}\| \leq \eps$, using a number of quantum gates that scales sublinearly in $\|H\| t$~\cite{AA17,GSS21}. Such result would bypass a number of no-fast-forwarding theorems that show that this is not always possible \cite{BACS07, CK10, AA17, HHK21}. 
However, the lower bound $\Omega (\|H\| t)$
applies only to the worst case and, in contrast, 
quantum systems like the QHO can indeed be fast-forwarded.  In the continuum, the QHO models a particle in a quadratic potential with Hamiltonian 
$\hamiltonian = (\position^2 + \momentum^2)/2$, where
$\position$ is the position operator and $\momentum$ is the momentum operator. 
We introduce a discretized, $M$-dimensional version of the QHO with Hamiltonian $\discretehamiltonian$, and show the following.

\begin{restatable}[Exponential fast-forwarding of the QHO]{theorem}{fastforwarding}
\label{thm:fastforwarding}
Let $N$ be the target dimension for fast-forwarding, 
 $\discretehamiltonian \in \mathbb C^{M \times M}$ be the Hamiltonian of a discrete QHO
, and $t \in [-\pi, \pi]$.
Then, we can choose $M =\Theta( N \log N)$ such that the evolution operator $e^{-\ri \discretehamiltonian t}$
   can be simulated within error $\cO(\exp(-\lowenergy/10))$  in the subspace spanned by the first $N$ eigenvectors of $\discretehamiltonian$ using ${\cO}(\log^2 \lowenergy)$ gates.
\end{restatable}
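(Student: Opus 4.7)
The plan is to exploit the fact that for the continuous QHO $\hamiltonian=(\position^2+\momentum^2)/2$, the evolution $e^{-\ri\hamiltonian t}$ \emph{is} the fractional Fourier transform of angle $t$, and admits an exact closed-form decomposition
\begin{align}
e^{-\ri\hamiltonian t} \;=\; e^{-\ri\,a(t)\,\position^2}\,F\,e^{-\ri\,b(t)\,\momentum^2}\,F^\dagger\,e^{-\ri\,a(t)\,\position^2},
\end{align}
where $F$ is the Fourier transform and $a(t),b(t)$ are explicit trigonometric coefficients. For $t$ near the poles of $\cot$ one switches to a sibling decomposition obtained by pre/post-composing with $F$; a finite atlas of $\bO(1)$ such identities covers all $t\in[-\pi,\pi]$. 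So in the continuum the evolution is already ``fast-forwarded'' by a constant-depth circuit of Gaussian-type operations. My plan is to port this identity to the discrete lattice and show that the resulting operator approximates $e^{-\ri\discretehamiltonian t}$ exponentially well on the low-energy subspace.

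I would define the candidate circuit $W(t)$ by replacing $\position\to\discreteposition$, $\momentum\to\discretemomentum$, and $F$ by the $M$-dimensional QFT, keeping the same scalars $a(t),b(t)$. The task is then to bound $\|(W(t)-e^{-\ri\discretehamiltonian t})\lowenergyprojector\|$. The argument would factor through the continuous Hermite functions $\psi_n$: by the spectral result of~\cite{somma2016}, taking $M=\Theta(N\log N)$ guarantees that the low-energy eigenvectors of $\discretehamiltonian$ differ from the lattice-sampled $\dshermite{n}$ by $\expbound{\Omega(N)}$. Next, because $\psi_n$ for $n<N$ are tightly concentrated both in position and momentum inside the fundamental lattice cell, the discrete operators $\discreteposition$, $\discretemomentum$ and the QFT act on $\dshermite{n}$ almost identically to their continuum counterparts on $\psi_n$, up to aliasing and truncation errors that are super-exponentially small in $N$ by Poisson summation and Gaussian tail bounds. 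Chaining these comparisons through the five factors of the decomposition, and using the continuum identity that $e^{-\ri\hamiltonian t}\psi_n = e^{-\ri(n+1/2)t}\psi_n$, yields the claimed error $\cO(\expbound{\lowenergy/10})$ on $\lowenergyprojector$.

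For the gate count, the $M$-dimensional QFT costs $\bO(\log^2 M)$ elementary gates, and a quadratic phase $e^{-\ri\lambda\discreteposition^2}$ is diagonal in the computational basis, assigning phase $e^{-\ri\lambda(2\pi/M)j^2}$ to $\ket j$; expanding $j^2$ as a sum over bit pairs gives an implementation with $\bO(\log^2 M)$ controlled phases. A constant number of such factors then costs $\bO(\log^2 M)=\bO(\log^2 \lowenergy)$ gates since $M=\Theta(\lowenergy\log\lowenergy)$. The main obstacle is clearly the error analysis: the discrete operators $\discreteposition$, $\discretemomentum$, and the QFT do not satisfy the Weyl commutation relations exactly, so the continuum identity cannot be transplanted verbatim. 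The heart of the proof is the comparison lemma showing that, on Hermite-localized states, every error term introduced by discretization, truncation, and aliasing is $\expbound{\Omega(N)}$ provided $M$ is chosen slightly super-linear in $N$. This is exactly the regime in which the Gaussian decay of the Hermite functions in both conjugate variables kicks in, so the quantitative step, while delicate, reduces to controlling Gaussian tails and Poisson sums.
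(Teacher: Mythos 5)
Your factorization is the same three-exponential split the paper uses (\cref{thm:factor}), modulo Fourier conjugation and a typo: as written, $F\,e^{-\ri b\momentum^2}\,F^\dagger$ collapses to a position phase and the displayed product is just $e^{-\ri(2a+b)\position^2}$, which cannot equal $e^{-\ri\hamiltonian t}$; you presumably intend the $\position$--$\momentum$--$\position$ (or $\momentum$--$\position$--$\momentum$) split with the Fourier transforms used only as the implementation of the $\momentum^2$ factor. The genuine difference is in how you control the discretization error. The paper differentiates the candidate product $\widetilde{U}(t)$, compares $\widetilde{U}^{-1}\dot{\widetilde U}$ to $-\ri\discretehamiltonian$ via a BCH expansion, and then has to bound the nested-commutator tail $\sum_{t\ge 3}\nestedcommutator{\discreteposition^2}{\discretemomentum^2}{t}/t!$ on $\lowenergyprojector$ --- the whole point of \cref{sec:nested commutator} and the two-projector ($\lowenergyprojector$ vs.\ $\mediumenergyprojector$) trick is that this tail blows up naively like $M^{t}$. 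Your route instead compares the product to its continuum counterpart factor by factor on the Hermite-localized states, which sidesteps BCH entirely and exploits a structural fact you don't quite make explicit but should: the factors $e^{-\ri\lambda\discreteposition^2}$ are \emph{exact} pointwise discretizations of the continuum chirp, so the only error source is the two $e^{-\ri a\discretemomentum^2}$ factors, which are continuous Fourier conjugations replaced by DFT conjugations; the error there is genuine aliasing/truncation that you must bound.

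The part you wave at and would need to actually prove is the chaining lemma. After one chirp the intermediate state $e^{-\ri a\position^2}\psi_n$ is no longer a finite Hermite combination, and its analytic continuation grows like $\abs{e^{-\ri a z^2}}=e^{2a\,\mathrm{Re}(z)\mathrm{Im}(z)}$ in the complex strip that the exponentially-convergent trapezoid rule needs (the paper's analogue is \cref{eq:hermite-strip-bound}); that growth partially offsets the Gaussian decay, so one must re-derive a strip bound for the chirped functions and check the resulting Poisson-summation/aliasing error is still $\exp(-\Omega(M))$. For $\abs{a}\le 1/2$ this does go through (the extra growth is $e^{O(L^2)}$ against a truncation gain $e^{-\Omega(LM)}$ with $L=\Theta(\sqrt M)$), and the per-factor errors just add, so the approach is viable and arguably more elementary than the paper's commutator machinery --- but your proposal currently asserts these bounds rather than establishing them, and the chirp-strip estimate is the one step that is genuinely delicate and would need to appear explicitly.
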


Note that we don't lose any generality due to the condition $t \in [-\pi,\pi]$ since the eigenvalues of the Harmonic oscillator are of the form $n+1/2$ with integer $n$, and hence $e^{-iH(t + 2 \pi)} = -e^{-iHt}$. Since $\|\discretehamiltonian\|=\Theta(M)$ and $N = \poly(M)$,
\cref{thm:fastforwarding} shows that exponential fast-forwarding of the QHO is possible. 
This result significantly improves over prior art~\cite{somma2016}, where a quantum algorithm for simulating  $e^{-\ri \discretehamiltonian t}$ of complexity $\sim \exp(\sqrt{\log N})$ was given, only describing a form of subexponential fast-forwarding. \cref{thm:fastforwarding} is based on a 
succinct factorization of the evolution operator 
in the continuum that can be extended to the evolution operator in the finite-dimensional case, and contrasts
the method in~\cite{somma2016} based on product (Trotter-Suzuki) formulas.
The complexity in~\cref{thm:fastforwarding}
is dominated by that of multiplication using coherent arithmetic, which is $\cO(\log^2 N)$ using schoolbook 
multiplication, and might be improved using sophisticated techniques.

The ability to fast forward a Hamiltonian also
allows one to perform tasks like quantum phase estimation (QPE) more efficiently~\cite{AA17,GSS21}.  
Our efficient QHT relies on fast QPE and hence builds upon~\cref{thm:fastforwarding}.

\begin{theorem}[Efficient QHT, informal]
\label{thm:informal qht}
 There exists a quantum circuit of complexity 
 $\cO( (\log N + \log(1/\eps))^3\times \log(1/\eps))$ 
 that can perform an $\eps$-approximate QHT of dimension $N$. 
\end{theorem}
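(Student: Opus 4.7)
The plan is to reduce the QHT to fast quantum phase estimation (QPE) on the discrete QHO Hamiltonian $\discretehamiltonian$, exploiting the fact that (i)~the Hermite states $\dshermite{n}$ are $\eps$-approximately the eigenstates of $\discretehamiltonian$ for $M=\poly(N,1/\eps)$ as in~\cite{somma2016}, and (ii)~its eigenvalues $n+1/2$ lie on an integer arithmetic progression with unit gap, so QPE with $m=\cO(\log N+\log(1/\eps))$ precision ancillas is exact on these eigenstates.

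First, I would build fast QPE by plugging the fast-forwarded evolution $e^{-\ri\discretehamiltonian t}$ of \cref{thm:fastforwarding} into standard Kitaev QPE. Each dyadic-time evolution costs $\cO(\log^2 M)=\cO((\log N+\log(1/\eps))^2)$ gates, so running QPE with $m$ ancillas gives a unitary
\[
U_{\mathrm{QPE}}\;:\;\dshermite{n}_A\otimes\ket{0}_B\ \longmapsto\ \dshermite{n}_A\otimes\ket{n}_B
\]
at cost $\cO((\log N+\log(1/\eps))^3)$. Its inverse $U_{\mathrm{QPE}}^\dagger$ then cleans the label register back to $\ket{0}$ whenever the state register holds a Hermite eigenstate.

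Second, I would bootstrap this labeling unitary into a basis change. The discretized Gaussian $\dshermite{0}$ is preparable directly in $\polylog M$ gates. For input $\ket{n}_B$, the circuit would: (a)~prepare a seed state in register $A$ with $\Omega(1)$ overlap on $\dshermite{n}$, built from $\dshermite{0}$ together with an $n$-controlled shift implemented by fast-forwarded $e^{-\ri\discretehamiltonian t}$; (b)~carry out $\cO(\log(1/\eps))$ rounds of QSVT-based eigenstate filtering, each round being a polynomial in $\discretehamiltonian$ that invokes fast QPE as its phase oracle, to project onto $\dshermite{n}$; (c)~apply $U_{\mathrm{QPE}}^\dagger$ to uncompute $B$ back to $\ket{0}$. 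The $\cO(\log(1/\eps))$ outer rounds of step~(b), each costing $\cO((\log N+\log(1/\eps))^3)$ gates, account for the final $\log(1/\eps)$ factor in the total complexity.

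The main obstacle is step two: converting a \emph{labeling} unitary into a \emph{materializing} unitary without circularly invoking the QHT itself. Breaking the circularity requires the seed state of step~(a) to have $\Omega(1)$ overlap with every $\dshermite{n}$ for $n<N$; the integer spectrum together with the explicit ground-state preparation should make this achievable with a $\polylog$-size circuit. The remaining technical work is to combine the three error sources---the $\eps$-approximate fast-forwarding from \cref{thm:fastforwarding}, the $\exp(-\Omega(M))$ non-orthogonality of the $\dshermite{n}$'s, and the QSVT polynomial approximation---into a single additive $\cO(\eps)$ bound.
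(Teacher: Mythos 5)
Your overall architecture matches the paper's: prepare an $n$-dependent seed state with $\Omega(1)$ overlap on $\dshermite{n}$, amplify onto $\dshermite{n}$ via eigenstate filtering (flagging via fast QPE using \cref{thm:fastforwarding}), then uncompute the label register $\ket{n}$ with inverse QPE. Steps (b) and (c) of your plan are essentially what the paper does in \cref{lem:eigenstate filtering}, \cref{lem:stateprepviafixedpointAA}, and \cref{lem:uncomputation}, and your error accounting sketch is in the right spirit.

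However, step (a) has a genuine gap: preparing the seed state "built from $\dshermite{0}$ together with an $n$-controlled shift implemented by fast-forwarded $e^{-\ri\discretehamiltonian t}$" cannot work. The Hermite states are (approximately) eigenstates of $\discretehamiltonian$, so the evolution $e^{-\ri\discretehamiltonian t}$ acts on $\dshermite{0}$ merely as the scalar phase $e^{-\ri t/2}$ (up to $\exp(-\Omega(M))$ leakage). No amount of $n$-controlled time evolution under the QHO Hamiltonian moves any amplitude from the ground state onto $\dshermite{n}$; the Fock-number distribution is an invariant of that evolution. You also correctly observe that a single $n$-independent seed cannot have $\Omega(1)$ overlap with all $N$ orthogonal targets, but you don't have a working mechanism to make the seed $n$-dependent. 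The paper fills exactly this hole with the Plancherel-Rotach approximation (\cref{lem:Plancherel}, \cref{lem:PlancherelStates}): the asymptotic formula gives an explicit, classically computable amplitude profile $\propto g_n(x)/\sqrt{\sin\varphi(x)}$ and phase $\propto(\tfrac n2+\tfrac14)(\sin 2\varphi(x)-2\varphi(x))$, which one loads directly via coherent arithmetic, inequality testing, and phase kickback (\cref{lem:state prep}), conditioned on $\ket{n}$. This is the only place where a nontrivial classical formula for the Hermite functions is needed; without it, or some comparable explicit approximant, the construction does not get off the ground.
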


Being equipped with an efficient QHT, we then consider some problems where it can be applied.

\paragraph{Property testing and learning.}
One main application of an efficient QHT is Hermite sampling, discussed in detail in \cref{sec:apps}. In this setting we are given black-box access to a function $f:\mathbb R^n \rightarrow \mathbb R$ and we want to sample from its Hermite spectrum. To do so, we must first use queries to the black box to construct a state whose amplitudes are proportional to $f$. Then we use the (inverse) QHT $n$ times, to map Hermite states to basis states.

Using Hermite sampling as a subroutine, we are also able to show that quantum algorithms can solve all the following tasks efficiently over the Gaussian distribution:
 
\begin{enumerate}[label=(\arabic*)]
    \item Test if $f$ is close to a product of $k$ sign functions;
    \item Test if $f$ is close to being degree $d$;
    \item Test if $f$ is close to a Hermite polynomial;
    \item Agnostically learn what we call sparse concepts.
\end{enumerate}

We describe quantum algorithms for (1--3) in \cref{sec:prop-testing} and give the learning application (4) in \cref{sec:learning}. We also prove that we have quantum advantage for tasks (1) and (2) in \cref{sec:prop-testing}. A recent independent work of Lewis et al.~\cite{lewis2025} also gave quantum advantage in learning real-valued functions, but our work is the first to show quantum advantage for property testing with real-valued functions.

We note that these quantum advantages are already attained in the limit of large $n$ with $N = \poly(n)$. Consequently, the polylogarithmic scaling in $N$ that we achieve with our QHT circuit only achieves a polynomial quantum advantage in circuit size for property testing. Now that we have shown that the quantum Hermite transform can be efficiently applied at exponentially large $N$, we invite others to use this as a building block for new quantum algorithms achieving exponential advantage.

\paragraph{Hamiltonian simulation.}
Hamiltonian simulation is the problem of simulating quantum dynamics, induced by a Hamiltonian $H$ and for time $t$, on a quantum computer. In particular, the Hermite functions represent what in physics are known as Fock states. Such states appear naturally in many quantum systems, where the QHO is only one example, and we expect the QHT to be generally applicable for Hamiltonian simulation in these systems.

Quantum systems defined in the continuum involve the position $\hat x$ and momentum $\hat p$ operators. These become sparse when represented using the Fock states or Hermite functions. Our QHT can then perform the change of basis efficiently, allowing us to work with sparse representations of the Hamiltonians. As the most efficient quantum algorithms for quantum dynamics or Hamiltonian simulation assume the sparse matrix access model (cf.~\cite{berry2015simulating,low2017optimal}), 
our efficient QHT might be used to reduce the complexity of Hamiltonian simulation in these systems. 

Similarly, the QHT is expected to bring novel examples of fast-forwarding. Consider for example the well-known Jaynes-Cummings model that models the interaction of a two-level atom with the electromagnetic field (in the continuum)~\cite{jaynes2005comparison}.
When described in the Fock basis, the Hamiltonian simply becomes a direct sum of $2 \times 2$ blocks that can be simply diagonalized. Hence,
our QHT can also exponentially fast-forward the discrete version of this model.

\vspace{0.2cm}
Like the QFT, our efficient QHT opens the path to other case studies and potential applications. To this end, in \cref{sec:open} we list open problems, charting the path for more examples of quantum advantage.

\section{Technical overview}
\label{sec:overview}

Our main result is an efficient quantum circuit for an $M$-dimensional unitary $U$, that transforms from the computational basis to the basis of Hermite states given in~\cref{eq:dshermite}, with arbitrary accuracy. 
Hermite states are also accurate approximations of the eigenvectors of the `discrete' QHO~\cite{somma2016}. That is, we approximate the transformation defined by $U \ket{n} \mapsto \dshermite{n}$, where the Hermite states $\dshermite{n}$ are a discretization of the $n\th$ eigenstate of the QHO Hamiltonian in the continuum $\hamiltonian = (\position^2 + \momentum^2)/2$.
Here, $\position$ is the position operator and $\momentum$ is the momentum operator: $\position f(x) = x f(x)$ and $\momentum f(x)=  -\ri \frac{\rD}{\rD x}f(x)$, for arbitrary $f(x) \in L^2(\mathbb R)$.

The steps in our efficient QHT are as follows. First, given index $n \in \{0,\ldots,N\}$ for some $N>0$, we prepare the Hermite state $\dshermite{n} \in \mathbb C^M$ while keeping a copy of $\ket n$ in some register. The actual dimension $M=2^m$ satisfies $M>N$ and has to be chosen properly, as we discuss. This is done via a unitary transformation conditional on $\ket n$. To reset $\ket n$ to $\ket 0^{\otimes m}$, i.e., to `uncompute' $\ket n$, we infer the value of $n$ from $\dshermite{n}$. That is, the sequence of state preparation steps to define the QHT is
\[
\begin{array}{lcll}
\ket{n} & \to & \ket{n}\ket{0}^{\otimes m} & \textrm{adjoin $m$ ancillas} \\
 & \to & \ket{n}\dshermite{n} & \textrm{Hermite state preparation} \\
 & \to & \ket{0}^{\otimes m}\dshermite{n} & \textrm{uncomputation} \\
 & \to & \dshermite{n} & \textrm{discard $m$ ancillas.}
\end{array}
\]
We give an overview of how the main steps are accomplished.

\paragraph{Hermite state preparation.} 

Given $n$, to prepare the state $\dshermite{n} \in \mathbb C^M$ we proceed in two steps: First, we prepare an approximation $\prstate{n} \in \mathbb C^M$ such that $\braket{\phi_n | \psi_n} = \Omega(1)$, and second, we use fixed-point quantum search~\cite{YLC14} to transform $\ket{ \phi_n} \mapsto \dshermite{n}$ efficiently. Amplitude amplification requires a sequence of reflections over $\dshermite{n}$ and $\prstate{n}$. The latter can be efficiently implemented
because the state $\prstate{n}$ can also be efficiently prepared, but the reflection over $\dshermite{n}$ is more intricate. To this end, we design a `filtering' algorithm that flags $\dshermite{n}$ and allows us to simulate this reflection by a reflection over a state $\ket 0$ of an ancilla qubit. This filtering is based on fast QPE; it uses our fast-forwarding result for the QHO and can be implemented efficiently.  

The states $\prstate{n}$ are chosen as follows. A useful result for this task is that of Plancherel–Rotach asymptotics~\cite{szeg1939orthogonal}, which gives approximate expressions for the Hermite functions $\psi_n(x)$ with a provable error bound. Such approximations are only valid in the domain $|x|<\sqrt{2n+1}$, where the points $\pm x_{\rm tp}=\pm \sqrt{2n+1}$ are known as the `turning' points in physics. Explicitly, these approximations read:
\begin{align}
    \psi_n(x) =  \frac {(-1)^n 2^{\frac 1 4}}{\pi^{\frac 1 2}n^{\frac 1 4}}\frac 1 {\sqrt{\sin \varphi(x)}}\left( \sin \left[\left ( \frac n 2 + \frac 1 4 \right) (\sin (2 \varphi(x))-2 \varphi(x)) + \frac{3 \pi}4\right] + \cO \left(\frac 1 n \right)\right) \;,
\end{align}
where $\varphi(x):=\arccos(x/\sqrt{2n+1})$ is such that $\pi > \varphi(x) > 0$.

We show how to use this approximation formula to construct the states $\prstate{n}$ efficiently, which are now simply superposition states
with amplitude modulated by $\frac 1 {\sqrt{\sin \varphi(x)}}$ and also a phase that depends on $\varphi(x)$. We also show that the states $\prstate{n}$ have nonzero, constant overlap with the $\dshermite{n}$.

\begin{figure}[ht!]
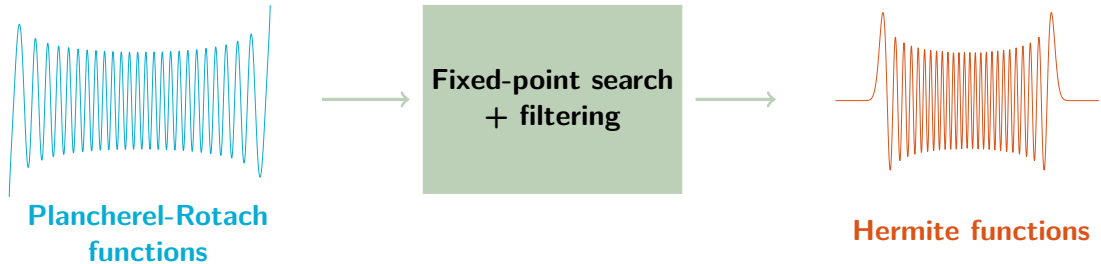

    \centering
    \begin{tikzpicture}[every node/.style={inner sep=0}]
  \node (L) at (0,0) {\resizebox{0.3\textwidth}{!}{\input{figures/hermite_asymptotic_plot.pgf}}};

  \node[draw, fill=midcol, draw=midcol, rectangle, minimum width=2.6cm, minimum height=2.5cm, align=center, inner sep=3pt, font=\bfseries\sffamily] (B) at (5.55,0) {Fixed-point search \\ + filtering};

  \node (R) at (11,0) {\resizebox{0.3\textwidth}{!}{\input{figures/hermite_function_plot.pgf}}};

  \draw[->, line width=1.2pt, color=midcol, shorten >=5pt] (L.east) -- (B.west);
  \draw[->, line width=1.2pt, color=midcol, shorten <=5pt] (B.east) -- (R.west);

\node[anchor=north west, text width=0.25\textwidth, align=center, %
      font=\sffamily\bfseries\normalsize, text=lowcol]
  at ($(L)+(-1.9,-1.4)$) {Plancherel-Rotach\\ functions};

\node[anchor=north east, text width=0.3\textwidth, align=center, %
      font=\sffamily\bfseries\normalsize, text=highcol]
  at ($(R)+(2.6,-1.6)$) {Hermite functions};
\end{tikzpicture}
    \caption{A visualization of the Quantum Hermite Transform. On the left we have the Plancherel-Rotach functions (visualized without the decaying envelope for simplicity), we prepare states corresponding to these, which is then fed into the fixed-point search algorithm using our fast-forwarding result as a subroutine. This allows us to prepare the states corresponding to the Hermite functions in superposition to arbitrary precision. Both the functions are visualized for $n=10$.}
\end{figure}

\paragraph{Uncomputation.} For the uncomputation step where we reset $\ket n \mapsto \ket 0^{\otimes m}$, we use a version of Kitaev's QPE algorithm~\cite{K95}. The Hermite functions are known eigenfunctions of the QHO Hamiltonian $\hamiltonian$ of eigenvalue $n+1/2$. Similarly, the Hermite states $\dshermite{n}$ can be shown to be accurate approximations of the eigenvectors $\dshostate{n}$ of the discrete QHO $\discretehamiltonian$, with eigenvalues that are very close to $n+1/2$~\cite{somma2016}. For a bounded Hamiltonian $H$ satisfying $\|H\|\le 1$,
QPE is able to distinguish eigenvalues separated by a difference $\Delta$
by implementing unitary time evolutions of the form $e^{-\ri H t}$ up to evolution times $t \sim 1/\Delta$.
In our case, if $H=\discretehamiltonian/M$ is the rescaled version of $\discretehamiltonian$, we have $\Delta \sim 1/M$ and hence $t \sim M$. Then, to avoid any complexity polynomial in $M$, we use~\cref{thm:fastforwarding} that allows us to fast forward this time evolution and to perform fast QPE.

\vspace{0.5cm}

Last, we give an overview for the proof of~\cref{thm:fastforwarding}, which is used in the state preparation for both filtering and uncomputation.

\paragraph{Fast-forwarding.}
Central to this result is a factorization of the evolution operator $e^{-\ri \hamiltonian t}$, where $\hamiltonian=(\momentum^2+\position^2)/2$ is the QHO Hamiltonian in the continuum. We exploit the canonical commutation relations, namely $\commutator{\position,\momentum} = \ri I$, where $I$ is the identity such that $I f(x)=f(x)$. Then~\cite{QA07}
\begin{align}
    \label{eq:QHOevolution}
\exp(- \ri \hamiltonian t) = \exp(-\frac{ \ri \tan(t/2)\momentum^2}{2}) \exp(-\frac{\ri \sin(t)\position^2}{2}) 
\exp(-\frac{\ri \tan(t/2)\momentum^2}{2})\; .
\end{align}
We then consider a discrete version of the QHO, referred to as
$\discretehamiltonian$, and analyze the time complexity and the discretization error of this factorization. To this end,
we define $\discreteposition$ and $\discretemomentum$ to be a specific choice of discretizations of $\position$ and $\momentum$, defined in \cite{somma2016}, and let $\discretehamiltonian :=(\discretemomentum^2 + \discreteposition^2)/2$.
We show that the factorization of~\Cref{eq:QHOevolution}, when replacing $\position \rightarrow \discreteposition$
and $\momentum \rightarrow \discretemomentum$,
reproduces the evolution operator $e^{-\ri \discretehamiltonian t}$
within accuracy that is exponentially small in $N$ in the low-energy subspace, where $N \le M=\poly(N)$, and $M$ is the Hilbert dimension of $\discretehamiltonian$.
The low-energy subspace is that spanned by the $N$ eigenvectors of $\discretehamiltonian$ of smallest eigenvalue.

To prove this result, we consider the commutation relations of the discrete operators $\discreteposition$ and $\discretemomentum$ and show they approximate those between $\position$ and $\momentum$, in the low-energy subspace, the space spanned by the first $\lowenergy$ eigenvectors of $\discretehamiltonian$. For example, in that subspace, $ \commutator{\discreteposition,\discretemomentum} - \ri I$ can be shown to have norm exponentially small in $N$, where $I$ is now the $N$-dimensional identity.
We also need to bound the errors of the \emph{nested} commutators.
Let $\nestedcommutator{\discreteposition^2}{\discretemomentum^2}{k} = \commutator{\discreteposition^2, \nestedcommutator{\discreteposition^2}{\discretemomentum^2}{k-1}}$ be the $k$-th nested commutator, where $\nestedcommutator{\discreteposition^2}{\discretemomentum^2}{1} = \commutator{\discreteposition^2,\discretemomentum^2}$.
We cannot directly use the fact that the norm of 
$\commutator{\discreteposition,\discretemomentum} - \ri I$ is small on the low-energy subspace, because when nesting one would naively incur a cost which scales as $\approx M^{k-1}$ due to the multiplication of $k-1$ matrices after using the relation once. We show how to control the growth of these terms within the low-energy subspace by proving that (i) polynomials in the operators with degree $t$ have norm which scale as $\lowenergy^t$ in the low-energy subspace, and (ii) certain operators which capture the discretization errors have norm $\poly(M)$ on the entire space whose growth can be controlled by $t!$ when $t$ is sufficiently greater than $N$. See \cref{sec:fastforwarding} for more details. We believe this technique would be useful for more Hamiltonian simulation results as well.

The approximated evolution operator of the discrete QHO also involves three exponentials. The exponential of $\ri \sin(t) \discreteposition^2/2$ is a diagonal unitary that can be simply implemented with $\cO(\log^2 N)$ two-qubit arbitrary gates regardless of $t$. (For simplicity throughout this paper we assume that arbitrary gates can be implemented perfectly, without any overhead in precision.) 
To this end, we use coherent arithmetic for multiplication and then use phase kickback. 
Since $\discretemomentum$ is obtained from $\discreteposition$ via conjugation with the (centered) QFT, the exponentials involving $\discretemomentum^2$ can also be implemented with $\cO(\log^2 N)$ two-qubit arbitrary gates. This is then an example of exponential fast-forwarding, since it avoids the no-fast-forwarding bound $\Omega (\|H\|t)$, which would be linear in $Nt$ in this case.

\section{Fast-forwarding of the QHO}
\label{sec:fastforwarding}
In this section we discuss~\cref{thm:fastforwarding} and present the formal results on the evolution of the QHO. First, let us discuss the algebraic factorization of the time-evolution of the quantum harmonic oscillator in the infinite-dimensional case. Since the Hamiltonian is the sum of two terms, $\hamiltonian = (\position^2 + \momentum^2)/2$, if the operators were commuting, we could simply factor by breaking up the sum. But the position and momentum operator do not commute. What we do know is the canonical commutation relation between them, 
$[\position, \momentum] = \ri I$. Combined with the Baker–Campbell–Hausdorff formula, this allows us to factor the time-evolution of the Hamiltonian. First, we notice these relations which can be checked by computation:
\begin{enumerate}
    \item $[\position^2, \momentum^2] = 2 \ri \anticommutator{\position,\momentum}$
    \item $[\position^2, \{\position,\momentum\}] = 4\ri\position^2$
    \label{item:pos}
    \item $[\momentum^2, \{\position,\momentum\}] = -4\ri\momentum^2$
    \label{item:mom}
\end{enumerate}

Further, \cref{item:pos} implies that $\commutator{\position^2, \commutator{\position^2, \anticommutator{\position,\momentum}}} = 0$. \Cref{item:mom} implies the symmetric statement about momentum. One thus finds that the Lie algebra generated by $\hat{x}^2$ and $\hat{p}^2$ is three dimensional, namely the span of $\hat{x}^2$, $\hat{p}^2$, and $\{\hat{x},\hat{p}\}$.
These facts about the nested commutators `kill off' infinite terms in the BCH expansion and give us the following factorization for the evolution operator, also discussed in \cite{QA07}.

\begin{restatable}[Factorization of the QHO evolution~\cite{QA07}]{theorem}{factorization}
The evolution operator of the QHO admits the following factorization:
\label{thm:factor}
\begin{align}
\label{eq:factor}
    \exp(-\ri \hamiltonian t) = \exp(-\frac{ \ri \tan(t/2)\momentum^2}{2}) \exp(-\frac{\ri \sin(t)\position^2}{2}) 
\exp(\frac{-\ri \tan(t/2)\momentum^2}{2})\;.
\end{align}
\end{restatable}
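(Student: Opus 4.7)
The plan is to verify \cref{eq:factor} in the Heisenberg picture and then promote the equality of conjugations to an equality of unitaries via irreducibility of the Schr\"odinger representation. Write $U(t) := e^{-\ri \hamiltonian t}$ for the left-hand side and $V(t)$ for the right-hand side; both are strongly continuous in $t$ on $(-\pi,\pi)$ and reduce to the identity at $t=0$.

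The main computation is elementary. Using $[\position,\momentum]=\ri I$ together with a one-line differential equation in the conjugation parameter, one derives the shear formulas $e^{\ri\alpha\momentum^2/2}\position\, e^{-\ri\alpha\momentum^2/2} = \position+\alpha\momentum$ (with $\momentum$ fixed) and $e^{\ri\beta\position^2/2}\momentum\, e^{-\ri\beta\position^2/2} = \momentum-\beta\position$ (with $\position$ fixed). Composing the three shears that implement $V(t)^\dagger(\cdot)V(t)$ with $\alpha = \tan(t/2)$ and $\beta = \sin(t)$, and using the half-angle identity $\tan(t/2)\sin(t) = 1 - \cos(t)$, collapses the coefficients to exactly those of the rotation: $V(t)^\dagger \position V(t) = \cos(t)\position+\sin(t)\momentum$ and $V(t)^\dagger \momentum V(t) = -\sin(t)\position+\cos(t)\momentum$. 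These match the Heisenberg evolution $U(t)^\dagger \position U(t)$, $U(t)^\dagger \momentum U(t)$ generated by the QHO, which is just the classical harmonic oscillator lifted to operators.

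To upgrade equality of conjugations to an operator identity, I would observe that $W(t) := U(t)V(t)^{-1}$ commutes with both $\position$ and $\momentum$, so by Schur's lemma applied to the irreducible Schr\"odinger representation of the Heisenberg commutation relations, $W(t) = c(t)\, I$ for a continuous unit-modulus scalar with $c(0)=1$. Since the same conjugation calculation gives $V(t)^\dagger \hamiltonian V(t) = \hamiltonian$, the unitary $V(t)$ preserves the (one-dimensional) ground-state eigenspace of $\hamiltonian$; a direct Gaussian-integral evaluation of $\braket{0|V(t)|0}$ in position space (or momentum space, alternating for each factor) yields $e^{-\ri t/2} = \braket{0|U(t)|0}$ after the coefficients simplify via the same half-angle identity, pinning down $c(t) \equiv 1$.

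The main subtlety, and the place a careful write-up must spend effort, is the unboundedness of $\position$ and $\momentum$: every commutator and BCH-style manipulation above should be interpreted on a common dense invariant core rather than on all of $L^2(\mathbb R)$. I would take the Schwartz space (equivalently, the finite linear span of the Hermite functions $\dshermite{n}$), which is preserved by $\position$, $\momentum$, and each of the three exponential factors in $V(t)$; every formal identity above then holds strictly on this core, and it extends to all of $L^2(\mathbb R)$ by unitarity and density. Since both sides are manifestly entire-analytic in $t$ away from the poles of $\tan(t/2)$, this also establishes the identity on the full interval $t\in(-\pi,\pi)$ claimed in~\cref{thm:factor}.
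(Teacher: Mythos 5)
Your proof is correct, but it takes a genuinely different route from the paper's. The paper first isolates an abstract Lie-algebraic claim: if $K_1, K_2, K_3$ close under the $\mathfrak{sl}(2,\mathbb{R})$-type relations $[K_1,K_2]=-2K_3$, $[K_3,K_1]=2K_2$, $[K_3,K_2]=-2K_1$, then $e^{(K_1+K_2)t}$ factors as a product of three one-parameter exponentials. It proves this by directly showing that $V(t)^{-1}\frac{dV(t)}{dt} = K_1 + K_2$, using the adjoint-action expansion to conjugate each generator through the neighboring factors and then matching coefficients; instantiating $K_1, K_2, K_3$ with $\frac{\ri}{\sqrt 2}\position^2$, $\frac{\ri}{\sqrt 2}\momentum^2$, $\frac{\ri}{2}\{\position,\momentum\}$ gives \cref{eq:factor}. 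You instead work one level down, in the Heisenberg/metaplectic picture: you verify that the conjugation action of both sides on the generators $\position,\momentum$ is the same rotation in $\mathrm{Sp}(2,\mathbb R)$ via the two shear identities and the half-angle identity, then invoke Schur's lemma (i.e.\ Stone--von Neumann irreducibility) to reduce to a scalar, which you pin to $1$ via the ground-state matrix element. Both are valid; yours is arguably more conceptual and avoids the coefficient-matching bookkeeping. However, the paper's ODE-based form is chosen deliberately: in \cref{lem:discrete factor} the \emph{same} computation is reused verbatim for the discretized operators $\discreteposition,\discretemomentum$, where the commutation relations hold only approximately and the "killed" nested-commutator terms survive as a controllable defect $\eta'$. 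Your Schur's-lemma route does not transfer to that setting — the discrete representation is neither irreducible nor does it satisfy $[\discreteposition,\discretemomentum]=\ri I$ — so the paper's more pedestrian derivative computation is what actually carries the weight in \cref{thm:discrete factoring}. One further minor point: when you evaluate $\braket{0|V(t)|0}$ you should be explicit that for $|t|<\pi$ the Gaussian coefficients have positive real part, so the integrals and their square-root branches are unambiguous; this is exactly the Maslov-index issue that makes the metaplectic representation a double cover, and it is harmless here but deserves a sentence.
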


For completeness, we include an independent proof of this in \cref{sec:qho-factorization}.

Now, if we want to fast-forward on a digital quantum computer, then we need to pick a discretization of the infinite dimensional Hamiltonian. We use the same definition of the discretization as the previous work on fast-forwarding the QHO~\cite{somma2016}. This sets $\discreteposition$ to be a diagonal matrix of size $M \times M$, and $\discretemomentum$ as its conjugation by a centered discrete Fourier transform. Then we can set $\discretehamiltonian = (\discreteposition^2 + \discretemomentum^2)/2$. It is not at all clear a priori that this factorization is still faithful. In particular, it is no longer true that $\commutator{\discreteposition, \discretemomentum} = iI$. Remarkably, we are able to carry the
factorization to the case of the discrete QHO. Moreover, our error is \emph{doubly-exponentially} small! We recall \cref{thm:fastforwarding} below.

\fastforwarding*

We now formally define our discretized QHO. We use the same conventions as \cite{somma2016}, but change the notation slightly.

\paragraph{Discrete QHO.} 
For given dimension $M>0$, 
we discretize the space with a grid of
discretization size $\sqrt{ 2\pi/{M}}$. 
 We define the discretized position operator as
\begin{align}
\discreteposition := \sqrt{\frac{2\pi}{M}}\begin{pmatrix}
    -\frac{M}{2} & 0 & ... & 0 \\
    0 & -\frac{M}{2} + 1 & ... & 0 \\
    \vdots & \vdots & \ddots & \vdots \\
    0 & 0 & ... & \frac{M}{2}-1
\end{pmatrix} \;,
\end{align}
and the discrete momentum operator $\discretemomentum := F^{-1}\discreteposition F$, where $F$ is the centered $M$-dimensional discrete Fourier Transform. (The centered Fourier transform is equivalent to the standard discrete Fourier transform up to relabeling the indices $j$, as described in~\cref{app:cQFT}.) Then, the discrete QHO Hamiltonian is
\begin{align}
\label{eq:discretehamiltonian}
\discretehamiltonian = \frac{1}{2}\left(\discreteposition^2 + \discretemomentum^2\right) \;.
\end{align}

We let $\dshostate{n}$ be the eigenstates of $\discretehamiltonian$, for $0 \le n \le M-1$.
In~\cite{somma2016} it was shown that 
$\dshostate{n}$ are very close to the Hermite states 
\begin{align}\label{eq:hermite-state}
\dshermite{n} := \left( \frac{2\pi} M\right)^{1/4} \sum_{j=-M/2}^{M/2-1} \tshovar_n(x_j)\ket{j} \;,
\end{align}
 where $\psi_n(x)$ is the $n\th$ Hermite function.
Below, we collect some facts about this discretization. 
We introduce $\tshostate{n}$ as the bra-ket representation of the $n$th continuum QHO eigenstate.
\begin{fact}[\cite{somma2016}]\label{fact:somma-discretization-facts}
    There exist constants $\sommaconstant, c \in (0,1)$ such that for all $M$ sufficiently large and all $k,\ell \leq cM$,
    \begin{enumerate}
        \item $\abs{\langle {\tshovar_k}\! \dshostate{\ell} - \delta_{k,\ell}} \leq \exp(-\sommaconstant M)$.
        \item For $a,b \leq 4$, $\abs{\braapxdshostate{k}\discreteposition^a \discretemomentum^b\apxdshostate{\ell} - \bratshostate{k}\position^a \momentum^b\tshostate{\ell}} \leq \exp(-\sommaconstant M)$.
    \end{enumerate}
\end{fact}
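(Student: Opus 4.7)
The plan is to introduce the sampling map $P: L^2(\R) \to \mathbb C^M$ defined by $Pf := h^{1/2}\sum_{j=-M/2}^{M/2-1} f(x_j)\ket j$ with $h=\sqrt{2\pi/M}$, so that by construction $\dshermite n = P\tshostate n$. I will show that $P$ is an approximate isometry on $V_\lowenergy := \spn\{\tshostate n : n \leq \lowenergy\}$ with $\lowenergy = cM$, and that it approximately intertwines the continuum operators $\position,\momentum$ with the discrete operators $\discreteposition,\discretemomentum$, with all errors exponentially small in $M$. Both items of the Fact then follow by transferring known continuum facts through $P$.

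For item (1), the two key ingredients are: (a) the Gaussian tail bound $|\psi_n(x)| \leq \exp(-\Omega(M))$ for $n \leq \lowenergy$ and $|x| \geq \sqrt{\pi M/2}$, which controls the truncation error when the continuum identity $\int\psi_k\psi_\ell\,dx = \delta_{k\ell}$ is restricted to the sampling window; and (b) the exponential convergence of the midpoint rule for integrands that are entire and decay like a Gaussian in a complex strip of width $\Theta(1)$. Since $\psi_k(x)\psi_\ell(x)$ is a polynomial of degree $\leq 2\lowenergy$ times $e^{-x^2}$, Poisson summation (equivalently, a contour-shift of the midpoint-rule error) shows that $h\sum_j \psi_k(x_j)\psi_\ell(x_j)$ differs from the continuum integral by $\exp(-\Omega(1/h^2)) = \exp(-\Omega(M))$. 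Combining (a) and (b) gives $|\langle \psi_k | \psi_\ell\rangle - \delta_{k\ell}| \leq \exp(-\gamma M)$, where the left-hand side is the $\mathbb C^M$ inner product of $\dshermite k$ and $\dshermite \ell$. To pass from $\dshermite n$ to the eigenstate $\dshostate n$ of $\discretehamiltonian$, the same sampling-plus-Poisson-summation argument applied to $\hamiltonian\tshostate n = (n+1/2)\tshostate n$ (using that the centered DFT approximates the continuous Fourier transform on Schwartz-localized functions, hence intertwines sampled $\position$ with sampled $\momentum$ up to exponentially small aliasing) yields $\discretehamiltonian \dshermite n = (n+1/2)\dshermite n + O(\exp(-\gamma M))$. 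Since the eigenvalues of $\discretehamiltonian$ in the low-energy regime are separated by $\Omega(1)$, the Davis--Kahan $\sin\Theta$ theorem upgrades this to $\norm{\dshermite n - \dshostate n} \leq \exp(-\Omega(M))$, from which item (1) follows.

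For item (2), I use item (1) to identify $\apxdshostate n$ with $\dshermite n = P\tshostate n$ up to exponentially small error, so that $\braapxdshostate k \discreteposition^a\discretemomentum^b \apxdshostate \ell$ reduces to $\bratshostate k P^\dagger \discreteposition^a\discretemomentum^b P\tshostate \ell$. The exact intertwiner $\discreteposition P = P\position$ and the approximate one $\discretemomentum^b P\tshostate \ell = P\momentum^b\tshostate \ell + O(\exp(-\gamma M))$ (the latter proved inductively in $b$ via the Poisson-summation argument, now applied to the Schwartz functions $\momentum^{b'}\tshostate \ell$ for $b' \leq b \leq 4$, which still lie in an enlarged low-energy subspace) let me replace each discrete operator by its continuum counterpart at exponentially small cost. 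Using $P^\dagger P \approx I$ on $V_\lowenergy$ (again item (1)) then yields $\bratshostate k \position^a\momentum^b\tshostate\ell$, as desired.

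The main obstacle is ensuring that the exponentially small per-step errors are not destroyed by polynomial factors of $M$ accumulated through iterated products of $\discretemomentum$. Naively, $\norm{\discretemomentum} = \Theta(\sqrt M)$, so each application in the chain $\discretemomentum^b P - P\momentum^b$ could multiply an error by $\sqrt M$. The resolution is that on $P(V_\lowenergy)$ the effective norm of $\discretemomentum^b$ matches that of $\momentum^b$ on $V_\lowenergy$ itself, which is only $O(\lowenergy^{b/2})$; and since $a,b \leq 4$ are fixed constants, any polynomial prefactor $M^{O(1)}$ is absorbed by the $\exp(-\Omega(M))$ error from Poisson summation. Establishing this controlled growth of $\discretemomentum^b$ on the image of the sampling map, together with the band-limitedness needed for the DFT to faithfully represent $\mathcal F$, is the substantive technical content of~\cite{somma2016} that my plan inherits.
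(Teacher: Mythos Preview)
The paper does not supply its own proof of this statement; it is imported wholesale as a cited Fact from \cite{somma2016}. So there is no in-paper argument to compare against. Your outline---sampling map $P$, Gaussian tail bounds for truncation, Poisson summation / exponentially convergent trapezoid rule for the discretization error, and the centered DFT as an approximate intertwiner for $\momentum$---is indeed the machinery of \cite{somma2016}, and the structure of your argument for both items is sound at that level.

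There is, however, one step you should not take for granted. To pass from the discrete Hermite state $\dshermite{n}=P\tshostate{n}$ to an actual eigenvector of $\discretehamiltonian$ you invoke Davis--Kahan, which requires that the low-lying eigenvalues of $\discretehamiltonian$ be separated by $\Omega(1)$. The paper explicitly flags this as \emph{not} rigorously established: see the Remark immediately following the Fact, where the authors write that they ``do not have a rigorous proof that the eigenvalue gaps are not exponentially small'' and that high-energy eigenvalues could in principle descend and land exponentially close to a low one. So your Davis--Kahan step rests on an assumption the paper itself disclaims. In practice the paper sidesteps this entirely: everything downstream is phrased in terms of the discrete Hermite states $\dshermite{n}$ (or their near-identical proxies $\apxdshostate{n}$), never the exact eigenvectors, and the Fact as actually used only needs the near-orthonormality of the $\dshermite{n}$ and the closeness of discrete to continuum matrix elements---both of which your Poisson-summation argument already delivers without any spectral-gap input. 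You should drop the eigenvector identification and state item (1) purely as a statement about the sampled Hermite functions.
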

In other words, this discretization is highly effective in a subspace with bounded energy. We will prove that in a subspace within this one, we can perform the fast-forwarding with provably small error, as visualized in \cref{fig:ff energy cutoff}.

\begin{figure}[ht!]
    \centering
    \resizebox{0.9\textwidth}{!}{\begin{tikzpicture}[
    x = \textwidth/16, 
    y = \textwidth/16, %
    every node/.style={font=\sffamily},
]
  
\def\barW{16.0}
\def\barH{0.9}
\def\lowW{5.0}
\def\cut{11.0}

\coordinate (barC) at (0,0);
\path (barC)++(-\barW/2,0) coordinate (L);
\path (barC)++(\barW/2,0)  coordinate (R);

\fill[rounded corners=1mm, lowcol]  (L) rectangle ++(\lowW,\barH);
\fill[rounded corners=1mm,midcol] ($(L)+(\lowW,0)$) rectangle ($(L)+(\cut,\barH)$);
\fill[rounded corners=1mm, highcol] ($(L)+(\cut,0)$) rectangle ($(R)+(0,\barH)$);

\node[anchor=north, font=\sffamily\large] at ($(L)+(0,-0.2)$) {$0$};
\node[anchor=north, font=\sffamily\large] at ($(L)+(\lowW,-0.2)$) {$N$};
\node[anchor=north, font=\sffamily\large] at ($(L)+(\cut,-0.2)$) {$cM$};
\node[anchor=north, font=\sffamily\large] at ($(R)+(0,-0.2)$) {$M$};

\node[anchor=north west, text width=0.3\textwidth, align=left, %
      font=\sffamily\bfseries\normalsize, text=lowcol]
  at ($(L)+(0,-1.2)$) {Low-energy: subspace with rigorous guarantees};

\node[anchor=north, text width=0.35\textwidth, align=center, %
      font=\sffamily\bfseries\normalsize, text=black]
  at ($(L)!0.5!(R)+(0,-1.2)$) {Medium-energy: \\discretization still works};

\node[anchor=north east, text width=0.3\textwidth, align=right, %
      font=\sffamily\bfseries\normalsize, text=highcol]
  at ($(R)+(-0.5,-1.2)$) {High-energy: \\discretization fails};
  
\end{tikzpicture}}
    \caption{Visualization of discretization error for the QHO. Here $N = \Theta(M/\log M)$, as specified in \cref{thm:discrete factoring}. In the ``low-energy'' subspace, the fast-forwarding algorithm provably works. Meanwhile, in the ``medium-energy'' subspace \cref{fact:somma-discretization-facts} still holds, although we no longer obtain rigorous guarantees on fast-forwarding.}
    \label{fig:ff energy cutoff}
\end{figure}
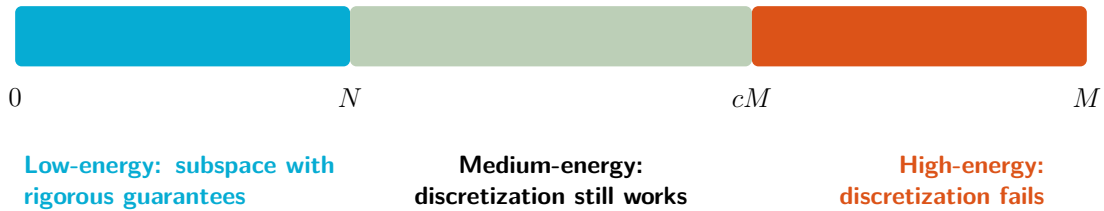

\paragraph{Remark.} You might notice that these statements do not necessarily imply that the $k^{\text{th}}$ eigenstate of the discrete Hamiltonian $\discretehamiltonian$ is close to the $k^{\text{th}}$ discrete Hermite state, in the case that eigenvalue gaps are  exponentially small.
We point out that we do not have a rigorous proof that the eigenvalue gaps are not exponentially small; in principle one could worry that eigenstates from the high energy subspace, upon discretization, could jump down to the low energy subspace and land exponentially close to a low eigen-energy. In fact, numerically one observes that the high eigen-energies, though less accurately conforming to their continuum values than the low eigen-energies, never cross all the way into the low energy subspace.
Nonetheless, since we only care about the closeness to the discrete Hermite states, our claims about fast-forwarding are true irrespective of the eigenvalue gaps.

Note that our convention for basis states is such that $j \in \{-M/2, \ldots, M/2-1\}$, which differs from the standard convention where $j \in \{0,\ldots,M-1\}$. This simplifies the exposition, since Hermite functions are defined in a domain where $x \in \mathbb R$ and we can directly associate $j$ with a position in space.
Transforming from one convention to the other is a cyclic permutation, i.e., $\ket{-M/2}\mapsto \ket 0$,  $\ket{-M/2+1}\mapsto \ket 1$, and so on. 
However, we do not need to perform this permutation in the algorithm; the relevant observation is that $F$ is not exactly the QFT, as explained in~\cref{app:cQFT}, but can still be implemented with $\cO(\log^2 M)$ gates. 

The following result shows that the factorization
of the QHO's evolution operator in the continuum of \cref{thm:factor} can be carried to this discrete case.

\begin{theorem} \label{thm:discrete factoring}
    There is a constant $\eta$ such that for $\lowenergy = \eta M / \log M$, we have the following. Let $\discretehamiltonian$ be the discretization with grid size $\sqrt{\frac{2\pi}{M}}$ and for $|t| \leq \pi/2$,
    \[
    \norm{\Pi_{\lowenergy} \left(e^{-\ri \discretehamiltonian t} - e^{- \ri a(t) \discretemomentum^2} e^{- \ri b(t) \discreteposition^2} e^{-i a(t) \discretemomentum^2} \right)\Pi_{\lowenergy} } \leq \exp(-\lowenergy/2).
    \]
    where
    \[
    a(t) = \frac{\tan(t/2)}{2} \quad \mathrm{and} \quad b(t) = \frac{\sin(t)}{2}.
    \]
\end{theorem}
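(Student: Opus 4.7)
The plan is to mirror the continuum factorization of \cref{thm:factor} at the level of generators, tracking the discretization defect explicitly. Write $U(t) := e^{-\ri \discretehamiltonian t}$ and $V(t) := e^{-\ri a(t)\discretemomentum^2}\, e^{-\ri b(t)\discreteposition^2}\, e^{-\ri a(t)\discretemomentum^2}$. Differentiating $V(t)$ with respect to $t$ gives $\frac{\rD V}{\rD t} = -\ri K(t) V(t)$, where $K(t)$ involves the adjoint action $e^{\ri a(t)\discretemomentum^2} \discreteposition^2 e^{-\ri a(t)\discretemomentum^2}$ and its analogue coming from the middle factor, expanded formally via a BCH series of nested commutators in $\discreteposition^2$ and $\discretemomentum^2$. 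In the continuum, the Lie algebra spanned by $\position^2, \momentum^2, \anticommutator{\position,\momentum}$ is three-dimensional by the identities listed just above \cref{thm:factor}, so this BCH sum collapses to $K(t) = \hamiltonian$ exactly; this collapse \emph{is} the content of \cref{thm:factor}. In the discrete case the same formal manipulations go through, but every invocation of the canonical relation $[\position,\momentum] = \ri I$ must be replaced by $[\discreteposition,\discretemomentum] = \ri I + \Delta$, where $\Delta$ is a discretization defect. One concludes $K(t) = \discretehamiltonian + E(t)$, where $E(t)$ is a formal sum of nested-commutator monomials, each containing at least one factor of $\Delta$.

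I would then apply Duhamel's principle to obtain
\begin{align*}
U(t) - V(t) = -\ri \int_0^t U(t-s)\, E(s)\, V(s)\, \rD s.
\end{align*}
Since $\discretehamiltonian$ commutes with $\lowenergyprojector$, the projector slides freely through $U(t-s)$, so the remaining task is to bound $\|\lowenergyprojector E(s) V(s) \lowenergyprojector\|$. The obstacle here is that $V(s)$ does not preserve the low-energy subspace, so I would insert a medium-energy projector $\mediumenergyprojector$ with $\mediumenergy = c M$ between $E(s)$ and $V(s)$, and separately prove that $\|(I - \mediumenergyprojector) V(s) \lowenergyprojector\|$ is exponentially small. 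The latter can be bootstrapped from the target factorization bound at a slightly larger truncation, using that $V(s)\lowenergyprojector \approx U(s)\lowenergyprojector$ and that $U(s)$ preserves low-energy exactly.

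For the bound on $\|\lowenergyprojector E(s) \mediumenergyprojector\|$, I would expand $E(s)$ depth by depth in its BCH expansion. At depth $k$, terms are nested commutators of total degree $2k$ in $\discreteposition, \discretemomentum$, weighted by BCH coefficients of size $\cO(1/k!)$. Two competing estimates control each term. First, any monomial $\discreteposition^{a_1}\discretemomentum^{b_1}\cdots$ of total degree $d$, sandwiched between $\lowenergyprojector$ and $\mediumenergyprojector$, has operator norm $\cO(\mediumenergy^{d/2})$, since one application of $\discreteposition$ or $\discretemomentum$ raises the energy by only $\cO(\sqrt{\mediumenergy})$ between these subspaces. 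Second, each $\Delta$-factor contributes at most $\|\mediumenergyprojector \Delta \mediumenergyprojector\| \le \expbound{\sommaconstant M}$ by \cref{fact:somma-discretization-facts}. Summing over $k$, the factorial $1/k!$ dominates the polynomial-in-$M$ blowup once $k$ exceeds $\Theta(\mediumenergy)$, giving a total bound of the form $\poly(M)\cdot\expbound{\sommaconstant M}$. Substituting $\lowenergy = \eta M / \log M$ with $\eta$ small enough relative to $\sommaconstant$ turns this into $\expbound{\lowenergy/2}$, as required.

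The hardest step will be the simultaneous management of three effects: the formally divergent BCH tail, the polynomial norm blow-up of monomials in $\discreteposition, \discretemomentum$ between energy-bounded subspaces, and the leakage of $V(s)$ into higher-energy sectors. Packaging these into a single operator inequality---roughly, that any monomial $P$ of degree $d$ in $\discreteposition,\discretemomentum$ containing at least one factor of $\Delta$ satisfies $\|\lowenergyprojector P \mediumenergyprojector\| \le (Cd)^d \cdot \expbound{\sommaconstant M}$ for a universal $C$---and then carefully summing against the $1/k!$ BCH coefficients is where most of the real work lies.
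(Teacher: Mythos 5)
Your high-level strategy---differentiate the factorized form, isolate the discretization defect at the generator level, and control the resulting nested-commutator tail on a low-energy subspace---matches the paper's. You depart in two places, and one departure has a genuine gap.

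The first departure is passing from the generator-level bound to the unitary-level bound via Duhamel's principle, rather than the paper's BCH-type argument. Duhamel is arguably cleaner, but it introduces a cost: the integrand $\lowenergyprojector E(s) V(s) \lowenergyprojector$ now contains the factorized evolution $V(s)$, which does not preserve low energy. The paper never faces this because it inserts $\mediumenergyprojector$ directly into the monomial bounds $\norm{\lowenergyprojector \discreteposition^{2a} \Delta \discreteposition^{2b} \lowenergyprojector}$ (\cref{lem:single-monomial-bound}), not between the generator error and $V(s)$; its target quantity never has $V(s)$ sitting inside.

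Here is the gap. You propose to bound $\norm{(I - \mediumenergyprojector) V(s) \lowenergyprojector}$ by appealing to the target factorization bound at a larger truncation, using $V(s)\lowenergyprojector \approx U(s)\lowenergyprojector$ and that $U(s)$ preserves low energy. But since $(I - \mediumenergyprojector)U(s)\lowenergyprojector = 0$, what you actually need is a bound on $\norm{(I - \mediumenergyprojector)(V(s)-U(s))\lowenergyprojector}$, which is a different block than the $\lowenergyprojector(\cdot)\lowenergyprojector$ block the theorem controls. The bootstrap is circular as stated. Closing it requires a direct leakage bound for $V(s)\lowenergyprojector$, e.g.\ Taylor-expanding each exponential factor of $V(s)$ and combining leakage estimates for powers of $\discreteposition^2$, $\discretemomentum^2$ (in the spirit of \cref{lem:position-leakage}) with the factorial decay of the Taylor coefficients. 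That is real work you have not sketched.

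Two smaller remarks. You take $\mediumenergy = c M$; the paper takes $\mediumenergy = \Theta(M/\log M)$, and the $\log M$ is load-bearing: the trapezoid-rule error in \cref{lem:x-matrix-element-bound} carries a factor $(L\sqrt 2)^a$ that is killed by $e^{-\pi M}$ only when $a = O(M/\log M)$, so $\Theta(M)$ is too aggressive a cutoff for those matrix-element bounds. You also track the first-order defect $[\discreteposition,\discretemomentum]-\ri I$, whereas the paper works with $\Delta = \nestedcommutator{\discreteposition^2}{\discretemomentum^2}{2} - 4\ri\discreteposition^2$; the latter gives the clean identity $\nestedcommutator{\discreteposition^2}{\discretemomentum^2}{t} = \nestedcommutator{\discreteposition^2}{\Delta}{t-2}$ for $t\ge3$, so every tail term carries exactly one $\Delta$, whereas the first-order defect multiplies bookkeeping across nesting levels.
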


We next comment on how to apply \cref{thm:discrete factoring} to decompose $e^{-i \hat{H} t}$ for arbitrary $t$. We first recall that, in the continuum, all eigenvalues of $H$ are of the form $n+1/2$ for integer $n$. Hence, the time evolution of the quantum harmonic oscillator is periodic in the sense that $e^{-i H 2 \pi} = -I$. Therefore, we can always subtract an integer multiple of $2 \pi$ from $t$ to bring $t$ into the range $-\pi$ to $\pi$. As noted in \cref{thm:factor}, the formal identity  \cref{eq:factor} holds for arbitrary $t$ but one must be careful since the tangent function has divergences, in particular at argument $t/2 = \pm \pi/2$. In the discretized case it becomes important to keep the coefficients $a(t)$ and $b(t)$ bounded in order to keep the error bounded. To achieve this, whenever $|t| > \pi/2$, we use the factorization
\begin{align}
    e^{-i \hat{H} t} &= \left( e^{-i \hat{H} (t/2)} \right)^2 \\
    &= \left( e^{- \ri a(t/2) \hat{p}^2} e^{- \ri b(t/2) \hat{x}^2} e^{-\ri a(t/2) \hat{p}^2} \right)^2 \\
    &= e^{- \ri a(t/2) \hat{p}^2} e^{- \ri b(t/2) \hat{x}^2} e^{- 2 \ri a(t/2) \hat{p}^2} e^{- \ri b(t/2) \hat{x}^2} e^{- \ri a(t/2) \hat{p}^2}
\end{align}
This factorization then translates to a quantum circuit to approximate $e^{-i \bar{H} t}$ using a total of five steps. Each step implements either the diagonal operator $e^{-i b \discreteposition^2}$ by phase kickback, or implements the operator $e^{-i a \discretemomentum^2}$ using the quantum Fourier transform to change to its eigenbasis, then applying phase kickback, and then transforming back. When $|t| \leq \pi/2$ we would apply \cref{thm:discrete factoring} directly, which requires only three phase-kickback steps.

To prove \cref{thm:discrete factoring} it would suffice show that the total contribution of the nested commutators like $\left(\sum_{t=3}^{\infty}\frac{1}{t!}\nestedcommutator{\discreteposition^2}{\discretemomentum^2}{t}\right)$ vanishes as the Hilbert space dimension goes to $\infty$. This is in fact not true, but in \cref{thm:commutator1} we show this is true when we project down to the subspace corresponding to energy levels $\lowenergy = O(M/\log M)$ for a Hilbert space of dimension $M$. We show a similar bound on the terms involving $\nestedcommutator{\discretemomentum^2}{\discreteposition^2}{t}$ and $\nestedcommutator{\discretemomentum^2}{\anticommutator{\discreteposition,\discretemomentum}}{t}$, which suffices to prove \cref{thm:discrete factoring}. Since we pay a $\log M$ cost to compute up to $M$, we can always run the discretization for an $M$ such that $\lowenergy  = O(M/\log M)$ and project back down. More precisely, we prove the following theorem about the norm of the nested commutators. Below, $\gamma$ is a constant to be declared later.

\begin{restatable}{theorem}{finalbounds}

\label{thm:final bounds}
    For any constants $c_1, c_2$ such that $\max(\abs{c_1}, \abs{c_2}) \leq 1$, and $\lowenergy = \sommaconstant M/(40 \log(2M))$, we have
    \[ \norm{\lowenergyprojector\left( \sum_{t=3}^{\infty}\frac{1}{t!} \nestedcommutator{c_1\discreteposition^2}{c_2\discretemomentum^2}{t}\right)\lowenergyprojector} \leq \expbound{\sommaconstant \lowenergy/2}.
    \]
    \[ \norm{\lowenergyprojector\left( \sum_{t=3}^{\infty}\frac{1}{t!} \nestedcommutator{c_1\discretemomentum^2}{c_2\discreteposition^2}{t}\right)\lowenergyprojector} \leq \expbound{\sommaconstant \lowenergy/2}
    \]
    \[ \norm{\lowenergyprojector\left( \sum_{t=2}^{\infty}\frac{1}{t!} \nestedcommutator{c_1\discretemomentum^2}{c_2\anticommutator{\discreteposition,\discretemomentum}}{t}\right)\lowenergyprojector} \leq \expbound{\sommaconstant \lowenergy/2}.
    \]
\end{restatable}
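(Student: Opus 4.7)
The plan is to observe that each of the three sums vanishes identically in the continuum, so the theorem reduces to quantifying discretization errors. From $\commutator{\position^2, \momentum^2} = 2\ri \anticommutator{\position, \momentum}$ and $\commutator{\position^2, \anticommutator{\position, \momentum}} = 4\ri \position^2$, one computes $\nestedcommutator{\position^2}{\momentum^2}{3} = \commutator{\position^2, -8\position^2} = 0$, so $\nestedcommutator{\position^2}{\momentum^2}{t} = 0$ for all $t \geq 3$. By the symmetric identity $\commutator{\momentum^2, \anticommutator{\position, \momentum}} = -4\ri \momentum^2$ listed in \cref{sec:fastforwarding}, the same holds for $\nestedcommutator{\momentum^2}{\position^2}{t}$, and $\nestedcommutator{\momentum^2}{\anticommutator{\position, \momentum}}{t} = 0$ already for $t \geq 2$. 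The scalar factors $c_1, c_2$ with $\abs{c_1}, \abs{c_2} \leq 1$ merely contribute bounded prefactors and do not affect the vanishing.

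The central discretization object is the CCR error $E := \commutator{\discreteposition, \discretemomentum} - \ri I$. Applying \cref{fact:somma-discretization-facts}(2) with $(a,b) \in \{(0,0),(1,1)\}$ and taking a coarse operator-norm bound from the matrix-element bound on a space of dimension at most $cM$ yields $\norm{\mediumenergyprojector E \mediumenergyprojector} \leq \poly(M) \exp(-\sommaconstant M)$, where $\mediumenergyprojector$ projects onto the medium-energy subspace of the first $cM$ eigenstates. I would then algebraically expand each nested commutator by repeatedly substituting $\commutator{\discreteposition,\discretemomentum} = \ri I + E$, rewriting $\nestedcommutator{c_1\discreteposition^2}{c_2\discretemomentum^2}{t}$ as a sum $\sum_j A_j\, E\, B_j$ of at most $2^{O(t)}$ terms, each $A_j, B_j$ a noncommutative polynomial in $\discreteposition, \discretemomentum$ of degree $O(t)$. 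The crucial observation is that the ``$E \mapsto 0$ residual'' of this expansion equals the continuum nested commutator, which vanishes by the above, so every surviving term carries at least one factor of $E$. Sandwiching with $\lowenergyprojector$ and inserting $\mediumenergyprojector$'s around the $E$ factor (paying only exponentially small leakage at each insertion, since $\discreteposition, \discretemomentum$ approximately preserve $\mediumenergyprojector$ by another application of \cref{fact:somma-discretization-facts}) gives
\[
\norm{\lowenergyprojector A_j E B_j \lowenergyprojector} \leq \norm{\lowenergyprojector A_j \mediumenergyprojector} \cdot \norm{\mediumenergyprojector E \mediumenergyprojector} \cdot \norm{\mediumenergyprojector B_j \lowenergyprojector} + \exp(-\bOmega(M)).
\]
The polynomial norms $\norm{\lowenergyprojector A_j \mediumenergyprojector}$ and $\norm{\mediumenergyprojector B_j \lowenergyprojector}$ are each bounded by $\lowenergy^{O(t)}$ via the discrete ladder property: for $\ket{\phi} \in \lowenergyprojector$, $\norm{\discreteposition \ket{\phi}}^2 \leq \bra{\phi} \discreteposition^2 \ket{\phi} \leq 2\lowenergy$ (since $\lowenergyprojector \discreteposition^2 \lowenergyprojector \preceq 2\lowenergyprojector \discretehamiltonian \lowenergyprojector$), and iterating this bound alongside the exponentially small leakage out of $\mediumenergyprojector$ yields $\norm{A_j \ket{\phi}} \leq \lowenergy^{O(t)}$. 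Combining gives $\norm{\lowenergyprojector \nestedcommutator{c_1\discreteposition^2}{c_2\discretemomentum^2}{t} \lowenergyprojector} \leq 2^{O(t)} \lowenergy^{O(t)} \exp(-\sommaconstant M)$.

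Summing over $t$ with the $1/t!$ factor, $\sum_{t \geq 3} 2^{O(t)} \lowenergy^{O(t)}/t! \leq \exp(O(\lowenergy))$, so the aggregate bound is $\exp(O(\lowenergy) - \sommaconstant M)$. With $\lowenergy = \sommaconstant M/(40 \log 2M)$ we have $O(\lowenergy) \leq \sommaconstant M/2$ for $M$ sufficiently large, so the exponent is at most $-\sommaconstant M/2 \leq -\sommaconstant \lowenergy/2$, proving the first bound; the second and third bounds follow by identical arguments using the symmetric continuum identities. The main obstacle is establishing the polynomial-norm bound $\norm{\lowenergyprojector A_j \mediumenergyprojector} \leq \lowenergy^{O(t)}$: iterating the ladder property requires carefully tracking how successive applications of $\discreteposition$ or $\discretemomentum$ spread a state across eigenstate indices, and choosing $\mediumenergy = cM$ large enough to contain the spread of degree-$O(\lowenergy)$ polynomials applied to $\lowenergyprojector$, yet small enough for \cref{fact:somma-discretization-facts} to apply. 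The factor $\log 2M$ in the definition of $\lowenergy$ is precisely what provides this slack between the $\lowenergy$-scale growth of intermediate states and the $cM$-scale ceiling above which the discretization breaks down.
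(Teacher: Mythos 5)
Your proposal rests on the same core observation as the paper's — that the vanishing of the nested commutators in the continuum is a purely algebraic consequence of the CCR, so in the discrete setting every nonzero contribution must carry a factor of a defect operator — but you package the defect differently, and the packaging matters.

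You take the first-order CCR defect $E := \commutator{\discreteposition,\discretemomentum} - \ri I$ and propose to expand $\nestedcommutator{c_1\discreteposition^2}{c_2\discretemomentum^2}{t}$ by repeatedly substituting $\commutator{\discreteposition,\discretemomentum} = \ri I + E$, obtaining a sum $\sum_j A_j\, E\, B_j$ whose $E\to 0$ residual vanishes. The paper instead defines the second-order defect $\Delta := \nestedcommutator{\discreteposition^2}{\discretemomentum^2}{2} - 4\ri\discreteposition^2$ (what the continuum second nested commutator \emph{should} be, minus what it is) and exploits the recursion $\nestedcommutator{\discreteposition^2}{\discretemomentum^2}{t} = \nestedcommutator{\discreteposition^2}{\Delta}{t-2}$ for $t\ge 3$, which holds because the $4\ri\discreteposition^2$ term commutes with the outer $\discreteposition^2$. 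This recursion is the reason the paper works at the \emph{second} nesting level rather than the CCR level: the binomial expansion $\nestedcommutator{\discreteposition^2}{\Delta}{s} = \sum_r\binom{s}{r}(-1)^r\discreteposition^{2r}\Delta\discreteposition^{2(s-r)}$ produces only \emph{pure powers of} $\discreteposition$ surrounding $\Delta$. The single leakage lemma (\cref{lem:position-leakage}) — which controls $\norm{(I-\mediumenergyprojector)\discreteposition^a\lowenergyprojector}$ for pure powers — then suffices. Your $A_j, B_j$ are instead mixed noncommutative words in $\discreteposition$ and $\discretemomentum$, for which no leakage bound is established, and for which the iterated single-step argument you sketch ("iterating the ladder property") accumulates both leakage error and energy increase at every operator application.

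The second, and more serious, gap is that you do not split the $t$-sum into regimes. Your leakage-based bound on $\norm{\lowenergyprojector A_j\mediumenergyprojector}$ can only hold while the degree of $A_j$ is small enough that the image of $\lowenergyprojector$ stays inside $\mediumenergyprojector$, i.e.\ for $t\lesssim(\mediumenergy-\lowenergy)/2$. You write the summation "$\sum_{t\ge 3}2^{O(t)}\lowenergy^{O(t)}/t!\le\exp(O(\lowenergy))$" as if the $\lowenergy^{O(t)}$ monomial bound were uniform in $t$, but it is not: past the threshold the exponentially small defect factor disappears and you only have the unconditional bound. The paper handles this carefully in \cref{lem:single-monomial-bound} and \cref{lem:single-commutator-bound} by giving a two-case bound — a $(2\lowenergy)^t e^{-\sommaconstant M/9}$ bound when $t\le(\mediumenergy-\lowenergy)/2$ and a $(2\lowenergy)^t$ trivial bound otherwise — and then in \cref{thm:commutator1} uses the Poisson tail bound (\cref{prop:exponential-sum-tail}) to kill the large-$t$ portion via the $1/t!$ factor alone. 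You acknowledge the first issue explicitly as the "main obstacle" but do not actually resolve it, and you omit the second entirely; both are load-bearing in the proof, so as written the argument is incomplete.
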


The main technical difficulty in proving this statement comes from the fact that when we have a nested commutator $\nestedcommutator{A}{B}{t}$ with a very large $t$, then the naive bound on this scales with an exponential in $t$, since we repeatedly use the norm of the matrices as a bound. We use a trick to split the sum into few-nestings and many-nestings to control the growth of these terms when projected to a low-energy subspace. In the next section, we discuss the explicit algorithm for the QHO and prove it is correct
using this theorem. Then in \cref{sec:nested commutator} we prove this theorem.

\subsection{The algorithm}
From the statement of \cref{thm:discrete factoring}, the algorithm for fast-forwarding is simple, as given in \cref{alg:decomposition}.

\begin{algorithm}
\caption{Fast-forwarding QHO}
\label{alg:decomposition}
\begin{algorithmic}[1]
    \State $t_2 \gets t - 2 \pi \lfloor t/2 \rceil$ \Comment{Shift $t$ into the range $[-\pi,\pi)$}
    \If{$|t_2| > \pi/2$} \Comment{If $t_2$ is too big halve it and use two repetitions.}
        \State $\alpha \gets \tan(t_2/4)/2$
        \State $\beta \gets \sin(t_2/2)/2$
        \State Compute $e^{-i \alpha \discretemomentum^2} e^{-i \beta \discreteposition^2} e^{-i 2 \alpha \discretemomentum^2} e^{-i \beta \discreteposition^2} e^{-i \alpha \discretemomentum^2}$
    \Else \Comment{Otherwise use one repetition.}
        \State $\alpha \gets \tan(t_2/2)/2$
        \State $\beta \gets \sin(t_2)/2$
        \State Compute $e^{-i \alpha \discretemomentum^2} e^{-i \beta \discreteposition^2} e^{-i \alpha \discretemomentum^2}$
    \EndIf
\end{algorithmic}
\end{algorithm}

So let us now prove \cref{thm:discrete factoring} using \cref{thm:final bounds}. We will give the proof of \cref{thm:final bounds} in the next subsection.

\begin{proof}[Proof of \cref{thm:discrete factoring}]
    Let $\discretehamiltonian = \frac{1}{2}(\discreteposition^2 + \discretemomentum^2)$ be the discretized quantum harmonic oscillator Hamiltonian. We will denote $U(t) = \exp(-\ri \discretehamiltonian t)$. Further, denote \[\widetilde{U(t)} = \exp(-\ri \frac{\tan(t/2)\discretemomentum^2}{2}) \exp(-\ri \frac{\sin(t)\discreteposition^2}{2}) 
\exp(-\ri \frac{\tan(t/2)\discretemomentum^2}{2})\]
    We show in \cref{sec:qho-factorization}, \cref{lem:discrete factor} that for $\alpha' =  \tan(t/2)/2, \beta' = \sin(t)/2$,
\[
\widetilde{U(t)}^{-1}\frac{d\widetilde{U(t)}}{dt} = -\ri \discretehamiltonian + \dot{\beta'} \sum_{t=3}^{\infty}\frac{1}{t!2^{t/2}} \nestedcommutator{\beta' \ri \discreteposition^2}{\ri \discretemomentum^2}{t}+ \dot{\alpha'} \sum_{t=3}^{\infty}\frac{1}{t!2^{t/2}} \nestedcommutator{\alpha' \ri \discretemomentum^2}{\ri \discreteposition^2}{t} + \dot{\alpha'} \sum_{t=2}^{\infty}\frac{1}{t!2^{t/2}} \nestedcommutator{\alpha' \ri \discretemomentum^2}{\ri \anticommutator{ \discreteposition,\discretemomentum}}{t} 
\]
Now we can use \cref{thm:final bounds}, but we need a bound on $\alpha', \beta'$. Assume without loss of generality that $t \in [-\pi, \pi]$. If $\abs{t} < \pi/2$ then we have that $\abs{\alpha'}, \abs{\beta'} \leq 1/2$. We also have $\dot{\alpha'}, \dot{\beta'} \leq 2$. Hence we can conclude
    \[
\|\Pi_\lowenergy  \left(\widetilde{U(t)}^{-1}\frac{d\widetilde{U(t)}}{dt} 
+ \ri \discretehamiltonian \right) \Pi_\lowenergy  \| \leq 6\exp(- \sommaconstant \lowenergy/2)
\]
Let us represent this quantity in the brackets by $\eta$. Then by the Baker-Campbell-Hausdorff theorem we know that,
\[
\exp({\eta})\widetilde{U(t)} = \exp(\eta + \ri \discretehamiltonian t + \frac{\ri t}{2}\commutator{\eta, \discretehamiltonian} + \frac{\ri t}{12}\nestedcommutator{\eta}{\discretehamiltonian}{2} + \ldots)
\]

Thus, we have that 
\[
\|\Pi_\lowenergy  \left( U(t) - \widetilde{U(t)} \right) \Pi_\lowenergy \| \leq 6\exp(- \sommaconstant \lowenergy/4)
\]

It remains to bound the case when $\abs{t}$ is in $(\pi/2, \pi)$. In this case, we can divide $t$ by 2 and repeat the factorization twice. So define $\widetilde{U(t)}_2 = \left( \exp(\frac{\ri \tan(t/4)\discretemomentum^2}{2}) \exp(\frac{\ri \sin(t/2)\discreteposition^2}{2}) 
\exp(\frac{\ri \tan(t/2)\discretemomentum^2}{2}) \right)^2$. Now we have the same bounds on the nested commutators. Hence we can derive that 

\[
\|\Pi_\lowenergy  \left( U(t) - \widetilde{U(t)}_2 \right) \Pi_\lowenergy \| \leq 12\exp(-\sommaconstant \lowenergy/4)
\]

This proves that for all $t$ the discrete Hamiltonian evolution can be factorized into 3 or 6 terms while being doubly-exponentially accurate. We use that $\sommaconstant \geq 1/2$ to conclude the quantative bound in our theorem.
\end{proof}

Now, the size of the circuit to compute these factorizations in \cref{alg:decomposition} is enough to perform fast-forwarding. Hence we are ready to prove \cref{thm:fastforwarding}.

\begin{proof}[Proof of \cref{thm:fastforwarding}]
    Computing these factorizations has error $\exp(-\lowenergy/10)$ due to \cref{thm:discrete factoring}. Thus \cref{alg:decomposition} is correct. We can compute these factorizations in $\cO(\log^2 \lowenergy)$ time because we can prepare $\exp(-\ri \discreteposition)$ of dimension $M$ in $\cO(\log^2 \lowenergy)$ time using the technique of phase kickback since $\discreteposition$ is diagonal. Further, we can perform QFT in $\widetilde{\cO}(\log \lowenergy)$ time~\cite{CW00} to prepare $\discretemomentum$ from $\discreteposition$, which is diagonal in the Fourier basis so allows us to prepare $\exp(-\ri \discretemomentum)$ in $\cO(\log^2 \lowenergy)$ time. Adding the trignometric functions in the expression costs constant time by our assumption. Hence the total runtime is $\cO(\log^2 \lowenergy)$.
\end{proof}

\subsection{Bounding nested commutators}
\label{sec:nested commutator}
In this section we prove the aforementioned bounds on the nested commutators of $\discreteposition^2$,  $\discretemomentum^2$ and $\anticommutator{\discreteposition,\discretemomentum}$.
Going forward, we define
\begin{equation}
\Delta = \nestedcommutator{\discreteposition^2}{\discretemomentum^2}{2} - 4i{\discreteposition}^2
\end{equation}
to be the \emph{discretization error}. Indeed, in the continuous case, $\nestedcommutator{\position^2}{\momentum^2}{2} - 4\ri{\position}^2 = 0$. Choose 
\begin{equation}
\label{eq:mediumenergy}
\mediumenergy = \frac{\sommaconstant M}{8 \log (2M)}
\end{equation}
for some suitably small constant $\sommaconstant > 0$ to be chosen later, and take
\begin{equation}
\label{eq:lowenergy}
\lowenergy = \mediumenergy/5.
\end{equation}
We define $\lowenergyprojector$ to be the projector onto the the subspace spanned by the discrete eigenstates $\apxdshostate{k}$ for $k \leq \lowenergy$, and $\mediumenergyprojector$ is defined analogously. Formally,
\[
\lowenergyprojector = \sum_{k=1}^{\lowenergy} \apxdshostate{k}\!\braapxdshostate{l}, \quad \mediumenergyprojector = \sum_{k=1}^{\mediumenergy} \apxdshostate{k}\!\braapxdshostate{l}
\]

We argue that, analogously to the continuous case, the discretized $\discreteposition^a$ operators do not cause significant leakage from the bottom $\lowenergy$ eigenstates to eigenstates above level $\mediumenergy$, provided that $a$ is sufficiently small. First we show that the corresponding matrix elements in the discrete and continuous cases are close.

\begin{lemma}\label{lem:x-matrix-element-bound}
    Let $a \leq \frac{\sommaconstant M}{10 \log(2M)}$. Then whenever $M$ is sufficiently large,
    \[
    \max_{\substack{k \leq \mediumenergy \\ \ell \leq \lowenergy}} \abs{\braapxdshostate{k} \discreteposition^a \apxdshostate{\ell} - \bratshostate{k} \position^a \tshostate{\ell}} \leq \exp(-\sommaconstant M/4)
    \]
\end{lemma}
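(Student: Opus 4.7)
I would prove this by induction on $a$, exploiting that $\discreteposition$ acts on the discrete Hermite states approximately like a ladder operator. The base case $a \leq 4$ is immediate from Fact~\ref{fact:somma-discretization-facts}(2), since $\mediumenergy \leq cM$ for a suitably small constant $\sommaconstant$ in the definition of $\mediumenergy$. For the inductive step, I would insert a resolution of the identity in the exact eigenbasis of $\discretehamiltonian$, use Fact~\ref{fact:somma-discretization-facts}(1) to swap between exact eigenstates and the $\apxdshostate{m}$ up to exponentially small error, and then write
\[
\braapxdshostate{k}\discreteposition^a\apxdshostate{\ell} \;=\; \sum_m \braapxdshostate{k}\discreteposition^{a-1}\apxdshostate{m}\,\braapxdshostate{m}\discreteposition\apxdshostate{\ell} \;+\; \mathrm{(exp.~small)}.
\]
The goal is to show that only $m = \ell \pm 1$ contribute significantly, and then invoke the inductive hypothesis together with the continuum ladder identity $\position\tshostate{\ell} = \sqrt{\ell/2}\,\tshostate{\ell-1} + \sqrt{(\ell+1)/2}\,\tshostate{\ell+1}$ to close the recursion.

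\textbf{Controlling the leakage.} Fact~\ref{fact:somma-discretization-facts}(2) with $a=1$ gives $\braapxdshostate{m}\discreteposition\apxdshostate{\ell}\approx\bratshostate{m}\position\tshostate{\ell}$ for $m \leq cM$, and the continuum matrix element vanishes for $m \notin\{\ell\pm 1\}$. To simultaneously control the high-$m$ tail (not covered by Fact~\ref{fact:somma-discretization-facts}), I would use the norm identity
\[
\|\discreteposition\apxdshostate{\ell}\|^2 \;=\; \braapxdshostate{\ell}\discreteposition^2\apxdshostate{\ell} \;=\; \ell + \tfrac12 + O(e^{-\sommaconstant M}),
\]
where the last equality is Fact~\ref{fact:somma-discretization-facts}(2) with $a=2$, combined with the observation that the $m = \ell\pm 1$ matrix elements alone already saturate $\ell + \tfrac12$ up to $O(\sqrt{\ell}\,e^{-\sommaconstant M})$. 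Expanding $\discreteposition\apxdshostate{\ell}$ in any orthonormal basis therefore gives
\[
\sum_{m \notin\{\ell-1,\ell+1\}} \abs{\braapxdshostate{m}\discreteposition\apxdshostate{\ell}}^2 \;\leq\; C\sqrt{\ell}\,e^{-\sommaconstant M}.
\]
By Cauchy--Schwarz and the crude norm bound $\|\discreteposition^{a-1}\| \leq (\pi M/2)^{(a-1)/2}$, the total contribution from $m \notin \{\ell\pm 1\}$ in the expansion of $\braapxdshostate{k}\discreteposition^a\apxdshostate{\ell}$ is at most $\exp(-2\sommaconstant M/5)$ for $a$ in the allowed range, well below the target budget.

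\textbf{Main obstacle.} The delicate point is the interplay between the per-step multiplicative factor $O(\sqrt{\ell+a})$ arising from the ladder coefficients and the exponentially small base error $e^{-\sommaconstant M}$. Iterating the induction $a$ times gives an accumulated amplification of at most $(\ell+a)^{a/2} \leq e^{(a/2)\log M}$; the hypothesis $a \leq \sommaconstant M/(10\log 2M)$ is tuned precisely so that this is dominated by $e^{\sommaconstant M}$, leaving a net error of $e^{-\sommaconstant M/4}$. A secondary bookkeeping issue is that the intermediate levels $\ell'$ visited during the recursion can be as large as $\ell + a$; the choice $\lowenergy = \mediumenergy/5$ together with $a \leq \sommaconstant M/(10\log 2M) = (4/5)\mediumenergy$ is exactly what ensures $\ell + a \leq \mediumenergy$ throughout, so that Fact~\ref{fact:somma-discretization-facts} remains applicable at every level of the induction.
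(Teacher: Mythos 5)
Your proposal is correct in outline and the error budget does close, but it is a genuinely different proof from the paper's. The paper argues directly and analytically: it writes the matrix element between discretized Hermite states, $\bradshostate{k}\discreteposition^a\dshostate{\ell}$, as a Riemann sum, bounds $|\psi_k(z)|$ on a complex strip via Plancherel--Rotach asymptotics, invokes the exponentially convergent trapezoid rule for analytic integrands (\cref{fact:exponential-riemann-convergence}) to control the quadrature error, separately estimates the integral tail beyond $|x| > L = \sqrt{\pi M/2}$, and finally converts to $\braapxdshostate{k}\discreteposition^a\apxdshostate{\ell}$ using \cref{fact:somma-discretization-facts} together with the crude bound $\norm{\discreteposition^a} = (\pi M/2)^{a/2}$. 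You instead induct on $a$: after a base case $a \le 4$ read off from \cref{fact:somma-discretization-facts}, you peel one factor of $\discreteposition$ per step via the exact resolution of the identity in the $\discretehamiltonian$-eigenbasis, isolate the ladder terms $m = \ell \pm 1$ through $\position\tshostate{\ell} = \sqrt{\ell/2}\,\tshostate{\ell-1} + \sqrt{(\ell+1)/2}\,\tshostate{\ell+1}$, and kill the leakage to $m \notin \{\ell \pm 1\}$ --- including $m$ outside the regime where \cref{fact:somma-discretization-facts} applies --- by the norm identity $\|\discreteposition\apxdshostate{\ell}\|^2 \approx \ell + \tfrac12$ plus Cauchy--Schwarz. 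Both arguments rely on the same numerology: the constraint $a \le \sommaconstant M/(10\log 2M)$ makes the polynomial-in-$M$ amplification $(\pi M/2)^{O(a)} \approx e^{\sommaconstant M/20}$ subordinate to the base error $e^{-\sommaconstant M}$, with room to spare below $e^{-\sommaconstant M/4}$. What your route buys is a self-contained operator-algebraic argument that never touches the analyticity of Hermite functions, complex-strip estimates, or quadrature theory; what it costs is the inductive bookkeeping you flagged --- the hypothesis must be strengthened so $\ell$ can drift up to $\mediumenergy$ during the recursion, and the two-way branching plus per-step error injection have to be tracked, which is precisely what the choices $\lowenergy = \mediumenergy/5$ and $a \le 4\lowenergy$ are engineered to absorb. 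One small clarification: there is no need to ``swap between exact eigenstates and the $\apxdshostate{m}$'' as you suggest, since in the paper's notation the $\apxdshostate{m}$ already \emph{are} the exact eigenstates of $\discretehamiltonian$, so the resolution of identity $\sum_m \apxdshostate{m}\!\braapxdshostate{m} = I$ is exact; \cref{fact:somma-discretization-facts}(1) concerns the overlap between these and the \emph{discretized Hermite states}, which your argument does not actually need.
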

\begin{proof}
    We will first bound the distance between $\bradshostate{k}\discreteposition^a \dshostate{\ell}$ and $\bratshostate{k}\position^a \tshostate{\ell}$ then later show that the former is close to $\braapxdshostate{k}\discreteposition^a \apxdshostate{\ell}$.
    We can write $\bratshostate{k} \position^a \tshostate{\ell} = \int_\R x^a\psi_k(x)\psi_\ell(x) dx$. Recall that our discretization has the limits $[-L,L]$, where $L = \sqrt{\frac{\pi M}{2}}$. Define the strip $\calS = \{z \in \mathbb{C} : \max\{\abs{\Re(z)}, \abs{\Im(z)}\} \leq L\}$.
    Within this region, applying Plancherel-Rotach asymptotic arguments (see, e.g. \cite{szeg1939orthogonal}), we have the bound 
    \begin{equation}\label{eq:hermite-strip-bound}
    \sup_{z \in \calS} \abs{\psi_k(z)} \leq K \cdot k^{-1/4} \sup_{z \in \calS} \abs{e^{-z^2}/2}  \leq K \cdot \sup_{z \in \calS} \abs{\ e^{\Im(z)^2}/2}  \leq K \cdot e^{L^2/2}.
    \end{equation}
    This readily gives us the bound
    \[
    \sup_{z \in \calS} \abs{x^a\psi_k(z)\psi_k(z)} \leq (L \sqrt{2})^{a} \cdot e^{L^2}
    \]
    for some suitably large constant $K$. Thus, employing the exponentially convergent trapezoid rule \cite{trefethen2014exponentially}, we have 
    \begin{equation}\label{eq:hermite-trapezoid-convergence}
    \abs{\bradshostate{k}\discreteposition^a \dshostate{\ell} - \int_{-L}^L x^a\psi_k(x)\psi_\ell(x) dx} \leq 2 (L\sqrt{2})^{a} \exp(L^2 - 2\pi M) = 2(L \sqrt{2})^{a}\exp(-\pi M).
    \end{equation}
    Since $a \leq M / 10 \log (2M)$ we can bound this term by $\exp(-M)$. Now we compute the error accrued from imposing the limits $\pm L$. That is,
    \[
    \abs{\int_{-L}^L x^a\psi_k(x)\psi_\ell(x) dx - \int_{\R} x^a\psi_k(x)\psi_\ell(x) dx} = 2\abs{\int_{L}^\infty  x^a\psi_k(x)\psi_\ell(x) dx}.
    \]
    We can bound $\abs{x^a \psi_k(x) \psi_{\ell}(x)} \le \pi^{-1/4}  
    \abs{x^a \psi_k(x)}$.
    Now, $L$ is much larger than the turning point of $\psi_k(x)$. By standard results (such as, say \cite[Eq. (A4)]{somma2016}, there exists a constant $c$ such that
    \[
    \abs{x^a \psi_k(x)} \leq c \abs{x^a} \cdot e^{\sqrt{2k}x - x^2/2} = c e^{\sqrt{2k}x - a \log x - x^2/2} \leq e^{-x^2/3}
    \]
    whenever $M$ is sufficiently large.
    This gives us the bound
    \[
    \abs{\int_{L}^\infty  x^a\psi_k(x)\psi_\ell(x) dx} \leq \abs{\int_{L}^\infty e^{-x^2/3} dx} \leq e^{-L^2/6} \leq e^{-\pi M/12}.
    \] 
    We now show that $\braapxdshostate{k}\discreteposition^a \apxdshostate{\ell}$ is close to $\bradshostate{k}\discreteposition^a \dshostate{\ell}$. Indeed, we have
    \begin{align}
    \abs{\braapxdshostate{k}\discreteposition^a \apxdshostate{\ell}- \bradshostate{k}\discreteposition^a \dshostate{\ell}} &= \abs{\braapxdshostate{k}\discreteposition^a \apxdshostate{\ell}- \braapxdshostate{k}\discreteposition^a \dshostate{\ell}} + \abs{\braapxdshostate{k}\discreteposition^a \dshostate{\ell}- \bradshostate{k}\discreteposition^a \dshostate{\ell}} \\
    &= \abs{\braapxdshostate{k}\discreteposition^a \left(\apxdshostate{\ell} - \dshostate{\ell}\right)} + \abs{\left(\braapxdshostate{k} - \bradshostate{k}\right)\discreteposition^a \dshostate{\ell}} \\
    &\leq \norm{\discreteposition^a \apxdshostate{k}}\cdot \norm{\apxdshostate{\ell} - \dshostate{\ell}} + \norm{\apxdshostate{k} - \dshostate{k}} \cdot \norm{\discreteposition^a \dshostate{\ell}} & \byref{Cauchy-Schwarz} \\
    &\leq \norm{\discreteposition^a}\cdot \norm{\apxdshostate{\ell} - \dshostate{\ell}} + (1 + \exp(-\sommaconstant M) \norm{\apxdshostate{k} - \dshostate{k}} \cdot \norm{\discreteposition^a} &\byref{\cref{fact:somma-discretization-facts}}\\
    &\leq \exp\left(- \frac{a\log (\pi M/2)}{2}\right)\cdot \left(\norm{\apxdshostate{\ell} - \dshostate{\ell}} + 2\norm{\apxdshostate{k} - \dshostate{k}} \right) \\
    &\leq 3 \exp(\sommaconstant M/10 -\sommaconstant M) &\byref{\cref{fact:somma-discretization-facts}} \\
    &= 3 \exp(- 9\sommaconstant M/10).
    \end{align}
    Putting everything together, the overall bound we obtain is 
    \[
    (2M)^{a/2}\exp(-\pi M / 2) + 2 e^{-\pi M/12} + 3 e^{- 9\sommaconstant M/10} \leq e^{-\sommaconstant M/4}
    \]
    for all $M$ sufficiently large.
\end{proof}

We use the above bound to show that low degree position operators result in very little leakage from energy at most $\lowenergy$ to energy above $\mediumenergy$.

\begin{lemma}\label{lem:position-leakage}
    For all $M$ sufficiently large and whenever $a \leq 4\lowenergy$,
    $\norm{(I - \mediumenergyprojector) \discreteposition^{a} \lowenergyprojector} \leq \exp(-\sommaconstant M/9)$, where $N'$ and $N$ are as defined in \cref{eq:mediumenergy} and \cref{eq:lowenergy}.
\end{lemma}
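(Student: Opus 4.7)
The plan is to exploit the fact that, for each $\ell$, the function $x^{a}\psi_\ell(x)$ admits a \emph{finite} expansion in the Hermite basis whose support lies within $a$ of $\ell$, and that this identity, being purely algebraic, survives sampling on the lattice $\{x_j\}$ used to define the Hermite states. The three-term recurrence $x\psi_\ell(x) = -\sqrt{\ell/2}\,\psi_{\ell-1}(x) - \sqrt{(\ell+1)/2}\,\psi_{\ell+1}(x)$ holds pointwise for every real $x$, and iterating it $a$ times gives
\begin{equation*}
    x^{a}\psi_\ell(x) \;=\; \sum_{\substack{|k-\ell|\le a\\ k\ge 0}} \alpha_{k,\ell}\,\psi_k(x), \qquad \alpha_{k,\ell}\;=\;\bratshostate{k}\position^{a}\tshostate{\ell},
\end{equation*}
as a \emph{pointwise} identity of functions.

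Evaluating this identity at each grid point $x_j$, multiplying by $(2\pi/M)^{1/4}$, and summing against $\ket{j}$, and using that $\discreteposition$ is diagonal with entries $x_j$ while $\apxdshostate{\ell}=(2\pi/M)^{1/4}\sum_j \psi_\ell(x_j)\ket{j}$, yields the \emph{exact} vector identity
\begin{equation*}
    \discreteposition^{a}\apxdshostate{\ell} \;=\; \sum_{\substack{|k-\ell|\le a\\ k\ge 0}} \alpha_{k,\ell}\,\apxdshostate{k}\quad\text{in }\mathbb{C}^{M}.
\end{equation*}
For $\ell\le \lowenergy$ and $a\le 4\lowenergy$, every index appearing on the right satisfies $k\le \ell+a\le 5\lowenergy=\mediumenergy$, so $\discreteposition^{a}\apxdshostate{\ell}$ lies in $V:=\mathrm{span}\{\apxdshostate{k}:k\le \mediumenergy\}$. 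Since $\lowenergyprojector$ sends every vector into $\mathrm{span}\{\apxdshostate{\ell}:\ell\le \lowenergy\}$, by linearity the range of $\discreteposition^{a}\lowenergyprojector$ is contained in $V$.

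The last step is to quantify how close $\mediumenergyprojector=\sum_{k\le \mediumenergy}\apxdshostate{k}\braapxdshostate{k}$ is to the true orthogonal projector $\Pi_V$ onto $V$. By \cref{fact:somma-discretization-facts} the Gram matrix $G$ of the Hermite states $\{\apxdshostate{k}\}_{k\le \mediumenergy}$ satisfies $G=I+E$ with $\|E\|\le O(\mediumenergy\exp(-\sommaconstant M))$, and a short computation using $\Pi_V=V G^{-1}V^{\dagger}$ versus $\mediumenergyprojector=V V^{\dagger}$ gives $\|\Pi_V-\mediumenergyprojector\|\le O(\mediumenergy\exp(-\sommaconstant M))\le \exp(-\sommaconstant M/2)$ for large $M$. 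Because $(I-\Pi_V)\discreteposition^{a}\lowenergyprojector=0$ by the preceding paragraph,
\begin{equation*}
    \|(I-\mediumenergyprojector)\discreteposition^{a}\lowenergyprojector\| \;=\; \|(\Pi_V-\mediumenergyprojector)\discreteposition^{a}\lowenergyprojector\| \;\le\; \|\Pi_V-\mediumenergyprojector\|\cdot \|\discreteposition^{a}\|.
\end{equation*}
Using $\|\discreteposition\|=L=\sqrt{\pi M/2}$ and $a\le 4\lowenergy=\sommaconstant M/(10\log(2M))$, we obtain $\|\discreteposition^{a}\|\le (\pi M/2)^{a/2}\le \exp(\sommaconstant M/20)$. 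Multiplying, $\exp(-\sommaconstant M/2)\cdot\exp(\sommaconstant M/20)=\exp(-9\sommaconstant M/20)\le \exp(-\sommaconstant M/9)$ for all $M$ sufficiently large, which is the claimed bound.

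The main subtlety, and the only nontrivial obstacle, is controlling the gap between the author's ``projector'' $\mediumenergyprojector$ (a sum of rank-one terms over Hermite states which are only \emph{approximately} orthonormal) and the genuine orthogonal projector $\Pi_V$ onto $V$; this is handled via a Gram-matrix perturbation bound using \cref{fact:somma-discretization-facts}. Once this is in place, the combinatorial/algebraic heart of the argument -- that $\position^{a}$ shifts Hermite indices by at most $a$, and that this structural property is inherited \emph{exactly} by the discretization -- makes the rest immediate.
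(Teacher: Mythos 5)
Your approach is genuinely different from the paper's and rests on a nice observation: the three-term recurrence $x\psi_n = -\sqrt{(n+1)/2}\,\psi_{n+1}-\sqrt{n/2}\,\psi_{n-1}$ holds pointwise, so since $\discreteposition$ is diagonal with entries $x_j$ the identity $\discreteposition^a\dshermite{\ell}=\sum_{|k-\ell|\le a}\alpha_{k,\ell}\dshermite{k}$ is an \emph{exact} identity in $\mathbb{C}^M$ for the discretized Hermite states. The paper instead estimates the individual matrix elements $\braapxdshostate{k}\discreteposition^a\apxdshostate{\ell}$ via the trapezoid rule and Cauchy--Schwarz, which is more elementary but heavier. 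Your algebraic route is the more elegant of the two.

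There is, however, a genuine gap, and it lies precisely in the step you flagged as "immediate." The projectors $\lowenergyprojector$ and $\mediumenergyprojector$ are defined by the paper over the \emph{eigenstates} $\apxdshostate{k}$ of $\discretehamiltonian$, not over the discretized Hermite states $\dshermite{k}$. These two families are close (that is the content of \cref{fact:somma-discretization-facts}) but they are not the same vectors, and the exact shift identity holds only for $\dshermite{k}$, not for $\apxdshostate{k}$. Consequently your central claim that $(I-\Pi_V)\discreteposition^a\lowenergyprojector=0$ is false as stated: the range of $\lowenergyprojector$ is the eigen-span, which is \emph{not} contained in $\mathrm{span}\{\dshermite{\ell}:\ell\le N\}$, so after applying $\discreteposition^a$ one does not land exactly in $V$. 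A symptom of the confusion is your Gram-matrix step, which treats $\mediumenergyprojector$ as a non-orthonormal frame sum; with the paper's definition $\mediumenergyprojector$ is already a genuine orthogonal projector, and the quantity you actually need to control is the gap between the eigen-span and the Hermite-span, for which closeness of the Gram matrix of the $\dshermite{k}$ to the identity is not by itself sufficient — you also need $\|\apxdshostate{k}-\dshermite{k}\|$ small for each $k$.

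The fix is to insert the Hermite-span projectors $\Pi_{\tilde V_N}$ and $\Pi_{\tilde V_{N'}}$ as intermediaries: your exact identity gives $(I-\Pi_{\tilde V_{N'}})\discreteposition^a\Pi_{\tilde V_N}=0$, and then one pays $\|\lowenergyprojector-\Pi_{\tilde V_N}\|\cdot\|\discreteposition^a\|$ and $\|\mediumenergyprojector-\Pi_{\tilde V_{N'}}\|\cdot\|\discreteposition^a\|$ as two separate error terms. Both projector differences are $O(\mediumenergy e^{-\sommaconstant M})$ given \cref{fact:somma-discretization-facts}, and since $\|\discreteposition^a\|\le e^{\sommaconstant M/20}$ for $a\le 4\lowenergy$, the budget closes. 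You identified only one of these two error terms and attached it to the wrong object, so as written the argument is incomplete — but the strategy can be made rigorous with this bookkeeping.
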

\begin{proof}
    Define 
    \[
    \delta_a = \max_{\substack{k \leq \mediumenergy \\ \ell \leq \lowenergy}} \abs{\braapxdshostate{k} \discreteposition^a \apxdshostate{\ell} - \bratshostate{k} \position^a \tshostate{\ell}}
    ,\quad
    \kappa_a = \max_{k \leq \lowenergy} \abs{\braapxdshostate{k} \discreteposition^{2a} \apxdshostate{k} - \bratshostate{k} \position^{2a} \tshostate{k}}
    \]
    We claim that
    \[
    \norm{(I - \mediumenergyprojector) \discreteposition^{a} \lowenergyprojector} \leq \left(\sqrt{\mediumenergy} \cdot \delta_{2\lowenergy} + \sqrt{\kappa_{2\lowenergy}}\right) \cdot \sqrt{M}
    \]
    Indeed, we need only show that
    \[
    \max_{\substack{\mediumenergy < k \leq M \\ \ell \leq \lowenergy}} \abs{\braapxdshostate{k} \discreteposition^a \apxdshostate{\ell}} = \max_{\substack{\mediumenergy < k \leq M \\ \ell \leq \lowenergy}} \abs{\braapxdshostate{k} \discreteposition^a \apxdshostate{\ell} - \bratshostate{k} \position^a \tshostate{\ell}} \leq \sqrt{\mediumenergy} \cdot \delta_{4\lowenergy} + \sqrt{\kappa_{4\lowenergy}}.
    \]
    We can write
    \[
    \discreteposition^a \apxdshostate{\ell} = \beta \ket{\overline{\psi}} + \sum_{j \leq \mediumenergy} \alpha_j^\prime \apxdshostate{j},\quad \position^a \tshostate{\ell} = \sum_{j \leq \mediumenergy} \alpha_j \tshostate{j}
    \]
    where we have used the fact that $a \leq 4 \lowenergy$, and doesn't push the support above energy level $\mediumenergy$. We have the bound $\abs{\braapxdshostate{k} \discreteposition^a \apxdshostate{\ell}} \leq \abs{\beta}$. We have $\abs{\alpha_j - \alpha_j^\prime} \leq \delta_a$ and
    \[
    \abs{\beta^2 + \sum_{j=1}^{\mediumenergy} (\alpha_j - \alpha_j^\prime)^2} \leq \kappa_{4\lowenergy}
    \]
    From this we have the desired bound
    \[
    \abs{\beta} \leq \sqrt{\mediumenergy} \cdot \delta_{4\lowenergy} + \sqrt{\kappa_{4\lowenergy}}
    \]

    By \cref{lem:x-matrix-element-bound}, since $a \leq 4\lowenergy \leq \sommaconstant M/(10 \log (2M))$, we can bound 
    \[
    \delta_{4\lowenergy} \leq \exp(-\sommaconstant M/4),\quad \kappa_{4\lowenergy} \leq \exp(-\sommaconstant M/4)
    \]
    Thus we have the bound
    
    \begin{align}
    \abs{\beta} &\leq \sqrt{\mediumenergy} \cdot e^{-\sommaconstant M/4} +  e^{-\sommaconstant M/8} \leq 2e^{-\sommaconstant M/8}.
    \end{align}
    Since $\norm{(I - \mediumenergyprojector)\discreteposition^a \lowenergyprojector} \leq \abs{\beta} \cdot \mediumenergy$, we have the bound
    \[
    \norm{(I - \mediumenergyprojector)\discreteposition^a \lowenergyprojector} \leq 2 \sqrt{M} e^{-\sommaconstant M/8} \leq e^{-\sommaconstant M/9}
    \] 
    for all $M$ sufficiently large.
\end{proof}

We now prove a number of desirable properties of the discretization error $\Delta$. In particular, we show that $\Delta$ is small on the subspace spanned by the first $\mediumenergy$ eigenvectors, and then bound its overall operator norm. First, we recall a fundamental fact about Hermite polynomials.
\begin{fact} \label{fact:hermite-position-recurrence}
\[
x H_k(x) = k H_{k-1}(x) + \frac{1}{2} H_{k+1}(x)
\]
\end{fact}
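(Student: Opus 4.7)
The plan is to prove the recurrence directly from the generating function for the physicist's Hermite polynomials, which is the most elementary route and avoids any machinery beyond power-series manipulation. Recall that the $H_k$ are characterized by the identity
\[
e^{2xt - t^2} \;=\; \sum_{k=0}^{\infty} H_k(x)\,\frac{t^k}{k!}.
\]
This identity is what I would take as the definition of $H_k(x)$ (the physicist's normalization used throughout the paper), though it is equivalent to the Rodrigues formula $H_k(x) = (-1)^k e^{x^2}\frac{d^k}{dx^k}e^{-x^2}$.

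The first step is to differentiate both sides in $t$. The left-hand side yields $2(x-t)\,e^{2xt - t^2}$, and the right-hand side yields $\sum_{k\geq 1} H_k(x)\,\frac{t^{k-1}}{(k-1)!}$. Substituting the generating function back on the left, I would split this into two sums and shift the index in the $-2t$ piece so that both sides are expressed as formal power series in $t$.

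The second step is to match the coefficient of $t^k$ on each side. On the left I obtain $\frac{2x\,H_k(x)}{k!} - \frac{2\,H_{k-1}(x)}{(k-1)!}$, and on the right I obtain $\frac{H_{k+1}(x)}{k!}$. Multiplying through by $k!/2$ and rearranging gives $x\,H_k(x) = \tfrac12 H_{k+1}(x) + k\,H_{k-1}(x)$, which is exactly the claimed identity. For $k=0$ the convention $H_{-1}\equiv 0$ makes the relation consistent.

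There is essentially no obstacle: this is a textbook three-term recurrence and the only place one needs care is the index shift and keeping track of which normalization of Hermite polynomials is being used (the physicist's, as flagged in the paper's footnote on notation). An alternative derivation, which I would mention only as a sanity check, is to apply Leibniz's rule to Rodrigues' formula, writing $H_{k+1}(x) = (-1)^{k+1} e^{x^2}\frac{d^{k+1}}{dx^{k+1}} e^{-x^2}$ and expanding one derivative as a product rule, which reproduces the same identity after using $\frac{d}{dx}e^{-x^2} = -2x e^{-x^2}$.
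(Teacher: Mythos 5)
Your derivation is correct. The paper states this as a \emph{Fact} with no proof, treating it as a standard identity for the physicist's Hermite polynomials, so there is no in-paper argument to compare against; your generating-function route (differentiating $e^{2xt-t^2}=\sum_k H_k(x)\,t^k/k!$ in $t$, splitting off the $-2t$ term, shifting the index, and matching coefficients of $t^k$) is the textbook proof of this three-term recurrence, and the index bookkeeping you describe does produce $x H_k(x) = k H_{k-1}(x) + \tfrac12 H_{k+1}(x)$ after multiplying through by $k!/2$, with $H_{-1}\equiv 0$ handling $k=0$.
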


\begin{lemma}\label{lem:delta-bounds}
For all $M$ sufficiently large:
    \begin{enumerate}
        \item $\norm{\mediumenergyprojector \Delta \mediumenergyprojector} \leq \exp(- \sommaconstant M/3)$
        \item $\norm{\Delta} \leq 17 M^3$.
    \end{enumerate}
\end{lemma}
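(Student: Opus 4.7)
The plan is to prove the two bounds separately. Part~2 is a routine operator-norm estimate, while part~1 exploits the near-isometric discretization on the medium-energy subspace guaranteed by \cref{fact:somma-discretization-facts}.

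For part~2, the operator $\discreteposition$ is diagonal with entries of absolute value at most $\sqrt{2\pi/M}\cdot(M/2) = \sqrt{\pi M/2}$, so $\norm{\discreteposition^2}\le \pi M/2$, and the same bound holds for $\discretemomentum^2$ by unitary equivalence through $F$. Two applications of $\norm{[A,B]}\le 2\norm{A}\norm{B}$ then give $\norm{\nestedcommutator{\discreteposition^2}{\discretemomentum^2}{2}}\le \pi^3 M^3/2$, while $\norm{4\ri\discreteposition^2}\le 2\pi M\le M^3$ for $M\ge 3$. Hence $\norm{\Delta}\le (\pi^3/2+1)M^3 < 17M^3$.

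For part~1, I would expand
\[
\Delta = \discreteposition^4\discretemomentum^2 + \discretemomentum^2\discreteposition^4 - 2\discreteposition^2\discretemomentum^2\discreteposition^2 - 4\ri\discreteposition^2,
\]
and bound each matrix element $\braapxdshostate{k}\Delta\apxdshostate{\ell}$ for $k,\ell\le\mediumenergy$. The ordered terms $\discreteposition^4\discretemomentum^2$ and $\discretemomentum^2\discreteposition^4 = (\discreteposition^4\discretemomentum^2)^\dagger$ are covered directly by \cref{fact:somma-discretization-facts}, so their matrix elements agree with the continuum to within $\exp(-\sommaconstant M)$, as does the contribution of $\discreteposition^2$. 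For the mixed middle term $\discreteposition^2\discretemomentum^2\discreteposition^2$, I would iteratively push $\discretemomentum^2$ to the right via the algebraic identity $\discreteposition^2\discretemomentum^2\discreteposition^2 = \discreteposition^4\discretemomentum^2 - \discreteposition^2[\discreteposition^2,\discretemomentum^2]$ together with the discrete analogue of the continuum relation $[\position^2,\momentum^2] = 2\ri\{\position,\momentum\}$, reducing the product to a linear combination of ordered monomials $\discreteposition^a\discretemomentum^b$ with $a,b\le 4$ plus accumulated discretization errors whose matrix elements on $\apxdshostate{k},\apxdshostate{\ell}$ are $\exp(-\sommaconstant M)$ by \cref{fact:somma-discretization-facts}. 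By the defining property of $\Delta$ as a discretization error, the continuum pieces cancel, leaving only these errors. Summing the $O(M^2/\log^2 M)$ matrix elements of $\mediumenergyprojector\Delta\mediumenergyprojector$ and using $\mediumenergy = \sommaconstant M/(8\log(2M))$ yields $\norm{\mediumenergyprojector\Delta\mediumenergyprojector}\le \exp(-\sommaconstant M/3)$ for $M$ sufficiently large.

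The main obstacle will be ensuring that each discretization error arising during the reduction of the mixed term can be bounded matrix-element-wise (giving $\exp(-\sommaconstant M)$) rather than only by its operator norm (which is polynomial in $M$); this requires every intermediate factor to remain effectively sandwiched between states in the medium-energy subspace, which may entail an additional application of \cref{lem:position-leakage} or careful tracking of the order in which commutators are expanded. If this direct commutation route proves too intricate, an alternative is to insert the approximate resolution $\sum_{j\le\mediumenergy}\apxdshostate{j}\braapxdshostate{j}$---which differs from $\mediumenergyprojector$ by $\exp(-\sommaconstant M)$ via \cref{fact:somma-discretization-facts}---between $\discretemomentum^2$ and the right-hand $\discreteposition^2$ in the mixed term, thereby breaking it into pairs of ordered matrix elements each directly handled by \cref{fact:somma-discretization-facts}.
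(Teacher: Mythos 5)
Your part~2 argument is correct and reaches essentially the same $\cO(M^3)$ bound as the paper, just by two applications of $\norm{[A,B]}\le 2\norm{A}\norm{B}$ rather than expanding $\Delta$ termwise; both routes are elementary.

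For part~1 you have the right overall plan---expand $\Delta$, handle the ordered monomials $\discreteposition^4\discretemomentum^2$, $\discretemomentum^2\discreteposition^4$, and $\discreteposition^2$ directly through \cref{fact:somma-discretization-facts}, then reduce the mixed term by pushing $\discretemomentum^2$ past $\discreteposition^2$ while tracking the discretization errors---but the step you flag as the ``main obstacle'' is indeed a genuine gap, and neither of your proposed fixes closes it as stated. After commuting you are left with a term of the form $\braapxdshostate{k}\discreteposition^2\Delta''\apxdshostate{\ell}$ with $k,\ell\le\mediumenergy$, where $\Delta''=\discretemomentum^2\discreteposition^2-(\discreteposition^2\discretemomentum^2-4\ri\discreteposition\discretemomentum-2I)$; the issue is that $\discreteposition^2$ acting on $\braapxdshostate{k}$ can leave the energy window where \cref{fact:somma-discretization-facts} controls $\Delta''$ matrix-element-wise. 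Invoking \cref{lem:position-leakage} does not help here: that lemma controls $\norm{(I-\mediumenergyprojector)\discreteposition^a\lowenergyprojector}$, i.e.\ leakage of \emph{low}-energy ($\le\lowenergy$) states above $\mediumenergy$, whereas $k$ already ranges up to $\mediumenergy$. And your ``approximate resolution'' $\sum_{j\le\mediumenergy}\apxdshostate{j}\!\braapxdshostate{j}$ is not exp-close to anything relevant---it literally \emph{is} $\mediumenergyprojector$ by the paper's definition; what you would actually need to bound is the error from the complementary $(I-\mediumenergyprojector)$ piece, which is not small in operator norm and again reduces to a leakage estimate for $\discreteposition^2$ starting from medium-energy states. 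The paper supplies exactly the missing ingredient: it writes $\discreteposition^2\apxdshostate{k}=\alpha_1\apxdshostate{k-2}+\alpha_2\apxdshostate{k}+\alpha_3\apxdshostate{k+2}+\beta\ket{\psi}$ with $\ket{\psi}$ orthogonal to the first three, pins down $\alpha_1,\alpha_2,\alpha_3$ to within $\exp(-\sommaconstant M)$ from the $\discreteposition^2$ matrix elements, and then extracts $\abs{\beta}\le 2k\exp(-\sommaconstant M/2)$ from the closeness of the diagonal fourth moment $\braapxdshostate{k}\discreteposition^4\apxdshostate{k}$ to its continuum value $\bratshostate{k}\position^4\tshostate{k}$ (\cref{fact:somma-discretization-facts} with $a=4$). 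The dangerous term is then bounded by $\abs{\beta}\cdot\norm{\Delta''}=\cO\left(kM^2\exp(-\sommaconstant M/2)\right)$, which is still exponentially small. Extracting that leakage estimate from the diagonal $\discreteposition^4$ matrix element, rather than from \cref{lem:position-leakage}, is the idea your proposal is missing.
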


\begin{proof}
    We have the bound
    \[
    \norm{\mediumenergyprojector \Delta \mediumenergyprojector} \leq \sqrt{M} \cdot \max_{\substack{k \leq \mediumenergy \\ \ell \leq \mediumenergy}} \abs{\braapxdshostate{k} \Delta \apxdshostate{\ell}} = \sqrt{M} \cdot \max_{\substack{k \leq \mediumenergy \\ \ell \leq \mediumenergy}} \abs{\braapxdshostate{k} \Delta \apxdshostate{\ell} - \bratshostate{k} \nestedcommutator{\position^2}{\momentum^2}{2} - 8i\discreteposition^2 \tshostate{\ell}},
    \]
    where we recall that $\nestedcommutator{\position^2}{\momentum^2}{2} = 8i\discreteposition^2$. We expand
    \[
    \Delta = \discreteposition^4 \discretemomentum^2 - 2 \discreteposition^2 \discretemomentum^2 \discreteposition^2 + \discretemomentum^2 \discreteposition^4 - 8\ri \discreteposition^2.
    \]
    We can use the triangle inequality to match terms of $\Delta$ and $\nestedcommutator{\position^2}{\momentum^2}{2} - 8i\discreteposition^2$. In what follows, we make ample use of \cref{fact:somma-discretization-facts}. For the first such term, we bound
    \[
    \abs{\braapxdshostate{k} \discreteposition^4 \discretemomentum^2 \apxdshostate{\ell} - \bratshostate{k} \position^4 \momentum^2 \tshostate{\ell}} \leq \exp(-\sommaconstant M).
    \]
    The same argument shows that 
    \[
    \abs{\braapxdshostate{k}\discretemomentum^2 \discreteposition^4  \apxdshostate{\ell} - \bratshostate{k} \momentum^2 \position^4 \tshostate{\ell}} \leq \exp(-\sommaconstant M).
    \]
    Furthermore, we have
    \[
    \abs{\braapxdshostate{k} 8\ri \discreteposition^2  \apxdshostate{\ell} - \bratshostate{k} 8\ri \position^2 \tshostate{\ell}} \leq 8\exp(-\sommaconstant M).
    \]
    The most challenging term to bound is
    \[
    \abs{\braapxdshostate{k}\discreteposition^2 \discretemomentum^2 \discreteposition^2  \apxdshostate{\ell} - \bratshostate{k} \position^2 \momentum^2 \position^2 \tshostate{\ell}}.
    \]
    Our solution will be to write $\discreteposition^2 \apxdshostate{k}$ as a linear combination of other states. Indeed, in the continuum, we can apply \cref{fact:hermite-position-recurrence} twice to obtain $\position^2 \tshostate{k} = k(k-1) \tshostate{k-2} + k \tshostate{k} + \frac{1}{4}\tshostate{k+2}$. We write
    \begin{align}
    \label{eqn:x2 expansion}\discreteposition^2\apxdshostate{k} = \alpha_1 \apxdshostate{k-2} + \alpha_2\apxdshostate{k} + \alpha_3 \apxdshostate{k+2} + \beta\ket{\psi}.
    \end{align}
    By \cref{fact:somma-discretization-facts}, we have $\abs{\alpha_1 - k(k-1)} \leq \exp(-\sommaconstant M)$, $\abs{\alpha_2 - k} \leq \exp(-\sommaconstant M)$, and $\abs{\alpha_3 - 1/4} \leq \exp(-\sommaconstant M)$. Similarly, we have
    \[
    \abs{\braapxdshostate{k}\discreteposition^4 \apxdshostate{k} - \bratshostate{k}\position^4 \tshostate{k}} \leq \exp(-\sommaconstant M).
    \]
    This implies that
    \[
    (\alpha_1^2 - k^2(k-1)^2) + (\alpha_2^2 - k^2) + \left(\alpha_1^2 - \frac{1}{16}\right) + \beta^2 \leq \exp(-\sommaconstant M).
    \]
    We can factor the left hand side as
    \[
    (\alpha_1 + k(k-1))(\alpha_1 - k(k-1)) + (\alpha_2 + k)(\alpha_2 - k) + (\alpha_3 + 1/4)(\alpha_3 - 1/4) + \beta^2
    \]
    This factorization gives us the bound
    \begin{align*}
    \abs{\beta}^2 &\leq \exp(-\sommaconstant M) + (\alpha_1 + k(k-1))\abs{\alpha_1 - k(k-1)} + (\alpha_2 + k)\abs{\alpha_2 - k} + (\alpha_3 + 1/4)\abs{\alpha_3 - 1/4} \\
    &\leq \exp(-\sommaconstant M) + (2k(k-1) + \exp(-\sommaconstant M))\exp(-\sommaconstant M) \\
    &+ (2k + \exp(-\sommaconstant M))\exp(-\sommaconstant M) + (1/2 + \exp(-\sommaconstant M))\exp(-\sommaconstant M) \\
    &\leq 3\left(k^2 + \frac{1}{16}\right)\exp(-\sommaconstant M) + \exp(-\sommaconstant M) \leq 4k^2 \exp(-\sommaconstant M),
    \end{align*}
    so $\abs{\beta} \leq 2k\exp(-\sommaconstant M/2)$. 

    To bound the term $\abs{\bradshostate{k}\discreteposition^2 \discretemomentum^2 \discreteposition^2  \dshostate{\ell} - \bratshostate{k} \position^2 \momentum^2 \position^2 \tshostate{\ell}}$ we first use the canonical comutator relation to expand $\discreteposition^2 \discretemomentum^2 \discreteposition^2 = \discreteposition^4 \discretemomentum^2 - 4\ri \discreteposition^3 \discretemomentum - 2\discreteposition^2 + \discreteposition^2 \Delta''$ where $\Delta'' = \discretemomentum^2 \discreteposition^2 - (\discreteposition^2 \discretemomentum^2 - 4\ri \discreteposition \discretemomentum - 2I)$. After applying triangle inequality by splitting on the discrete minus the continuum operators, the same argument as the other terms implies an $O(\exp(-\sommaconstant M))$ bound on the other 3 terms. The only non-trivial term here is $\discreteposition^2 \Delta''$.

    To bound this term, we expand $\abs{\braapxdshostate{k}\discreteposition^2 \discretemomentum^2 \discreteposition^2  \apxdshostate{\ell}}$ using \cref{eqn:x2 expansion}. We get

    \[
    \abs{\left( \alpha_1 \braapxdshostate{k-2} + \alpha_2\braapxdshostate{k} + \alpha_3 \braapxdshostate{k+2} + \beta\bra{\psi} \right)\Delta'' \apxdshostate{l}} \leq \exp(-\sommaconstant M) + \abs{\beta} \|\Delta''\|
    \]

    But we know $\abs{\beta} \leq 2k \exp(-\sommaconstant M/2)$ and $\|\Delta''\| \leq 7M^2$. Overall, we have that
    \[
    \abs{\braapxdshostate{k}\discreteposition^2 \discretemomentum^2 \discreteposition^2  \apxdshostate{\ell} - \bratshostate{k} \position^2 \momentum^2 \position^2 \tshostate{\ell}} \leq 40k M^2 \exp(-\sommaconstant M/2).
    \]

    Putting everything together, $\norm{\mediumenergyprojector \Delta \mediumenergyprojector} \leq \exp(-\sommaconstant M/3)$ for all $M$ sufficiently large.

    To see the second point, we simply write $\Delta = \discreteposition^4\discretemomentum^2 - 2 \discreteposition^2 \discretemomentum^2 \discreteposition^2 + \discretemomentum^2 \discreteposition^4 - 8 \ri \discreteposition^2$. Now, $\norm{\discreteposition} = \norm{\discretemomentum} = \sqrt{\frac{\pi M}{2}}$. Thus we have
    \begin{align}
        \norm{\Delta} &\leq \norm{\discreteposition^4\discretemomentum^2} + 2 \norm{\discreteposition^2 \discretemomentum^2 \discreteposition^2} + \norm{\discretemomentum^2 \discreteposition^4} + 8 \norm{\discreteposition^2} \\
        &= 4 \cdot \left(\frac{\pi M}{2}\right)^3 + 8 \left(\frac{\pi M}{2}\right) \\
        &\leq 16 M^3 + 4\pi M \leq 17M^3. \qedhere
    \end{align}
\end{proof}

Our strategy to bound each level of the nested commutator tail will be to break the $t$-th order commutator into a noncommutative polynomial of $\discreteposition^2$ and $\Delta$. First, we provide a bound on each monomial appearing in this expansion.

\begin{lemma}\label{lem:single-monomial-bound}
    Let $a + b = t$. For all $M$ sufficiently large,
    \[
    \norm{\lowenergyprojector \discreteposition^{2a} \Delta \discreteposition^{2b} \lowenergyprojector} \leq \begin{cases}
        \left(51 M^3 + 1\right) \cdot (2\lowenergy)^t \cdot \exp(-\sommaconstant M/9) & t \leq (\mediumenergy - \lowenergy)/2 \\
        \left(51 M^3 + 1\right) \cdot (2\lowenergy)^t & t > (\mediumenergy - \lowenergy)/2
    \end{cases}
    \]
\end{lemma}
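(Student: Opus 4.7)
The approach is to split based on whether the flanking $\discreteposition^{2a}$ and $\discreteposition^{2b}$ factors can be absorbed into the medium-energy subspace, where the smallness of $\Delta$ kicks in. The threshold $t \leq (\mediumenergy - \lowenergy)/2$ is dictated exactly by \cref{lem:position-leakage}: since $a + b = t$, we have $2a, 2b \leq 2t \leq \mediumenergy - \lowenergy = 4\lowenergy$ precisely in Case 1, which is the hypothesis required for that lemma to apply.

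For Case 1 I would insert $I = \mediumenergyprojector + (I - \mediumenergyprojector)$ on both sides of $\Delta$, decomposing the target into four terms. The principal term $\lowenergyprojector \discreteposition^{2a} \mediumenergyprojector \Delta \mediumenergyprojector \discreteposition^{2b} \lowenergyprojector$ is bounded via submultiplicativity by $\norm{\discreteposition^{2a} \lowenergyprojector} \cdot \norm{\mediumenergyprojector \Delta \mediumenergyprojector} \cdot \norm{\discreteposition^{2b} \lowenergyprojector}$, using $\norm{\mediumenergyprojector \Delta \mediumenergyprojector} \leq \exp(-\sommaconstant M/3)$ from \cref{lem:delta-bounds} together with a position-moment estimate $\norm{\discreteposition^{2c} \lowenergyprojector} \lesssim (2\lowenergy)^c$ that I discuss below. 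The three remaining ``error'' terms each contain at least one leakage factor of the form $(I - \mediumenergyprojector) \discreteposition^{2c} \lowenergyprojector$ with $c \in \{a,b\} \leq 2\lowenergy$, which by \cref{lem:position-leakage} has norm at most $\exp(-\sommaconstant M/9)$; combined with $\norm{\Delta} \leq 17 M^3$ from \cref{lem:delta-bounds} and the position-moment estimate on the opposite side, each such term is bounded by $17 M^3 (2\lowenergy)^t \exp(-\sommaconstant M/9)$. The dominant contributions are the two mixed error terms, and the sum fits inside the claimed $(51 M^3 + 1)(2\lowenergy)^t \exp(-\sommaconstant M/9)$ after absorbing the polynomial prefactors.

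For Case 2 the position-leakage bound cannot be applied in both factors simultaneously, so I fall back on the crude submultiplicative inequality
\[
\norm{\lowenergyprojector \discreteposition^{2a} \Delta \discreteposition^{2b} \lowenergyprojector} \leq \norm{\discreteposition^{2a} \lowenergyprojector} \cdot \norm{\Delta} \cdot \norm{\discreteposition^{2b} \lowenergyprojector} \leq 17 M^3 \cdot (2\lowenergy)^t \cdot \poly(t,\lowenergy),
\]
combining $\norm{\Delta} \leq 17 M^3$ with the position-moment estimate on both sides. This loses the $\exp(-\sommaconstant M/9)$ savings but still yields the claimed $(51 M^3 + 1)(2\lowenergy)^t$ once the polynomial factor is absorbed into the prefactor.

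The main obstacle is the auxiliary position-moment bound $\norm{\discreteposition^{2c} \lowenergyprojector} \lesssim (2\lowenergy)^c$, which is the only ingredient not already established in the preceding lemmas. In the continuum, the Hermite recurrence $\position^2 \tshostate{k} = \tfrac{\sqrt{k(k-1)}}{2} \tshostate{k-2} + \tfrac{2k+1}{2} \tshostate{k} + \tfrac{\sqrt{(k+1)(k+2)}}{2} \tshostate{k+2}$ shows $\position^2$ is tri-diagonal in the Hermite basis with entries of order $k$; iterating $c$ times, expanding over the resulting paths on levels of the oscillator, and applying Cauchy--Schwarz on the low-energy subspace controls the operator norm by $(2\lowenergy)^c$ up to polynomial factors. \Cref{fact:somma-discretization-facts} then transfers the bound from the continuum states $\tshostate{k}$ to the discrete Hermite states $\apxdshostate{k}$ with only exponentially small additive error. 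The fiddly part is combinatorially bounding the paths and carefully tracking the polynomial factors so that they are absorbed into the $51 M^3 + 1$ prefactor (or, in Case 1, into the gap between $\exp(-\sommaconstant M/3)$ from \cref{lem:delta-bounds} and the target $\exp(-\sommaconstant M/9)$).
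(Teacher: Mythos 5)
Your decomposition into four terms via $I = \mediumenergyprojector + (I - \mediumenergyprojector)$ on both sides of $\Delta$, with the principal term controlled by $\norm{\mediumenergyprojector\Delta\mediumenergyprojector}$ and the three error terms by the leakage bound of \cref{lem:position-leakage} together with $\norm{\Delta}\le 17M^3$, is exactly the paper's argument, and your case split at $t\le(\mediumenergy-\lowenergy)/2$ is invoked for the same reason. You are also right that the auxiliary position-moment bound $\norm{\lowenergyprojector\discreteposition^{2c}\mediumenergyprojector}\lesssim(2\lowenergy)^c$ is the one ingredient not separately stated: the paper simply uses it inside this proof (citing \cref{lem:delta-bounds} for $T_1$, though the cite really only covers the $\Delta$ factor), so your flagging of it as the remaining work to be done is a fair reading.
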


\begin{proof}
    Our strategy will be to introduce the projector $\mediumenergyprojector$ onto the \say{medium energy} subspace spanned by the first $\mediumenergy$ eigenstates of $\discretehamiltonian$. 
    We write
    \[
    \norm{\lowenergyprojector \discreteposition^{2a} \Delta \discreteposition^{2b} \lowenergyprojector} = \norm{\lowenergyprojector \discreteposition^{2a} (\mediumenergyprojector + (I - \mediumenergyprojector)) \Delta (\mediumenergyprojector + (I - \mediumenergyprojector)) \discreteposition^{2b}  \lowenergyprojector}
    \]
    We can break this up into four terms using the triangle inequality:
    \begin{align}
        T_1 &= \norm{\lowenergyprojector \discreteposition^{2a} \mediumenergyprojector} \cdot \norm{\mediumenergyprojector \Delta \mediumenergyprojector} \cdot \norm{\mediumenergyprojector \discreteposition^{2b} \lowenergyprojector} \\
        T_2 &= \norm{\lowenergyprojector \discreteposition^{2a} (I-\mediumenergyprojector)} \cdot \norm{(I -\mediumenergyprojector) \Delta \mediumenergyprojector} \cdot \norm{\mediumenergyprojector \discreteposition^{2b} \lowenergyprojector} \\
        T_3 &= \norm{\lowenergyprojector \discreteposition^{2a} \mediumenergyprojector} \cdot \norm{\mediumenergyprojector \Delta (I-\mediumenergyprojector)} \cdot \norm{(I-\mediumenergyprojector) \discreteposition^{2b} \lowenergyprojector} \\
        T_4 &= \norm{\lowenergyprojector \discreteposition^{2a} (I-\mediumenergyprojector)} \cdot \norm{(I-\mediumenergyprojector) \Delta (I-\mediumenergyprojector)} \cdot \norm{(I-\mediumenergyprojector) \discreteposition^{2b} \lowenergyprojector}
    \end{align}
    The overall norm is upper bounded by $T_1 + T_2 + T_3 + T_4$. We first consider $T_1$. Applying \cref{lem:delta-bounds}, we have
    \[
    T_1 \leq (2\lowenergy)^{a} \cdot \exp(-\sommaconstant M/3) \cdot (2 \lowenergy)^{b} \leq (2\lowenergy)^{t} \cdot \exp(-\sommaconstant M / 3).
    \]
    For $T_2,T_3,T_4$, we will use the trivial bound on the norm $\Delta$ -- that is, we won't use the projector. Applying \cref{lem:position-leakage} and \cref{lem:delta-bounds}, we have
    \[
    T_2 \leq \exp(-\sommaconstant M/9) \cdot 17 M^3 \cdot (2\lowenergy)^{b} \leq 17 M^3 \cdot (2 \lowenergy)^t \cdot \exp(-\sommaconstant M/9)
    \]
    whenever $t \leq (\mediumenergy - \lowenergy)/2$. When $t > (\mediumenergy - \lowenergy)/2$ we recover the bound 
    \[
    T_2 \leq (2\lowenergy)^{a} \cdot 17M^3 \cdot (2\lowenergy)^{b} \leq 17M^3 \cdot (2 \lowenergy)^t
    \]
    By symmetry, we recover the same bounds on $T_3$. Finally, we use similar techniques to obtain the bounds 
    \[
    T_4 \leq \exp(-\sommaconstant M/9) \cdot 17M^3 \cdot \exp(-\sommaconstant M/9) \leq 17M^3 \cdot \exp(-2\sommaconstant M)
    \]
    when $t \leq (\mediumenergy - \lowenergy)/2$ and 
    \[
    T_2 \leq (2\lowenergy)^{a} \cdot 17M^3 \cdot (2\lowenergy)^{b} \leq 17M^3 \cdot (2 \lowenergy)^t
    \]
    when $t > (\mediumenergy - \lowenergy)/2$.
\end{proof}

Next we bound the $t$-th order commutator by applying the triangle inequality along with the monomial bounds above.

\begin{lemma} \label{lem:single-commutator-bound}
For all $M$ sufficiently large,
    \[
    \norm{\lowenergyprojector \nestedcommutator{\discreteposition^2}{\Delta}{t} \lowenergyprojector} \leq \begin{cases}
         \left(51 M^3 + 1\right) \cdot (4\lowenergy)^t \cdot \exp(-\sommaconstant M/9) & t \leq (\mediumenergy - \lowenergy)/2 \\
        \left(51 M^3 + 1\right) \cdot (4\lowenergy)^t & t > (\mediumenergy - \lowenergy)/2
    \end{cases}
    \]
\end{lemma}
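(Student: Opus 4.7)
The plan is to reduce the nested commutator to a noncommutative binomial sum of the monomials already controlled in \cref{lem:single-monomial-bound}, then apply the triangle inequality. The key algebraic identity is
\[
\nestedcommutator{\discreteposition^2}{\Delta}{t} \;=\; \sum_{k=0}^{t} (-1)^{k} \binom{t}{k} \, \discreteposition^{2(t-k)} \, \Delta \, \discreteposition^{2k},
\]
which I would establish by a one-line induction on $t$: the base case $t=1$ is just $[\discreteposition^2,\Delta] = \discreteposition^2 \Delta - \Delta \discreteposition^2$; for the inductive step, commuting $\discreteposition^2$ into the right-hand side produces, after a reindexing, two consecutive binomial coefficients $\binom{t}{k-1}$ and $\binom{t}{k}$, which Pascal's rule combines into $\binom{t+1}{k}$. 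This is simply the usual Leibniz-type expansion for the iterated adjoint action $\mathrm{ad}_{\discreteposition^2}^{\,t}$.

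Once this expansion is in hand, I would sandwich both sides with $\lowenergyprojector$, apply the triangle inequality, and bound each monomial individually:
\[
\norm{\lowenergyprojector \nestedcommutator{\discreteposition^2}{\Delta}{t} \lowenergyprojector} \;\leq\; \sum_{k=0}^{t} \binom{t}{k} \, \norm{\lowenergyprojector \discreteposition^{2(t-k)} \Delta \discreteposition^{2k} \lowenergyprojector}.
\]
By \cref{lem:single-monomial-bound}, every summand (which has the form with $a+b = t$) is bounded by $(51 M^3 + 1)(2\lowenergy)^t \exp(-\sommaconstant M/9)$ in the regime $t \leq (\mediumenergy - \lowenergy)/2$, and by $(51 M^3 + 1)(2\lowenergy)^t$ otherwise. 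Since the bound is uniform in $k$, pulling it out of the sum and using $\sum_{k=0}^{t}\binom{t}{k} = 2^{t}$ converts the $(2\lowenergy)^t$ into $(4\lowenergy)^t$, giving exactly the stated inequality in both regimes.

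There is essentially no obstacle: the only conceptual ingredient is the noncommutative binomial identity for $\mathrm{ad}_{\discreteposition^2}^{\,t}(\Delta)$, which holds in full generality regardless of how $\discreteposition^2$ and $\Delta$ interact, and everything else is a mechanical application of triangle inequality plus the previous lemma. The reason this clean bound is possible is that all the hard work -- controlling leakage of $\discreteposition^{2a}$ out of the low-energy subspace, and controlling $\Delta$ on the medium-energy subspace -- is already packaged into \cref{lem:single-monomial-bound}, which provides a bound depending only on the total degree $t = a+b$ and not on how the degree is split between the two sides of $\Delta$.
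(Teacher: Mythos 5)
Your proof is correct and follows the same route as the paper: expand $\nestedcommutator{\discreteposition^2}{\Delta}{t}$ via the binomial identity for the iterated adjoint $\mathrm{ad}_{\discreteposition^2}^t$, apply the triangle inequality termwise, invoke \cref{lem:single-monomial-bound} uniformly in the split $a+b=t$, and absorb $\sum_k \binom{t}{k} = 2^t$ into $(2N)^t$ to get $(4N)^t$. The only cosmetic difference is the sign convention in the expansion (your version matches $\mathrm{ad}_A^t(B) = \sum_k (-1)^k \binom{t}{k} A^{t-k} B A^k$ exactly, whereas the paper's displayed formula is off by an overall factor of $(-1)^t$ --- immaterial here since norms are taken immediately).
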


\begin{proof}
    We can expand
    \[
    \nestedcommutator{\discreteposition^2}{\Delta}{t} = \sum_{s = 0}^{t} \binom{t}{s} (-1)^s \discreteposition^{2s} \Delta \discreteposition^{2(t-s)}.
    \]
    Applying the triangle inequality, we have
    \[
    \norm{\lowenergyprojector \nestedcommutator{\discreteposition^2}{\Delta}{t} \lowenergyprojector} \leq \sum_{s = 0}^{t} \binom{t}{s} \norm{\lowenergyprojector \discreteposition^{2s} \Delta \discreteposition^{2(t-s)}\lowenergyprojector}.
    \]
    We use the identity $\sum_{s=0}^t \binom{t}{s} = 2^t$ and \cref{lem:single-monomial-bound} to obtain the result.
\end{proof}

Before proving our bounds on the nested commutator tail of $\discreteposition^2$ and $\discretemomentum$, we recall a useful fact about the tail of the power series of the exponential function.
\begin{proposition} \label{prop:exponential-sum-tail}
    \[
    \sum_{k=3a}^{\infty} \frac{a^k}{k!} \leq \exp(-a/4). 
    \]
\end{proposition}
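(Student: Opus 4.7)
}
My plan is to reinterpret the left-hand side as an (unnormalized) Poisson tail. Letting $X \sim \mathrm{Poisson}(a)$, one has $\Pr[X = k] = e^{-a} a^k/k!$, so
\[
\sum_{k=3a}^{\infty} \frac{a^k}{k!} \;=\; e^{a}\,\Pr[X \geq 3a].
\]
The first step will be to bound $\Pr[X \geq 3a]$ by a Chernoff argument, using the Poisson moment generating function $\E[e^{tX}] = e^{a(e^t - 1)}$. For any $t > 0$, Markov's inequality applied to $e^{tX}$ gives $\Pr[X \geq 3a] \leq \exp\bigl(-3at + a(e^t - 1)\bigr)$. I would then take $t = \ln 3$ (the optimizer of the right-hand side), which yields $\Pr[X \geq 3a] \leq \exp\bigl(-a(3\ln 3 - 2)\bigr)$. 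Multiplying through by $e^{a}$ produces
\[
\sum_{k=3a}^{\infty} \frac{a^k}{k!} \;\leq\; \exp\!\bigl(-a(3\ln 3 - 3)\bigr).
\]

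The second step is the numerical check $3\ln 3 - 3 \approx 0.2958 > \tfrac{1}{4}$, which immediately gives the stated bound $\exp(-a/4)$.

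An equivalent elementary route that avoids naming Chernoff explicitly is to observe that for $k \geq 3a$ the ratio of successive summands satisfies $a/(k+1) \leq 1/3$, so that the tail is bounded by a geometric series $\tfrac{3}{2}\cdot a^{3a}/(3a)!$, after which Stirling's estimate $(3a)! \geq \sqrt{6\pi a}\,(3a/e)^{3a}$ recovers the same exponent $-a(3\ln 3 - 3)$. There is no real technical obstacle in either route; the only thing to verify carefully is that the small leading constants (the $\tfrac{3}{2}$ and $1/\sqrt{6\pi a}$ in the elementary route, or the $e^{a}\cdot e^{-a}$ accounting in the Chernoff route) never consume the roughly $0.046$ slack between $3\ln 3 - 3$ and $1/4$, which they do not for any $a \geq 1$.
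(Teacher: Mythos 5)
Your argument is essentially the same as the paper's: both interpret the sum as $e^a \Pr[\mathrm{Poisson}(a) \geq 3a]$, apply the Chernoff/moment-generating-function bound with $t = \ln 3$ to get $\Pr[\mathrm{Poisson}(a) \geq 3a] \leq \exp(-3a\ln 3 + 2a)$, and conclude via $3\ln 3 - 3 \approx 0.296 > 1/4$. The alternative geometric-series/Stirling route you sketch is a nice bonus, but the primary argument matches the paper's proof step for step.
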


\begin{proof}
    The sum is exactly $e^{a}$ times the tail of the Poisson distribution with parameter $a$. Thus, from standard formulae for Poisson tail bounds, we have
    \[
    \sum_{k=3a}^{\infty} \frac{a^k}{k!} \leq e^a \Pr[\poisson{a} \geq 3a] \leq e^{a} \exp(-3a \ln 3 + 2a) \leq \exp(-3 (\ln 3 - 1) a) \leq \exp(-a/4). \qedhere
    \]
\end{proof}

We are now ready to prove the key technical ingredient of our fast-forwarding algorithm: that the higher-order nested commutators of $\discreteposition$ and $\discretemomentum$ fall off sufficiently fast.

\begin{theorem}\label{thm:commutator1}
    Choose $\lowenergy = \sommaconstant M/(40 \log (2M))$. Then
    \[ \norm{\lowenergyprojector\left( \sum_{t=3}^{\infty}\frac{1}{t!} \nestedcommutator{\discreteposition^2}{\discretemomentum^2}{t}\right)\lowenergyprojector} \leq \expbound{\sommaconstant \lowenergy/2}.
    \]
\end{theorem}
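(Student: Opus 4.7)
The plan is to exploit the algebraic collapse $\nestedcommutator{\discreteposition^2}{\discretemomentum^2}{t} = \nestedcommutator{\discreteposition^2}{\Delta}{t-2}$ valid for every $t \geq 3$, and then bound the resulting (much smaller) sum using the tail estimates already established. The identity follows because, by definition of $\Delta$, the operator $\nestedcommutator{\discreteposition^2}{\discretemomentum^2}{2}$ equals a scalar multiple of $\discreteposition^2$ plus $\Delta$; taking one additional outer commutator with $\discreteposition^2$ annihilates the $\discreteposition^2$ component since $\commutator{\discreteposition^2,\discreteposition^2}=0$. Re-indexing $s = t-2$ transforms the target into $\sum_{s=1}^{\infty} \frac{1}{(s+2)!}\nestedcommutator{\discreteposition^2}{\Delta}{s}$.

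Next, I would apply the triangle inequality together with \cref{lem:single-commutator-bound} and split the series at the threshold $s^\star := (\mediumenergy-\lowenergy)/2$. In the few-nestings regime $s \leq s^\star$, each term is bounded by $(51M^3+1)(4\lowenergy)^s \exp(-\sommaconstant M/9)/(s+2)!$, and extending the series to infinity gives the single estimate $(51M^3+1)\exp(4\lowenergy-\sommaconstant M/9)$. With the calibration $\lowenergy = \sommaconstant M/(40\log(2M))$, the positive piece $4\lowenergy$ is of order $\sommaconstant M/\log M$, negligible against $\sommaconstant M/9$, so this contribution is comfortably below $\exp(-\sommaconstant \lowenergy/2)$ once $M$ is large. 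For the many-nestings regime $s > s^\star$, \cref{prop:exponential-sum-tail} applied with $a=4\lowenergy$ bounds the deep tail $s \geq 12\lowenergy$ by $(51M^3+1)\exp(-\lowenergy)$; since $\lowenergy \gg \log M$, the polynomial prefactor is absorbed, giving well under $\exp(-\sommaconstant \lowenergy/2)$.

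The main technical obstacle is the transition band $s \in (s^\star, 12\lowenergy]$, where the leakage-based exponential factor no longer applies term-by-term but the factorial denominator has not yet overwhelmed the $(4\lowenergy)^s$ growth. The way to handle this is to reopen the binomial expansion $\nestedcommutator{\discreteposition^2}{\Delta}{s} = \sum_{a=0}^{s}\binom{s}{a}(-1)^a \discreteposition^{2a}\Delta \discreteposition^{2(s-a)}$: the monomials with both $a \leq 2\lowenergy$ and $s-a \leq 2\lowenergy$ still carry the $\exp(-\sommaconstant M/9)$ decay from \cref{lem:single-monomial-bound}, while the remaining boundary monomials, with one exponent exceeding $2\lowenergy$, come with small binomial weights $\binom{s}{a}$ and their combined contribution, once divided by $(s+2)!$, stays within $\exp(-\sommaconstant \lowenergy/2)$ across the entire band. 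Summing the three regimes then yields the claimed bound.
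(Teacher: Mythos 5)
Your proposal follows the paper's route exactly through the collapse to $\nestedcommutator{\discreteposition^2}{\Delta}{t-2}$, the re-indexing, and the split at $s^\star=(\mediumenergy-\lowenergy)/2$ using \cref{lem:single-commutator-bound} and \cref{prop:exponential-sum-tail}. You are right that there is a transition band: with the stated choice $\lowenergy=\mediumenergy/5$ one has $s^\star=2\lowenergy$, whereas \cref{prop:exponential-sum-tail} with $a=4\lowenergy$ only bites from $s\ge12\lowenergy$, and the paper's displayed step that shifts the lower limit from $s^\star+1$ to $12\lowenergy$ goes the wrong way; this is a genuine issue worth flagging.

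Your proposed repair, however, does not close the band. For $s>4\lowenergy$ there are no monomials with both $a\le2\lowenergy$ and $s-a\le2\lowenergy$, so the ``good'' set is empty and your argument reduces to bounding only the boundary monomials on $(4\lowenergy,12\lowenergy]$. Moreover, the binomial weights are not small there: the dominant weight $\binom{s}{\lfloor s/2\rfloor}$ sits at $a\approx s/2>2\lowenergy$, which is itself beyond the leakage threshold. Most importantly, once $a>2\lowenergy$ (equivalently $2a>4\lowenergy$) \cref{lem:position-leakage} no longer applies, and the only available bound on $\norm{\lowenergyprojector\discreteposition^{2a}(I-\mediumenergyprojector)}$ is the trivial $\norm{\discreteposition}^{2a}=(\pi M/2)^a$. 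Since $M\gg\lowenergy$, the ratio $(\pi M/2)^a/(s+2)!$ is nowhere near small on the band, so the boundary contribution overwhelms the target $\exp(-\sommaconstant\lowenergy/2)$ by a factor like $e^{\Omega(\lowenergy)}\poly(M)$. The robust fix is not refined monomial bookkeeping but a parameter change: take $\lowenergy$ smaller relative to $\mediumenergy$ (roughly $\lowenergy\lesssim\mediumenergy/25$, i.e.\ a smaller constant in the theorem's choice of $\lowenergy$), so that $s^\star\ge12\lowenergy$ and the transition band is empty. Your observation that \cref{lem:single-monomial-bound} really only needs $a\le2\lowenergy$ and $b\le2\lowenergy$ separately, rather than $a+b\le2\lowenergy$, is correct and does extend the ``good'' regime from $2\lowenergy$ to $4\lowenergy$, but this alone does not reach $12\lowenergy$.
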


\begin{proof}
    We observe that $\nestedcommutator{\discreteposition^2}{\discretemomentum^2}{2} = \Delta + 8i\discreteposition^2$. Thus, for $t \geq 3$, we can write
    \[
    \nestedcommutator{\discreteposition^2}{\discretemomentum^2}{t} = \nestedcommutator{\discreteposition^2}{\Delta}{t-2}.
    \]
    Thus,
    \begin{align}
        \norm{\lowenergyprojector \sum_{t=3}^{\infty} \frac{\nestedcommutator{\discreteposition^2}{\discretemomentum^2}{t}}{t!} \lowenergyprojector} &= \norm{\lowenergyprojector \sum_{t=3}^{\infty} \frac{\nestedcommutator{\discreteposition^2}{\discretemomentum^2}{t-2}}{t!} \lowenergyprojector} \\
        &\leq \sum_{t=1}^{\infty}  \frac{ \norm{\lowenergyprojector\nestedcommutator{\discreteposition^2}{\discretemomentum^2}{t}\lowenergyprojector}}{(t+2)!} \\
        &\leq (51 M^3 + 1)\left(e^{-\sommaconstant M/9} \cdot \sum_{t=1}^{(\mediumenergy - \lowenergy)/2} \frac{(4\lowenergy)^t}{t!} + \sum_{t = (\mediumenergy - \lowenergy)/2 + 1}^{\infty} (4\lowenergy)^t \right) & \byref{\cref{lem:single-commutator-bound}} \\
        &\leq (51 M^3 + 1)\left(e^{-\sommaconstant M/9} \cdot \sum_{t=0}^{\infty} \frac{(4\lowenergy)^t}{t!} + \sum_{t = 12\lowenergy}^{\infty} (4\lowenergy)^t \right) \\
        &\leq (51 M^3 + 1)\left(e^{4 \lowenergy -\sommaconstant M/9} + e^{-\lowenergy}\right) & \byref{\cref{prop:exponential-sum-tail}} \\
        &\leq 2(51 M^3 + 1)e^{-\sommaconstant \lowenergy } \leq e^{-\sommaconstant \lowenergy /2}. & & \qedhere
    \end{align}
\end{proof}

As a consequence of our proof technique, the analogous statement holds if we swap position and momentum in the nested commutator.

\begin{corollary}
\label{cor:commutator2}
    Choose $\lowenergy = \sommaconstant M/(40 \log (2M))$. Then
    \[ \norm{\lowenergyprojector\left( \sum_{t=3}^{\infty}\frac{1}{t!} \nestedcommutator{\discretemomentum^2}{\discreteposition^2}{t}\right)\lowenergyprojector} \leq \expbound{\sommaconstant \lowenergy /2}.
    \]
\end{corollary}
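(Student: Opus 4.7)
The plan is to deduce the corollary from \cref{thm:commutator1} via the Fourier symmetry between position and momentum, rather than redoing the nested-commutator analysis from scratch. The key observation is that everything in the theorem's statement and proof is symmetric under the unitary conjugation that swaps $\discreteposition^2$ and $\discretemomentum^2$, namely the centered discrete Fourier transform $F$ used to define $\discretemomentum = F^{-1}\discreteposition F$.

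First I would record the two intertwining identities. By definition, $F \discretemomentum F^{-1} = \discreteposition$, hence $F \discretemomentum^2 F^{-1} = \discreteposition^2$. Combined with the fact that $\discretehamiltonian = (\discreteposition^2 + \discretemomentum^2)/2$ is manifestly Fourier-invariant (since swapping the two summands leaves the sum unchanged), this forces $F \discreteposition^2 F^{-1} = \discretemomentum^2$ as well. Consequently, for every $t \ge 3$,
\[
F\,\nestedcommutator{\discreteposition^2}{\discretemomentum^2}{t}\,F^{-1} \;=\; \nestedcommutator{F\discreteposition^2 F^{-1}}{F\discretemomentum^2 F^{-1}}{t} \;=\; \nestedcommutator{\discretemomentum^2}{\discreteposition^2}{t}.
\]

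Next I would verify that $\lowenergyprojector$ commutes with $F$. Since $[\discretehamiltonian, F] = 0$, each eigenspace of $\discretehamiltonian$ is $F$-invariant, and the low-energy subspace spanned by the $\lowenergy$ lowest-energy eigenvectors is a union of such eigenspaces. Hence $F\lowenergyprojector = \lowenergyprojector F$, and the same holds for $F^{-1}$. Combining these facts with the unitary invariance of the operator norm yields
\[
\norm{\lowenergyprojector \sum_{t=3}^{\infty}\tfrac{1}{t!}\nestedcommutator{\discretemomentum^2}{\discreteposition^2}{t} \lowenergyprojector} \;=\; \norm{F\,\lowenergyprojector \sum_{t=3}^{\infty}\tfrac{1}{t!}\nestedcommutator{\discreteposition^2}{\discretemomentum^2}{t} \lowenergyprojector\,F^{-1}} \;=\; \norm{\lowenergyprojector \sum_{t=3}^{\infty}\tfrac{1}{t!}\nestedcommutator{\discreteposition^2}{\discretemomentum^2}{t} \lowenergyprojector},
\]
which is bounded by $\expbound{\sommaconstant\lowenergy/2}$ by \cref{thm:commutator1}.

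The main place to be careful is the justification of the identity $F\discreteposition^2 F^{-1} = \discretemomentum^2$. A clean way to obtain it without relying on parity conventions of the centered DFT is the Hamiltonian symmetry argument above: $F\discretehamiltonian F^{-1} = \discretehamiltonian$ combined with $F\discretemomentum^2 F^{-1} = \discreteposition^2$ immediately forces it. The only other thing to check is non-emptiness of the commutation $[\lowenergyprojector, F]=0$, which does not require assuming that individual eigenvectors of $\discretehamiltonian$ are eigenvectors of $F$ (they are, up to phases, but we only need $F$-invariance of the summed subspace). Given these observations, the proof is a one-line consequence of \cref{thm:commutator1} and I expect no technical obstacles beyond stating the intertwining identities cleanly.
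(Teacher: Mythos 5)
Your reduction of \cref{cor:commutator2} to \cref{thm:commutator1} via Fourier conjugation is a genuinely different, and in principle cleaner, route than the one the paper takes: the paper simply asserts that the bounds in \cref{lem:position-leakage,lem:delta-bounds,lem:single-commutator-bound} remain valid with position and momentum swapped and re-runs the argument of \cref{thm:commutator1}, whereas you reduce the new inequality to the already-proved one by a unitary change of variables. The mechanics of your argument are sound: if $F\discreteposition^2 F^{-1} = \discretemomentum^2$ and $[F,\lowenergyprojector]=0$, then conjugation by $F$ carries $\lowenergyprojector\,\nestedcommutator{\discretemomentum^2}{\discreteposition^2}{t}\,\lowenergyprojector$ to $\lowenergyprojector\,\nestedcommutator{\discreteposition^2}{\discretemomentum^2}{t}\,\lowenergyprojector$ term by term, and unitary invariance of the operator norm finishes the job.

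However, your justification of the key identity $F\discreteposition^2 F^{-1} = \discretemomentum^2$ is circular. You argue that $\discretehamiltonian$ is ``manifestly Fourier-invariant because swapping the two summands leaves the sum unchanged,'' but whether $F$ swaps the summands is exactly what you are trying to establish --- a priori $F\discreteposition^2 F^{-1}$ could be anything, and without that identity you have no independent reason to assert $F\discretehamiltonian F^{-1}=\discretehamiltonian$. The correct and non-circular route is the one you explicitly disclaim: the parity calculation. By definition $\discretemomentum^2 = F^{-1}\discreteposition^2 F$, so $F\discreteposition^2 F^{-1}=\discretemomentum^2$ is equivalent to $[F^2,\discreteposition^2]=0$. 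For the centered DFT used in the paper one computes $F^2\ket{j}=\ket{-j}$ for $j\in\{-M/2+1,\dots,M/2-1\}$ and $F^2\ket{-M/2}=\ket{-M/2}$ (a reflected permutation with $F^4=I$), and since $\discreteposition^2$ has diagonal entries $\tfrac{2\pi}{M}j^2$, which are invariant under $j\mapsto -j$ (with the endpoint $j=-M/2$ a fixed point), $F^2$ commutes with $\discreteposition^2$. This supplies the missing identity, after which $F\discretehamiltonian F^{-1}=\discretehamiltonian$, $[F,\lowenergyprojector]=0$, and the rest of your argument go through. Note that $[F^2,\discreteposition]\ne 0$ on the single basis state $\ket{-M/2}$ (the familiar parity anomaly of an even-dimensional centered grid), so the argument genuinely needs the square of the position operator, which is what you have.
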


\begin{proof}
    The bounds in the proofs of \cref{lem:position-leakage,lem:delta-bounds,lem:single-commutator-bound} remain unchanged when swapping position and momentum. Thus, following the proof of \cref{thm:commutator1}, we obtain the exact same bound on the projected higher-order nested commutators of $\discretemomentum^2$ and $\discreteposition^2$.
\end{proof}

\begin{theorem}
\label{thm:commutator3}
    Choose $\lowenergy = \sommaconstant M/(40 \log(2M))$. Then
    \[ \norm{\lowenergyprojector\left( \sum_{t=2}^{\infty}\frac{1}{t!} \nestedcommutator{\discretemomentum^2}{\anticommutator{\discreteposition,\discretemomentum}}{t}\right)\lowenergyprojector} \leq \expbound{\sommaconstant \lowenergy /2}.
    \]
\end{theorem}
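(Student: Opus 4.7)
The plan is to mirror the strategy used to prove \cref{thm:commutator1,cor:commutator2}, replacing the role of $\discreteposition^2, \discretemomentum^2$ with $\discretemomentum^2, \anticommutator{\discreteposition,\discretemomentum}$. The first step is to isolate a suitable ``discretization error.'' In the continuum we have the identity $\commutator{\momentum^2, \anticommutator{\position,\momentum}} = -4\ri \momentum^2$ (item~\ref{item:mom} on the list preceding \cref{thm:factor}), so it is natural to define
\[
\Delta' \;:=\; \commutator{\discretemomentum^2, \anticommutator{\discreteposition,\discretemomentum}} + 4\ri\, \discretemomentum^2 .
\]
With this definition, a direct computation gives, for every $t \geq 2$,
\[
\nestedcommutator{\discretemomentum^2}{\anticommutator{\discreteposition,\discretemomentum}}{t} \;=\; \nestedcommutator{\discretemomentum^2}{\Delta'}{t-1},
\]
since the $-4\ri\discretemomentum^2$ piece is killed by the outer commutator with $\discretemomentum^2$. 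This is the exact analogue of the reduction $\nestedcommutator{\discreteposition^2}{\discretemomentum^2}{t}=\nestedcommutator{\discreteposition^2}{\Delta}{t-2}$ used in the proof of \cref{thm:commutator1}.

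Next, I would prove the analogue of \cref{lem:delta-bounds} for $\Delta'$, namely that $\norm{\mediumenergyprojector \Delta' \mediumenergyprojector} \leq \expbound{\sommaconstant M/3}$ and $\norm{\Delta'} \leq \poly(M)$. The second bound is immediate from the triangle inequality together with $\norm{\discreteposition} = \norm{\discretemomentum} = \sqrt{\pi M/2}$; one just counts at most four operator factors in each monomial, giving a bound of $\bO(M^{5/2})$. For the first bound, expand $\Delta' = \discretemomentum^2\discreteposition\discretemomentum + \discretemomentum^2\discretemomentum\discreteposition - \discreteposition\discretemomentum^3 - \discretemomentum\discreteposition\discretemomentum^2 + 4\ri\discretemomentum^2$ (at most four factors) and bound each matrix element $\braapxdshostate{k}\Delta'\apxdshostate{\ell}$ by comparison with its continuum counterpart, which vanishes. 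Every monomial has degree at most four, so each matrix element is within $\expbound{\sommaconstant M}$ of its continuum value by \cref{fact:somma-discretization-facts}. The only subtlety, as in the proof of \cref{lem:delta-bounds}, is a ``sandwiched'' term like $\discretemomentum\,\discreteposition\,\discretemomentum^2$ whose middle factor takes $\apxdshostate{\ell}$ slightly off the span of low-energy continuum eigenstates; one handles this exactly as in the proof of \cref{lem:delta-bounds}, writing $\discretemomentum \apxdshostate{\ell}$ in the eigenbasis, bounding the leakage amplitude $\abs{\beta}$ via an $\bO(k\expbound{\sommaconstant M/2})$ estimate, and using the polynomial norm bound on the remaining factor.

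Once these bounds on $\Delta'$ are in hand, I would combine them with the momentum-leakage statement (the analogue of \cref{lem:position-leakage} with $\discreteposition$ replaced by $\discretemomentum$, which holds by the same argument as \cref{cor:commutator2} since $\discretemomentum$ is conjugate to $\discreteposition$ via the centered QFT and thus has identical matrix-element structure in the $\apxdshostate{k}$ basis, via \cref{fact:somma-discretization-facts}). This gives the monomial bound
\[
\norm{\lowenergyprojector \discretemomentum^{2a}\,\Delta'\,\discretemomentum^{2b}\, \lowenergyprojector} \;\leq\; \bigl(\poly(M) + 1\bigr) \cdot (2\lowenergy)^{a+b}\cdot \begin{cases} \expbound{\sommaconstant M/9} & a+b \leq (\mediumenergy-\lowenergy)/2 \\ 1 & \text{otherwise}\end{cases},
\]
which is the analogue of \cref{lem:single-monomial-bound}. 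Expanding $\nestedcommutator{\discretemomentum^2}{\Delta'}{t-1}$ by the binomial theorem and summing using the same split as in \cref{thm:commutator1} (combined with \cref{prop:exponential-sum-tail}) then yields
\[
\norm{\lowenergyprojector\sum_{t=2}^{\infty}\frac{1}{t!}\nestedcommutator{\discretemomentum^2}{\anticommutator{\discreteposition,\discretemomentum}}{t}\lowenergyprojector} \;\leq\; \expbound{\sommaconstant \lowenergy/2}
\]
for the chosen $\lowenergy = \sommaconstant M/(40\log (2M))$. The main (and only real) obstacle will be the first step: verifying carefully that $\norm{\mediumenergyprojector\Delta'\mediumenergyprojector}$ is exponentially small, in particular handling the sandwiched terms where an intermediate $\discreteposition$ or $\discretemomentum$ factor can push support slightly outside the medium-energy subspace; everything else is a symmetric repetition of the proofs already given.
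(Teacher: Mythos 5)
Your proposal follows essentially the same approach as the paper: define the defect $\Delta'$, reduce $\nestedcommutator{\discretemomentum^2}{\anticommutator{\discreteposition,\discretemomentum}}{t}$ to $\nestedcommutator{\discretemomentum^2}{\Delta'}{t-1}$ for $t\geq 2$, prove the analogue of \cref{lem:delta-bounds} for $\Delta'$, invoke the momentum analogues of \cref{lem:position-leakage,lem:single-monomial-bound,lem:single-commutator-bound}, and sum the tail exactly as in \cref{thm:commutator1}. Worth flagging: your sign convention $\Delta' = \commutator{\discretemomentum^2,\anticommutator{\discreteposition,\discretemomentum}} + 4\ri\discretemomentum^2$ and your full five-term expansion are correct, whereas the paper as written has a sign slip (it writes $-4\ri\discretemomentum^2$, even though the continuum identity is $\commutator{\momentum^2,\anticommutator{\position,\momentum}}=-4\ri\momentum^2$, so the compensating term should be added) and drops two of the monomials $-\discreteposition\discretemomentum^3 - \discretemomentum\discreteposition\discretemomentum^2$ in its expansion of $\Delta'$; your computation catches both. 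A minor nit: each monomial of $\Delta'$ has at most four $\discreteposition/\discretemomentum$ factors, each of norm $\sqrt{\pi M/2}$, so the crude bound is $\bO(M^2)$, not $\bO(M^{5/2})$ -- but as you note, any polynomial bound suffices.
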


Just as in the proof of \cref{cor:commutator2}, the proofs of \cref{lem:position-leakage,lem:single-monomial-bound,lem:single-commutator-bound} hold with the same scaling if $\discretemomentum^2$ is nested instead of $\discreteposition^2$. What remains is to bound the new defect $\Delta^\prime = \commutator{\discretemomentum^2, \anticommutator{\discreteposition, \discretemomentum}} - 4i \discretemomentum^2 = \discretemomentum^2\discreteposition\discretemomentum + \discretemomentum^3\discreteposition - 4i \discretemomentum^2$. In the continuum, the corresponding operator is $0$, and we now prove an analogous statement to \cref{lem:delta-bounds} for this defect.

\begin{lemma}\label{lem:delta-prime-bounds}
Let $M$ be sufficiently large.
    \begin{enumerate}
        \item $\norm{\lowenergyprojector \Delta^\prime \lowenergyprojector} \leq \exp(-\sommaconstant M/2)$.
        \item $\norm{\Delta^\prime} \leq 5 M^2$
    \end{enumerate}
\end{lemma}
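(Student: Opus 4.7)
The plan is to mirror the proof of \cref{lem:delta-bounds}, with the roles of $\discreteposition$ and $\discretemomentum$ interchanged where appropriate.

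For the trivial operator-norm bound (part 2), I use $\|\discreteposition\| = \|\discretemomentum\| = \sqrt{\pi M/2}$, which gives $\|\anticommutator{\discreteposition,\discretemomentum}\| \le 2\|\discreteposition\|\cdot\|\discretemomentum\| = \pi M$, and hence $\|\commutator{\discretemomentum^2,\anticommutator{\discreteposition,\discretemomentum}}\| \le 2\|\discretemomentum^2\|\cdot\|\anticommutator{\discreteposition,\discretemomentum}\| \le \pi^2 M^2$. Combined with $\|4\ri\discretemomentum^2\| \le 2\pi M$, the triangle inequality yields $\|\Delta'\| \le \pi^2 M^2 + 2\pi M \le 5M^2$ for all sufficiently large $M$.

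For the low-energy bound (part 1), I reduce to controlling matrix elements via $\|\lowenergyprojector\Delta'\lowenergyprojector\| \le \sqrt{M}\max_{k,\ell\le\lowenergy}|\braapxdshostate{k}\Delta'\apxdshostate{\ell}|$ and compare each matrix element to its continuum counterpart. Since the continuum analog of $\Delta'$ vanishes identically (by the commutator identity $\commutator{\momentum^2,\anticommutator{\position,\momentum}} = -4\ri\momentum^2$ recorded at the start of \cref{sec:fastforwarding}), it suffices to bound the discrete-continuum discrepancy monomial by monomial. Expanding the commutator as $\discretemomentum^2\discreteposition\discretemomentum + \discretemomentum^3\discreteposition - \discreteposition\discretemomentum^3 - \discretemomentum\discreteposition\discretemomentum^2$, the monomials $\discretemomentum^2$, $\discreteposition\discretemomentum^3$, and $\discretemomentum^3\discreteposition$ are of the form $\discreteposition^a\discretemomentum^b$ with $a+b\le 4$, so \cref{fact:somma-discretization-facts}(2) handles them directly with discrepancy at most $\exp(-\sommaconstant M)$.

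The only delicate terms are the mixed orderings $\discretemomentum^2\discreteposition\discretemomentum$ and $\discretemomentum\discreteposition\discretemomentum^2$, which I treat by transplanting the decomposition trick used for $\discreteposition^2\discretemomentum^2\discreteposition^2$ in \cref{lem:delta-bounds}. The continuum ladder-operator identities give $\momentum\tshostate{\ell} = \tfrac{\ri}{\sqrt 2}\bigl(\sqrt\ell\,\tshostate{\ell-1} - \sqrt{\ell+1}\,\tshostate{\ell+1}\bigr)$, so by \cref{fact:somma-discretization-facts} I can write
\begin{align*}
\discretemomentum\apxdshostate{\ell} = \alpha_1\apxdshostate{\ell-1} + \alpha_2\apxdshostate{\ell+1} + \beta_\ell\ket{\chi_\ell},
\end{align*}
with $|\alpha_i - \alpha_i^{\mathrm{cont}}| \le \exp(-\sommaconstant M)$ and $|\beta_\ell| \le O(\sqrt\ell)\exp(-\sommaconstant M/2)$, the latter obtained by the same squared-norm comparison used in \cref{lem:delta-bounds} applied to $\braapxdshostate{\ell}\discretemomentum^2\apxdshostate{\ell}$ versus $\bratshostate{\ell}\momentum^2\tshostate{\ell}$. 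Substituting into $\braapxdshostate{k}\discretemomentum^2\discreteposition\discretemomentum\apxdshostate{\ell}$, the $\alpha_i$ components reduce to matrix elements of the form $\braapxdshostate{k}\discretemomentum^2\discreteposition\apxdshostate{\ell\pm 1}$ covered by \cref{fact:somma-discretization-facts}(2), while the $\beta_\ell$ component contributes at most $|\beta_\ell|\cdot\|\discretemomentum^2\discreteposition\| \le O(\sqrt\ell\,M^{3/2})\exp(-\sommaconstant M/2)$. A symmetric argument decomposing on the bra side handles $\discretemomentum\discreteposition\discretemomentum^2$. Aggregating these contributions yields $\|\lowenergyprojector\Delta'\lowenergyprojector\| \le \exp(-\sommaconstant M/2)$. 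The main obstacle is controlling these two mixed-order products, for which \cref{fact:somma-discretization-facts} does not apply directly; everything else reduces to routine invocations of that fact together with the triangle inequality.
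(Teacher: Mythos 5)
Your ket-side ladder decomposition of $\discretemomentum\apxdshostate{\ell}$ is a sound alternative to the paper's route. The paper instead rewrites $\discretemomentum^2\discreteposition\discretemomentum = \discretemomentum^3\discreteposition \pm \ri\discretemomentum^2 + \discretemomentum^2\Delta''$ via the discrete CCR defect $\Delta'' := \discretemomentum\discreteposition - \discreteposition\discretemomentum - \ri I$, and then controls the $\discretemomentum^2\Delta''$ piece by expanding $\discretemomentum^2\apxdshostate{k}$ on the bra side exactly as in the proof of \cref{lem:delta-bounds}. Your version instead works at a single power of $\discretemomentum$ on the ket side and isolates a comparable $O(\sqrt{\ell})\exp(-\sommaconstant M/2)$ leakage term; the two schemes are essentially interchangeable and buy the same thing.

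There are, however, two concrete gaps. First, in part (2) the final step $\pi^2 M^2 + 2\pi M \le 5M^2$ is false, since $\pi^2 \approx 9.87 > 5$. With the correct four-term expansion $\commutator{\discretemomentum^2,\anticommutator{\discreteposition,\discretemomentum}} = \discretemomentum^2\discreteposition\discretemomentum + \discretemomentum^3\discreteposition - \discreteposition\discretemomentum^3 - \discretemomentum\discreteposition\discretemomentum^2$, which you rightly use, the constant lands near $\pi^2 M^2 + 2\pi M \le 11 M^2$. (The paper's $5M^2$ is itself an artifact of its displayed expansion of $\Delta'$, which lists only two of the four monomials of the commutator; the discrepancy is cosmetic, as everything downstream tolerates any $\poly(M)$ bound, but you should not write $5M^2$ when your own estimate contradicts it.)

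Second, in part (1) you claim that $\discretemomentum^3\discreteposition$, and the $\alpha_i$ contributions $\braapxdshostate{k}\discretemomentum^2\discreteposition\apxdshostate{\ell\pm 1}$, are covered directly by \cref{fact:somma-discretization-facts}(2). That fact is stated for $\braapxdshostate{k}\discreteposition^a\discretemomentum^b\apxdshostate{\ell}$ with position to the \emph{left} of momentum; neither $\discretemomentum^3\discreteposition$ nor $\discretemomentum^2\discreteposition$ is of that form. You need the intermediate adjoint step, using Hermiticity of $\discreteposition$ and $\discretemomentum$ to write, e.g., $\braapxdshostate{k}\discretemomentum^3\discreteposition\apxdshostate{\ell} = \left(\braapxdshostate{\ell}\discreteposition\discretemomentum^3\apxdshostate{k}\right)^*$, before the fact applies (now with $a=1,b=3$). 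This is routine, but without it the citation is incorrect; the fact as written does not license arbitrary orderings of $\discreteposition$ and $\discretemomentum$, and the whole point of the extra lemma in the paper is that these mixed orderings require a nontrivial reduction.
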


\begin{proof}
    The proof will be very similar to that of \cref{lem:delta-bounds}.
    First we show the the second point:
    \[
    \norm{\Delta^\prime} \leq \norm{\discretemomentum^2\discreteposition\discretemomentum} + \norm{\discretemomentum^3 \discreteposition} + 4\norm{\discretemomentum^2} \leq 2\left(\frac{\pi M}{2}\right)^2 + 4\left(\frac{\pi M}{2}\right) \leq 5 M^2
    \]
    whenever $M$ is sufficiently large.
\end{proof}

To bound this quantity we will need the following bound.
\begin{lemma}
    $\abs{\braapxdshostate{k} \discretemomentum^2 \discreteposition \discretemomentum \apxdshostate{l} - \bratshostate{k} \momentum^2 \position \momentum \tshostate{l}} \leq 10 \exp(-\sommaconstant M)$
\end{lemma}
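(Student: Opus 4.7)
The plan is to normal-order the operator $\discretemomentum^2 \discreteposition \discretemomentum$ so that it becomes a sum of monomials of the form $\discreteposition^a \discretemomentum^b$ (to which \cref{fact:somma-discretization-facts} applies directly) plus a controllable discretization defect coming from the fact that the discretized canonical commutation relation only holds approximately. Using $\discretemomentum \discreteposition = \discreteposition \discretemomentum - [\discreteposition,\discretemomentum]$ and its continuum analogue $\momentum\position = \position\momentum - \ri I$, one gets the exact identities
\begin{align*}
\discretemomentum^2 \discreteposition \discretemomentum &= \discretemomentum^3 \discreteposition + \ri \discretemomentum^2 + \discretemomentum^2 \Xi, \\
\momentum^2 \position \momentum &= \momentum^3 \position + \ri \momentum^2,
\end{align*}
where $\Xi := [\discreteposition,\discretemomentum] - \ri I$ encodes the discretization defect of the commutator.

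Subtracting these identities at the level of the matrix element produces three pieces: the $\discretemomentum^3\discreteposition$ versus $\momentum^3\position$ difference, the $\ri \discretemomentum^2$ versus $\ri \momentum^2$ difference, and the residual $\braapxdshostate{k}\discretemomentum^2 \Xi \apxdshostate{l}$. The first two are each bounded by $\exp(-\sommaconstant M)$ by \cref{fact:somma-discretization-facts} applied with $a=1,b=3$ (after Hermitian conjugation, since $\discreteposition$ and $\discretemomentum$ are self-adjoint) and with $a=0,b=2$ respectively. So the whole problem reduces to bounding the residual $|\braapxdshostate{k}\discretemomentum^2 \Xi \apxdshostate{l}|$.

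The main obstacle is precisely this residual: $\Xi$ has operator norm $\|\Xi\| \le 2\|\discreteposition\|\|\discretemomentum\|+1 = O(M)$, so a naive norm bound is far too weak. To handle it, I would mimic the expansion strategy used inside the proof of \cref{lem:delta-bounds}: writing
\[
\discretemomentum^2 \apxdshostate{k} = \alpha_1 \apxdshostate{k-2} + \alpha_2 \apxdshostate{k} + \alpha_3 \apxdshostate{k+2} + \beta \ket{\chi},
\]
where the $\alpha_i$ are the continuum Hermite-recurrence coefficients of $\momentum^2 \tshostate{k}$ (namely $-\tfrac12\sqrt{k(k-1)},\ k+\tfrac12,\ -\tfrac12\sqrt{(k+1)(k+2)}$ up to exponentially small error), and $|\beta| \leq O(k\exp(-\sommaconstant M/2))$ follows by comparing $\braapxdshostate{k}\discretemomentum^4\apxdshostate{k}$ against $\bratshostate{k}\momentum^4\tshostate{k}$ via \cref{fact:somma-discretization-facts}. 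Each low-energy matrix element $\braapxdshostate{m}\Xi\apxdshostate{l}$ with $m\in\{k-2,k,k+2\}$ is $O(\exp(-\sommaconstant M))$, since $\braapxdshostate{m}\Xi\apxdshostate{l} = \braapxdshostate{m}[\discreteposition,\discretemomentum]\apxdshostate{l} - \ri \braapxdshostate{m}\apxdshostate{l}$, and both pieces match the continuum value $\ri \delta_{m,l}$ up to \cref{fact:somma-discretization-facts} error (applied with $a=b=1$ and with $a=b=0$ respectively). The tail contribution $\beta\bra{\chi}\Xi\apxdshostate{l}$ is controlled by $|\beta|\cdot \|\Xi\| = O(kM)\cdot \exp(-\sommaconstant M/2)$; under the standing low-energy assumption $k,\ell \leq \lowenergy = O(M/\log M)$ the polynomial prefactor is subexponential and absorbs into $\exp(-\sommaconstant M)$ after mildly reducing $\sommaconstant$, so collecting all contributions yields the claimed $10\exp(-\sommaconstant M)$ bound.
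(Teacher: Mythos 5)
Your proposal is essentially the same argument as the paper's: both normal-order $\discretemomentum^2\discreteposition\discretemomentum$ against the canonical commutation relation, reduce to bounding $\braapxdshostate{k}\discretemomentum^2\,(\text{CCR defect})\,\apxdshostate{\ell}$, expand $\discretemomentum^2\apxdshostate{k}$ into three nearby Hermite states plus a leakage term $\beta\ket{\chi}$, and pay $\abs{\beta}\cdot\norm{\text{defect}}$ for the tail. That said, your version is actually the more careful one. The paper's defect operator is written as $\Delta' = \discretemomentum\discreteposition - \ri I - \discreteposition\discretemomentum$, which does \emph{not} vanish in the continuum (it equals $-2\ri I$), and correspondingly the paper's normal-ordered identity $\discretemomentum^2\discreteposition\discretemomentum = \discretemomentum^3\discreteposition - \ri\discretemomentum^2 + \discretemomentum^2\Delta'$ has a sign inconsistency; your $\Xi = [\discreteposition,\discretemomentum] - \ri I$ vanishes in the continuum and gives the clean identity $\discretemomentum^2\discreteposition\discretemomentum = \discretemomentum^3\discreteposition + \ri\discretemomentum^2 + \discretemomentum^2\Xi$. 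You are also more honest about the quantitative bookkeeping: with $\abs{\beta}=O(k\exp(-\sommaconstant M/2))$ (the bound actually proved in \cref{lem:delta-bounds}) and $\norm{\Xi}=O(M)$, the final bound carries a polynomial-in-$M$ prefactor which can only be absorbed into $\exp(-\sommaconstant M)$ by mildly shrinking $\sommaconstant$; you say this explicitly, whereas the paper writes down a $7M^2\exp(-\sommaconstant M)$ term in mid-proof but then states a clean $10\exp(-\sommaconstant M)$ bound without acknowledging the absorption. In short, same route, but your write-up corrects a sign slip and is cleaner about where the constant goes.
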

\begin{proof}
    
    We use the canonical commutation relation to derive $\discretemomentum^2 \discreteposition \discretemomentum = \discretemomentum^3 \discreteposition - \ri \discretemomentum^2 + \discretemomentum^2 \Delta'$ where $\Delta' = \discretemomentum \discreteposition - \ri I - \discreteposition \discretemomentum$. Now, using the \cref{fact:somma-discretization-facts} we can bound the first two terms after applying triangle inequality. Thus, the only remaining task is to bound $\abs{\bradshostate{k} \discretemomentum^2 \Delta' \dshostate{l}}$ is small.

    To do this we use the expansion,
 \begin{align}
    \discretemomentum^2\apxdshostate{k} = \alpha_1 \apxdshostate{k-2} + \alpha_2\apxdshostate{k} + \alpha_3 \apxdshostate{k+2} + \beta\ket{\psi}
    \end{align}

where $\abs{\beta} \leq 2 \exp(-\sommaconstant M)$, as in the proof of \cref{lem:delta-bounds}. Using this we can rewrite the above as

\begin{align}
    \abs{\left(\alpha_1 \braapxdshostate{k-2} + \alpha_2\braapxdshostate{k} + \alpha_3 \braapxdshostate{k+2} + \beta\bra{\psi} \right) \Delta' \apxdshostate{l}} \leq 5\exp(-\sommaconstant M) + 7M^2 \exp(-\sommaconstant M)
\end{align}

where the second term is just bounded using $\abs{\beta} \norm{\Delta'}$.
\end{proof}

Now we are ready to prove \cref{thm:final bounds}. We recall it below.

\finalbounds*

\begin{proof}[Proof of \cref{thm:final bounds}]
    Notice that the bound in \cref{lem:single-commutator-bound} holds when replacing the operator $\discreteposition$ (or $\discretemomentum$) with $c \discreteposition$ where $|c| \leq 1$. Nothing else changes in the proofs.
\end{proof}

\section{Quantum Hermite transform}

Building on our fast-forwarding result, we can implement an efficient QHT if we can implement the state preparation, filtering, and QPE algorithms efficiently. We now discuss these steps and then present the algorithm. The main result of this section is the formal version of \cref{thm:informal qht}.

\begin{theorem}[Quantum Hermite transform, formal]
\label{thm:quantumhermitetransform}
  Let $\lowenergy$ be the dimension for the quantum Hermite transform and $\eps>0$ be the error. 
    Then, there exists a quantum circuit that performs the transformation 
    \begin{align}
    \sum_{n=0}^{\lowenergy-1} \alpha_n \ket n \mapsto 
    \sum_{n=0}^{\lowenergy-1} \alpha_n \dshermite{n}
\end{align}
within additive error $\eps$, where the coefficients $\alpha_n$ are arbitrary and satisfy $\sum_{n=0}^{\lowenergy-1}| \alpha_n|^2 =1$.
The quantum circuit acts on a Hilbert space of dimension $M= \poly(N, 1/\eps)$ and,
if $N > \log(1/\eps)$, the complexity of the quantum circuit is
\begin{align}
    \cO \left( (\log \lowenergy + \log(1/\eps))^3 \times \log(1/\eps)\right) \;.
 \end{align}
\end{theorem}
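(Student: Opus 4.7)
The plan is to follow the four-step sequence sketched in \cref{sec:overview}: adjoin $m$ ancillas to $\ket n$, prepare $\ket n \dshermite n$ by a state preparation routine conditioned on $\ket n$, then use fast phase estimation on $\discretehamiltonian$ to uncompute the $\ket n$ register. Throughout, I will fix the ambient dimension $M=\poly(N,1/\eps)$ large enough that $N$ lies well inside the low-energy subspace of \cref{thm:fastforwarding} and the errors from the discretization (\cref{fact:somma-discretization-facts}), from the fast-forwarding, and from phase estimation each contribute at most a constant fraction of $\eps$.

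For the Hermite state preparation step I would first build the Plancherel-Rotach ansatz $\prstate n$ introduced in \cref{sec:overview} by loading $\varphi(x_j)$, the envelope $1/\sqrt{\sin\varphi(x_j)}$, and the Plancherel-Rotach phase into ancilla registers via coherent arithmetic, applying controlled rotations, and then uncomputing the arithmetic. This costs $O((\log N + \log(1/\eps))^2)$ gates and yields $|\braket{\phi_n|\psi_n}| = \Omega(1)$. I would then amplify $\prstate n$ to $\dshermite n$ using the fixed-point amplitude amplification of~\cite{YLC14} with $O(\log(1/\eps))$ iterations. The reflection about $\prstate n$ in each iteration is obtained directly from its preparation unitary; the reflection about $\dshermite n$ is implemented by a filtering subroutine that runs fast QPE on $\discretehamiltonian/M$, applies a $-1$ phase conditioned on the phase register lying in a window of width $\Theta(1/M)$ around $(n+1/2)/M$, and then inverts the QPE. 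Because $\dshermite n$ is exponentially close to the exact eigenvector $\apxdshostate n$ (\cref{fact:somma-discretization-facts}) and because the eigenvalues of $\discretehamiltonian$ are $\Theta(1)$-separated on the low-energy subspace, this composes into a reflection about $\dshermite n$ with additive error $\exp(-\Omega(N)) + O(\eps)$.

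For the uncomputation step I would apply the same fast QPE on $\discretehamiltonian/M$ directly to the $\dshermite n$ register, XOR the phase register into the original $\ket n$ register to cancel it, and invert the QPE to release the ancillas. Both QPE calls (the one inside filtering and the final uncomputation) need $b = O(\log N + \log(1/\eps))$ bits of precision to separate adjacent eigenvalues $n+\tfrac12$ and $n+\tfrac32$ of $\discretehamiltonian$ with failure probability at most $\eps$. Each requires $b$ controlled evolutions $e^{-\ri \discretehamiltonian t/M}$ with $t$ ranging over powers of two up to $\Theta(M)$; by \cref{thm:fastforwarding} each such evolution costs $O((\log N + \log(1/\eps))^2)$ gates independently of $t$, so each QPE costs $O((\log N + \log(1/\eps))^3)$ gates. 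Multiplying by the $O(\log(1/\eps))$ fixed-point amplification rounds (each of which invokes one filtering reflection) and adding the single uncomputation QPE gives the stated total gate count $O((\log N + \log(1/\eps))^3 \cdot \log(1/\eps))$, and the error contributions add to $\eps$ by the choice of $M$.

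The principal obstacle I anticipate is controlling all sources of leakage outside the low-energy subspace across $O(\log(1/\eps))$ amplification rounds: $\prstate n$ has some amplitude on eigenvectors above level $\lowenergy$, each filtering step injects further high-energy components proportional to the QPE window width and its failure probability, and fixed-point amplification amplifies everything inside the nominal ``good'' subspace, including spurious eigenvectors whose eigenvalues happen to fall inside the filter window. The delicate part of the proof will be showing that the sum of all such out-of-subspace contributions, propagated through the amplification recursion and composed with the exponentially small errors coming from \cref{thm:fastforwarding} and from the $\dshermite n \approx \apxdshostate n$ discretization, can be absorbed into $\eps$ by taking $M$ to be a sufficiently large polynomial in $N$ and $1/\eps$. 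This is also what ultimately fixes the precise polynomial degree relating $M$ to $N$ and $1/\eps$ and the constants hidden in the complexity bound.
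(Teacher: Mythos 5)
Your proposal follows the paper's proof essentially step for step: preparing the Plancherel--Rotach ansatz $\prstate{n}$ by coherent arithmetic, filtering to $\dshermite{n}$ via fast quantum phase estimation built on the fast-forwarding of $\discretehamiltonian$, boosting with fixed-point amplitude amplification, and uncomputing $\ket n$ with another fast-forwarded phase estimation, with the same cost accounting giving $\cO((\log N + \log(1/\eps))^3 \log(1/\eps))$. The two details you treat implicitly are (i) that the fixed-point amplification must run \emph{coherently} over a superposition of all $n$ with $n$-dependent overlaps $\beta_n$, which the paper handles by introducing a subspace generalization (\cref{lem:subspacefixed-pointAA}), and (ii) the leakage control you correctly flag as the main obstacle, which is exactly the content of \cref{lem:PlancherelStates} and fixes the polynomial $M = \cO(N^{9/4}/\eps^{13/4})$.
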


This result is based on the state preparation steps outlined in~\cref{sec:overview}.
More generally, 
 we can choose any dimension $M \geq c\lowenergy^{9/4}/\eps^{13/4}$,
where $c>0$ is some constant. The cost of the QHT
is then $\cO(\log^3 M \times \log (1/\eps))$, and choosing 
$M= \poly(N, 1/\eps)$ gives~\cref{thm:quantumhermitetransform}.

\subsection{State preparation}\label{sec:stateprep}

We first show how to efficiently prepare a set of quantum states that have constant overlap with the Hermite states $\dshermite{n}$ of~\cref{eq:dshermite} in a low-energy subspace of interest. We will then used fixed-point amplitude amplification to increase the overlap with $\dshermite{n}$ arbitrarily. Our strategy is based on the  Plancherel-Rotach approximation, which approximates the Hermite functions
$\psi_n(x)$ in the `oscillatory' region, 
specified by the domain $|x|< x_{\rm tp}:=\sqrt{2n+1}$.

\begin{lemma}[Plancherel-Rotach asymptotics for Hermite functions, Thm. 8.22.9~\cite{szeg1939orthogonal}]
\label{lem:Plancherel}
    Let $\varphi(x) = \mathrm{arccos}(x/\sqrt{2n+1})$. Let $c>0$ be any fixed positive constant and define the domain
    \begin{align}
        \mathcal{D}_c = \{x \in \mathbb{R} : |x| < \sqrt{2n+1} \ \mathrm{ and } \ c \leq \varphi(x) \leq \pi - c \}.
    \end{align}
    Then, for each $n > 0$ and all $x \in \mathcal{D}_c$, we have
    \begin{align}
        \psi_n(x) =  \frac {(-1)^n 2^{\frac 1 4}}{\pi^{\frac 1 2}n^{\frac 1 4}}\frac 1 {\sqrt{\sin \varphi(x)}}\left( \sin \left[\left ( \frac n 2 + \frac 1 4 \right) (\sin (2 \varphi(x))-2 \varphi(x)) + \frac{3 \pi}4\right] + \mathcal{O} \left(\frac 1 n \right)\right).
    \end{align}
\end{lemma}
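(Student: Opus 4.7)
The plan is to give a saddle-point (steepest-descent) proof, which is the standard route to Plancherel--Rotach asymptotics and the one essentially followed by Szeg\H{o}. Since this lemma is invoked only in the \emph{oscillatory} (classically allowed) region bounded away from the turning points, the analysis avoids the delicate Airy matching near $|x|=\sqrt{2n+1}$, and the entire estimate reduces to combining contributions from two complex-conjugate saddle points on the unit circle.

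\textbf{Step 1: Integral representation.} I would start from the standard Cauchy--type representation
\begin{equation*}
H_n(x) \;=\; \frac{n!}{2\pi \ri}\oint_{\gamma} \frac{e^{2xz-z^2}}{z^{n+1}}\,\rD z,
\end{equation*}
where $\gamma$ is any contour encircling the origin once counterclockwise. Multiplying by the Gaussian prefactor gives
\begin{equation*}
\psi_n(x) \;=\; \frac{(-1)^n\, n!}{\sqrt{2^n n!\,\sqrt{\pi}}}\;\frac{1}{2\pi \ri}\oint_{\gamma} \exp\!\Big(-\tfrac{x^2}{2}+2xz-z^2-(n{+}1)\log z\Big)\,\rD z.
\end{equation*}
I would then rescale $z = \sqrt{n/2}\,w$ to expose the large parameter $n$, writing the integrand in the form $e^{n\,S(w)}\cdot (\text{mild factor})$ for an explicit phase function $S(w;x)$.

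\textbf{Step 2: Saddle-point analysis.} With the substitution $x=\sqrt{2n+1}\cos\varphi$, the saddle-point equation $S'(w)=0$ reduces to $w^2-\sqrt{2}\,x w/\sqrt{n}+1=0$, whose solutions lie on the unit circle at $w_{\pm}=e^{\pm \ri\varphi}$ to leading order in $n$. I would deform $\gamma$ through both saddles along steepest-descent paths (which is legitimate since the exponent is single-valued away from $z=0$), apply the standard Gaussian integral at each saddle, and sum the two contributions. Using Stirling's formula $n!\sim \sqrt{2\pi n}\,(n/e)^n$ to combine the normalization with the saddle values $S(w_{\pm})$, the prefactor $\sqrt{2^n n!\,\sqrt{\pi}}$ telescopes with $e^{n S(w_\pm)}$ to produce the amplitude $\frac{2^{1/4}}{\pi^{1/2}n^{1/4}}\tfrac{1}{\sqrt{\sin\varphi(x)}}$. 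The complex-conjugate phases $n S(w_{\pm})$ contribute $\pm \ri[(n/2+1/4)(\sin(2\varphi)-2\varphi)+3\pi/4]$ after careful bookkeeping, and their sum yields precisely the advertised sine.

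\textbf{Step 3: Uniform error control.} I would bound the remainder by the next term in the stationary-phase expansion, which gives a contribution of order $1/n$ times the leading amplitude. The constant $c>0$ restricting $\varphi\in[c,\pi-c]$ is precisely what is needed to keep the two saddles $w_{\pm}=e^{\pm \ri\varphi}$ well separated and bounded away from each other, ensuring that the Gaussian approximation at each saddle is valid and that cross-terms are negligible; the implicit constant in $\cO(1/n)$ depends on $c$.

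\textbf{Main obstacle.} The genuinely delicate part is \emph{uniformity}: the saddles coalesce as $\varphi\to 0$ or $\varphi\to\pi$, and the Gaussian approximation degrades as $1/\sin\varphi$ grows. The constant $c$ neatly sidesteps this by excluding a neighborhood of the turning points (where one would instead need an Airy-function uniform expansion due to Plancherel--Rotach). Within $\mathcal{D}_c$, the bookkeeping of the phase---verifying that the combination of $\log z$, the Stirling expansion, and the saddle values produces exactly the phase $(n/2+1/4)(\sin 2\varphi-2\varphi)+3\pi/4$ rather than an analogous but numerically different expression---is the step most prone to computational error and deserves the most care. Since this is a classical result (\cite{szeg1939orthogonal}, Thm.~8.22.9), the cleanest route in the paper is simply to cite it; the sketch above records how one would reconstruct the proof from first principles.
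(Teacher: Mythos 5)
Your proposal is correct in outline and, importantly, you correctly recognize at the end that the paper does not prove this lemma at all --- it simply quotes Theorem~8.22.9 of Szeg\H{o} and the cleanest move is to cite. Your saddle-point sketch (Cauchy integral representation from the generating function, rescaling $z=\sqrt{n/2}\,w$, locating the conjugate saddles $w_\pm\approx e^{\pm\ri\varphi}$ under the substitution $x=\sqrt{2n+1}\cos\varphi$, combining the two Gaussian contributions with Stirling, and using $c>0$ to keep the saddles separated so the error is $\cO(1/n)$) is precisely the classical argument behind that cited theorem, so the two routes coincide.
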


In~\cref{app:PRproperties} we describe other properties of these approximations.
For the case $n=0$, it will suffice to approximate $\psi_0(x)$ by a constant like  $\psi_0(0)\approx .75$, assuring constant overlap.

Hence, the Hermite functions can be approximated by an oscillatory term and an amplitude that depends on $x$ via $1/\sqrt{\sin (\varphi(x))}$, in the oscillatory region. This motivates the definition of a set of quantum states that have constant overlap with the finite-dimensional Hermite states $\dshermite{n}$ by considering, for example, a domain included in the oscillatory region and far from the `turning points' $\pm x_{\rm tp}$ where, for example,  $|x| \le \sqrt{(3/4)(2n+1)}$. In the following we disregard the phase $(-1)^n$ in the definition of $\psi_n(x)$ to ease the exposition.

\begin{lemma}[Plancherel-Rotach states]
\label{lem:PlancherelStates}
Let $N>0$ be the dimension for the QHT and
and $\eps>0$ be the error.
For all $0 \le n \le N-1$, let
\begin{align}
      \prvar_n(x) : =  \frac {2^{\frac 1 4}}{\pi^{\frac 1 2}n^{\frac 1 4}}\frac 1 {\sqrt{\sin \varphi(x)}}\left( \sin \left[\left ( \frac n 2 + \frac 1 4 \right) (\sin (2 \varphi(x))-2 \varphi(x)) + \frac{3 \pi}4\right] \right) \times g_n(x) \;,
\end{align}
where $\varphi(x) := \arccos(x/\sqrt{2n+1})$,
be the Plancherel-Rotach approximation of the $n^{\rm th}$ Hermite function. 
The function $g_n(x)$ is some smooth approximation to the indicator function and satisfies
\begin{align}
    g_n(x):= \left \{ \begin{matrix}
        0 & {\rm if} \ |x| \ge \sqrt{(3/4)(2n+1)}+ 1/ (10\sqrt{2n+1})\;, \\
        1 & {\rm if} \ |x| \le \sqrt{(3/4)(2n+1)} \;,\\
        \in (0,1) & {\rm if} \ \sqrt{(3/4)(2n+1)}+ 1/(10\sqrt{2n+1}) >|x| >\sqrt{(3/4)(2n+1)} \;.
    \end{matrix}\right .
\end{align}
For any $M >N$, define the $M$-dimensional `Plancherel-Rotach' quantum states 
\begin{align}
\label{eq:plancherelrotachstates}
    \prstate{n}:=\left( \frac{2\pi}M\right)^{1/4}\sum_{j=-J(n)}^{J(n)-1} \prvar_n(x_j)  \ket j \;, \; 0 \le n \le \lowenergy-1 \;,
\end{align}
where $x_j:=j \sqrt{2\pi/M}$ denotes the discretized space coordinate, and
\begin{align}
    J(n):= \left \lceil \sqrt{  \frac 3 4\frac{(2 n+1)M}{2 \pi}}\right \rceil 
\end{align}
is such that $J(n)\sqrt{2\pi/M} \approx  \sqrt{(3/4)(2n +1)}$ and $J(n)<M/2$. Let $\dshermite{n}$ be Hermite states of~\cref{eq:dshermite}, that is,
\begin{align}
    \dshermite{n}:=\left( \frac{2\pi}M\right)^{1/4}\sum_{j=-M/2}^{M/2-1} \psi_n(x_j)  \ket j \; , \; 0 \le n \le \lowenergy-1 \;.
\end{align}
Then, there exist constants $c>0$ and $c'>0$
such that, for all $M \ge c (\lowenergy)^{9/4}/\eps^{13/4}$ and all 
$0 \le n \le \lowenergy -1$, the overlap is
\begin{align}
\label{eq:plancherelrotachoverlap}
    \bra{\psi_n}\psi_n\rangle \ge c'  \;.
\end{align}
For every such $M$ there exists $N_{\rm high}=\cO(\lowenergy/\eps)$ satisfying $N < N_{\rm high}<M$
and,
for all $n \le \lowenergy-1$,
\begin{align}
\label{eq:lemmaleakage}
   \| \Pi_{> N_{\rm high}}  \prstate{n}\|^2\le \eps \;,
\end{align}
 where $\Pi_{> N_{\rm high}}$ is 
a `high-energy' projector onto the subspace
orthogonal to the subspace spanned by $\{\dshermite{n}\}_{0 \le n \le N_{\rm high}}$. (Note that 
$N_{\rm high} \ll M$ asymptotically.)

\end{lemma}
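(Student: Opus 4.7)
The plan is to reduce both assertions to properties of the continuum Hermite functions, with the discretization errors controlled by the polynomial scaling of $M$ in $N$ and $1/\epsilon$.

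For the overlap bound~\cref{eq:plancherelrotachoverlap}, I would proceed in three steps. First, relate the discrete overlap $\langle\phi_n|\psi_n\rangle = \sqrt{2\pi/M}\sum_{|j|\leq J(n)} \phi_n(x_j)\,\psi_n(x_j)$ to the continuum integral $\int_{-L_n}^{L_n} \phi_n(x)\,\psi_n(x)\,dx$, where $L_n$ denotes the support radius of $\phi_n$. On this region $g_n$ cuts off before the turning point, so $\sin\varphi(x) \geq \sin(\pi/6 - O(1/\sqrt n)) \geq 1/4$; hence $\phi_n$ and its derivatives are bounded polynomially in $n$, and a trapezoid-rule estimate controls the Riemann-sum error by $\mathrm{poly}(n)/M$. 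Second, invoke \cref{lem:Plancherel} to replace $\phi_n(x)$ by $\psi_n(x)$ on $|x|\leq\sqrt{(3/4)(2n+1)}$; the pointwise $O(1/n)$ error integrated over a region of width $O(\sqrt n)$ gives an $L^2$ error of $O(1/\sqrt n)$. Third, evaluate $\int_{|x|\leq\sqrt{(3/4)(2n+1)}} \psi_n(x)^2\,dx$ using the leading-order Plancherel-Rotach density $\psi_n(x)^2 \sim (\pi\sqrt{2n+1-x^2})^{-1}$; this integral is $(2/\pi)\arcsin(\sqrt{3/4}) = 2/3$ in the limit $n\to\infty$. Together the steps give $\langle\phi_n|\psi_n\rangle \geq c'$ for all $n$ above some threshold $n_0$, and the finitely many smaller $n$ (including $n=0$, handled by direct computation against the Gaussian $\psi_0$) are treated separately.

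For the leakage bound~\cref{eq:lemmaleakage}, I would use an energy expectation combined with a spectral Markov argument. By \cref{fact:somma-discretization-facts}, for indices up to $cM$ the discrete Hermite states $|\psi_k\rangle$ coincide with the eigenvectors $|\tilde\psi_k\rangle$ of $\discretehamiltonian$ up to error $\exp(-\gamma M)$, with eigenvalues within $\exp(-\gamma M)$ of $k+1/2$. Hence $\|\Pi_{>N_{\mathrm{high}}}|\phi_n\rangle\|^2$ is bounded above by $\langle\phi_n|\discretehamiltonian|\phi_n\rangle/N_{\mathrm{high}}$ plus an exponentially small correction. Writing $\discretehamiltonian = (\discreteposition^2 + \discretemomentum^2)/2$, the position part is bounded deterministically by the support of $|\phi_n\rangle$: $\langle\phi_n|\discreteposition^2|\phi_n\rangle \leq (3/4)(2n+1) + O(1) = O(N)$. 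For the momentum part, I would first approximate the discrete norm $\|\discretemomentum|\phi_n\rangle\|^2$ by the continuum integral $\int|\phi_n'(x)|^2\,dx$ via the Fourier representation of $\discretemomentum$ together with Poisson summation; then use $\phi_n \approx \psi_n\cdot g_n$ to split the kinetic energy into the $\psi_n$-contribution (equal to $n+1/2$ by the QHO virial identity) and the $g_n'$-contribution (bounded pointwise by $O(\sqrt n)$ over a region of width $O(1/\sqrt n)$, contributing $O(n)$). Choosing $N_{\mathrm{high}} = \Theta(N/\epsilon)$ then yields the claim.

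The main obstacle is the momentum expectation $\langle\phi_n|\discretemomentum^2|\phi_n\rangle$: since $|\phi_n\rangle$ oscillates on a scale $\sim 1/\sqrt n$ that is comparable to the lattice spacing $\sqrt{2\pi/M}$ for moderate $M$, the difference between $\discretemomentum^2$ and $-\partial_x^2$ creates aliasing terms that must be estimated carefully. This is also where the specific scaling $M = \Omega(N^{9/4}/\epsilon^{13/4})$ enters: the largest among the Plancherel-Rotach replacement error, the Riemann-sum error for the overlap, and the Fourier aliasing error in $\langle\discretemomentum^2\rangle$ dictates how large $M$ must be.
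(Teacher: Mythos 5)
Your overlap argument follows the paper's first two steps (trapezoid-rule transfer, then Plancherel--Rotach replacement of $\phi_n$ by $\psi_n$) but diverges in the third: you compute $\int_{|x|\leq x_{\max}}\psi_n^2\,dx$ asymptotically from the Plancherel--Rotach density $(\pi\sqrt{2n+1-x^2})^{-1}$, obtaining $2/3$, whereas the paper uses the elementary Markov bound $\int_{|x|>x_{\max}}\psi_n^2\,dx \leq \int x^2\psi_n^2\,dx / (2x_{\max}^2) = 1/3$, giving the weaker constant $1/3$. Both yield a constant lower bound on the overlap; yours matches the paper's numerics but leans more heavily on the Plancherel--Rotach error analysis than the paper's argument, which only needs $\int x^2\psi_n^2\,dx = n+1/2$.

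Your leakage argument has a genuine gap that the paper is careful to avoid. You propose a spectral Markov bound on $\discretehamiltonian$: $\|\Pi_{>N_{\rm high}}\ket{\phi_n}\|^2 \lesssim \langle\phi_n|\discretehamiltonian|\phi_n\rangle/N_{\rm high}$. For this to hold, the projector $\Pi_{>N_{\rm high}}$ --- which the lemma defines as the orthogonal complement of $\mathrm{span}\{\dshermite{n'}\}_{n'\leq N_{\rm high}}$, not in terms of any eigendecomposition --- must coincide, up to small error, with the projector onto the eigenspace of $\discretehamiltonian$ with eigenvalues above $N_{\rm high}+1/2$. But \cref{fact:somma-discretization-facts} does not give you this: item (1) only asserts near-orthonormality of the discrete Hermite states, and the Remark in \cref{sec:fastforwarding} explicitly warns that eigenvectors from the high-energy sector could, upon discretization, land exponentially close to low eigenenergies, so the $k$-th eigenvector cannot be assumed close to $\dshermite{k}$. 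The paper's own proof never invokes $\discretehamiltonian$ for this step: it runs Markov in the continuum, where $\hamiltonian$ has exact eigenvalue $n+1/2$, obtains $\sum_{n'>N_{\rm high}}|\alpha_{n,n'}|^2 = O(N/N_{\rm high})$, and then transfers the finitely many coefficients $\alpha_{n,n'}$ with $n'\leq N_{\rm high}$ to the discrete setting by trapezoid-rule estimates of $\langle\psi_{n'}|\phi_n\rangle$, using only the near-orthonormality of the $\dshermite{n'}$ for the Pythagoras bookkeeping. That route also removes the discrete-momentum aliasing analysis you correctly flag as the main obstacle in yours; it is simply not needed if you stay in the continuum for the Markov step and only discretize overlaps at the end.
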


\begin{proof}
    The quantum states $\prstate{n}$
    are suggested by the Plancherel-Rotach approximation of \cref{lem:Plancherel}. Asymptotically, they can be shown to be subnormalized for $n \le \lowenergy-1$ because we are cutting off the domain by choosing $J(n)<M/2$.
    Our goal is then to carry the Plancherel-Rotach approximation to the discrete space, which involves approximating integrals by finite sums. We also note that in the subspace of interest, where $n \le \lowenergy-1$, we will satisfy $n \ll M$ asymptotically.

For each $n \le \lowenergy-1$, we are selecting as the domain of interest one where $|x| \le x_{\max}:=\sqrt{(3/4)(2n+1)}$.
In the definition of $\prstate{n}$, we let $x_j$
run up to $\pm J(n) \sqrt{2\pi/M}$, and our choice of $J(n)$ is such that $|x_j|$ is upper bounded by $x_{\max}$ up to an asymptotically small correction. This domain is purposely far from, and does not include, the turning points $\pm x_{\rm tp} = \pm \sqrt{(2n+1)}$ where the Plancherel-Rotach approximations are known to fail (e.g., $\varphi(x)=0$ at $x=x_{\rm tp}$). 

We will start by obtaining some properties of these approximations in the domain of interest. Note that rather than considering~\cref{lem:Plancherel} directly in this domain, we are modifying the approximations slightly to avoid issues when considering the momentum operator and the Fourier transform. For example, rather than assuming the functions to be exactly $0$ for all $x$ such that $|x| > x_{\max}$ and oscillating otherwise, we introduced smooth envelope functions $g_n(x)$ in the definition with the following properties (for each $n \ge 0$):
\begin{align}
\label{eq:windowfunction}
    g_n(x):= \left \{ \begin{matrix}
        0 & {\rm if} \ |x| \ge x_{\max}+ 1/ (10\sqrt{2n+1})\;, \\
        1 & {\rm if} \ |x| \le x_{\max} \;,\\
        \in (0,1) & {\rm if} \ x_{\max}+ 1/(10\sqrt{2n+1}) >|x| >x_{\max} \;.
    \end{matrix}\right .
\end{align}
There is nothing special about the term 
$1/ (10\sqrt{2n+1})$, other that it still guarantees the relevant $x$ to be far from the turning points, and that the magnitudes of the derivatives of $g_n(x)$ can be properly bounded. A good choice for these functions is given in Ref.~\cite{somma2019quantum} and obtained by convolving the indicator function with the bump function. Formally, if $\delta:=1/(20\sqrt{2n+1})$, these are the convolutions
\begin{align}
 \label{eq: bump function}
    g_n(x) :=\frac {2a}{\delta} \int_{-x_{\max}-\delta}^{x_{\max}+\delta} \rD x' \exp(-\frac{1}{1-4(x'-x)^2/\delta^2})   \; 
\end{align}
The constant is $a \approx 2.25$.
We will use these to prove \cref{lem:PlancherelStates}, but note that  other choices can also work. Besides the properties in~\cref{eq:windowfunction}, they also satisfy
$\frac{\rD}{\rD x}g_n(x)=\frac{\rD^2}{\rD x^2}g_n(x)=0$ for $|x|>x_{\max}+ 1/ (10\sqrt{2n+1})$ and $|x|<x_{\max}$.
Also, $|\frac{\rD}{\rD x}g_n(x)| =\cO( \sqrt{2n+1})$
and $|\frac{\rD^2}{\rD x^2}g_n(x)| =\cO(2n+1)$. 
From now on the domain where $|x|\le x_{\max}+1/(10\sqrt{2n+1})$ will be referred to as the `domain of interest'.

We will first establish the desired properties by working in the continuum, and use the following properties of the relevant functions.
In the domain of interest,
the Hermite functions satisfy
$|\psi_n(x)| =\cO(1/n^{1/4}) $ and we can use the property
$\frac{\rD}{\rD x}  \psi_n(x)=\frac 1 {\sqrt 2}(\sqrt n \psi_{n-1}(x)-\sqrt{n+1}\psi_{n+1}(x))
$
to show that  $|\frac{\rD}{\rD x}  \psi_n(x)|=\cO(n^{1/4})$, and Schr\"odinger equation 
    $\frac{\rD^2}{\rD x^2}  \psi_n(x) = (x^2-(2n+1)) \psi_n(x)$
to show  that
 $|\frac{\rD^2}{\rD x^2}  \psi_n(x)|=\cO(n^{3/4})$.
According to~\cref{lem:Plancherel},  we have $\phi_n(x) = \psi_n(x) + \cO(1/n^{5/4})$ in this domain.
Also, we can use the bounds in~\cref{app:PRproperties} and the chain rule for $\phi_n(x)=\tilde \phi_n(x) \times g_n(x)$ to determine
$|\phi_n(x)|=\cO(1/n^{1/4})$, 
 $|\frac{\rD}{\rD x}  \phi_n(x)|=\cO(n^{1/4})$, and
  $|\frac{\rD^2}{\rD x^2}  \phi_n(x)|=\cO(n^{3/4})$
in the domain of interest.

These properties allow us to establish a lower bound on the overlap:
\begin{align}
\label{eq:Hermiteoverlap}
    \int_{-\infty}^\infty \rD x \; \psi_n(x) \phi_n(x)& =\int_{-x_{\max}-1/(10\sqrt{2n+1})}^{x_{\max}+1/(10\sqrt{2n+1})} \rD x \;\psi_n(x) \phi_n(x) \\
    & = \int_{-x_{\max}-1/(10\sqrt{2n+1})}^{x_{\max}+1/(10\sqrt{2n+1})} \rD x \;\psi_n(x) (\psi_n(x) +\cO(1/n^{5/4}))\\
    &= \int_{-x_{\max}}^{x_{\max}} \rD x \;|\psi_n(x)|^2 + \cO \left( 1/(n+1/2)\right) \;.
\end{align}    
 In addition,
 \begin{align}
 \int_{-x_{\max}}^{x_{\max}} \rD x \;|\psi_n(x)|^2   &= 1  -2  \int_{x_{\max}}^{\infty} \rD x \;  |\psi_n(x)|^2 \\
        & \ge 1 -   \frac{2}{(x_{\max})^2}
         \int_{x_{\max}}^{\infty} \rD x \;  x^2 |\psi_n(x)|^2 \\
         & \ge 1 -   \frac{2}{(x_{\max})^2}
         \int_{0}^{\infty} \rD x \;  x^2 |\psi_n(x)|^2 \\
         & = 1   -\frac{1}{2(x_{\max})^2}
        (2n+1) \\
        & = 1/3 \;.
\end{align}
Hence, the overlap is lower bounded by $1/3 + \cO(1/n)$.
Numerical calculations show that this overlap actually approximates $2/3$ and remains close to $2/3$ for all $n \ge 0$. 

Next, we show that the  functions $ \phi_n(x)$ have negligible overlap with the high-energy sector, defined by some $N_{\rm high}$. This proof is the one that will use the smoothness property of $g_n(x)$ (i.e., a bounded second derivative); otherwise, a sharp cutoff could result in high energies. Recall that in the continuum, the QHO Hamiltonian is 
$\frac 1 2 (-\frac{\rD^2}{\rD x^2}+x^2)$.
The expected value of $x^2$ on $\phi_n(x)$, corresponding to the potential term, satisfies
\begin{align}
   \int_{-\infty}^\infty \rD x \; x^2 | \phi_n(x)|^2  
   & \le   \int_{-x_{\max}}^{x_{\max}} \rD x \; x^2 | \phi_n(x)|^2 \\
    & = \int_{-x_{\max}}^{x_{\max}} x^2 |\psi_n(x) + \cO(1/n^{5/4})|^2
    \\
   & = \int_{-x_{\max}}^{x_{\max}} \rD x \; x^2 | \psi_n(x)|^2 + \cO((x_{\max})^3 /n^{3/2}) 
   \\
   & = \int_{-x_{\max}}^{x_{\max}} \rD x \; x^2 | \psi_n(x)|^2 + \cO(1) \;,
\end{align}
and we also know for the Hermite functions
\begin{align}
    \int_{-x_{\max}}^{x_{\max}} \rD x \; x^2 | \psi_n(x)|^2  \le   \int_{-\infty}^{\infty} \rD x \; x^2 | \psi_n(x)|^2 = n+1/2 \;.
\end{align}
It follows that 
\begin{align}
   \int_{-\infty}^\infty \rD x \; x^2 | \phi_n(x)|^2 =\cO(n+1/2)\;.
   \end{align}
The expected value of the kinetic term satisfies
 \begin{align}
   -\int_{-\infty}^\infty \rD x \;  \phi_n(x)  \frac{\rD^2}{\rD x^2} \phi_n(x)  & =
   -\int_{-x_{\max}-1/(10\sqrt{2n+1})}^{x_{\max}+1/(10\sqrt{2n+1})} \rD x \;  \phi_n(x)  \frac{\rD^2}{\rD x^2} \phi_n(x) \\
   & = \cO(n +1/2)\;,
\end{align}
where we used the above properties of $\phi_n(x)$.
It follows that the energy given by these functions, which is the expectation of the QHO Hamiltonian on $\phi_n(x)$, satisfies
\begin{align}
    {\rm E}\left[\frac 1 2 (-\frac{\rD^2}{\rD x^2}+x^2) \right] \le C (n+1/2) \;,
\end{align}
for some $C>0$ that can be determined.
 We can then use Markov's inequality
 to bound the support of $\phi_n(x)$ in the high-energy space, spanned by Hermite functions with $n' > N_{\rm high}$; that is
    \begin{align}
    \label{eq:continuumleakage}
    \sum_{n'=N_{\rm high}+1}^\infty \left(
     \int_{-\infty}^\infty \rD x  \;       \phi_n(x)   \psi_{n'}(x) \right)^2 \le  \frac{C(n+1/2)}{N_{\rm high}+1/2}   \;.
    \end{align}

We readily proved some desired features of the approximated Hermite functions $\phi_n(x)$ in the oscillatory domain of interest. Our next goal is to prove that the finite-dimensional Plancherel-Rotach states of~\cref{eq:plancherelrotachstates} satisfy similar properties: their overlaps with the discrete Hermite states $\dshermite{n}$ is bounded by a constant, and their support on the high-energy subspace can be arbitrarily bounded. 
We will obtain these results via approximations of integrals by finite sums. We will make repeatedly use of a standard trapezoidal rule:
\begin{align}
\left|    \int_a^b \rD x \; f(x) -  h \sum_{k=1}^{M} \frac{f(x_{k-1})+f(x_k)} 2  \right| \le \frac{(b-a) h^2}{12} \max_{x \in (a,b)}\left| \frac{\rD^2}{\rD x^2}f(x)\right| \;.
\end{align}
Here $h$ is the size of the discretization, $M=(b-a)/h$, and $x_k = a + kh$.
While this rule will suffice because the scaling of the algorithm is only logarithmic in the dimension, we note that improved results could be obtained with a detailed analysis that uses exponentially convergent trapezoidal rules, since the functions are smooth. Nevertheless, we follow the standard trapezoidal rule to simplify the proof.

Consider the overlap
\begin{align}
    \bra{\psi_{n}} \phi_n \rangle= \left(\frac {2\pi}M \right)^{1/2}\sum_{j=-M/2}^{M/2} \psi_n(x_j)   \phi_n(x_j)\;,
\end{align}
which approximates~\cref{eq:Hermiteoverlap}
 using a discretization of size $\sqrt{2\pi/M}$. Indeed, using the trapezoidal rule above we can show that
$ \bra{\psi_n} \phi_n \rangle \approx \int_{-\infty}^\infty \rD x \; \psi_n(x)  \phi_n(x)$ within additive error
\begin{align}
    \cO \left(\frac{n+1/2} M \right) 
\end{align}
since $   \phi_n(x)$  is zero for $|x|\ge x_{\max}+1/(10\sqrt{2n+1})$. For this we used
the properties of the functions to show
\begin{align}
      \left|\frac{\rD^2}{\rD x^2} (\psi_n(x) \phi_n(x))\right| 
      & \le 
      \left|\frac{\rD^2}{\rD x^2} \psi_n(x) \phi_n(x)\right| + 2\left| \frac{\rD}{\rD x} \psi_n(x)   \frac{\rD}{\rD x} \phi_n(x)\right| 
      + \left| \psi_n(x) \frac{\rD^2}{\rD x^2} \phi_n(x)\right| \\
      & = \cO(n^{1/2})\;
\end{align}
Then, a lower bound to $\bra{\psi_{n}} \phi_n \rangle$
is
\begin{align}
    1/3 +  \cO \left(\frac{n+1/2} M \right) +
     \cO \left(\frac{1}{n+1/2} \right) \;.
\end{align}
Note that we obtained a constant lower bound in the asymptotic regime, which suffices for our goal of the efficient Hermite transform (i.e., we will always choose $M$ large enough to make $(n+1/2)/M$ very small), but a suitable value of $c>0$ in~\cref{eq:plancherelrotachoverlap} 
for all $n \ge 0$ can be obtained with a detailed analysis. Indeed, numerical simulations show that this overlap is approximately $2/3$ for many $n$'s. See~\cref{fig:overlap}.

\begin{figure}
\centering
\includegraphics[width=10cm]{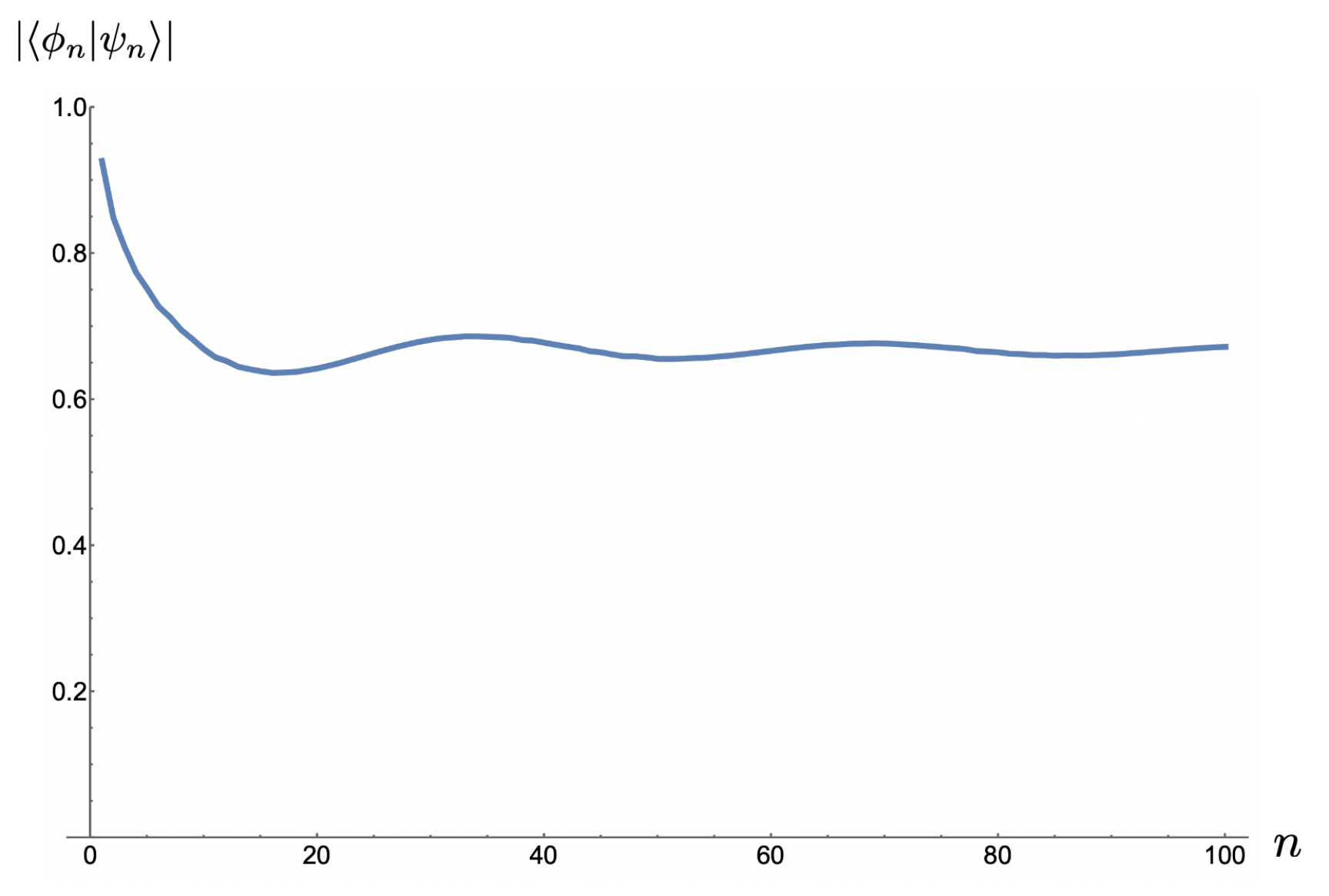}
\caption{The overlap between the Plancherel-Rotach states and the Hermite states for dimension $M=10^5$ and $0 \le n \le 100$. The overlap approaches $2/3$, which corresponds to the probability mass of the Hermite functions in the domain $|x| \le \sqrt{(3/4)(2n+1)}$.}
\label{fig:overlap}
\end{figure}

Our next goal is to prove that the support of the Plancherel-Rotach states $\prstate{n}$, where $0 \le n \le \lowenergy-1$, is arbitrarily small in the subspace of the discrete QHO states specified by an $N_{\rm high}>\lowenergy$. In the continuum we already proved that, expressing $\phi_n(x) = \sum_{n'=0}^\infty \alpha_{n,n'} \psi_{n'}(x)$, then 
\begin{align}
    \sum_{n' > N_{\rm high}}|\alpha_{n,n'} |^2 \le C \frac{n+1/2}{N_{\rm high}+1/2} \;.
\end{align}
This is equivalent to~\cref{eq:continuumleakage} since Hermite functions are orthogonal.
Consider now $\ket{\phi_n}$ and $\ket{ \psi_{n'}}$ for $0 \le n \le \lowenergy-1$ and $0 \le n' \le N_{\rm high}$. Note that, in general for all $ x \in \mathbb R$,
\begin{align}
 \left |\psi_{n'}(x)\right|\lesssim 1.086 \; ,  \;   \left | \frac{\rD}{\rD x}\psi_{n'}(x)\right|  \le c (2n'+1)^{1/4} \; ,    \;  \left | \frac{\rD^2}{\rD x^2}\psi_{n'}(x)\right|  \le c' (2n'+1)^{3/4} \;.
\end{align}
for some constants $c>0$ and $c'>0$.
Since $n'$ can be larger than $n$, then we cannot use improved bounds for these functions. It follows that
\begin{align}
    \left| \frac{\rD^2}{\rD x^2} \left(\psi_{n'}(x) \phi_n(x)\right)\right| = \cO \left( ( N_{\rm high}+1/2)^{3/4}\right) \;.
\end{align}
Then, using the trapezoidal rule we have 
$\bra{\psi_{n'}}  \phi_n \rangle \approx \int_{-\infty}^\infty \rD x \; \psi_n'(x)  \phi_n(x)$ within additive error
\begin{align}
    \cO \left(\frac{(n+1/2)^{1/2}( N_{\rm high}+1/2)^{3/4}} M \right) = \cO\left(\frac{( N_{\rm high}+1/2)^{5/4}} M \right)\;.
\end{align}
since we only need to integrate in the domain of interest.
Then,
\begin{align}
   \bra{\psi_{n'}} \phi_n \rangle =  \alpha_{n,n'}+\cO\left(\frac{(N_{\rm high}+1/2)^{5/4}} M \right) \;.
\end{align}
The trapezoidal rule also implies
\begin{align}
    \|\ket {\phi_n} \|^2& = \int_{-x_{\max}-1/(10\sqrt{2n+1})}^{x_{\max}+1/(10\sqrt{2n+1})} \rD x \; | \phi_n(x)|^2 + \cO\left( \frac{(n+1/2)^{5/4}}{M}\right) \\
    & = \sum_{n'=0}^\infty| \alpha_{n,n'}|^2 + \cO\left( \frac{(n+1/2)^{5/4}}{M}\right)  \;,
\end{align}
since the prior bounds imply $|\frac{\rD^2}{\rD x^2}   \phi_n(x)| = \cO(n^{3/4})$ in the domain of interest.

Let $M > N_{\rm high}>N$.
Consider the subspace spanned by 
$\{\dshermite{n}\}_{0 \le n \le N_{\rm high}}$
and its orthogonal complement; both span $\mathbb C^M$.
Let $\Pi_{>N_{\rm high}}$ be the projector onto the latter. Without loss of generality we can write
\begin{align}
    \ket{\phi_n}=\sum_{n'=0}^{N_{\rm high}}\beta_{n,n'}\dshermite{n'} + \ket{\phi_n^\perp}
\end{align}
where $\ket{\phi_n^\perp}=\Pi_{>N_{\rm high}}  \ket{\phi_n}$ is subnormalized and the amplitudes are $\sum_{n'=0}^{N_{\rm high}}|\beta_{n,n'}|^2 \le 1$.
Note that $\bra{\psi_{n'}}  \phi_n\rangle = \beta_{n,n'}+ \cO(N_{\rm high}\exp(-\Omega(M)))$ for $n' \le N_{\rm high}$.
Our goal is to bound $\|\ket{\phi_n^\perp}\|$. 
Combining the prior results we have
\begin{align}
\|\ket{\phi_n^\perp}\|^2  &=      \|\ket {\phi_n} \|^2 - \| \sum_{n'=0}^{N_{\rm high}}\beta_{n,n'}\dshermite{n}\|^2 \\
& = \sum_{n'=0}^\infty| \alpha_{n,n'}|^2 + \cO\left( \frac{(n+1/2)^{5/4}}{M}\right) - \sum_{n'=0}^{N_{\rm high}}|\beta_{n,n'}|^2 + \cO(N_{\rm high} \exp (-\Omega(M))) \\
& = \sum_{n'=0}^\infty| \alpha_{n,n'}|^2 + \cO\left( \frac{(n+1/2)^{5/4}}{M}\right) - \sum_{n'=0}^{N_{\rm high}} 
|\bra{\psi_{n'}} \phi_n\rangle|^2 + \cO((N_{\rm high})^2 \exp (-\Omega(M))) \\
& =  \sum_{n'=0}^\infty| \alpha_{n,n'}|^2 + \cO\left( \frac{(n+1/2)^{5/4}}{M}\right) - \sum_{n'=0}^{N_{\rm high}} 
|\alpha_{n,n'}|^2 + \cO\left(N_{\rm high}\frac{(N_{\rm high}+1/2)^{5/4}} M \right) \\
& = \sum_{n'>N_{\rm high}} | \alpha_{n,n'}|^2
+ \cO\left(\frac{ N_{\rm high}^{9/4}} M \right) \\
& = \cO\left(\frac{ N+1/2} {N_{\rm high}+1/2} \right) + \cO\left(\frac{ (N_{\rm high}+1/2)^{9/4}} M \right) \;.
\end{align}
We dropped the term exponentially small in $M$ since it is asymptotically subdominant.
To establish $\|\ket{\phi_n^\perp}\|^2=\cO(\eps)$ it suffices to choose $N_{\rm high}=\Omega( N/\eps)$ and
$M = \Omega((N_{\rm high})^{9/4}/\eps)$.
This shows~\cref{eq:lemmaleakage} and completes the proof.

\end{proof}
    
\paragraph{Efficient preparation of the Plancherel-Rotach states.}

Thus far we considered the Plancherel-Rotach  states defined in~\cref{eq:plancherelrotachstates} as good initial states to prepare the corresponding $\dshermite{n}$, since they have overlap bounded by a positive constant. For the quantum Hermite transform to be efficient, we need to show that these states can be efficiently prepared.
Note that it suffices to produce $\cO(\eps)$-approximations to the $\ket{\phi_n}$ since this will imply a QHT with error $\cO(\eps)$. Unfortunately, we cannot allow for a constant-error approximation to $\ket{\phi_n}$ because we need to make sure that the support of the state in the high-energy space is still bounded by $\cO(\eps)$.

\begin{lemma}[Efficient state preparation]
\label{lem:state prep}
Let $0 \le n \le N-1$ and $\eps$ the error. Then there is a quantum circuit that performs the map
\begin{align}
    \ket n \mapsto \ket n \frac{{\ket{\phi_n}}}{\|{\ket{\phi_n}}\|}
\end{align}
    within error $\eps$ using $\cO((\log N + \log(1/\eps))^3)$ gates.
\end{lemma}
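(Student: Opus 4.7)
The plan is to build $\ket n\ket{\phi_n}/\|\phi_n\|$ by coherently synthesizing each amplitude $\phi_n(x_j)$ from its closed-form Plancherel--Rotach expression, encoding it into a flag qubit by controlled rotation, and then using fixed-point amplitude amplification to drive the flag reliably into $\ket 0$.

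Concretely, I would first compute $J(n)$ from $\ket n$ with $\cO(\log^2 N)$ coherent arithmetic and prepare the uniform superposition $\frac{1}{\sqrt{2J(n)}}\sum_{j=-J(n)}^{J(n)-1}\ket j$ using the standard inequality-testing primitive for uniform states over an arbitrary integer interval. Next I would evaluate $(2\pi/M)^{1/4}\phi_n(x_j)$ into an ancilla register to $b=\cO(\log N+\log(1/\eps))$ bits: since $\phi_n$ is a constant-depth composition of $\arccos$, $\sin$, $\cos$, $\sqrt{\cdot}$, division, and the smooth bump $g_n$ applied to $x_j/\sqrt{2n+1}$, and each of these takes $\cO(b^2)$ coherent gates by Newton iteration or polynomial approximation, the arithmetic cost is $\cO((\log N+\log(1/\eps))^2)$. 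A uniform upper bound $B(n)=\Theta((nM)^{-1/4})$ on the discrete amplitudes (valid since the cutoff $g_n$ keeps $\sin\varphi\ge 1/2$ throughout the support) makes $\alpha_j=(2\pi/M)^{1/4}\phi_n(x_j)/B(n)\in[-1,1]$; applying a controlled $R_y$ rotation bit-by-bit from the computed angle on a flag qubit and uncomputing the angle register yields $\ket n\ket j\bigl(\alpha_j\ket 0+\sqrt{1-\alpha_j^2}\ket 1\bigr)$.

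The flag-$\ket 0$ component is proportional to $\ket{\phi_n}/(B(n)\sqrt{2J(n)})$, and since $J(n)=\Theta(\sqrt{nM})$ the overall scaling factor is a constant independent of $n$; combined with the lower bound $\|\phi_n\|=\Theta(1)$ from \cref{lem:PlancherelStates}, the flag-$\ket 0$ amplitude is $\Theta(1)$ uniformly in $n$. Applying $\cO(\log(1/\eps))$ rounds of fixed-point amplitude amplification then boosts this amplitude to $\ge 1-\eps$, producing $\ket n\otimes(\ket{\phi_n}/\|\phi_n\|)\otimes\ket 0$ within trace error $\eps$; since each round uses two copies of the preparation circuit, the total gate count is $\cO((\log N+\log(1/\eps))^2\cdot\log(1/\eps))\le\cO((\log N+\log(1/\eps))^3)$. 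The main obstacle will be the end-to-end precision accounting: one must show that $b=\cO(\log N+\log(1/\eps))$ bits suffice at every stage, which is most delicate at the $\arccos$ near its boundary (handled because $g_n$ keeps $\varphi$ bounded away from $0$ and $\pi$), at the outer $\sin$ whose argument carries the prefactor $n/2+1/4$ and therefore demands $\log(n/\eps)$-bit phase accuracy, and at the $1/\sqrt{\sin\varphi}$ factor (bounded because $\sin\varphi\ge 1/2$). Once these local errors are propagated through the controlled rotation and the amplification by standard triangle inequalities, the claimed $\eps$-accuracy follows.
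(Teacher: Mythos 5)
Your proposal is correct and arrives at the right complexity, but it takes a genuinely different route from the paper's proof. The paper splits the Plancherel--Rotach approximation into a magnitude piece $A_n(x) = g_n(x)/\sqrt{\sin\varphi(x)}$ and an oscillatory piece $e^{\pm\ri\Theta_n(x)}$; it loads $A_n$ via the inequality-testing state-preparation primitive of Sanders et al., then installs the phases $e^{\pm\ri\Theta_n(x)}$ by phase kickback through a separate $\mathfrak{phase}$ oracle, and recovers the real combination $A_n(x)(e^{\ri\Theta_n}-e^{-\ri\Theta_n})$ by a Hadamard across the two phase branches. You instead evaluate the \emph{entire} real-valued $\phi_n(x_j) \propto A_n(x_j)\sin\Theta_n(x_j)$ in one arithmetic pass, rescale by a uniform bound $B(n)=\Theta((nM)^{-1/4})$ to get $\alpha_j\in[-1,1]$, and encode $\alpha_j$ directly into a flag qubit by controlled $R_y$ rotation (Grover--Rudolph style), which handles sign changes of $\phi_n$ for free. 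The bookkeeping $J(n)B(n)^2=\Theta(1)$ that makes the flag-$\ket0$ amplitude a constant is exactly right, and your identification of the three precision bottlenecks (boundary of $\arccos$, the $\cO(n)$-scaled argument of $\sin$, the $1/\sqrt{\sin\varphi}$ denominator) matches the actual sources of error. One small omission relative to the paper: the paper prices a full-precision evaluation of the smoothed cutoff $g_n$ (defined as a convolution with a bump function) at an additional $\cO(\log(1/\eps))$ factor because of the underlying one-dimensional integral, whereas you fold $g_n$ into the generic ``constant-depth composition of elementary functions, each $\cO(b^2)$'' bucket. If you want your gate count to match the paper's accounting exactly you should either note that this factor is absorbed into the $(\log N+\log(1/\eps))^3$ envelope, or specify how you approximate $g_n$ (e.g., by a piecewise polynomial of $\polylog(1/\eps)$ degree). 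The paper also briefly addresses a cyclic permutation to pass between the $j\in\{-M/2,\ldots,M/2-1\}$ and $j\in\{0,\ldots,M-1\}$ labelings at $\cO(\log^2 M)$ cost, which you don't mention but which is subdominant. Neither of these is a genuine gap; your direct-rotation decomposition is conceptually cleaner because it avoids the Hadamard-combination trick entirely.
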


\begin{proof}
The states $\ket{\phi_n}$  admit a simple form, being a linear combination of basis states with amplitudes proportional to
\begin{align}
  A_n(x):=  \frac 1 {\sqrt{\sin \varphi(x)}} \times g_n(x) =
    \frac 1 {\sqrt{1-\frac{x^2}{2n+1}}} \times g_n(x)
\end{align}
and phases
\begin{align}
\pm  \Theta_n(x):=  \pm \left[\left ( \frac n 2 + \frac 1 4 \right) \left( \sin(2\varphi(x) - 2 \varphi(x) )+\frac{3\pi}4\right)\right] \;,
\end{align}
where $\varphi(x)=\arccos(x/\sqrt{2n+1})$.
That is, $\phi_n(x) \propto A_n(x) (e^{+\ri \Theta_n(x)}-e^{-\ri \Theta_n(x)})$.
We will describe a method to prepare
a state with one of these phases, say $+\Theta_n(x)$, and the linear combination can be obtained via a simple Hadamard gate. This is because, while $\ket{\phi_n}$ are subnormalized, they have norm bounded by a positive constant. 
There are several known state preparation methods that can be useful for our context. For example, we could repurpose the algorithm of conditional rotations~\cite{Z98, GR02} and, to restrict the domain to within the oscillatory region we can use quantum rejection sampling~\cite{ORR12}. A more recent approach is that based on inequality testing in~\cite{sanders2019black} 
and is the one we will follow here.

For each $0 \le n \le N-1$, the states $\ket{\phi_n}$
are superpositions over basis states $\ket j$ where, in our convention, $-J(n) \le j \le J(n)-1$, and $J(n)$ is described in~\cref{lem:PlancherelStates}. They are 
in the space $\mathbb C^M$, where the number of qubits is
$m=\log_2 M$.
To simplify the details of state preparation, we are going to use the standard convention where basis states are labeled as $\{\ket 0 , \ket 1, \ldots, \ket {M-1}\}$. As mentioned, we can go from one convention to the other via a cyclic permutation (see below). Initially, we will consider setting the  amplitudes proportional to $A_n(x)$ and then we set the  phases $e^{\ri \Theta_n(x)}$. For the amplitudes we assume access to 
an oracle $\mathfrak{amp}$ that computes $A(x)$
with $r$ bits of precision in an additional register. 
Note that $A(x)$ is bounded by a constant in the oscillatory region of interest. We are going to introduce ancillary registers in many steps, but the relevant
parts are flagged by the states $\ket 0$.

\begin{itemize}
    \item Step 1: Given $n$, compute the smallest integer $q$ that satisfies $2^q \ge 2 J(n)$.
    Apply $q$ Hadamard gates to map $\ket 0^{\otimes m} \mapsto \frac 1 {\sqrt{2^q}}\sum_{j=0}^{2^q-1}\ket j$.
    Apply an inequality test to obtain $\frac 1 {\sqrt{2^q}}(\sum_{j=0}^{2J(n)-1}\ket j \ket 0+ \sum_{j=2J(n)}^{2^q-1} \ket j \ket 1$, which flags
    the relevant support. 
    \item Step 2: Use $\mathfrak{amp}$ and inequality test
    to perform the map $\ket j \mapsto \ket j \ket{\bar A_n(x_j)} \frac 1 {\sqrt {2^r}}(\sum_{z=0}^{\bar A_n(x_j)-1}\ket z \ket 0 + \sum_{\bar A_n(x_j)}^{2^r} \ket z \ket 1)$, where $\bar A_n(x_j)$ is an $r$-bit approximation of $A_n(x_j)$ (rescaled by a constant).
    \item{Step 3:} Apply $r$ Hadamard gates and the inverse of $\mathfrak{amp}$ to map the prior state to 
    $\ket j \frac {\bar A_n(x_j)} {2^r} \ket 0^{\otimes r+1} + \ket{\omega_j^\perp}$, where $\ket{\omega_j^\perp}$ is orthogonal to $\ket 0^{\otimes r+1}$ and subnormalized.
\end{itemize}

Following this steps we have essentially prepared
the state
\begin{align}
   \frac 1 {\sqrt {2^q}} \sum_{j=0}^{2J(n)-1}  
   \frac {\bar A_n(x_j)} {2^r} \ket j \ket 0^{\otimes r+2 }
   + \ket{\omega_n^\perp}
\end{align}
where again $\ket{\omega_n^\perp}$ is subnormalized
and orthogonal to $\ket 0^{\otimes r+2 }$.
We can use fixed point amplitude amplification to select the desired part of this state and obtain an $\cO(\eps)$ approximation to 
\begin{align}
    \frac{ \sum_{j=0}^{2J(n)-1}  
   \frac {\bar A_n(x_j)} {2^r} \ket j } {\|\sum_{j=0}^{2J(n)-1}  
   \frac {\bar A_n(x_j)} {2^r} \ket j\|}\;.
\end{align}
Amplitude amplification introduces a multiplicative cost that is $\cO(\log(1/\eps))$, since the relevant overlaps are constant.

We discuss the relevant complexities of this approach. 
First note that it suffices to prepare an $\cO(\eps)$
approximation to $\ket{\phi_n}$, since the QHT is performed within this precision and $\ket{\phi_n}$ has overlap with $\dshermite{n}$ bounded by a constant.
It then suffices to compute the amplitudes $A(x)$
within multiplicative error $\cO(\eps)$, but since these are bounded from above by a constant, we can compute them within additive error $\cO(\eps)$. It follows that the number of bits of precision is $r =\cO(\log(1/\eps))$.
We can then construct $\mathfrak{amp}$  using coherent arithmetics; one procedure would involve first computing $\varphi(x)$ within error $\cO(\eps)$, which follows if $x/\sqrt{2n+1}$ is obtained within that complexity, and then computing 
$1/\sqrt{\sin \varphi(x)}$ within this error.
For this step it suffices to first compute 
$\sqrt{2\pi/M}/\sqrt{2n+1}$ within error $\cO(\eps/M)$,
which can be accomplished with cost $\cO((\log M + \log(1/\eps))^2)$. Then we multiply with $j$, and since $j$ is expressed with $\log M$ bits, this is subdominant.   We also need to compute $g_n(x)$ within error $\cO(\eps)$. Since $g_n(x)$ can be obtained by rescaling some $g(x)$~\cite{somma2019quantum}, this translates to computing $g(x)$
within error $\cO(\eps/\sqrt n)$. 
This function involves the bump function and for this precision the cost is $\tilde \cO((\log n + \log(1/\eps))^2)$, hiding a doubly logarithmic cost.
This function is smooth and also involves an integral, and this increases the cost by a factor of $\cO(\log(1/\eps))$. So the total cost of computing $g_n(x)$ is $\cO((\log n + \log(1/\eps))^2 \log (1/\eps))$.

There are additional gates used in the procedure, the Hadamard gates and those for inequality tests, which are also subdominant.

The next step involves including the phases. 
Again, it suffices to compute these phases within $r'=\cO(\log(1/\eps))$ bits of precision.
We can construct an oracle $\mathfrak{phase}$
that performs the map $\ket j \mapsto \ket j \ket{\bar \Theta(x_j)}$ and then use phase kickback and the inverse of  $\mathfrak{phase}$ and obtain $e^{\ri \bar \Theta(x_j)} \ket j$. To this end it suffices to compute $\varphi(x)$ within additive error $\cO(\eps/N)$. 
This follows if $x/\sqrt{2n+1}$ is computed with that error which again can be done with cost $\cO((\log M + \log(1/\eps))^2)$, as explained in the construction of $\mathfrak{amp}$.

The last step is to perform the cyclic permutation, since in the standard convention the states $\ket{\phi_n}$
involve a superposition of basis states from $-J(n)+M/2$ to $J(n)-1+M/2$. This can be accomplished 
with gate cost that is $\log^2 M$ using the standard QFT,
as explained in~\cite{somma2016}.

According to~\cref{lem:PlancherelStates}, $M = \poly(N,1/\eps)$ and including the cost of amplitude amplification to obtain the normalized state, the result follows.

\end{proof}

\paragraph{Fast eigenstate filtering.}
The next step in the QHT is that of state `filtering', which is a transformation that ideally would flag the Hermite state and perform the map
\begin{align}
\label{eq:eigenstate filtering}
    \prstate{n} \mapsto \beta_n \dshermite{n} \ket 0^{\otimes m} + \ket{0_n^\perp} \;,
\end{align}
where $\beta_n=\bra{\psi_n}\phi_n \rangle$ is the (bounded) overlap, and $\ket 0^{\otimes q}$ and $\ket {0_n^\perp}$
are orthogonal states of the ancilla qubits. 
Since we would like to implement this transformation for an exponentially large set of values of $0 \le n \le \lowenergy -1$, we will combine the fast-forwarding results for the discrete QHO in~\cref{sec:fastforwarding} with the conventional QPE algorithm~\cite{K95}. 
\begin{lemma}
\label{lem:eigenstate filtering}
Let $0 \le n \le N-1$ and $\eps$ the error, where $N > \log(1/\eps)$. Then there is a quantum circuit that performs the map in 
\cref{eq:eigenstate filtering}  within error $\eps$ using $\cO((\log N + \log(1/\eps))^3)$ gates.
\end{lemma}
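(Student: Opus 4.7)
The plan is to realize the filtering transformation via quantum phase estimation (QPE) on the discrete QHO Hamiltonian $\discretehamiltonian$, exploiting that its eigenvalues are separated by a gap of $\Theta(1)$ even though $\|\discretehamiltonian\| = \Theta(M)$. By \cref{fact:somma-discretization-facts}, the eigenstates $\apxdshostate{k}$ of $\discretehamiltonian$ lie within $\exp(-\Omega(M))$ of the Hermite states $\dshermite{k}$ and have eigenvalues close to $k+1/2$, so flagging the $n$-th eigenvector is equivalent (up to exponentially small error) to flagging $\dshermite{n}$. Given input $\ket{n}\prstate{n}\ket{0}^{\otimes m}$, I would apply QPE on $\discretehamiltonian$ with $m = \Theta(\log M + \log(1/\eps))$ bits of precision, writing the estimate into the ancilla; then use $\ket{n}$ to subtract $n$ in place on the QPE register. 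Expanding $\prstate{n} = \sum_k c_{n,k}\apxdshostate{k}$, this produces $\sum_k c_{n,k}\apxdshostate{k}\ket{k-n}$, whose $\ket{0^{\otimes m}}$-component is $c_{n,n}\apxdshostate{n} \approx \beta_n \dshermite{n}$, and whose orthogonal complement forms $\ket{0_n^\perp}$.

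The obstacle this approach must surmount is the cost of QPE: a naive implementation would require total evolution time $\Omega(M)$ on a Hamiltonian of norm $\Theta(M)$, yielding $\poly(M)$ gates. This is exactly where \cref{thm:fastforwarding} is invoked. The QPE procedure applies controlled $e^{-\ri \discretehamiltonian \cdot 2^j \cdot 2\pi/M}$ for $j = 0,\ldots,m-1$, and each such evolution satisfies $|t| \le \pi$, so can be realized by \cref{alg:decomposition} in $\cO(\log^2 M)$ gates with doubly-exponentially small error on the low-energy subspace. Combined with the inverse QFT on the $m$-qubit ancilla ($\wt\cO(m)$ gates) and the constant-cost subtraction of $n$, the total cost is $\cO(m \cdot \log^2 M) = \cO((\log N + \log(1/\eps))^3)$, using $M = \poly(N, 1/\eps)$ from \cref{lem:PlancherelStates}.

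The error analysis combines four ingredients: (i) the high-energy leakage $\|\Pi_{>N_{\rm high}}\prstate{n}\|^2 \le \eps$ from \cref{lem:PlancherelStates}; (ii) the doubly-exponentially small per-step error from fast-forwarding, which remains negligible after $m$ compositions; (iii) the QPE resolution error, which is driven below $\eps$ by the choice of $m$ together with standard Kitaev-style repetition or median-of-means boosting costing only an $\cO(\log(1/\eps))$ factor already absorbed into $m$; and (iv) the $\exp(-\Omega(M))$ closeness between $\apxdshostate{k}$ and $\dshermite{k}$ that converts the flagged eigenvector into the flagged Hermite state. The main technical delicacy I expect is the interaction between (i) and (iii): on the $\eps$-fraction of mass of $\prstate{n}$ that sits above $N_{\rm high}$, the fast-forwarded evolution no longer faithfully reproduces $e^{-\ri \discretehamiltonian t}$, so its contribution to the output must be absorbed into $\ket{0_n^\perp}$ rather than contaminating the $\ket{0^{\otimes m}}$-block. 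I would handle this by inserting the projector $\Pi_{\lowenergy}$ guaranteed by \cref{thm:fastforwarding}, analyzing the circuit on $\Pi_{\lowenergy}\prstate{n}$ exactly (where both approximations are valid) and treating the discarded $\eps$-tail as an additive error, so that the composed deviation from the target map is $\cO(\eps)$ up to rescaling the constants.
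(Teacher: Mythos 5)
Your proposal follows essentially the same strategy as the paper: run Kitaev-style QPE on $\discretehamiltonian$, with each controlled time evolution implemented via the fast-forwarding factorization of \cref{thm:fastforwarding}, so that the $n$-th eigenvector gets flagged in the $\ket{0}^{\otimes m}$ branch of the ancilla register, and then track three error sources — the $\cO(\eps)$ leakage of $\prstate{n}$ above $N_{\rm high}$, the $\exp(-\Omega(N))$ fast-forwarding error, and the $\exp(-\Omega(M))$ closeness of $\apxdshostate{k}$ to $\dshermite{k}$. The one presentational difference is where you put the shift: you propose running plain QPE and then subtracting $n$ in the phase register, whereas the paper bundles the $n$-dependent shift into the controlled unitary $W_{n,j} = V(2^j \cdot 2\pi/M)\,e^{\ri 2^j (2\pi/M)(n+1/2)}$; these commute and give the same circuit. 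Two small imprecisions worth fixing: (a) the eigenvalues sit at $n+1/2$, not $n$, so your in-place subtraction must remove $n+\tfrac12$ (or, equivalently, apply the half-integer phase that the paper absorbs into $W_{n,j}$), otherwise the flagged branch does not land on $\ket{0}^{\otimes m}$; and (b) no median-of-means or Kitaev repetition is needed here — because the relevant eigenphases are (up to $\exp(-\Omega(M))$) exact multiples of $2\pi/M$, a single $m$-qubit QPE resolves them without rounding error, and invoking boosting would only obscure the analysis.
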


\begin{proof}
For given $N$ and $\eps$, we set $M$ and $N_{\rm high}$
according to~\cref{lem:PlancherelStates}.
We also write $M=2^m$ and let $U(t)=e^{-\ri \discretehamiltonian t}$ be the time-evolution unitary for the discrete QHO $\discretehamiltonian \in \mathbb C^{M \times M}$, where $|t| \le 2\pi$.
Plancherel-Rotach states are mostly supported in the subspace where $n \le N_{\rm high}$. For all $0 \le n \le N_{\rm high}$ define 
$U_n:=U(2 \pi/M)e^{\ri (2\pi/M)(n+1/2)}$. In the low-energy subspace of interest, 
where $n \le N_{\rm high}$,
we already proved $U(t) \dshermite{n}= e^{-\ri (n+1/2)t}\dshermite{n} + \cO(\exp(-\Omega(M)))$ and hence,
\begin{align}
 \frac 1 M (\one +   U_n^{2^{m-1}}) \ldots (\one +   U_n) \dshermite{{n'}} &  = \cO(\exp(-\Omega(M)))\;,
\end{align}
for all $0 \le n,n' \le N_{\rm high}$ and $n \ne n'$.
It follows that if we were to run the standard QPE algorithm with $m$ ancillas and controlled-$U_n^{2^j-1}$ unitaries, we would already be performing the desired eigenstate filtering within exponentially small error
$\cO(\exp(-\Omega(M)))$, which is subdominant. This approach, however, requires implementing $U$ a number of times that is exponential and scales with $N$ or $M$.

To fast forward eigenstate filtering, we are going to replace 
\begin{align}
\label{eq:Wforfiltering}
    U_n^{2^j} \rightarrow W_{n,j}:=V(2^j (2\pi/M)) 
    e^{i 2^j (2\pi/M)(n+1/2)}\;,
\end{align}
where $V(t)$
    is the approximation to $U(t)=e^{-\ri \discretehamiltonian t}$ using the sequence of three exponentials as in \cref{thm:discrete factoring}.  
    The new QPE circuit would then have controlled-$W_{n,j}$ operations rather than the controlled-$U_n^{2^j-1}$. Our fast-forwarding results also imply
    \begin{align}
 \frac 1 M (\one +   W_{n,m-1}) \ldots (\one +   W_{n,1}) \dshermite{n} &  = \dshermite{n}  +\cO(\exp(-\Omega(N))) \\
\frac 1 M(\one +   W_{n,m-1}) \ldots (\one +   W_{n,1}) \dshermite{n'} &  = \cO(\exp(-\Omega(N)))\;,
\end{align}
again for $0 \le n,n' \le N_{\rm high}$ and $n \ne n'$.

Consider now an arbitrary input Plancherel-Rotach state $\prstate{n}$.
We have shown $\prstate{n}=\sum_{n=0}^{N_{\rm high}}
\beta_n \dshermite{n}+\ket{\phi_n^\perp}$,
where $\|\ket{\phi_n^\perp}\|=\cO(\eps)$, for our choice of parameters in~\cref{lem:PlancherelStates}.
Hence, for this state our QPE algorithm performs
\begin{align}
  \prstate{n} \mapsto \beta_n \dshermite{n}\ket 0^{\otimes m} + \ket{0_n^\perp}
\end{align}
within error that is $\cO(\eps + \exp(-\Omega(N)))$.
Assuming $N> \log(1/\eps)$,
the $\cO(\eps)$ term dominates, and this is the desired eigenstate filtering operation of~\cref{eq:eigenstate filtering}.

The number of operations for eigenstate filtering 
can be estimated as follows. Each $W_j$ involves three exponentials, and there are $m-1$ calls to them. We already established that each exponential requires $\cO((\log M)^2)$ gates (using schoolbook arithmetics).
Then, the gate cost is $\cO((\log M)^3)=\cO((\log N + \log(1/\eps))^3)$.

\end{proof}

\paragraph{Fixed-point amplitude amplification.}

As a next step we need to show that amplitude amplification enables
the efficient preparation of the Hermite states from the Plancherel-Rotach states, even when the overlaps for different $n \le \lowenergy$ are different, but still bounded from below by a positive constant. After eigenstate filtering, our goal is to implement the map
\begin{align}
\label{eq:AAstateprep}
    \sum_{n=0}^{N-1} \frac{\alpha_n}{\|\ket{\phi_n}\|} \ket n (\beta_n \dshermite{n}\ket{0}^{\otimes m}+\ket{0^\perp_n}) \mapsto 
     \sum_{n=0}^{N-1} \alpha_n \ket n \dshermite{n} \;,
\end{align}
where the $\alpha_n$ are arbitrary and $\beta_n=\bra{\psi_n}\phi_n\rangle$.
To this end we start from results on fixed-point amplitude amplification in Refs.~\cite{YLC14,gilyen2019quantum}.
These will ultimately give the desired mapping even when the overlaps $\beta_n$ are different.

\begin{lemma}
    \label{lem:stateprepviafixedpointAA}
    Let $N>0$ and $\eps$ the error.
    Let $R_0 = (\one - 2 \ketbra 0^{\otimes m})$
    be the reflection over the ancilla state $\ket 0 ^{\otimes m}$ and $\cU:= \sum_{n=0}^{N-1} \ketbra n \otimes \cU_n + \sum_{n=N}^{M-1} \ketbra n \otimes \one$ be a conditional unitary, where $\cU_n$ is the unitary that prepares the Plancherel-Rotach states $\ket{\phi_n}/\|\ket{\phi_n}\|$, and let $\cV:= \sum_{n=0}^{N-1} \ketbra n \otimes \cV_n + \sum_{n=N}^{M-1} \ketbra n \otimes \one$ be another conditional unitary, where $\cV_n$ is the unitary that performs eigenstate filtering.
    Then there is a quantum circuit that performs the map in~\cref{eq:AAstateprep} within error $\eps$ using $\cU$, $\cV$, $R_0$, their inverses, and arbitrary one qubit gates $\cO(\log(1/\eps))$ times.
\end{lemma}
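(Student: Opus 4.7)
The plan is to apply fixed-point amplitude amplification, either the Yoder--Low--Chuang construction \cite{YLC14} or the QSVT formulation of \cite{gilyen2019quantum}, to uniformly boost the flagged amplitude across the $n$-register. Within each $\ket n$ block, the input state carries an amplitude
\[
\gamma_n \;:=\; \frac{\beta_n}{\|\prstate{n}\|}
\]
on the good component $\dshermite{n}\ket 0^{\otimes m}$. By \cref{lem:PlancherelStates}, the overlap $\beta_n = \braket{\psi_n | \phi_n}$ and the norm $\|\prstate{n}\|$ are both bounded between positive constants independent of $n$, so there is a universal $\gamma > 0$ with $\gamma_n \ge \gamma$ for every $0 \le n \le N-1$.

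Let $\cW := \cV \cU$, which is block-diagonal in $n$ and, starting from the all-zero ancillas, prepares a state with flagged amplitude $\gamma_n$. The strategy is to first apply $\cW^{-1}$ to the input, recovering $\sum_n \alpha_n \ket n \ket 0^{\otimes m}\ket 0^{\otimes m}$, and then to apply a fixed-point amplification circuit $\cA$ built from $\cO(\log(1/\eps))$ alternations of $\cW$, $\cW^{-1}$, the flag-reflection $R_0$, and the reflection over the initial ancilla state $\one - 2\ket{0}^{\otimes 2m}\!\bra{0}^{\otimes 2m}$---which can be synthesized from $R_0$ and $\cO(1)$ elementary gates---interleaved with $\cO(\log(1/\eps))$ precomputed single-qubit phase rotations. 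The standard fixed-point guarantee yields
\[
\left\|\cA\,\ket n \ket 0^{\otimes m}\ket 0^{\otimes m} - \ket n \dshermite{n}\ket 0^{\otimes m}\right\| \;\le\; \eps
\]
for every $n$ with $\gamma_n \ge \gamma$. By linearity, the output on the superposition is within $\eps$ of the target $\sum_n \alpha_n \ket n \dshermite{n}\ket 0^{\otimes m}$, after which the flag register is discarded.

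The main obstacle is the variable-overlap issue: the $\gamma_n$ genuinely depend on $n$, and standard Grover-style amplitude amplification would simultaneously over-rotate some $n$-blocks and under-rotate others, incurring $\Theta(1)$ error that cannot be shrunk by more iterations. The fixed-point construction resolves this by design---its underlying query polynomial is chosen precisely to drive every amplitude in $[\gamma,1]$ to within $\eps$ of unity at once---and because $\cU$, $\cV$, and $R_0$ act either trivially or conditionally on $\ket n$, the amplification proceeds coherently and independently within each $n$-block. The $\cO(\log(1/\eps))$ query count then follows from the standard degree analysis of the fixed-point polynomial together with the constant lower bound $\gamma$.
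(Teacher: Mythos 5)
Your proposal is correct and follows essentially the same strategy as the paper, though the presentation is organized a bit differently. The paper factors out the block-diagonal version of fixed-point amplitude amplification into a standalone lemma (\cref{lem:subspacefixed-pointAA}), showing that the QSVT block-encoding $B = \bigoplus_n B_n$ decomposes as a direct sum and hence the Chebyshev polynomial implementing the fixed-point rotation acts coherently and independently within each $\ket n$ sector; the main proof then simply instantiates this lemma with $\ket{\Psi_0^n} = \ket 0^{\otimes 2m}$, $\ket{\Psi_G^n} = \dshermite{n}\ket 0^{\otimes m}$, $U_n = \cV_n \cU_n$, and $\Pi_n = \ketbra{0}^{\otimes m}$. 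You fold that same argument inline: you define $\cW = \cV\cU$, observe it is block-diagonal, note that the flagged amplitude $\gamma_n = \beta_n / \|\prstate{n}\|$ is uniformly bounded below, and then invoke the standard fixed-point guarantee per block. The core mechanism is identical in both cases, and the degree bound $\cO(\log(1/\eps))$ follows from the same constant lower bound on the overlaps. Two small points worth tightening: (i) \cref{lem:PlancherelStates} as stated only explicitly bounds the overlap $\braket{\psi_n|\phi_n}$ from below; the upper bound on $\|\prstate{n}\|$ is justified in the lemma's proof (the Plancherel--Rotach states are subnormalized) but not highlighted in its statement, so you should cite that explicitly rather than attributing both bounds to the lemma statement. (ii) Your aside that the reflection $\one - 2\ketbra{0}^{\otimes 2m}$ can be synthesized from $R_0$ and $\cO(1)$ elementary gates is not literally true---$R_0$ acts on $m$ qubits while this reflection acts on $2m$---though it can of course be built from $\cO(m)$ gates, which is subsumed in the circuit-size accounting; the lemma only counts the number of \emph{uses} of the building-block unitaries, so this does not affect the claimed bound.
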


Before providing the proof, we note that our result relies
on, and generalizes the following fixed-point amplitude amplification.

\begin{lemma}[Fixed-point amplitude amplification, Thm. 27 of Ref.~\cite{gilyen2019quantum}]
\label{lem:fixed-pointAA}
    Let $U$ be a unitary and $\Pi$ be an orthogonal
projector such that $a \ket{\Psi_G} = \Pi U \ket{\Psi_0}$, and $a > \delta > 0$. There is a unitary circuit $\tilde U$ such that
$\| \ket{\Psi_G}-\tilde U \ket{\Psi_0} \| \le \eps$, which uses a single ancilla qubit and consists of $\cO (\log(1/\eps)/\delta)$ $U$, $U^\dagger$, $C_\Pi NOT$, $C_{\ketbra{\Psi_0}}NOT$, and $e^{\ri \phi \sigma_Z}$ one qubit gates.
\end{lemma}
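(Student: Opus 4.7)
The plan is to invoke \cref{lem:fixed-pointAA} block-diagonally over the index register $\ket n$. Both $\cU$ and $\cV$ are by construction controlled on $\ket n$, acting as $\cU_n$ and $\cV_n$ on the fibre $n < N$. The two reflections used by the fixed-point AA protocol---namely $R_0$ itself and the reflection about the initial ancilla state $\ket{\Psi_0}$ (a fixed operation on the ancilla registers, independent of the index $n$)---act trivially on $\ket n$. Consequently, every circuit built from $\cU$, $\cV$, $R_0$ and their inverses decomposes as $\sum_n \ketbra{n} \otimes \tilde U_n$, and it suffices to bound the error of each $\tilde U_n$ separately and combine via the orthonormality of the $\ket n$'s.

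First I would take the starting state $\ket{\Psi_0} = \ket 0^{\otimes m}\ket 0^{\otimes m}$ (Plancherel--Rotach preparation ancilla plus filtering ancilla), the projector $\Pi = \one \otimes \ketbra 0^{\otimes m}$ onto the ``good'' filtering-ancilla outcome, and the unitary $U := \cV \cU$. For each $n < N$, chaining \cref{lem:state prep} and \cref{lem:eigenstate filtering} gives
\begin{align}
    \Pi \, \cV_n \cU_n \ket{\Psi_0} \;=\; a_n\, \dshermite{n}\ket 0^{\otimes m}, \qquad a_n := \frac{\beta_n}{\|\ket{\phi_n}\|},
\end{align}
so the goal state on fibre $n$ is $\dshermite{n}\ket 0^{\otimes m}$. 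By \cref{lem:PlancherelStates} the overlap $\beta_n$ is lower-bounded by a positive constant and $\|\ket{\phi_n}\|$ is bounded above by a constant, so there exists $\delta > 0$, independent of $n$, with $a_n \ge \delta$ for all $n \in \{0,\ldots,N-1\}$.

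Next I would apply \cref{lem:fixed-pointAA} with this uniform lower bound $\delta$ and target error $\eps$, producing a single circuit $\tilde U$ using $\cO(\log(1/\eps)/\delta) = \cO(\log(1/\eps))$ instances of $\cU$, $\cV$, $R_0$, their inverses and one-qubit gates, such that on each fibre $\|\tilde U_n \ket{\Psi_0} - \dshermite{n}\ket 0^{\otimes m}\| \le \eps$. Applying $\tilde U$ to the superposition input and using orthonormality of the $\ket n$'s,
\begin{align}
    \left\|\sum_{n=0}^{N-1} \alpha_n \ket n \bigl(\tilde U_n \ket{\Psi_0} - \dshermite{n}\ket 0^{\otimes m}\bigr)\right\|
    \;=\; \sqrt{\sum_n |\alpha_n|^2\, \bigl\|\tilde U_n \ket{\Psi_0} - \dshermite{n}\ket 0^{\otimes m}\bigr\|^2}
    \;\le\; \eps,
\end{align}
which is exactly the claimed approximation to \cref{eq:AAstateprep}.

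The only subtle point---and the main conceptual obstacle to making the argument rigorous---is that a single circuit $\tilde U$ must simultaneously amplify all fibres despite their distinct amplitudes $a_n$. This is precisely the defining property of a \emph{fixed-point} amplification: the construction underlying \cref{lem:fixed-pointAA} is parameterized only by the lower bound $\delta$, not by the actual amplitude, and certifies error at most $\eps$ uniformly for every $a \in [\delta, 1]$. Since all our $a_n$'s lie in this interval, one circuit handles every fibre simultaneously at the advertised $\cO(\log(1/\eps))$ cost.
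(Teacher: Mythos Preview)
Your proposal does not prove \cref{lem:fixed-pointAA}: you \emph{invoke} it, in order to establish \cref{lem:stateprepviafixedpointAA}. The paper does not prove \cref{lem:fixed-pointAA} either --- it is quoted verbatim from \cite{gilyen2019quantum} as a black box --- so there is no proof in the paper to compare against for the stated lemma.

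If your intent was instead to prove \cref{lem:stateprepviafixedpointAA}, then your approach matches the paper's in substance. The paper formalizes the block-diagonal idea as a standalone result, \cref{lem:subspacefixed-pointAA} (fixed-point amplitude amplification in subspaces), whose proof recasts the Chebyshev-polynomial construction underlying \cref{lem:fixed-pointAA} with the walk operator $B$ replaced by the direct sum $\bigoplus_n B_n$; it then instantiates that lemma with $\ket{\Psi_0^n}=\ket 0^{\otimes 2m}$, $\ket{\Psi_G^n}=\dshermite{n}\ket 0^{\otimes m}$, $U_n=\cV_n\cU_n$, and $\Pi_n=\ketbra 0^{\otimes m}$. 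Your argument follows exactly this route --- uniform lower bound $\delta$ on $a_n=\beta_n/\|\ket{\phi_n}\|$ from \cref{lem:PlancherelStates}, a single fixed-point circuit parameterized only by $\delta$ acting correctly on every fibre, and an $\ell_2$ combination of the per-fibre errors --- just without isolating the subspace version as a separate lemma. The key conceptual point you flag at the end (that fixed-point amplification depends only on the lower bound $\delta$, not on the individual $a_n$) is precisely what the paper's \cref{lem:subspacefixed-pointAA} makes explicit.
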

Here, $a$ is unknown but the lower bound $\delta$ is known. The operation $C_\Pi NOT$ is simply an operation that coherently checks whether the state is supported in $\Pi$ or not, and flips the state of another qubit based on the outcome. Similarly,   $C_{\ketbra{\Psi_0}}NOT$ performs a flip on the qubit depending on whether the state is $\ket{\Psi_0}$ or not.
Also, fixed point amplitude amplification uses an ancillary qubit, which starts in $\ket 0$ and ends in $\ket 0$ within error $\eps$. This is important for~\cref{eq:AAstateprep}
to avoid changing the amplitudes $\alpha_n$.
A generalization of this lemma, which we give below as it can be of independent interest, is used in the proof of~\cref{lem:stateprepviafixedpointAA}.
\begin{lemma}[Fixed-point amplitude amplification in subspaces]
\label{lem:subspacefixed-pointAA}
Let $N >0$ be the dimension. 
For all $0 \le n \le N-1$, let $U_n$ be a unitary,
$\Pi_n$ be an orthogonal projector such that $a_n \ket{\Psi_G^n}=\Pi_n U_n \ket{\Psi_0^n}$, and $a_n>\delta>0$. 
There is a unitary quantum circuit $\tilde \cU$ such that
$\|\sum_{n=0}^{N-1}\alpha_n \ket n \ket{\Psi_G^n}- \tilde \cU \sum_{n=0}^{N-1}\alpha_n \ket n \ket{\Psi_0^n}\|\le \eps$,
where the amplitudes $\alpha_n$ are arbitrary and $\sum_n |\alpha_n|^2=1$, which uses a single ancilla qubit and consists of $\cO (\log(1/\eps)/\delta)$ $\cU$, $\cU^\dagger$, $C_\Pi NOT$, $C_{\Pi_0}NOT$, and $e^{\ri \phi \sigma_Z}$ one qubit gates, where $\cU=\sum_n \ketbra n \otimes U_n$, $\Pi =\sum_n \ketbra n \otimes \Pi_n$, and $\Pi_0=\sum_n \ketbra n \otimes \ketbra{\Psi_0^n}$.
\end{lemma}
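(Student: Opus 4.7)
The plan is to observe that the standard fixed-point amplitude amplification circuit from \cref{lem:fixed-pointAA} is \emph{oblivious} to the actual value of the initial amplitude $a$: the sequence of rotations and reflections, including the phase angles $\phi$ in the $e^{\ri \phi \sigma_Z}$ gates, depends only on the known lower bound $\delta$ and target error $\eps$. Thus, one can take this fixed gate sequence as a template and replace each occurrence of $U$, $C_\Pi NOT$, and $C_{\ketbra{\Psi_0}}NOT$ by its controlled-on-$n$ analogue $\cU$, $C_\Pi NOT$, $C_{\Pi_0} NOT$ defined in the lemma. The resulting circuit $\tilde \cU$ is the one we will analyze.

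Next, I would exploit the block-diagonal structure. Since $\cU = \sum_n \ketbra{n}\otimes U_n$, $\Pi = \sum_n \ketbra{n}\otimes \Pi_n$, and $\Pi_0 = \sum_n \ketbra{n}\otimes \ketbra{\Psi_0^n}$ are all diagonal in the $n$-register, and the single-qubit ancilla rotations act trivially on that register, the entire circuit $\tilde \cU$ decomposes as $\tilde \cU = \sum_n \ketbra{n} \otimes \tilde U_n$, where each $\tilde U_n$ is exactly the standard fixed-point AA circuit built from $(U_n, \Pi_n, \ket{\Psi_0^n})$ with the same phase schedule. By \cref{lem:fixed-pointAA} applied within each block, we have $\|\ket{\Psi_G^n} - \tilde U_n \ket{\Psi_0^n}\| \le \eps$ for every $n$.

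Finally, by linearity, acting on a superposition $\sum_n \alpha_n \ket{n}\ket{\Psi_0^n}$ gives
\begin{align}
\Bigl\|\sum_n \alpha_n \ket{n}\ket{\Psi_G^n} - \tilde \cU \sum_n \alpha_n \ket{n}\ket{\Psi_0^n}\Bigr\|^2 = \sum_n |\alpha_n|^2 \,\bigl\|\ket{\Psi_G^n} - \tilde U_n \ket{\Psi_0^n}\bigr\|^2 \le \eps^2,
\end{align}
so the overall error is at most $\eps$. The gate count is inherited blockwise from \cref{lem:fixed-pointAA}: $\cO(\log(1/\eps)/\delta)$ uses of $\cU$, $\cU^\dagger$, $C_\Pi NOT$, $C_{\Pi_0} NOT$, plus one-qubit rotations on the shared ancilla.

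The only subtlety I anticipate is verifying that the ancilla qubit introduced by fixed-point AA is returned cleanly, uniformly across all $n$: since Gilyén \etal's procedure uncomputes the ancilla to $\ket 0$ within error $\eps$ independently of $a$ (as long as $a \ge \delta$), the same uncomputation works in every block simultaneously, so no cross-$n$ entanglement with the ancilla is left behind. This is what lets us preserve the coefficients $\alpha_n$ exactly and conclude the claimed bound.
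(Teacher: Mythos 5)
Your proof is correct, and the underlying idea — promote each building block of fixed-point AA to its controlled-on-$n$ analogue, observe that everything is block-diagonal in the $\ket n$ register, and apply Lemma~\ref{lem:fixed-pointAA} blockwise — is the same as the paper's. The difference is one of abstraction level. The paper frames the argument in QSVT language: fixed-point AA is a linear combination of Chebyshev polynomials of the qubitized operator $B$ that block-encodes $A = \bar a\ket{\Psi_0}\bra{\Psi_G}$; taking the direct sum $B = \bigoplus_n B_n$ and noting that a polynomial of a block-diagonal operator is block-diagonal gives the subspace version, with the walk operator built from $\cU$, $C_\Pi NOT$, $C_{\Pi_0}NOT$. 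You instead argue directly at the circuit level, using the key observation that the fixed-point AA gate sequence and phase schedule depend only on $\delta$ and $\eps$, not on the unknown overlap $a$, so the \emph{same} template works in every block. Your route is more elementary and self-contained (it does not require unpacking the Chebyshev/QSVT machinery inside Gilyén et al.'s theorem); the paper's route makes the $a$-independence an explicit consequence of the polynomial approximation being uniformly good over $[\delta,1]$, which is the mechanism your argument implicitly leans on. Your closing observation about the shared ancilla qubit being returned to $\ket 0$ uniformly across blocks is the right thing to check and matches the paper's remark; one small polish worth making is to carry the ancilla through the Pythagoras identity explicitly, i.e. bound $\bigl\|\sum_n \alpha_n \ket n\bigl(\tilde U_n \ket{\Psi_0^n}\ket 0 - \ket{\Psi_G^n}\ket 0\bigr)\bigr\|^2 = \sum_n |\alpha_n|^2 \bigl\|\tilde U_n \ket{\Psi_0^n}\ket 0 - \ket{\Psi_G^n}\ket 0\bigr\|^2 \le \eps^2$, since it is the joint working-plus-ancilla state that Lemma~\ref{lem:fixed-pointAA} controls.
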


\begin{proof}
    Fixed-point amplitude amplification is based on Chebyshev approximations; in the case of~\cref{lem:fixed-pointAA}, 
    we can assume these to be, for example, linear combinations of Chebyshev polynomials of
\begin{align}
    B=\begin{pmatrix}
        0 & A \cr A^\dagger & 0 
    \end{pmatrix} \; , \; A :=\ketbra{\Psi_0} U^\dagger \Pi =
   \bar a \ket{\Psi_0}\bra{\Psi_G}\; ,
\end{align}
where $\bar a$ is the complex conjugate.
The linear combination can be implemented with techniques like QSVT and implements the map $\ket{\Psi_0}\mapsto \ket{\Psi_G}$.
The complexity is determined by the degree of the Chebyshev polynomials in the approximation, and this is $\cO (\log(1/\eps)/\delta)$; even when $a$ is unknown (but $\delta$ is known), the polynomial allows for the correct mapping.  

To work in subspaces, we simply replace $B$ by the direct sum: $B:=\bigoplus_n B_n$, where 
\begin{align}
    B_n:=\begin{pmatrix}
        0 & A_n \cr A_n^\dagger & 0 
    \end{pmatrix} \; , \; A_n :=\ketbra{\Psi_0^n} U_n^\dagger \Pi_n =
   \bar a_n \ket{\Psi_0^n}\bra{\Psi_G^n}\; .
\end{align}
The Chebyshev polynomial approximation in each subspace will perform the mapping $\ket{\Psi_0^n}\mapsto \ket{\Psi_G^n}$, within error $\eps$, and the transformation carries to an arbitrary state $\sum_{n=0}^{N-1}\alpha_n \ket n \ket{\Psi_0^n}$ due to linearity. Note that while fixed-point amplitude amplification uses an ancilla qubit, the ancilla state stays in $\ket 0$ at the end, within error $\cO(\eps)$, for all $n$. 
Since we are using the same approximation, the degree is unchanged and still 
$\cO (\log(1/\eps)/\delta)$. 

We also note that $B\equiv \sum_n \ketbra n \otimes B_n$,
and accordingly we can define
\begin{align}
    A&:= \left(\sum_n \ketbra n \otimes \ketbra{\Psi_0^n}\right)
    \left(\sum_n \ketbra n \otimes U_n^\dagger \right) 
     \left(\sum_n \ketbra n \otimes \Pi_n \right)  
     = \sum_n  \ketbra n \otimes A_n \;.
\end{align}
It follows that we can build the walk operator to implement the Chebyshev polynomials using $\cU$, $C_\Pi NOT$, $C_{\Pi_0}NOT$, their inverses and arbitrary gates, a constant number of times. The result follows from multiplying these complexities with the degree.
\end{proof}

{\bf Proof of~\cref{lem:stateprepviafixedpointAA}:}
Given $N$ and $\eps$, the dimension $M$ is set according to~\cref{lem:PlancherelStates}.
As mentioned, the proof relies on~\cref{lem:subspacefixed-pointAA} and we need to define the corresponding operations. 
For all $0 \le n \le N-1$,
we let $\ket{\Psi_0^n}=\ket 0^{\otimes 2 m}$
and $\ket{\Psi_G^n}=\dshermite{n}\ket0 ^{\otimes m}$.
The unitaries $U_n$ are a sequence of two unitaries, the unitary $\cU_n$ to prepare the Plancherel-Rotach state $\ket{\phi_n}/\|\ket{\phi_n}\|$ and another unitary $\cV_n$ that performs eigenstate filtering as in~\cref{lem:eigenstate filtering}:
\begin{align}
   \frac{\beta_n}{\|\ket{\phi_n}\|} \dshermite{n}\ket 0^{\otimes m}= (\one \otimes \ketbra 0^{\otimes m})   U_n \ket{0}^{\otimes 2m} \;.
\end{align}
The projectors are then $\Pi_n = \ketbra 0^{\otimes m}$ for all $n$. 
The subspace where $M-1 \ge n \ge N$ is irrelevant since we only need to perform the right transform when $n \le N-1$. Hence, we can freely define $U_n$ and $\Pi_n$ in that subspace; for example we can set $U_n=\one$ and $\Pi_n=\ketbra 0^{\otimes m}$ for $n\ge N$.

With these definitions we fit the framework of~\cref{lem:subspacefixed-pointAA}.
Note that, since we showed $\beta_n /\|\ket{\phi_n}\|$ is bounded from below by a positive constant, the number of amplitude amplification rounds is $\cO(\log(1/\eps))$.
This is the largest degree of the Chebyshev polynomial in the approximation to implement fixed-point amplitude amplification. 
The overall complexity, as determined by the uses of $\cU=\sum_n \ketbra n \otimes \cU_n$, $\cV=\sum_n \ketbra n \otimes \cV_n$, $R_0=\one - 2 \ketbra 0^{\otimes m}$, and their inverses, is then $\cO(\log(1/\eps))$. The number of one-qubit gates scales similarly due to~\cref{lem:fixed-pointAA}.
\qed

\vspace{0.2cm}

Last we comment on the complexity of implementing the conditional unitaries.
The unitary $\cU_n$ is discussed in~\cref{lem:state prep}.
It requires performing a sequence of computations that depend on $n$, including $J(n)$, $x_j/\sqrt{2n+1}$, and it uses ${\mathfrak amp}$. We can run these computations coherently without adding a relevant factor to the gate cost and produce a version of ${\mathfrak amp}$ that now is not only conditional on $\ket j$ but also on $\ket n$. Hence the gate complexity of the conditional gate $\cU$ is dominated by the largest gate complexity of the $\cU_n$'s, which is the one of~\cref{lem:state prep}. 
The unitary $\cV_n$ is discussed in~\cref{lem:eigenstate filtering}. It requires implementing a version of QPE with unitaries $W_{n,j}$ that depend on $n$ in that there is a phase
$e^{\ri 2^j (2\pi/M)(n+1/2)}$, but all other operations are independent of $n$; see~\cref{eq:Wforfiltering}.
This is a simple phase gate on $\ket n$ that needs to be performed for $0 \le j \le m-1$, but does not add a relevant factor to the overall gate complexity either. Hence, implementing
$\cU$ and $\cV$ can also be done with complexity $\cO((\log N + \log(1/\eps))^3)$.
For more general results on constructing conditional unitaries efficiently, Cf.~\cite{bera2010efficient}.

\paragraph{Uncomputation via quantum phase estimation.}

The last step is to `uncompute' the register that contains $\ket n$, that is, to perform the map
\begin{align}
\label{eq:uncomputation}
    \ket n  \dshermite{n} \mapsto \ket 0^{\otimes m}    \dshermite{n} \;.
\end{align}
Similar to eigenstate filtering, we can leverage the fast-forwarding property to perform exponentially-precise QPE that would allows us to obtain $\ket n$ on input $\dshermite{n}$. QPE can then be used to uncompute.

\begin{lemma}
\label{lem:uncomputation}
   Let $0 \le n \le N-1$ and $\eps$ the error, where $N > \log(1/\eps)$. Then there is a quantum circuit that performs the map in 
\cref{eq:uncomputation}  within error $\eps$ using $\cO((\log N + \log(1/\eps))^3)$ gates.
\end{lemma}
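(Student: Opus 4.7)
The plan is to perform uncomputation by the standard ``compute-copy-uncompute'' trick: use fast quantum phase estimation to write $n$ into a fresh ancilla register of size $m = \log M$, subtract that register from the original register holding $\ket n$, and then run the phase estimation in reverse to clean the ancilla. Since $\dshermite{n}$ is (exponentially close to) an eigenstate of $\discretehamiltonian$ with eigenvalue (exponentially close to) $n+1/2$ by \cref{fact:somma-discretization-facts}, QPE on $\discretehamiltonian/M$ with enough precision will recover $n$ exactly after rounding.

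First I would adjoin $m$ fresh ancillas in $\ket{0}^{\otimes m}$ and Hadamard them to a uniform superposition. I would then apply the controlled evolutions $W_{n,j}$ from \cref{eq:Wforfiltering}, i.e.\ fast-forwarded approximations of $e^{-\ri \discretehamiltonian (2\pi/M) 2^j}$ built from \cref{thm:fastforwarding}, and finally apply the inverse QFT on the ancilla register. By the same analysis used in the proof of \cref{lem:eigenstate filtering}, the combined error from (i) the discrepancy between $\dshermite{n}$ and the true eigenstate $\dshostate{n}$, (ii) the discrepancy between the true eigenvalue and $n+1/2$, and (iii) the fast-forwarding error is bounded by $\cO(\exp(-\Omega(N)))$, which is smaller than $\eps$ under the assumption $N > \log(1/\eps)$. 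After rounding down by one bit we obtain the state
\begin{equation*}
\ket{n}\,\ket{n}\,\dshermite{n} + \ket{\text{error}},
\end{equation*}
with the error having norm $\cO(\eps)$. I would then apply a coherent modular subtraction $\ket{a}\ket{b}\mapsto \ket{a-b \bmod M}\ket{b}$ to the two $m$-qubit registers, producing $\ket{0}^{\otimes m}\ket{n}\dshermite{n}$ up to $\cO(\eps)$, and finally run the fast QPE circuit in reverse on the second register to return it to $\ket{0}^{\otimes m}$, which is then discarded.

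The gate count decomposes as follows: the QPE forward and reverse passes each use $\cO(\log M) = \cO(\log N + \log(1/\eps))$ controlled fast-forwarded evolutions, each costing $\cO(\log^2 M) = \cO((\log N + \log(1/\eps))^2)$ gates by \cref{thm:fastforwarding}, along with an inverse QFT of cost $\widetilde{\cO}(\log M)$. The modular subtraction is $\cO(\log M)$ gates of coherent arithmetic. The dominant cost is therefore $\cO((\log N + \log(1/\eps))^3)$, matching \cref{lem:eigenstate filtering} and the statement of the lemma.

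The main obstacle will be verifying that the rounding step deterministically recovers the integer $n$ from the QPE output, so that the reverse-QPE step in fact inverts the forward-QPE step on the correct branch. Because the exact spectrum of $\discretehamiltonian$ lies within $\exp(-\Omega(M))$ of the half-integers $\{n+1/2\}$, two consecutive eigenvalues are separated by $\approx 1$; picking QPE precision with $m = \log M$ bits gives an eigenvalue estimate accurate to within $1/M \ll 1/2$, which is more than enough resolution to identify the integer part unambiguously. This, combined with the exponentially small errors from \cref{fact:somma-discretization-facts} and the fast-forwarding, guarantees that the forward and reverse QPE circuits implement mutually inverse maps on the relevant subspace up to total error $\cO(\eps)$, so that Step 3 indeed uncomputes Step 1.
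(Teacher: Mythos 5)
Your overall strategy---QPE to read out $n$, then undo---matches the paper's intuition, but your version is more involved than necessary and has two technical slips.

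First, the paper's proof simply observes that the fast-QPE circuit (Hadamards on an $m$-qubit register, controlled fast-forwarded evolutions, QFT) maps $\ket{0}^{\otimes m}\dshermite{n}\mapsto\ket{n}\dshermite{n}$ up to $\cO(\exp(-\Omega(N)))$ error, so the uncomputation is \emph{just the inverse of that one circuit} applied to the register already holding $\ket{n}$. Your compute--subtract--uncompute pattern introduces $m$ fresh ancillas and a modular subtraction that are superfluous: the reverse-QPE pass alone already performs $\ket{n}\dshermite{n}\mapsto\ket{0}^{\otimes m}\dshermite{n}$. This doubles constants but doesn't change the asymptotic bound, so it's an inelegance rather than an error.

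Second, and more substantively, you invoke the wrong unitaries. You cite $W_{n,j}$ from \cref{eq:Wforfiltering}, but $W_{n,j}=V(2^j\,2\pi/M)\,e^{\ri 2^j(2\pi/M)(n+1/2)}$ carries an $n$-dependent phase designed for filtering (shifting the target eigenvalue to zero). Using it here is circular, since the phase is conditioned on the very $n$ you are trying to extract. The lemma actually needs $V_j := V(2^j\,2\pi/M)\,e^{\ri 2^j(2\pi/M)/2}$, with a \emph{fixed} $+1/2$ shift, so that the effective eigenvalue of the controlled evolutions becomes exactly the integer $n$; the accumulated phases are precisely $e^{-\ri 2\pi n j/M}$, a perfect Fourier mode, and the QFT lands on $\ket{n}$ with no rounding at all. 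This is also where your rounding paragraph breaks: if you omit the $+1/2$ shift, the registered phase $2\pi(n+1/2)/M$ sits exactly halfway between two $m$-bit grid points, and QPE returns a roughly balanced superposition over $\ket{n}$ and $\ket{n+1}$ rather than a single peak. The issue is not resolution (the gap between consecutive eigenvalues is $\approx 1$, which $m=\log M$ bits easily resolves); it is the half-integer offset, which the fixed phase shift in $V_j$ eliminates. Once you replace $W_{n,j}$ with $V_j$, the rounding discussion evaporates and the argument collapses to the paper's: apply the inverse of the fast-QPE circuit to $\ket{n}\dshermite{n}$, incur error $\cO(\exp(-\Omega(N)))\le\eps$, with cost dominated by $m=\cO(\log N+\log(1/\eps))$ fast-forwarded evolutions at $\cO(\log^2 M)$ gates each, giving $\cO((\log N+\log(1/\eps))^3)$.
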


\begin{proof}
As before, for given $N$ and $\eps$, we set the dimension $M=2^m$ and $N_{\rm high}$ as in~\cref{lem:PlancherelStates}.
Let $V_j:= V(2^{j}(2\pi/M)) e^{\ri 2^{j}(2\pi/M)/2}$ for $0 \le j \le m-1$, where $V(t)$ is the approximation to $U(t)=e^{-\ri \discretehamiltonian t}$ using the three exponentials in \cref{thm:discrete factoring}. Consider the action of the QPE algorithm using $m$ ancilla qubits initially in uniform superposition and using 
controlled-$V_{j}$, on the Hermite state.
Prior to the action of the QFT, the state is transformed as
\begin{align}
    \frac 1 {\sqrt M} \sum_{l_0,\ldots,l_{m-1} \in \{0,1\}}  \ket {l_0 \ldots l_{m-1}}  \dshermite{n} \mapsto \frac 1 {\sqrt M} \sum_{l_0,\ldots,l_{m-1} \in \{0,1\}}  \ket {l_0 \ldots l_{m-1}}   (V_{m-1})^{l_{m-1}} \ldots (V_0)^{l_0}\dshermite{n} \;,
\end{align}
and using \cref{thm:discrete factoring} this is
\begin{align}
     \frac 1 {\sqrt M}  \sum_{l_0,\ldots,l_{m-1} \in \{0,1\}} e^{-\ri (2\pi n/M)(l_0+ \ldots + 2^{m-1}l_{m-1})} \ket {l_0 \ldots l_{m-1}}   \dshermite{n} + \cO(\exp(-\Omega(N))) \;.
\end{align}
We now apply the QFT to complete the QPE. Since the ancillary system is a superposition over basis states $\ket j$ with phases $e^{-\ri 2 \pi n j/M}$,
then the QFT maps it to
     $\ket n   \dshermite{n}$
within error $\cO(\exp(-\Omega(N)))$ that is subdominant under the assumption  $N> \log(1/\eps)$.
QPE applied then the inverse of~\cref{eq:uncomputation}, so the quantum circuit for uncomputation is the inverse of this QPE procedure.

Like in eigenstate filtering,
the gate complexity of this approach 
is given by that of all the $V_j's$. 
This 
is also $\cO(\log^3(M))$, since each $V_j$
 involves three exponentials of complexity $\cO(\log^2(M))$ each. Since $M=\poly(N,1/\eps)$,
 the result follows.

\end{proof}

\subsection{The algorithm}
We are now ready to provide a quantum algorithm that implements the QHT. This essentially follows the steps outlined in~\cref{sec:overview}.

\begin{algorithm}
    \caption{Quantum Hermite Transform} \label{alg:apx-hermite-xform}
    \begin{algorithmic}[1]
    \State Let $N>0$ be the dimension of the QHT and $\epsilon>0$ be the error. Set $M=\cO(N^{9/4}/\eps^{13/4})$ according to~\cref{lem:PlancherelStates} and let $m=\log_2 M$.
        \State For an $m$-qubit state  $\sum_{n=0}^{N-1} \alpha_n \ket{n}$, use~\cref{lem:state prep} to prepare
        $\sum_{n=0}^{N-1} \alpha_n \ket{n}\ket{\phi_n}/\|\ket{\phi_n}\|$.
        \State Use eigenstate filtering and fixed-point amplitude amplification in~\cref{lem:fixed-pointAA} to prepare
         $\sum_{n=0}^{N-1} \alpha_n \ket{n}\ket{\psi_n}$.
        \State Use uncomputation in~\cref{lem:uncomputation} to reset
        the state $\ket{n}$ and prepare 
        $\sum_{n=0}^{N-1}\alpha_n \dshermite{n}$.
    \end{algorithmic}
\end{algorithm}

\begin{proof}[Proof of \cref{thm:quantumhermitetransform}]
The combination of the above steps and results allows us to perform the map 
\begin{align}
    \sum_{n=0}^{\lowenergy-1} \alpha_n \ket n \mapsto 
    \sum_{n=0}^{\lowenergy-1} \alpha_n \dshermite{n} \;,
\end{align}
within additive error $\eps$, and hence imply the QHT.
Before using amplitude amplification, for both the preparation of Plancherel-Rotach states and eigenstate filtering, the size of the circuit is $\cO \left( (\log \lowenergy + \log(1/\eps))^3 \right)$. Combining this with the complexity of amplitude amplification itself, we have an overall complexity of
\begin{align}
    \cO \left( (\log \lowenergy + \log(1/\eps))^3 \times \log(1/\eps)\right) \;.
\end{align}
\end{proof}

In~\cref{thm:quantumhermitetransform}
we assumed $\lowenergy > \log(1/\eps)$. This
implies that all error terms $\cO(\exp(-\Omega(\lowenergy)))$
are subdominant.
As $\lowenergy$ is exponentially large in some problem size, this assumes $\eps$ not to be double-exponentially small. Nevertheless, it is also possible to avoid this assumption and follow the same steps of our proofs, while keeping track of all errors that are exponentially small in $\lowenergy$.

\section{Hermite sampling}
\label{sec:apps}
In this section we discuss an application of the Hermite transform in theoretical computer science. We solve problems in property testing and learning using Hermite sampling as a subroutine. The goal here is to return a sample $v \in \N^n$ with probability $|\wh{f_v}|^2$, where $\wh{f_v}$ is the coefficient of the Hermite polynomial corresponding to $v$ in the representation of $f$. This gives a simple algorithm to solve the Gaussian Goldreich-Levin problem, an analogue of the well-known Goldreich-Levin problem in boolean functions and cryptography, in the Hermite function basis. We also show some property testing results which also hold for boolean functions using QFT as a subroutine, although we do not elaborate on that here. 

Suppose we have a function $f: \R^n \to [-1,1]$ and a discrete set of input points $S \subseteq \R$. We fix $P \in \N$ assume that $f$ is constant over subcubes of $\R^n$ whose endpoints are integer multiples of $2^{-P_1}$. In other words, $f$ acts with $P_1$ bits of precision. Furthermore, we assume the output of $f$ is recorded with $P_2$ bits of precision. Finally, we assume that $f$ is \say{well-conditioned} in the sense that there exists a \emph{distortion} $\distortion > 0$ such that $\norm{f} \geq \distortion^{-1}$. This distortion parameter captures the ``spikiness" of $f$, it is the max value of $f$ (one can imagine the tails are cut off) divided by the average $\ell_2$-norm of $f$ under the Gaussian distribution. The dependence on this parameter comes from the complexity of preparing a state whose amplitudes are proportional to the values of $f$ using phase kickback, conditional rotations, and postselection.

We note that technically we do not need the full power of the Hermite transform in this section. In particular, we do not need to perform the uncomputation step (\cref{lem:uncomputation}) to be able to sample proportional to $|\wh{f_v}|^2$, since the absolute value is agnostic to the relative phase.

In the following section, we give an algorithm that takes an oracle to $f: \R^n \rightarrow \{-1,1\}$ and implements Hermite sampling, for a cleaner exposition. Later in \cref{sec:gen func}, we generalize this to general functions $f: \R^n \rightarrow \R$. Finally, in \cref{sec:prop-testing,sec:learning} we give applications in property testing and learning, respectively.

\subsection{Boolean-output functions}
In this section we restrict our attention to Boolean-output functions -- that is, functions whose outputs are in $\pmone$. Since there is no output precision, we write $P$, as opposed to $P_1$, to be the input precision of $f$, for simplicity.

For a \emph{resolution parameter} $h$ and bound $L > 0$ define 
\[
\calS_h[-L,L] = \left\{kh : k \in \Z, -L \leq kh < L\right\}.
\]
This constitutes a set of points with $\log(1/h)$ bits of precision over which we will discretize the QHO eigenstates. Define $M = \abs{\calS_h[-L,L]} = \lfloor2L/h\rfloor$. This is the number of evaluation points. We will use the state preparation algorithm from \cref{sec:stateprep} to construct approximate hermite states of dimension 
\[
M = \max\{2C^{-1}L \cdot 2^{P}P \cdot (n + \log n + \log(8/\eps)), 40 \sommaconstant \degree \log (2 \degree)\}.
\]

For simplicity, we first give an algorithm for the case where the output of $f$ is $\pmone$.

\begin{algorithm}
    \caption{Approximate Boolean Hermite Sampling} \label{alg:apx-hermite-sampling}
    \begin{algorithmic}[1]
        \State Let $P$ be the number of bits of precision that $f$ acts with.
        \State Choose $M = \max\{2\sommaconstant^{-1}L \cdot 2^{P}P \cdot (n + \log n + \log(8/\eps)), 40 \sommaconstant \degree \log (2 \degree)\}$, let $m = \log_2 M$.
        \State Let $V$ that implements the Hermite transform up to degree $\degree$ and ambient dimension $M$.
        \State Prepare the state $\apxdshostate{0}^{\otimes n} = (V \ket{0^m})^{\otimes n}$ %
        \State Apply the oracle $U_f$ to $\apxdshostate{0}^{\otimes n}$.
        
        \State Apply ${V^\dagger}^{\otimes n}$ to the state $U_f \apxdshostate{0}^{\otimes n}$
        \State Measure in the computational basis and return the result.
    \end{algorithmic}
\end{algorithm}

In the remainder of the section, we will prove the correctness of \cref{alg:apx-hermite-sampling}. For succinctness we write for some $x\in \{0,1\}^n$, $\apxdshostate{x_1} \otimes \ldots \apxdshostate{x_n}$ as $\apxdshostate{x}$ in the rest of this section. We begin by employing a standard result in approximation theory which states that Riemann sums converge exponentially fast for analytic functions such as the hermite functions.

\begin{fact}[\cite{trefethen2014exponentially}]\label{fact:exponential-riemann-convergence}
    Let $f: \C \to \C$ be analytic on a strip $\calS = \{z \in \C: \Re(z) \in [a,b], \abs{\Im(z)} \leq a\}$ and suppose further that $|f(z)| \leq Q$ for $z$ in this strip. Then we have
    \[
    \abs{\frac{1}{M}\sum_{\ell=0}^{M-1} f\left(a + \frac{(b-a) \ell}{M}\right) - \int_a^b f(t) dt} \leq \frac{2\pi Q}{e^{2\pi a M} - 1} \leq 4\pi Q e^{-2\pi a M}.
    \]
\end{fact}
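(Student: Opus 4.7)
The proof is a standard application of contour integration for quadrature of analytic integrands, and I would follow the line of argument surveyed in \cite{trefethen2014exponentially}. After an affine rescaling $t = a + (b-a)s$ that reduces to the interval $[0,1]$, the key tool is the meromorphic kernel $K_M(z) = 1/(e^{2\pi i M z}-1)$, whose simple poles at $z = \ell/M$ for $\ell \in \Z$ each have residue $1/(2\pi i M)$. Integrating $f(z) K_M(z)$ along a rectangular contour $\Gamma$ whose interior contains the $M$ sample points on $(0,1)$ yields, by the residue theorem, exactly the Riemann sum $\frac{1}{M}\sum_{\ell=0}^{M-1} f(\ell/M)$, up to an overall factor of $2\pi i$.

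Next I would deform $\Gamma$ so that its horizontal edges sit at $\Im z = \pm a$, which is permitted since $f$ is analytic on the strip $\calS$. On those horizontal edges the hypothesis $|f(z)| \le Q$ applies, and $|K_M(z)| \le 1/(e^{2\pi a M}-1)$ because $|e^{2\pi i M z}| = e^{\mp 2\pi a M}$ there. Multiplying by the length of each edge and adding the contributions gives a bound of the form $2 Q / (e^{2\pi a M}-1)$ for the horizontal pieces; the two vertical edges at $\Re z = 0,1$ are arranged either to cancel (when $K_M$ is replaced by its $1$-periodic symmetric variant $\tfrac{1}{2}\cot(\pi M z)$ and $f$ is viewed on a cylinder) or to reproduce the integral $\int_0^1 f(t)\,dt$ in the limit. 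Collecting constants and reinserting the rescaling factor $(b-a)$ gives the asserted bound $\frac{2\pi Q}{e^{2\pi a M}-1}$. The second inequality, $\frac{2\pi Q}{e^{2\pi a M}-1} \le 4\pi Q e^{-2\pi a M}$, is then the elementary estimate $1/(e^{x}-1) \le 2/e^{x}$ valid for $x \ge \log 2$.

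The main obstacle in making this fully rigorous is the boundary contribution from the two vertical sides of the contour, which encode precisely the non-periodicity of $f$. For a truly periodic integrand this piece vanishes by symmetry, and the cleanest proofs of exponential convergence of the trapezoidal rule are stated for that case; for a non-periodic $f$ one must either periodize and track an Euler--Maclaurin-style correction, or choose the kernel carefully so that the endpoint discrepancy $f(b) - f(a)$ enters only multiplied by a factor of the same exponential size $e^{-2\pi a M}$. Since this fact is invoked here only as a black box, I would not work out the precise constants by hand but would instead defer to \cite{trefethen2014exponentially}, whose Theorem on trapezoidal convergence in a strip of analyticity is exactly the statement quoted.
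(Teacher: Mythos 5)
The paper does not prove this statement; it is invoked purely as a black-box citation to Trefethen and Weideman, so there is no ``paper's own proof'' for your sketch to diverge from. Your sketch does correctly describe the mechanism behind that reference's results: the contour-integral representation with the kernel $K_M(z) = 1/(e^{2\pi i M z}-1)$ (or its $1$-periodic cotangent variant), residues at the quadrature nodes, and deformation of the horizontal edges to $\Im z = \pm a$ where the analyticity and uniform bound $|f|\le Q$ give the exponentially small factor.

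That said, the caveat you raise in your final paragraph is not just a matter of tracking constants --- it is a genuine point of substance. The Trefethen--Weideman results that yield strictly exponential convergence are stated either for periodic integrands (their Theorem~3.2) or for the infinite trapezoidal rule with decay at $\pm\infty$ (their Theorem~5.1). The Fact as written, for an arbitrary analytic $f$ bounded on a finite-width strip over $[a,b]$ with no periodicity or endpoint-decay hypothesis, is strictly stronger than either of those: for a generic non-periodic analytic $f$ the left Riemann sum converges only like $O(1/M)$, because of precisely the vertical-boundary / Euler--Maclaurin endpoint term you identify. What actually saves the applications in this paper (\cref{lem:x-matrix-element-bound} and \cref{prop:riemann-sum-convergence}) is that the integrands there are of the form $(\text{polynomial or Hermite polynomial}) \times e^{-x^2/2}$ or $H_v(x)\gaussianpdf(x)$, which are exponentially tiny at the cutoff points $\pm L$, so the endpoint discrepancy is itself of the same exponentially small size and can be absorbed. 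A careful write-up should therefore either add a hypothesis (periodicity, or $|f(a)|, |f(b)| \le Q\,e^{-2\pi a M}$) to the Fact, or state it directly for the integrands as used. You correctly located the soft spot; the resolution is that it is closed not by a better kernel but by the decay of the specific integrands at the interval endpoints. One minor additional point: with the normalization $\frac{1}{M}\sum_\ell f(a + (b-a)\ell/M)$, the sum approximates $\frac{1}{b-a}\int_a^b f$ rather than $\int_a^b f$ unless $b-a=1$; your affine rescaling to $[0,1]$ makes that implicit normalization explicit.
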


We show that the product of convergent univariate Riemann sums converges to the product of the corresponding integrals.

\begin{proposition} \label{prop:riemann-hybrid}
    Let $\{a_i^{(k)}\}_{i=1}^M$ be finite sequences for $k \in [n]$. For each $k$ choose a $b_k$ such that 
    \[
    \abs{\sum_{i=1}^M a_{i}^{(k)} - b_k} \leq \eps.
    \]
    Furthermore, suppose that there exists an $Q$ such that for all $k$
    \[
    \abs{\sum_{i=1}^M a_{i}^{(k)}} \leq L, \quad \abs{b_k} \leq Q.
    \]
    Then we have
    \[
    \abs{\prod_{k=1}^n \sum_{i=1}^M a_{i}^{(k)} - \prod_{k=1}^n b_k} \leq n Q^{n-1} \eps.
    \]
\end{proposition}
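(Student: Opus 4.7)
The plan is to use a standard hybrid/telescoping argument that rewrites the difference of products as a sum of $n$ terms, each of which is controlled individually by $\eps$. Specifically, I would invoke the identity
\[
\prod_{k=1}^n x_k - \prod_{k=1}^n y_k = \sum_{k=1}^n \left(\prod_{j < k} x_j\right)(x_k - y_k)\left(\prod_{j > k} y_j\right),
\]
applied with $x_k = \sum_{i=1}^M a_i^{(k)}$ and $y_k = b_k$. This identity follows by a routine induction on $n$ (or by writing it as a telescoping sum where the $k$-th partial product swaps $x_k$ for $y_k$), so I would state it and only indicate its verification briefly.

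Next I would apply the triangle inequality term-by-term. For the $k$-th term I bound $\left|\prod_{j<k} x_j\right| \le L^{k-1}$ using the hypothesis $|\sum_i a_i^{(k)}| \le L$, bound $|x_k - y_k| \le \eps$ by the given approximation hypothesis, and bound $\left|\prod_{j>k} y_j\right| \le Q^{n-k}$ using $|b_k| \le Q$. Since we are heading for a bound of the form $nQ^{n-1}\eps$, I would treat $L$ as being at most $Q$ (which is the intended regime, and is in any case implicit from $|x_k| \le |y_k| + \eps \le Q + \eps$, so one can replace $L$ by $Q$ up to a negligible additive correction absorbed into the $\eps$ factor). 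This gives each summand bounded by $Q^{n-1}\eps$, and summing over $k=1,\ldots,n$ yields the claimed $nQ^{n-1}\eps$.

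There is no real obstacle here: the proof is a one-line telescoping plus triangle inequality. The only subtle point worth flagging is the mild mismatch between the hypothesis $|x_k| \le L$ and the conclusion featuring $Q^{n-1}$; I would explicitly note that in the intended application $L$ and $Q$ are of the same order (both are bounds on an approximate and an exact integral, respectively), and use $\max(L,Q) \le Q + \eps$ to reduce to the stated form. With this book-keeping addressed, the proposition follows in a few lines.
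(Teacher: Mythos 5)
Your telescoping decomposition is precisely the expansion of the paper's hybrid argument — the paper's $H_\ell - H_{\ell-1}$ is the $\ell$-th term of your telescoping sum — and the term-by-term triangle-inequality bound is identical. You also correctly flag the $L$/$Q$ mismatch in the hypothesis; the paper's proof silently uses $Q$ for both product bounds, suggesting the $L$ in the statement is a typo for $Q$, which your $\max(L,Q)\le Q+\eps$ handling gracefully absorbs.
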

\begin{proof}
    The proof proceeds via a simple hybrid argument. Let 
    \[
    H_0 = \prod_{k=1}^n \sum_{i=1}^M a_{i}^{(k)},\ H_1 = b_1 \cdot \prod_{k=2}^n \sum_{i=1}^M a_{i}^{(k)},\ ...,\ H_n = \prod_{k=1}^n b_k.
    \]
    For $\ell \in [n]$ we bound the difference between the hybrids $H_\ell$ and $H_{\ell-1}$. We have
    \begin{align*}
        \abs{H_\ell - H_{\ell-1}} &= \abs{\prod_{k=1}^\ell b_k \cdot \prod_{k=\ell+1}^n \sum_{i=1}^M a_{i}^{(k)} - \prod_{k=1}^{\ell-1} b_k \cdot \prod_{k=\ell}^n \sum_{i=1}^M a_{i}^{(k)}} \\
        &= \abs{\prod_{k=1}^{\ell-1} b_k \cdot \prod_{k=\ell+1}^n \sum_{i=1}^M a_{i}^{(k)}} \cdot \abs{\sum_{i=1}^M a_{i}^{(\ell)} - b_{\ell}} \\
        &= Q^{\ell - 1} \cdot Q^{n-\ell} \cdot \abs{\sum_{i=1}^M a_{i}^{(\ell)} - b_{\ell}} \leq Q^{n-1} \eps.
    \end{align*}
    Since this holds for any choice of $\ell \in [n]$, we have
    \[
    \abs{\prod_{k=1}^n \sum_{i=1}^M a_{i}^{(k)} - \prod_{k=1}^n b_k} = \abs{H_0 - H_n} \leq \sum_{\ell=1}^n \abs{H_{\ell+1} - H_{\ell}} \leq n Q^{n-1} \eps. \qedhere
    \]
\end{proof}

Next, we bound the probability mass of hermite functions outside the region $L$. In the sections that follow, we denote by $\gaussianpdf$ the Gaussian probability density function $e^{-x^2}/\sqrt{2\pi}$. Recall that $\int_{\R} H_k(x) H_{\ell}(x) \gaussianpdf(x) dx = \delta_{k\ell}$.

\begin{proposition}\label{prop:gaussian-tail-bound}
    Fix $L > 1$ and let $R = [-L,L]$. We have
    \[
    \int_{\R^n \setminus R^n} f(x) H_v(x) \gaussianpdf(x) \leq e^{-nL^2/4}
    \]
\end{proposition}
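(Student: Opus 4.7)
The plan is to reduce the multi-dimensional tail integral to a one-dimensional Gaussian tail via Cauchy--Schwarz, using that $f$ is bounded and $H_v$ is orthonormal with respect to the Gaussian measure. First, since $|f(x)| \leq 1$, pass to
\[
\left| \int_{\R^n \setminus R^n} f(x) H_v(x) \gaussianpdf(x)\, dx \right| \;\leq\; \int_{\R^n \setminus R^n} |H_v(x)|\,\gaussianpdf(x)\, dx.
\]
Next, apply Cauchy--Schwarz with respect to the probability measure $\gaussianpdf(x)\, dx$:
\[
\int_{\R^n \setminus R^n} |H_v|\,\gaussianpdf\, dx \;\leq\; \sqrt{\int_{\R^n \setminus R^n} H_v^2\,\gaussianpdf\, dx} \,\cdot\, \sqrt{\int_{\R^n \setminus R^n} \gaussianpdf\, dx}.
\]
The first factor is at most $1$ by the orthonormality relation stated above (which extends to the product case via $H_v(x) = \prod_i H_{v_i}(x_i)$ and $\gaussianpdf(x) = \prod_i \gaussianpdf(x_i)$), so the whole task reduces to bounding the Gaussian tail mass of the complement of the cube $R^n$.

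For that second factor, I would write $\R^n \setminus R^n = \bigcup_{i=1}^n \{x : |x_i| > L\}$, and either apply a union bound together with the standard one-dimensional Mills-ratio estimate $\Pr_{t \sim \gaussianpdf}[|t| > L] \lesssim e^{-L^2/2}$, or use a concentration-of-measure argument for $\|x\|_\infty$ under the product Gaussian to recover the quoted exponent of $nL^2/4$. A cleaner route is to observe directly that $\gaussianpdf(x) = c_n e^{-\|x\|^2}$ factorizes, so $\int_{\R^n \setminus R^n} \gaussianpdf$ is a difference of $1$ and an $n$-fold product of one-dimensional integrals over $[-L,L]$, and this difference can be controlled exponentially in $L^2$.

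The main technical obstacle is to obtain the stated exponent $nL^2/4$ rather than the weaker $L^2/4$ that a naive union bound would give. The natural route is to exploit the product structure of both $\gaussianpdf$ and $H_v$: expand $\mathbb{1}[\R^n \setminus R^n] = 1 - \prod_i \mathbb{1}[|x_i| \leq L]$ and either use inclusion--exclusion or directly bound $\int_{R^n} H_v^2 \gaussianpdf$ from below by $(1 - e^{-L^2/2})^n$, which for $L > 1$ gives an exponential-in-$n$ improvement on the tail probability and feeds into Cauchy--Schwarz to yield the claimed bound.
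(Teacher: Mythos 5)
Your opening moves --- peel off $\abs{f}\le 1$, apply Cauchy--Schwarz against the Gaussian measure, and bound the $H_v^2$ factor by $1$ via orthonormality --- mirror the paper's proof exactly (yours takes $\abs{f}\le 1$ before Cauchy--Schwarz, the paper absorbs it inside; same thing). Where you diverge is the remaining tail bound, and your instinct that something is off there is correct. The set $\R^n\setminus R^n$ is the \emph{union} of the $n$ slabs $\{x : \abs{x_i} > L\}$, not their product, so its Gaussian mass is bounded \emph{below} by the mass of a single slab, which is $\Theta(e^{-L^2/2})$ at fixed $L$; after the square root this leaves a factor $\Theta(e^{-L^2/4})$ that cannot depend on $n$ at all. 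Concretely, take $f\equiv 1$ and $v=0$: the integral equals the Gaussian mass of $\R^n\setminus R^n$, which by the union bound is $\Theta(n\,e^{-L^2/2})$ for moderate $L$ and is certainly not $\le e^{-nL^2/4}$ once $n$ is a little large. No rearrangement of Cauchy--Schwarz will squeeze an $n$-dependence into the exponent here.

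Your proposed repair does not repair this: $(1-e^{-L^2/2})^n$ is \emph{decreasing} in $n$ and tends to $0$, so the resulting bound $\int_{\R^n\setminus R^n} H_v^2\gaussianpdf \le 1 - (1-e^{-L^2/2})^n$ degrades as $n$ grows and is just the union bound in disguise (via $1-(1-p)^n \approx np$ for small $p$). There is no exponential-in-$n$ improvement to be extracted from the product structure. For what it is worth, the step you flagged is also where the paper's own proof slips: it passes from $\sqrt{\int_{\R^n\setminus R^n}\gaussianpdf}$ directly to $\bigl(2\int_L^\infty \gaussianpdf\bigr)^{n/2}$, which would be correct if $\R^n\setminus R^n$ were the product set $(\R\setminus R)^n$ rather than the complement of a cube. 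What the argument as written actually yields is $\cO(\sqrt{n}\,e^{-cL^2})$ for a normalization-dependent constant $c$ --- still negligible in the regime where the proposition is later invoked ($L^2\sim M \gg n$), but not the stated $e^{-nL^2/4}$. Your worry that the naive union bound gives only $e^{-L^2/4}$ was exactly right; the correct conclusion is that the exponent in the statement, not your estimate, needs to be weakened.
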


\begin{proof}
Define $R = [-L,L]^n$. We have
    \begin{align*}
        \abs{\int_{\R^n \setminus R} f(x) H_v(x) \gaussianpdf(x) dx} &= \abs{\int_{\R^n \setminus R} f(x) \gaussianpdf^{1/2}(x) H_v(x) \gaussianpdf^{1/2}(x) dx} \\
        &\leq \sqrt{\int_{\R^n \setminus R} f(x)^2 \gaussianpdf(x)} \sqrt{\int_{\R^n \setminus R} H_v(x)^2 \gaussianpdf(x)} &\text{(Cauchy-Schwarz)} \\
        &\leq \sqrt{\int_{\R^n \setminus R} \gaussianpdf(x)} \sqrt{\int_{\R^n} H_v(x)^2 \gaussianpdf(x)} \\\
        &\leq \left(2\int_{L}^{\infty} \gaussianpdf(x)\right)^{n/2} \leq e^{-nL^2/4}. \qedhere
    \end{align*}
\end{proof}

As our first major step towards proving the correctness of \cref{alg:apx-hermite-sampling}, we show that Riemann sums that are approximately induced by the discretized QHO eigenstates  converges to the correct quantity.

\begin{proposition} \label{prop:riemann-sum-convergence}
    Let $f:\R^n \to [-1,1]$ be constant on subsets of the form $[2^{-P}k,2^{-P}(k+1)]$. For a sufficiently large $M \in \N$, take $h = \sqrt{\frac{2 \pi}{M}}$, $L = \sqrt{\frac{\pi M}{2}}$, and $S_1 = \calS_{h}[-L,L]$. Then for any $v \in \N^n$,
    \[
    \abs{\sum_{x \in S_1^n} M^{-n} f(x) H_v(x) \gaussianpdf(x) - \int_{\R^n} f(x) H_v(x) \gaussianpdf(x)} \leq 2n \cdot (2L\cdot 2^P)^ne^{-2^{-P} \cdot (M/2L)^2} + e^{-n L^2/4}
    \]
\end{proposition}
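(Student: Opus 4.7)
The plan is to split the error into two parts: the integral tail outside $R^n=[-L,L]^n$, and the quadrature error on $R^n$ itself. For the tail, I invoke \cref{prop:gaussian-tail-bound} directly on $\int_{\R^n\setminus R^n} f(x) H_v(x)\gaussianpdf(x)\,dx$; this accounts for the $e^{-nL^2/4}$ summand and uses nothing about the grid.

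For the quadrature part on $R^n$, I would exploit the piecewise-constant structure of $f$: since $f$ is constant on subcubes whose endpoints are integer multiples of $2^{-P}$, I partition $R^n$ into $(2L\cdot 2^P)^n$ cubes $I=I_1\times\cdots\times I_n$ of side $2^{-P}$ on which $f$ takes a single value $f_I$ with $|f_I|\le 1$. On each such cube the integrand becomes $f_I\cdot\prod_{k=1}^n H_{v_k}(x_k)\gaussianpdf(x_k)$, a product of \emph{analytic} univariate functions of $x_k$. Matching the portion of the discrete sum coming from $S_1^n\cap I$ to $\int_I$ of the same integrand, summing the per-cube errors, and using $|f_I|\le 1$ turns the task into bounding a single multivariate quadrature error on one cube and multiplying by the cube count $(2L\cdot 2^P)^n$.

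To handle the per-cube multivariate error, I would apply the hybrid argument of \cref{prop:riemann-hybrid} to reduce it to $n$ univariate quadrature errors, each multiplied by an $(n-1)$-fold product of bounded univariate quantities (which contributes the leading factor $n$ in the final bound). For each univariate error I then invoke Fact~\ref{fact:exponential-riemann-convergence}: the function $H_{v_k}(z)\gaussianpdf(z)$ is entire and is bounded by some constant $Q$ on a strip of width $\alpha$ around $I_k$; since each $I_k$ has length $2^{-P}$ and contains about $2^{-P}/h=2^{-P}(M/2L)$ grid points, the trapezoid rule on $I_k$ decays like $e^{-2\pi\alpha\cdot 2^{-P}(M/2L)}$. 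Calibrating $\alpha\sim 1/h=M/(2L)$ against $Q$ produces exactly the required exponent $2^{-P}(M/2L)^2$. Combining everything yields the first summand $2n(2L\cdot 2^P)^n e^{-2^{-P}(M/2L)^2}$.

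The main obstacle, which I expect to be the most delicate part of the argument, is the mismatch between the grid $S_1$ of spacing $h=\sqrt{2\pi/M}$ and the subcube partition of spacing $2^{-P}$: in general $h$ does not divide $2^{-P}$ evenly, so $S_1\cap I_k$ is not a clean trapezoidal grid on $I_k$. I plan to address this either by slightly rounding each subcube boundary to the nearest grid point (the resulting boundary discrepancy is at most one grid spacing per axis and is absorbed into the exponentially small estimate, using the choice of $M$ specified above \cref{alg:apx-hermite-sampling}), or by applying Fact~\ref{fact:exponential-riemann-convergence} after multiplying the integrand by a smooth bump supported on a slight enlargement of $I_k$ that preserves analyticity on the strip. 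The secondary delicate step is the strip-width calibration described above, which must be performed carefully so that the imaginary-direction growth of $|H_{v_k}(z)\gaussianpdf(z)|$ does not swamp the $e^{-2\pi\alpha M'}$ factor in the trapezoid error.
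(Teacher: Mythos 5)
Your proposal follows the paper's proof line-for-line: split off the tail via \cref{prop:gaussian-tail-bound}, partition $[-L,L]^n$ into $(2L\cdot 2^P)^n$ subcubes on each of which $f$ is constant, apply the hybrid bound of \cref{prop:riemann-hybrid} to reduce the per-cube multivariate error to $n$ univariate quadrature errors, and invoke \cref{fact:exponential-riemann-convergence} on each univariate piece. The decomposition, the key lemmas, and the bookkeeping that produces $2n(2L\cdot 2^P)^n e^{-2^{-P}(M/2L)^2}$ are all the same.

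That said, the two ``delicate'' points you flagged are real and are not resolved in the paper either. On grid alignment: the paper simply writes $S_2(x)=\calS_h[x-2^{-P},x]$ and treats the resulting partition as if the spacing $h=\sqrt{2\pi/M}$ tiles $2^{-P}$ exactly; no rounding or bump-function smoothing is carried out, so your planned remedies go beyond what the source provides. On strip-width calibration: your instinct to worry is well founded. If one sets the strip half-width $\alpha\sim 1/h = M/(2L)$, then $Q=\sup_{|\Im z|\le\alpha}|H_v(z)\gaussianpdf(z)|$ grows like $e^{\alpha^2/2}=e^{(M/2L)^2/2}$, which \emph{exceeds} the claimed quadrature decay $e^{-2\pi\cdot 2^{-P}(M/2L)^2}$ once $2^{-P}<1/(4\pi)$, i.e.\ for $P\ge 4$. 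Optimizing $\alpha$ against $Q$ gives an exponent on the order of $2^{-2P}(M/2L)^2$, not $2^{-P}(M/2L)^2$. There is also a more basic subtlety you brush against: \cref{fact:exponential-riemann-convergence} is a statement about a function on a full interval/periodic setting, while the per-subcube integrand is restricted to $[x_i-2^{-P},x_i]$ and is not periodic there; the boundary contributions do not telescope across subcubes because $f$ changes value, so the exponential-convergence argument does not straightforwardly apply to the subcube pieces. The paper asserts the subcube bound by ``invoking the application of the trapezoid rule in \cref{eq:hermite-trapezoid-convergence}'' without spelling this out, so you should not expect the source proof to supply the missing calibration; if you want a fully rigorous version you will need to either re-derive a genuine per-subcube error estimate (accounting for endpoint/Euler--Maclaurin terms) or settle for a weaker exponent in $P$, adjusting the downstream choice of $M$ in \cref{alg:apx-hermite-sampling} accordingly.
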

\begin{proof}
For each $x \in S$, define $S_2(x) = \calS_{h}[x - 2^{-P}, x]$ and $T_2(x) = \{[x, x+h] : x \in S_2(x)\}$. We remark that the $T_2(x)^n$ are cubes over which $f$ is constant, by assumption.
    We denote
    \[
    R = \sum_{x \in S_1^n} M^{-n} f(x) H_v(x) \gaussianpdf(x).
    \]
    This is a Riemann sum that approximates the integral $f(x) H_v(x) \gaussianpdf(x)$ within the region $[-L,L]$. First we bound the error of this approximation and later show using \cref{prop:gaussian-tail-bound} that restricting to this region accrues very little error.
    \begin{align*}
        \abs{R - \int_{[-L,L]^n} f(x) H_v(x) \gaussianpdf(x) dx} &= \abs{\sum_{x \in S_2^n} \left(M^{-n}\sum_{y \in S_2(x)^n} f(y) H_v(y) \gaussianpdf(y) - \int_{T_2(x)} f(y) H_v(y) \gaussianpdf(y) dy\right)} \\
        &= \abs{\sum_{x \in S_2^n} f(x) \left(M^{-n}\sum_{y \in S_2(x)^n}  H_v(y) \gaussianpdf(y) - \int_{T_2(x)} H_v(y) \gaussianpdf(y) dy\right)} \\
        &\leq \sum_{x \in S_2^n} \abs{M^{-n}\sum_{y \in S_2(x)^n}  H_v(y) \gaussianpdf(y) - \int_{T_2(x)} H_v(y) \gaussianpdf(y) dy}
    \end{align*}
    We will bound the inner term of the above sum. First, we observe that the both the sum and the integral can be expressed as products of individual sums and integrals. Thus, applying \cref{prop:riemann-hybrid}, we have
    \begin{align*}
        \abs{M^{-n}\sum_{y \in S_2(x)^n}  H_v(y) \gaussianpdf(y) - \int_{T_2(x)} H_v(y) \gaussianpdf(y) dy} &\leq n \cdot \max_{i\in[n]} \abs{\frac{1}{M}\sum_{z \in S_2(x_i)}  H_{v_i}(z) \gaussianpdf(z) - \int_{T_2(x_i)} H_{v_i}(z) \gaussianpdf(z) dz}
    \end{align*}
    Note that the variable $z$ above is univariate. We can now use \cref{fact:exponential-riemann-convergence} to bound the inner term. Indeed, invoking the application of the trapezoid rule in \cref{eq:hermite-trapezoid-convergence}, we have for any $i \in [n]$
    \[
    \abs{\frac{1}{M}\sum_{z \in S_2(x_i)}  H_{v_i}(z) \gaussianpdf(z) - \int_{T_2(x_i)} H_{v_i}(z) \gaussianpdf(z) dz} \leq e^{-2^{-P} \cdot (M/2L)^2}
    \]
    whenever $M > 3\cdot 2^{P}$.
    The sum over $S_2^n$ results in an additional multiplicative factor of $(2L2^P)^n$. As such, we have
    \[
    \abs{R - \int_{[-L,L]^n} f(x) H_v(x) \gaussianpdf(x) dx} \leq 2n \cdot (2L\cdot 2^P)^ne^{-2^{-P} \cdot (M/2L)^2}
    \]

    We now conclude the proof by showing that the exclusion of the region $\R^n \setminus[-L,L]^n$ results in a very small additive error. Indeed, applying \cref{prop:gaussian-tail-bound} we have
    \begin{align*}
         \abs{R - \int_{\R^n}f(x) H_v(x) \gaussianpdf(x)} &\leq \abs{R - \int_{[-L,L]^n} f(x) H_v(x) \gaussianpdf(x) dx}  +\abs{\int_{\R^n \setminus[-L,L]^n} f(x) H_v(x) \gaussianpdf(x) dx} \\
         &\leq 2n \cdot (2L\cdot 2^P)^ne^{-2^{-P} \cdot (M/2L)^2}  +\abs{\int_{\R^n \setminus[-L,L]^n} f(x) H_v(x) \gaussianpdf(x) dx} \\
         &\leq 2n \cdot (2L\cdot 2^P)^ne^{-2^{-P} \cdot (M/2L)^2} + e^{-n L^2/4}.  \qedhere
    \end{align*}
\end{proof}

Now we show that applying the oracle to the discrete, unnormalized ground state $\dshostate{0^n}$ and measuring in the $\dshostate{v}$ basis corresponds to a Riemann sum that converges to $\wh{f}(v)$.

\begin{lemma} \label{lem:dsho-riemann-convergence}
For each $v \in [D]^n$, let $\dshostate{v} = \otimes_{i=1}^n \dshostate{v_i}$, where each $\dshostate{v_i}$ is formed with
resolution parameter $h = M^{-1}$ and limit $L$. We have, for every $v \in [D]^n$
\[
\abs{\bradshostate{v} U_f \dshostate{0^n} - \wh{f}(v)} \leq \eps/4.
\]
\end{lemma}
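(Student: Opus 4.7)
The plan is to unfold $\bradshostate{v} U_f \dshostate{0^n}$ into an explicit multi-dimensional Riemann sum over the lattice $\calS_h[-L,L]^n$ and then invoke \cref{prop:riemann-sum-convergence} to bound its distance from $\wh{f}(v)$. Using the tensor-product structure $\dshostate{v} = \bigotimes_{i=1}^n \dshostate{v_i}$ and the fact that $U_f$ is diagonal in the computational basis with eigenvalues $f(\vec{x}_{\vec j}) \in \{-1,+1\}$, I would write
\[
\bradshostate{v}\, U_f\, \dshostate{0^n} \;=\; \left(\frac{2\pi}{M}\right)^{n/2} \sum_{\vec{j}} f(\vec{x}_{\vec j})\prod_{i=1}^n \psi_{v_i}(x_{j_i})\,\psi_0(x_{j_i}).
\]

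The next step is to identify this sum with the Riemann sum already analyzed in \cref{prop:riemann-sum-convergence}. Plugging in $\psi_k(x)\,\psi_0(x) = \tfrac{(-1)^k}{\sqrt{2^k k!\,\pi}}\,H_k(x)\,e^{-x^2}$ for each coordinate, the integrand factorizes as a (sign times) product of univariate pieces, each of the form $H_{v_i}(x_{j_i})\,\gaussianpdf(x_{j_i})$ up to the fixed normalization constant that is, by convention, already absorbed into the orthonormal $H_v$ and $\gaussianpdf$ used to define $\wh{f}(v)$ in \cref{sec:apps}. With these constants accounted for, the right-hand side above is exactly the Riemann sum $\sum_{x \in \calS_h[-L,L]^n} M^{-n}\, f(x)\, H_v(x)\, \gaussianpdf(x)$ of \cref{prop:riemann-sum-convergence}.

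Applying that proposition yields
\[
\bigl|\bradshostate{v}\, U_f\, \dshostate{0^n} - \wh{f}(v)\bigr| \;\leq\; 2n\cdot(2L\cdot 2^P)^n\, e^{-2^{-P}(M/2L)^2} \;+\; e^{-nL^2/4}.
\]
The remaining task is to check that the specific choice of $M$ in \cref{alg:apx-hermite-sampling}, combined with $L = \sqrt{\pi M/2}$, is large enough to drive both terms below $\eps/8$. For the first term, the exponent is of order $2^{-P}\cdot M/(2\pi)$, while the prefactor's logarithm is $\mathcal{O}(n(P + \log L))$, so the stated lower bound $M \gtrsim 2^P\, L\, P\,(n + \log n + \log(1/\eps))$ suffices to make it $\leq \eps/8$; for the second term, $L^2 = \pi M/2$ grows linearly in $M$, and the same choice of $M$ easily kills it.

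The main obstacle is not any deep analytic estimate — \cref{prop:riemann-sum-convergence} already supplies that — but rather the bookkeeping needed to (i) carefully match the physicist's Hermite normalization in $\psi_k$ against the orthonormal-Hermite/Gaussian-density convention underlying the definition of $\wh{f}(v)$, and (ii) verify that the specific exponents and polynomial prefactors from \cref{prop:riemann-sum-convergence} are simultaneously controlled by the chosen $M$. Once the normalization constants are lined up and $M$ is confirmed to dominate both $2^P L \cdot n \log(L/\eps)$ and $\log(8/\eps)$, the $\eps/4$ bound follows immediately.
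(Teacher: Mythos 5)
Your proposal matches the paper's proof exactly: both express $\bradshostate{v}\,U_f\,\dshostate{0^n}$ as a multi-dimensional Riemann sum over $\calS_h[-L,L]^n$, apply \cref{prop:riemann-sum-convergence} to bound the discrepancy with $\wh{f}(v)$, and then verify the choice of $M$ drives both error terms below $\eps/8$. Your caution about lining up the physicist's normalization of $\psi_n$ against the orthonormal-$H_v$/$\gaussianpdf$ convention used to define $\wh{f}$ is well placed — the paper actually writes the prefactor $M^{-n}$ where the Riemann-sum volume element $h^n = (2\pi/M)^{n/2}$ is the honest coefficient, and it silently discards the $(-1)^n$ sign in $\psi_n$ (harmless for the downstream $|\wh{f}(v)|^2$ bound, but worth recording).
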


\begin{proof}
    We can write
    \[
    \bradshostate{v} U_f \dshostate{0^n} = M^{-n}\sum_{x \in S_1^n} f(x) H_v(x) \gaussianpdf(x).
    \]
    By \cref{prop:riemann-sum-convergence}, we have the bound
    \[
    \abs{\abs{\bradshostate{v} U_f \dshostate{0^n} - \wh{f}(v)}} \leq 2n \cdot (2L\cdot 2^P)^ne^{-2^{-P} \cdot (M/2L)^2} + e^{-n L^2/4}.
    \]
    For our choice of $M$ we have that the right hand side of this is bounded by $\eps/8 + \eps/8 = \eps/4$.
\end{proof}

Since the (orthonormal) approximations $\apxdshostate{}$ of the discrete QHO eigenstates are very close in trace distance to the $\dshostate{}$ above, we can show that they satisfy a similar property.

\begin{corollary} \label{cor:apxdsho-riemann-convergence}
    For each $v \in [D]^n$, let $\apxdshostate{v} = \otimes_{i=1}^n \apxdshostate{v_i}$, where each $\apxdshostate{v_i}$ is formed with resolution parameter
    $h = M^{-1}$. We have, for every $v \in [D]^n$
    \[
    \abs{\braapxdshostate{v} U_f \apxdshostate{0^n} - \wh{f}(v)} \leq \eps/2.
    \]
\end{corollary}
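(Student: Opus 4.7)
The plan is to reduce this corollary to \cref{lem:dsho-riemann-convergence} by a direct hybrid argument: since we already have $|\bradshostate{v} U_f \dshostate{0^n} - \wh{f}(v)| \leq \eps/4$, it suffices to show that $|\braapxdshostate{v} U_f \apxdshostate{0^n} - \bradshostate{v} U_f \dshostate{0^n}| \leq \eps/4$, and then the triangle inequality delivers the claim.

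First, I would split this difference by a double application of the triangle inequality, using that $U_f$ is unitary and that each of the bras and kets involved has norm at most $1$:
\begin{align*}
\left|\braapxdshostate{v} U_f \apxdshostate{0^n} - \bradshostate{v} U_f \dshostate{0^n}\right|
&\leq \left|\braapxdshostate{v} U_f \bigl(\apxdshostate{0^n} - \dshostate{0^n}\bigr)\right| + \left|\bigl(\braapxdshostate{v} - \bradshostate{v}\bigr) U_f \dshostate{0^n}\right| \\
&\leq \left\|\apxdshostate{0^n} - \dshostate{0^n}\right\| + \left\|\apxdshostate{v} - \dshostate{v}\right\|.
\end{align*}

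Second, I would bound each of these tensor-product distances via a per-mode closeness estimate followed by a standard telescope. \Cref{fact:somma-discretization-facts}(1), combined with the orthonormality of the family $\{\apxdshostate{k}\}$, implies that for any single-mode index $k \leq \degree$ (which satisfies $k \leq cM$ under the choice of $M$ in \cref{alg:apx-hermite-sampling}) one has $\left|\langle\apxdshostate{k}\! \dshostate{k}\rangle\right| \geq 1 - \exp(-\sommaconstant M)$, and hence $\|\apxdshostate{k} - \dshostate{k}\| \leq \sqrt{2}\exp(-\sommaconstant M/2)$. Telescoping across the $n$ tensor factors and using that all states have unit norm then yields
\begin{align*}
\left\|\apxdshostate{v} - \dshostate{v}\right\| \leq \sum_{i=1}^n \left\|\apxdshostate{v_i} - \dshostate{v_i}\right\| \leq \sqrt{2}\, n \exp(-\sommaconstant M/2),
\end{align*}
and identically for $\left\|\apxdshostate{0^n} - \dshostate{0^n}\right\|$.

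Finally, the choice of $M$ prescribed in \cref{alg:apx-hermite-sampling} grows at least linearly in $n + \log(1/\eps)$, which is comfortably sufficient to force $2\sqrt{2}\, n \exp(-\sommaconstant M/2) \leq \eps/4$. Combining with \cref{lem:dsho-riemann-convergence} gives the desired $\eps/2$ bound. The only mild subtlety in the argument is keeping the per-mode estimate uniform over all indices $v_i$ appearing in the Hermite sampling algorithm (i.e., $v_i \leq \degree$) so that the discretization guarantees of \cref{fact:somma-discretization-facts} apply simultaneously; the choice $M \geq 40\sommaconstant \degree \log(2\degree)$ in the algorithm was made precisely to ensure this.
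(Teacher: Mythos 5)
Your proposal is correct and follows essentially the same route as the paper's proof: both proceed by the triangle inequality, reducing to \cref{lem:dsho-riemann-convergence} plus the closeness of $\apxdshostate{}$ to $\dshostate{}$, with a telescope over the $n$ tensor factors giving an $n\exp(-\Omega(M))$ error that is absorbed by the choice of $M$. Your version is marginally cleaner in that you work directly with the Euclidean norm and derive the per-mode bound $\|\apxdshostate{k}-\dshostate{k}\| \leq \sqrt{2}\exp(-\sommaconstant M/2)$ from \cref{fact:somma-discretization-facts}, whereas the paper simply "recalls" a trace-distance bound and splits the error budget as $\eps/12 + \eps/4 + \eps/12$ rather than your $\eps/4 + \eps/4$; these are cosmetic differences only.
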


\begin{proof}
    Now we argue that replacing $\dshostate{0^n}$ with $\apxdshostate{0^n}$ doesn't result in too much error. Recall that we assume the existence of a constant $\sommaconstant$ such that
    \[
    \tracedistance{\dshostate{0^n}, \apxdshostate{0^n}}\leq n\tracedistance{\dshostate{0}, \apxdshostate{0}} \leq n\exp(-\sommaconstant M).
    \]
    Thus we have
    \begin{align*}
        \abs{\braapxdshostate{v} U_f \apxdshostate{0^n} - \wh{f}(v)} &\leq \abs{\braapxdshostate{v} U_f \apxdshostate{0^n} - \braapxdshostate{v} U_f \dshostate{0^n}} + \abs{\braapxdshostate{v} U_f \dshostate{0^n} - \wh{f}(v)} \\
        &= \abs{\braapxdshostate{v} U_f \left(\apxdshostate{0^n} - \dshostate{0^n}\right)} + \abs{\braapxdshostate{v} U_f \dshostate{0^n} - \wh{f}(v)} \\
        &\leq n\exp(- \sommaconstant M) + \abs{\braapxdshostate{v} U_f \dshostate{0^n} - \wh{f}(v)} \\
        &\leq \eps/12 + \abs{\braapxdshostate{v} U_f \dshostate{0^n} - \bradshostate{v} U_f \dshostate{0^n}} + \abs{\bradshostate{v} U_f \dshostate{0^n} - \wh{f}(v)} \\
        &\leq \abs{\braapxdshostate{v} U_f \dshostate{0^n} - \bradshostate{v} U_f \dshostate{0^n}} + \eps/3  \\
        &= \abs{\left(\braapxdshostate{v} - \bradshostate{v}\right)U_f \dshostate{0^n}} + \eps/3 \\
        &\leq n\exp(-\sommaconstant M) + \eps/3 \leq \eps/2. \qedhere
    \end{align*}
\end{proof}

Now, we are ready to prove the correctness of \cref{alg:apx-hermite-sampling}.

\begin{restatable}[label={thm:apx-hermite}]{theorem}{apxHermite}
    For each $v \in [D]^n,$ \Cref{alg:apx-hermite-sampling} outputs $v$ with a probability $p_v$ such that $\abs{p_v - \wh{f}(v)^2} \leq \eps$. Furthermore, it runs in time $\cO(n \polylog(n,D,1/\eps))$.
\end{restatable}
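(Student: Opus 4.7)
The plan is to convert the amplitude-level approximation from \cref{cor:apxdsho-riemann-convergence} into a probability-level approximation, and then account separately for the gate complexity of the $n$ parallel Hermite transforms. First I would identify the measurement probability: by construction $V^{\otimes n}\ket{0^m}^{\otimes n} = \apxdshostate{0^n}$, so the algorithm implements $(V^\dagger)^{\otimes n} U_f V^{\otimes n}$ applied to $\ket{0^m}^{\otimes n}$ and then measures in the computational basis. Since $\bra{v}(V^\dagger)^{\otimes n} = \braapxdshostate{v}$, the outcome $v$ occurs with probability
\[
p_v = \left| \braapxdshostate{v} U_f \apxdshostate{0^n} \right|^2.
\]

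Next, I would invoke \cref{cor:apxdsho-riemann-convergence} with its precision parameter set to $\eps/4$ in place of $\eps$, which writes $\braapxdshostate{v} U_f \apxdshostate{0^n} = \wh{f}(v) + \delta_v$ with $|\delta_v| \le \eps/8$. Since $f:\R^n\to[-1,1]$, Parseval's identity for the orthonormal Hermite basis under the Gaussian weight gives $\sum_v \wh{f}(v)^2 \le \|f\|_{L^2(\gaussianpdf)}^2 \le 1$, hence $|\wh{f}(v)| \le 1$. Expanding the square,
\[
|p_v - \wh{f}(v)^2| \;=\; \left| 2\,\wh{f}(v)\,\delta_v + \delta_v^2 \right| \;\le\; 2|\delta_v| + |\delta_v|^2 \;\le\; \eps/4 + \eps^2/64 \;\le\; \eps,
\]
yielding the claimed sampling guarantee.

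For the runtime, the algorithm invokes the QHT $V$ a total of $2n$ times (parallel across the $n$ registers, on each side of the oracle call to $U_f$). By \cref{thm:quantumhermitetransform}, each implementation of $V$ targeting dimension $D$ with internal error $\eps_V$ uses $\cO((\log D + \log(1/\eps_V))^3 \log(1/\eps_V))$ gates. A standard hybrid argument across the $2n$ approximate unitaries shows that the accumulated state error is $\cO(n\eps_V)$, so choosing $\eps_V = \Theta(\eps/n)$ keeps the total circuit error strictly below the $\eps/4$ budget used above while adding only logarithmic overhead. Summing over the $2n$ QHT applications plus one oracle query yields total time $\cO(n\,\polylog(n,D,1/\eps))$. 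The only mildly delicate part of the argument is calibrating $\eps_V$ against the precision used inside \cref{cor:apxdsho-riemann-convergence}; the rest is routine propagation of the amplitude bound into a probability bound.
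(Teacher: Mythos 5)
Your proposal is correct and follows essentially the same route as the paper: express the ideal outcome probability as $p_v = \abs{\braapxdshostate{v} U_f \apxdshostate{0^n}}^2$, invoke \cref{cor:apxdsho-riemann-convergence} to control the amplitude error, convert to a probability bound using the fact that both $\wh{f}(v)$ and the inner product are bounded by $1$, and then set the internal precision of the $2n$ QHT calls to $\Theta(\eps/n)$ so the circuit-level error stays inside the budget, giving $\cO(n \polylog(n,D,1/\eps))$ total runtime. The only cosmetic difference is that you linearize $a^2-b^2 = 2b(a-b)+(a-b)^2$ whereas the paper factors $(a+b)(a-b)$ and bounds the first factor by $2$; both are the same estimate.
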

\begin{proof}
    The correctness of the algorithm follows from \cref{cor:apxdsho-riemann-convergence}. Indeed, for every $v \in [D]^n$ we have
    \begin{align*}
    \abs{p_v - \wh{f}(v)^2} &= \abs{\abs{\braapxdshostate{S}U_f\apxdshostate{0^n}}^2 - \wh{f}(v)^2} \\
    &= \abs{\abs{\braapxdshostate{S}U_f\apxdshostate{0^n}} + \wh{f}(v)} \cdot \abs{\abs{\braapxdshostate{S}U_f\apxdshostate{0^n}}^2 - \wh{f}(v)^2} \\
    & \leq 2 \abs{\abs{\braapxdshostate{S}U_f\apxdshostate{0^n}}^2 - \wh{f}(v)^2} \leq \eps. & \text{\cref{cor:apxdsho-riemann-convergence}}
    \end{align*}
    
    We now analyze the time complexity. We will prepare the QHO ground states $\apxdshostate{0^n}$ with precision $M$ and approximation error $\eps/3n$.
    Using \cref{thm:quantumhermitetransform} this can be done in time $\cO((\log n + \log D + \log(1/\eps))^3 \log n \log(1/\eps))$. Then, after applying the oracle, we apply the unitary that maps $\apxdshostate{k}$ to $\ket{k}$ for $k$ in the low-energy subspace for each qubit, with approximation error $\eps/6n$. Since these states lie in a Hilbert space of dimension $M$ this also takes $\cO((\log n + \log D + \log(1/\eps))^3 \log n \log(1/\eps))$ time for each $i \in [n]$. Overall, the runtime is $\cO(n \polylog(n, D, 1/\eps))$, and the additional error from implementing the states $\apxdshostate{0^n}$ and inverse Hermite transform unitary is $\eps/2$, for an overall error of $\eps$.
\end{proof}

Furthermore, with a slight overhead in runtime, we can guarantee that our Hermite sampling procedure approximates the true distribution in total variation distance.
\begin{corollary}
\label{cor:apx-hermite-sampling}
    Let $f$ be $\upsilon$-concentrated on Hermite coefficients with univariate degree at most $D$.
    With a runtime overhead of $\cO(n \log D)$,
    \cref{alg:apx-hermite-sampling} returns a distribution $\calD = \{p_v\}_{v \in [D]^n}$ such that the total variation distribution between $\calD$ and the true Hermite distribution is at most $\eps + \upsilon$.
\end{corollary}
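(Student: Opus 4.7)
The plan is to lift the pointwise bound of \cref{thm:apx-hermite} to a total-variation bound by running \cref{alg:apx-hermite-sampling} at higher precision so that the $D^n$ pointwise errors sum to at most $\eps$, and then absorbing the mass of the Hermite distribution outside $[D]^n$ into the $\upsilon$ term supplied by the concentration hypothesis.

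Concretely, I would invoke \cref{alg:apx-hermite-sampling} with precision parameter $\eps' := \eps/D^n$ in place of $\eps$. By \cref{thm:apx-hermite} the output distribution $\calD = \{p_v\}_{v \in [D]^n}$ then satisfies $|p_v - \hat f(v)^2| \leq \eps/D^n$ for every $v \in [D]^n$. Viewing $\calD$ as a sub-probability measure on $\mathbb N^n$ that vanishes outside $[D]^n$, the total variation distance to the full Hermite distribution $P_v := \hat f(v)^2$ decomposes as
\begin{equation*}
    \|\calD - P\|_{TV} \;=\; \tfrac{1}{2}\sum_{v \in [D]^n}|p_v - P_v| \;+\; \tfrac{1}{2}\sum_{v \notin [D]^n} P_v \;\leq\; \tfrac{D^n}{2}\cdot \tfrac{\eps}{D^n} + \tfrac{\upsilon}{2} \;\leq\; \eps + \upsilon,
\end{equation*}
where the first inequality combines the pointwise guarantee with the $\upsilon$-concentration hypothesis $\sum_{v \notin [D]^n}\hat f(v)^2 \leq \upsilon$.

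For the runtime, the complexity of \cref{thm:apx-hermite} depends on $\eps$ only through $\log(1/\eps)$ sitting inside polylogarithmic factors. Replacing $\eps$ by $\eps/D^n$ enlarges this logarithm by an additive $n\log D$, which propagates through the subroutines of \cref{thm:quantumhermitetransform} as the claimed $\cO(n\log D)$ overhead in the precision bits passed to the Hermite-transform and state-preparation primitives. The main subtlety is to verify that this tighter precision does not violate the hypothesis $N > \log(1/\eps)$ of \cref{thm:quantumhermitetransform}: substituting $\eps \to \eps/D^n$ requires $D \gtrsim n\log D + \log(1/\eps)$, which holds in the intended parameter regime where $D$ grows polynomially in $n$ and $\eps$ is not doubly-exponentially small.
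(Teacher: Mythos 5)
Your proposal is essentially identical to the paper's proof: both rescale the pointwise error of \cref{thm:apx-hermite} to $\eps/D^n$ (the paper uses $\eps(D+1)^{-n}$), apply a union bound over the $\cO(D^n)$ points of $[D]^n$, absorb the mass outside $[D]^n$ into the $\upsilon$-concentration term, and observe that because the complexity depends only polylogarithmically on $1/\eps$, the tightened precision costs an extra factor of $\cO(n\log D)$. Your version is more detailed (writing out the TV decomposition and flagging the $N > \log(1/\eps)$ side-condition), but the argument is the same.
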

\begin{proof}
    This follows easily from \cref{thm:apx-hermite} by choosing an error $\eps^\prime = \eps \cdot (D+1)^{-n}$ and applying the union bound over all elements in $\{0,1,...,D\}^n$. The overhead in incurred runtime is $n \log D$.
\end{proof}

\subsection{General functions}
\label{sec:gen func}

In this section, we describe how to generalize our algorithm to functions whose output is now in $[-1,1]$. We remark that the algorithm is effectively the same as in the Boolean case, with the addition of a quantum multiplication step that multiplies the amplitude of each $\ket{x}$ by $f(x)$. This involves controlled rotations and postselection. These steps require a more careful analysis and increase the query and time complexity by a factor depending on the \say{well-behavedness} of $f$. This property is what we call distortion, as defined below.

\begin{definition}[Distortion]
    \[\kappa(f) = \frac{\norm{f(x) \sqrt{\gaussianpdf(x)}}_\infty}{ \norm{f(x)\sqrt{\gaussianpdf(x)}}_2}\]

We denote this by $\kappa$ when $f$ is clear from context. Note that $\kappa = 1$ for all functions $f:\R^n \rightarrow \pmone$.
\end{definition}\textbf{}

\begin{algorithm}
    \caption{Approximate Hermite Sampling} \label{alg:apx-hermite-non-boolean}
    \begin{algorithmic}[1]
        \State Let $P_1$ and $P_2$ be the number of input and output bits of precision, respectively, for $f$. Furthermore, let $\kappa$ be the distortion parameter for $f$.
        \State Choose $M = \max\{2^{P_1} \cdot (n + P_1 + \log(24\kappa /\eps)) \log(1/\eps), 40\sommaconstant \degree \log \degree\}$, let $m = \log_2 M$.
        \State Let $V$ that implements the Hermite transform up to degree $\degree$ and ambient dimension $M$.
        \Repeat
        \State Prepare the state $\apxdshostate{0}^{\otimes n}\ket{0^{P_2}}\ket{0}$ %
        \State Apply the oracle $U_f$ controlled on register $1$ to register $2$.
        \State Define the controlled rotation $R \ket{f(x)} \ket{0} \to \sqrt{1 - f(x)^2}\ket{f(x)} \ket{0} + f(x)\ket{f(x)}\ket{1}$.
        \State Apply $R$ controlled by register $2$ onto register $3$.
        \State Measure register $3$ in the computational basis.
        \Until{$\ket{1}$ is measured} \label{step:postselect}
        \State Uncompute by applying $U_f$ controlled on register $1$ to register $2$. \label{step:uncompute}
        \State Measure register $2$ in the computational basis and let $\ket{\wt\psi}$ be the state of register $1$. \label{step:reduce}
        \State Prepare a unitary $V$ that takes $\apxdshostate{k}$ to $\ket{k}$ for all $k \in [D]$. \\
        \Return The result of applying $V^{\otimes n}$ to $\ket{\psi}$, and measuring in the computational basis.
    \end{algorithmic}
\end{algorithm}

We first show that the probability distribution generated by \cref{alg:apx-hermite-non-boolean} is close to the true Hermite distribution pointwise. Recall that, when the output of $f$ is non-Boolean, the Hermite distribution $q_v$ is defined to be $\wh{f}(v)^2/\norm{f}^2$ for all $v \in \N^n$.

\begin{lemma}\label{lem:non-boolean-correctness}
    Define $q_v$ to be the distribution induced on $\N^n$ by the Hermite coefficients of $f$ and let $p_v$ be the probability of sampling $v \in [D]^n$ in \cref{alg:apx-hermite-non-boolean}.
    \[
    \abs{p_v - q_v} \leq \eps.
    \]
\end{lemma}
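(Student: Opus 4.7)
The plan is to propagate the quantum state through \cref{alg:apx-hermite-non-boolean} symbolically, then compare $p_v$ against $\hat{f}(v)^2/\|f\|^2$ via Riemann-sum errors analogous to those controlled in the Boolean case (\cref{prop:riemann-sum-convergence}, \cref{lem:dsho-riemann-convergence}, \cref{cor:apxdsho-riemann-convergence}).

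First I would track the state at step~8, momentarily ignoring the error incurred by approximating the discrete ground state $\dshostate{0}$ by $\apxdshostate{0}$. Since the zeroth Hermite polynomial is a constant, we have $\dshostate{0} = M^{-1/2}\sum_{x\in\calS_h[-L,L]}\gaussianpdf(x)^{1/2}\ket{x}$ (up to normalization), and after applying $U_f$ followed by the controlled rotation $R$ the joint state reads
\[
M^{-n/2}\!\!\sum_{x\in\calS_h[-L,L]^n}\!\!\gaussianpdf(x)^{1/2}\,\ket{x}\ket{f(x)}\!\left(\sqrt{1-f(x)^2}\,\ket{0}+f(x)\,\ket{1}\right).
\]
The probability of observing $\ket{1}$ in register~3 is then
\[
P_{\mathrm{succ}} = M^{-n}\sum_{x\in\calS_h[-L,L]^n}\gaussianpdf(x)\,f(x)^2,
\]
which, by a multivariate trapezoidal-rule argument identical in structure to \cref{prop:riemann-sum-convergence} applied to $f(x)^2$ (which is piecewise constant on cubes of side $2^{-P_1}$), lies within $\eps \|f\|^2/6$ of $\int f(x)^2 \gaussianpdf(x)\,dx = \|f\|^2$ for our choice of $M$. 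In particular $P_{\mathrm{succ}} = \|f\|^2(1 + O(\eps))$.

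After postselection (step~10) and uncomputation (step~11), the first register collapses to
\[
\ket{\tilde\psi} = \frac{M^{-n/2}}{\sqrt{P_{\mathrm{succ}}}}\sum_{x\in\calS_h[-L,L]^n}\gaussianpdf(x)^{1/2}\,f(x)\,\ket{x}.
\]
The inverse Hermite transform $V^{\otimes n}$ maps $\apxdshostate{v}\mapsto\ket{v}$, so up to the transform error (controlled by \cref{thm:quantumhermitetransform} with parameter $\eps/(12\kappa)$) the amplitude on $\ket{v}$ equals $\braapxdshostate{v}\!\tilde\psi\rangle$. Substituting $\apxdshostate{v_i}\approx M^{-1/2}\sum_x H_{v_i}(x)\gaussianpdf(x)^{1/2}\ket{x}$ (via \cref{fact:somma-discretization-facts}), this inner product reduces to
\[
\frac{1}{\sqrt{P_{\mathrm{succ}}}}\cdot M^{-n}\!\sum_x \gaussianpdf(x)\,H_v(x)\,f(x),
\]
which, by one more application of \cref{prop:riemann-sum-convergence} to the product $f\cdot H_v$ (which remains piecewise constant on the relevant subcubes), lies within $\eps\|f\|/6$ of $\hat f(v)$. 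Dividing by $\sqrt{P_{\mathrm{succ}}}\approx\|f\|$ and squaring yields $p_v$ within $\eps$ of $q_v = \hat f(v)^2/\|f\|^2$.

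The main obstacle is controlling the $\kappa$-dependent error amplification from the factor $1/\sqrt{P_{\mathrm{succ}}}\sim 1/\|f\|$: an absolutely small Riemann-sum or state-preparation error becomes a large relative error once renormalized, since $\|f\|$ may be as small as $\|f\sqrt\gaussianpdf\|_\infty/\kappa$. The choice $M=\Omega(2^{P_1}\kappa\log(1/\eps)/\eps)$ in the algorithm is tuned precisely so that the trapezoidal-rule bound from \cref{fact:exponential-riemann-convergence} (and the exponentially small discretization error $n\exp(-\sommaconstant M)$ from replacing $\dshostate{0}$ by $\apxdshostate{0}$ across all tensor factors, see \cref{cor:apxdsho-riemann-convergence}) shrinks faster than $\kappa$ grows, absorbing the amplification. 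A secondary bookkeeping task is propagating these errors additively through the postselection normalization, which is straightforward since $P_{\mathrm{succ}}$ is bounded below by $(1-\eps)\|f\|^2$.
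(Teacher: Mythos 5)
Your proposal follows essentially the same route as the paper's own proof: propagate the state through the algorithm symbolically, express $P_{\mathrm{succ}}$ as a Riemann sum converging to $\|f\|^2$, express the post-measurement amplitude $\braapxdshostate{v}\!\tilde\psi\rangle$ as a Riemann sum converging to $\hat f(v)/\|f\|$, and absorb the $\kappa$-dependent amplification that appears after dividing by $\sqrt{P_{\mathrm{succ}}}\sim\|f\|$ by taking $M$ large enough. The paper organizes this as a chain of triangle-inequality hybrids — first bounding $\abs{\sqrt{P}-\|f\|}$, then $\abs{\bradshostate{v}\overline\psi\rangle - \hat f(v)/\|f\|}$, then swapping $\dshostate{v}\to\apxdshostate{v}$ and $\dshostate{0^n}\to\apxdshostate{0^n}$ with the postselection-amplified bound — but the ingredients and their roles are identical to yours.

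Two small imprecisions worth correcting. First, you parenthetically describe $f\cdot H_v$ as "piecewise constant on the relevant subcubes"; only $f$ is piecewise constant, while $H_v\gaussianpdf$ is analytic. This is exactly the split exploited inside \cref{prop:riemann-sum-convergence} (pull $f$ out of each subcube, then apply the exponentially convergent trapezoid rule to $H_v\gaussianpdf$), so the citation is the right one, but the parenthetical is wrong as stated. Second, you quote $M=\Omega(2^{P_1}\kappa\log(1/\eps)/\eps)$; the algorithm in the paper only pays $\log\kappa$ inside $M$, i.e.\ $M=\Theta\bigl(2^{P_1}(n+P_1+\log(24\kappa/\eps))\log(1/\eps)\bigr)$, because the discretization error it must dominate is $\exp(-\sommaconstant M)$, so a logarithmic dependence on $\kappa$ suffices to beat the $\poly(\kappa)$ amplification from postselection. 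Your conclusion about absorbing the amplification is still correct, but the quantitative claim about $M$ is off. Neither issue affects the validity of the argument.
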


\begin{proof}
    Let $\ket{\overline{\psi}}$ represents the state of the algorithm after the postselection step, assuming the initial state was instead $\dshostate{0^n}\otimes \ket{0^{P_2}} \ket{0}$. We have that
    \[
    \ket{\overline{\psi}} = \frac{\sum_{x \in S_1^n} M^{-n/2} f(x) \gaussianpdf^{1/2}(x) \ket{x}}{\sqrt{\sum_{x \in S_1^n} M^{-n} f(x)^2 \gaussianpdf(x) }}.
    \]
    The denominator (call it $\sqrt{P}$) is the normalization factor, and we show that it is very close to $\norm{f}$. Applying \cref{prop:riemann-sum-convergence}, we have
    \begin{align*}
    \abs{\sum_{x \in S_1^n} M^{-n} f(x)^2 \gaussianpdf(x) - \norm{f}^2} &= \abs{\sum_{x \in S_1^n} M^{-n} f(x)^2 \gaussianpdf(x) - \int_{\R^n} f(x)^2 dx} \leq \frac{\eps \cdot \norm{f}}{12}
    \end{align*} %
    Thus, we can bound
    \[
    \abs{\sqrt{\sum_{x \in S_1^n} M^{-n} f(x)^2 \gaussianpdf(x)} - \norm{f}} = \frac{\abs{\sum_{x \in S_1^n} M^{-n} f(x)^2 \gaussianpdf(x) - \int_{\R^n} f(x)^2 dx}}{\sqrt{\sum_{x \in S_1^n} M^{-n} f(x)^2 \gaussianpdf(x)} + \norm{f}} \leq \frac{\eps \cdot \norm{f}}{12 \cdot \norm{f}} \leq \frac{\eps}{12}.
    \]
    We can write:
    \[\bradshostate{v}\overline{\psi}\rangle = \frac{1}{\sqrt{P}} \sum_{x \in S_1^n} M^{-n} f(x) H_v(x) \gaussianpdf(x).
    \]
    We show using \cref{prop:riemann-sum-convergence} that this is a Riemann sum which well-approximates $\frac{\wh{f}(v)}{\norm{f}}$.
    \begin{align*}
        \abs{\bradshostate{v}\overline{\psi}\rangle - \frac{\wh{f}(v)}{\norm{f}}} &=\abs{\frac{1}{\sqrt{P}} \sum_{x \in S_1^n} M^{-n} f(x) H_v(x) \gaussianpdf(x) - \frac{\wh{f}(v)}{\norm{f}}} \\
        &\leq \frac{1}{\sqrt{P}}\abs{\sum_{x \in S_1^n} M^{-n} f(x) H_v(x) - \wh{f}(v)} - \abs{\wh{f}(v)}\abs{\frac{1}{\sqrt{P}} - \frac{1}{\norm{f}}} \\
        &\leq \frac{\eps}{12} + \abs{\frac{1}{\sqrt{P}} - \frac{1}{\norm{f}}} \\
        &\leq \frac{\eps}{12} + \frac{\eps}{12} \leq \frac{\eps}{6}.
    \end{align*}
    Now, let $\ket{\wt\psi}$ be the actual state of the algorithm after \cref{step:reduce}. We compute $\norm{\ket{\overline\psi} - \ket{\wt\psi}}$. Recall that $\norm{\dshostate{0^n} - \apxdshostate{0^n}} \leq n\sqrt{2}\exp(- \sommaconstant M)$. It is a standard fact that postselection which occurs successfully with probability $p$ blows up this distance by a factor of $p^{-1}$. The postselection in \cref{step:postselect} of the algorithm succeeds with probability at least $\frac{1}{2\norm{f}}$, so we have that  %
    \[
    \norm{\ket{\overline\psi} - \ket{\wt\psi}} \leq 2\sqrt{2} n \norm{f} \cdot \exp(- \sommaconstant M) \leq \eps/12
    \]
    
    We replace $\dshostate{v}$ with $\apxdshostate{v}$, which gives us
    \begin{align*}
    \abs{\braapxdshostate{v}\overline{\psi}\rangle - \frac{\wh{f}(v)}{\norm{f}}} &\leq \abs{\braapxdshostate{v}\overline{\psi}\rangle - \bradshostate{v}\overline{\psi}\rangle} + \abs{\bradshostate{v}\overline{\psi}\rangle - \frac{\wh{f}(v)}{\norm{f}}} \leq \eps/4
\end{align*}
Finally, we replace $\dshostate{0^n}$ with $\apxdshostate{0^n}$, which swaps and bound:
\begin{align*}
\abs{\braapxdshostate{v}\wt{\psi}\rangle - \frac{\wh{f}(v)}{\norm{f}}} \leq \abs{\braapxdshostate{v}\wt{\psi}\rangle - \braapxdshostate{v}\overline{\psi}\rangle} + \abs{\braapxdshostate{v}\overline{\psi}\rangle - \frac{\wh{f}(v)}{\norm{f}}} \leq \frac{\eps}{3}
\end{align*}
Recalling that the probability of sampling $v \in [D]^n$ in \cref{alg:apx-hermite-non-boolean} is exactly $\abs{\braapxdshostate{v}\wt{\psi}\rangle}^2$ and that $q_v = \frac{\wh{f}(v)^2}{\norm{f}^2}$, we conclude by bounding
\begin{align*}
    \abs{p_v - q_v} &= \abs{\abs{\braapxdshostate{v}\wt{\psi}\rangle}^2 - \frac{\wh{f}(v)^2}{\norm{f}^2}} \\
    &= \abs{\abs{\braapxdshostate{v}\wt{\psi}\rangle} + \frac{\wh{f}(v)}{\norm{f}}} \cdot \abs{\abs{\braapxdshostate{v}\wt{\psi}\rangle} - \frac{\wh{f}(v)}{\norm{f}}} \\
    &\leq 3\abs{\abs{\braapxdshostate{v}\wt{\psi}\rangle} - \frac{\wh{f}(v)}{\norm{f}}} \leq \eps. \qedhere
\end{align*}
    
\end{proof}

The proof of the above lemma also gives us the following stronger result:
\begin{lemma}
    Let $\ket{\wt{\psi}}$ be the state of \cref{alg:apx-hermite-non-boolean} after \cref{step:uncompute}. Then
    \[
    \abs{\braapxdshostate{v}\!\psi\rangle - \frac{\wh{f}(v)}{\norm{f}}} \leq \frac{\eps}{6}.
    \]
\end{lemma}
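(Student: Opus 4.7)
The statement is essentially the intermediate bound that is derived \emph{inside} the proof of \cref{lem:non-boolean-correctness} — in that proof the quantity $\bigl|\braapxdshostate{v}\wt{\psi}\rangle - \wh{f}(v)/\norm{f}\bigr|$ is bounded by $\eps/3$ along the way to the final $\eps$ bound on $|p_v - q_v|$. The plan is to repeat that chain of inequalities verbatim but with the constants tightened (or equivalently, choosing the parameter $M$ slightly larger) so that each contribution is halved, yielding $\eps/6$ instead of $\eps/3$.

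Concretely, I would proceed in four steps. First, introduce the ``idealized'' state $\ket{\overline{\psi}}$ obtained by running \cref{alg:apx-hermite-non-boolean} with $\dshostate{0^n}$ in place of $\apxdshostate{0^n}$; explicitly,
\[
\ket{\overline{\psi}} = \frac{\sum_{x\in S_1^n} M^{-n/2} f(x)\gaussianpdf^{1/2}(x)\ket{x}}{\sqrt{P}}, \qquad P = \sum_{x\in S_1^n} M^{-n} f(x)^2 \gaussianpdf(x).
\]
Second, approximate both the numerator $\bradshostate{v}(\sqrt{P}\ket{\overline{\psi}})$ and the normalization $P$ by the corresponding integrals $\wh{f}(v)$ and $\norm{f}^2$ via \cref{prop:riemann-sum-convergence}; combining these with a triangle inequality of the form
\[
\Bigl|\tfrac{A}{\sqrt{P}} - \tfrac{\wh{f}(v)}{\norm{f}}\Bigr| \;\le\; \tfrac{1}{\sqrt{P}}\bigl|A-\wh{f}(v)\bigr| + |\wh{f}(v)|\,\Bigl|\tfrac{1}{\sqrt{P}}-\tfrac{1}{\norm{f}}\Bigr|
\]
gives the bound $|\bradshostate{v}\overline{\psi}\rangle - \wh{f}(v)/\norm{f}| \le \eps/24$ provided we choose $M$ large enough to make each Riemann error $\le \eps \norm{f}/48$ (this is where the distortion $\kappa$ enters via $\norm{f}^{-1} = O(\kappa)$). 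Third, swap $\bradshostate{v}$ for $\braapxdshostate{v}$ using $\norm{\apxdshostate{v}-\dshostate{v}} \le n\exp(-\sommaconstant M)$ from \cref{fact:somma-discretization-facts}. Fourth, replace $\ket{\overline{\psi}}$ by the true algorithm output $\ket{\wt{\psi}}$; the only difference is that the algorithm starts from $\apxdshostate{0^n}$ and then postselects, so
\[
\bigl\| \ket{\overline{\psi}} - \ket{\wt{\psi}} \bigr\| \;\le\; \tfrac{1}{\sqrt{p_{\text{succ}}}} \, \bigl\|\dshostate{0^n} - \apxdshostate{0^n}\bigr\| \;\le\; 2\sqrt{2}\,n\,\norm{f}\,\exp(-\sommaconstant M),
\]
since the postselection in step~\ref{step:postselect} succeeds with probability $\Omega(1/\norm{f})$. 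The choice $M = \Theta(2^{P_1}(n + P_1 + \log(\kappa/\eps))\log(1/\eps))$ from the algorithm makes this term at most $\eps/24$, and summing the four contributions gives the claimed $\eps/6$.

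The main obstacle — and the reason this needs to be stated separately rather than just quoted — is carefully tracking how the distortion $\kappa = \norm{f}_\infty \sqrt{\gaussianpdf}/\norm{f\sqrt\gaussianpdf}_2$ enters through two places at once: it rescales the Riemann-approximation error (because we are approximating $\wh{f}(v)/\norm{f}$ rather than $\wh{f}(v)$ itself, so additive $\eps$ on the numerator becomes additive $\eps/\norm{f}$) and it also amplifies the initial-state discretization error through the postselection factor $1/\sqrt{p_{\text{succ}}}$. Both amplifications must be absorbed into the choice of $M$, and the constants have to be chosen tightly enough to hit $\eps/6$ and not just $\eps/3$. Once those bookkeeping issues are handled, no new ideas are needed beyond those already appearing in \cref{lem:non-boolean-correctness}.
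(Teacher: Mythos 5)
Your proposal takes essentially the same route as the paper: the paper's ``proof'' of this lemma is nothing more than a pointer to the chain of inequalities established inside the proof of \cref{lem:non-boolean-correctness}, and you reproduce that chain with the constants adjusted. Where you add value is in noticing that the chain, as written in the paper, only shows $\abs{\braapxdshostate{v}\wt{\psi}\rangle - \wh{f}(v)/\norm{f}} \leq \eps/3$ — the $\eps/6$ that appears in that proof is for the intermediate, idealized quantity $\abs{\bradshostate{v}\overline{\psi}\rangle - \wh{f}(v)/\norm{f}}$, before the bra and ket are swapped to the $\apxdshostate{\cdot}$ versions. So the lemma as stated does not follow verbatim from the preceding proof and needs exactly the kind of constant-tightening (larger $M$ by an absolute constant factor) that you propose; that fix is sound, changes nothing asymptotically in the algorithm's $M$, and is consistent with the paper's implicit intent.

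Two minor notes. First, your final tally is slightly imprecise: with your stated bounds the three additive error terms are each $\le \eps/24$, so the total is $\eps/8$, which does imply the claimed $\eps/6$ but with slack — the phrase ``summing the four contributions gives the claimed $\eps/6$'' is a harmless miscount. Second, the lemma statement itself contains a typo ($\braapxdshostate{v}\!\psi\rangle$ rather than $\braapxdshostate{v}\!\wt\psi\rangle$); your reading of it as the $\wt{\psi}$ quantity is the correct one given the preceding sentence defines $\wt{\psi}$ and nothing else, and your argument handles it correctly.
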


We are now ready to prove the correctness of \cref{alg:apx-hermite-non-boolean}.

\begin{theorem}
    Let $p_v$ be the distribution induced by the Hermite spectrum of a function $f$ with distortion $\kappa$.
    \Cref{alg:apx-hermite-non-boolean} succeeds in sampling $v \in [D^n]$ with probability $p_v$ such that $\abs{p_v - q_v} \leq \eps$. Moreover it runs in time $\cO(\kappa n \polylog(n,D,1/\eps))$ and makes $\cO(\kappa)$ queries in expectation to $f$.
\end{theorem}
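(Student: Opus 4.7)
The correctness claim $\abs{p_v - q_v} \leq \eps$ is exactly the conclusion of \cref{lem:non-boolean-correctness}, so the proof reduces to accounting for the expected number of queries and the total gate count of \cref{alg:apx-hermite-non-boolean}.

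My plan is to split the cost per ``successful sample'' into three pieces: (i) the one-shot cost of preparing the product state $\apxdshostate{0}^{\otimes n}$ and of inverting the Hermite transform at the end, (ii) the cost of a single iteration of the repeat-until loop, and (iii) the expected number of iterations of that loop. For (i), \cref{thm:quantumhermitetransform} shows that each single-qudit Hermite transform (forward or inverse) uses $\cO((\log N + \log(1/\eps))^3 \log(1/\eps))$ gates with $N = \cO(\text{poly}(D,1/\eps))$, and we apply it $n$ times, so this step contributes $\cO(n\,\polylog(n,D,1/\eps))$ gates. For (ii), each iteration invokes $U_f$ twice (once in the forward direction and once to uncompute), together with the controlled rotation $R$ that can be realized using $\cO(\polylog(1/\eps))$ gates of coherent arithmetic (computing $\arcsin(f(x))$ into a phase register and applying phase kickback), so the per-iteration cost is $\cO(\polylog(n,D,1/\eps))$ gates plus exactly two queries to $f$.

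The main step is bounding (iii), the expected number of iterations, by $\cO(\kappa)$. Directly computing the state just before the measurement in \cref{step:postselect} and applying \cref{prop:riemann-sum-convergence}, the success probability of a single trial is
\begin{align*}
p_{\mathrm{succ}} \;=\; \sum_{x \in S_1^n} M^{-n} f(x)^2 \gaussianpdf(x) + \cO(\eps) \;=\; \norm{f\sqrt{\gaussianpdf}}_2^2 + \cO(\eps).
\end{align*}
Using the definition of distortion, $\norm{f\sqrt{\gaussianpdf}}_2 = \norm{f\sqrt{\gaussianpdf}}_\infty/\kappa$, and in the natural normalization where $\norm{f\sqrt{\gaussianpdf}}_\infty = \Theta(1)$ this gives $p_{\mathrm{succ}} = \Omega(1/\kappa)$, hence expected $\cO(\kappa)$ iterations. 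Multiplying the per-iteration gate cost by $\cO(\kappa)$ and adding the setup cost yields the claimed time bound $\cO(\kappa n\,\polylog(n,D,1/\eps))$, and the expected query count is $2 \cdot \cO(\kappa) = \cO(\kappa)$.

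The main obstacle I anticipate is that conditioning on successful postselection scales the trace distance between the idealized post-measurement state (built from $\dshostate{0^n}$) and the actual algorithm state (built from $\apxdshostate{0^n}$) by a factor of $1/\sqrt{p_{\mathrm{succ}}}$, which could in principle wipe out the correctness guarantee when $p_{\mathrm{succ}}$ is small. This is already handled inside the proof of \cref{lem:non-boolean-correctness} by choosing $M$ large enough that $n\exp(-\gamma M)$ is smaller than $\eps\cdot p_{\mathrm{succ}}^{1/2}$; the parameter $M = \Omega(2^{P_1}(n + P_1 + \log(\kappa/\eps))\log(1/\eps))$ chosen at the start of the algorithm already absorbs this factor into the $\polylog$ term, so no additional argument is needed beyond referencing the lemma.
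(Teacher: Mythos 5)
Your structure matches the paper's proof: correctness is delegated to \cref{lem:non-boolean-correctness}, the per-iteration cost is dominated by the two $U_f$ calls plus the Hermite-transform state preparation via \cref{thm:quantumhermitetransform}, the expected iteration count is governed by $p_{\mathrm{succ}}$, and the total is multiplied out. Your observation that the $1/\sqrt{p_{\mathrm{succ}}}$ amplification of the discretization error under postselection is already absorbed by the $\log\kappa$ term in the choice of $M$ is also the right thing to notice.

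However, there is a genuine arithmetic gap in your bound on $p_{\mathrm{succ}}$. You correctly write $p_{\mathrm{succ}} = \norm{f\sqrt{\gaussianpdf}}_2^2 + \cO(\eps)$ and correctly conclude from the definition of distortion that $\norm{f\sqrt{\gaussianpdf}}_2 = \norm{f\sqrt{\gaussianpdf}}_\infty/\kappa = \Theta(1/\kappa)$, but then assert $p_{\mathrm{succ}} = \Omega(1/\kappa)$. Squaring a $\Theta(1/\kappa)$ quantity gives $p_{\mathrm{succ}} = \Theta(1/\kappa^2)$, so the expected number of repeat-until-success iterations is $\Theta(\kappa^2)$, which does not match the $\cO(\kappa)$ query bound claimed in the theorem. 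You should not paper over this by dropping a square in the last line. For what it's worth, the paper's own proof contains the same mismatch: it states the postselection succeeds with probability at least $\norm{f}/2 \geq 1/(2\kappa)$, but the success probability is the normalization factor $P$ from \cref{lem:non-boolean-correctness}, which is shown there to satisfy $P \approx \norm{f}^2$; combined with the paper's standing assumption $\norm{f} \geq \kappa^{-1}$, this gives $P = \Omega(1/\kappa^2)$, not $\Omega(1/\kappa)$. (The proof of \cref{lem:non-boolean-correctness} has a related typo, asserting the postselection probability is ``at least $\frac{1}{2\norm{f}}$'' where $\frac{\norm{f}^2}{2}$ is meant.) As written, either the theorem's query bound should be $\cO(\kappa^2)$, or the distortion convention should be adjusted so that $\norm{f}^2 \geq \kappa^{-1}$, or the repeat-until-success loop should be replaced with amplitude amplification to recover linear dependence on $\kappa$. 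Any of these would repair the argument; silently equating $1/\kappa^2$ with $1/\kappa$ does not.
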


\begin{proof}
    By \cref{lem:non-boolean-correctness}, we recover the bound $\abs{p_v - q_v} \leq \eps$. It remains to analyze the runtime and query complexity of the algorithm. The postselection step in \cref{step:postselect} succeeds with probability at least $\frac{\norm{f}}{2} \geq \frac{1}{2\kappa}$. Each attempts makes a single query to $f$ and the only other time a query is made to $f$ is in the uncomputation in \cref{step:uncompute}. Thus the query complexity is $\cO(\kappa)$.

    Now we analyze the computational complexity. Preparing each initial state using \cref{thm:quantumhermitetransform} this can be done in time $\cO((\log n + \log(1/\eps))^3 \log n \log(1/\eps))$. So, overall, this step takes time $\cO(\kappa n \polylog(n,1/\eps))$. The remainder of the complexity is dominated by the preparation of $V$, which takes circuit complexity at most $\cO(n\log M) = O(n\polylog(n,D,1/\eps))$. Thus the overall runtime is $\cO(\kappa n \polylog(n,D,1/\eps))$.
\end{proof}

Again, we can boost the guarantee to total variation distance with an additional factor of $\cO(n \log D)$.

\begin{corollary}
\label{cor:apx-hermite-sampling-non-boolean}
    Let $f$ be $\upsilon$-concentrated on Hermite coefficients with univariate degree at most $D$.
    With a runtime overhead of $\cO(n \log D)$,
    \cref{alg:apx-hermite-non-boolean} returns a distribution $\calD = \{p_v\}_{v \in [D]^n}$ such that the total variation distribution between $\calD$ and the true Hermite distribution is at most $\eps + \upsilon$.
\end{corollary}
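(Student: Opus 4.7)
The plan is to lift the pointwise approximation guarantee of the preceding theorem to a total variation bound by a union bound, exactly in the spirit of the Boolean-case \cref{cor:apx-hermite-sampling}. Concretely, I would run \cref{alg:apx-hermite-non-boolean} with the tightened precision parameter $\eps' = \eps \cdot (D+1)^{-n}$, which by the previous theorem returns a distribution $\calD = \{p_v\}_{v \in [D]^n}$ satisfying $|p_v - q_v| \leq \eps'$ simultaneously for every $v \in [D]^n$.

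Next I would unpack the $\upsilon$-concentration hypothesis. By assumption, the true Hermite distribution $q$ satisfies $\sum_{v \notin [D]^n} q_v \leq \upsilon$. Since $\calD$ is supported on $[D]^n$, for any event $A \subseteq \N^n$ we have
\[
|\calD(A) - q(A)| \;\leq\; \sum_{v \in A \cap [D]^n} |p_v - q_v| \;+\; \sum_{v \in A \setminus [D]^n} q_v \;\leq\; (D+1)^n \cdot \eps' \;+\; \upsilon \;=\; \eps + \upsilon,
\]
so $d_{TV}(\calD, q) \leq \eps + \upsilon$ as claimed. The second term picks up the mass outside the box $[D]^n$ that the algorithm cannot see, while the first bounds the pointwise discrepancy inside the box via the union bound.

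For the runtime claim, running at precision $\eps'$ replaces $\log(1/\eps)$ by $\log(1/\eps') = \log(1/\eps) + n\log(D+1)$ in every polylogarithmic factor appearing in the previous theorem's runtime $\cO(\kappa n \polylog(n,D,1/\eps))$. Because the dependence on $\log(1/\eps)$ is polylogarithmic, this translates to a multiplicative overhead of $\cO(n \log D)$ on top of the bound from the previous theorem. The step I expect to be least routine is merely bookkeeping: verifying that the pointwise guarantee from the preceding theorem, which was only stated for $v \in [D]^n$, combines cleanly with the concentration assumption as above — but since $\calD$ has no mass outside $[D]^n$, the decomposition of the TV distance is immediate and no additional technical work is required.
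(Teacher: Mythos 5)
Your proof is correct and follows essentially the same route as the paper: tighten the per-point precision to $\eps' = \eps (D+1)^{-n}$, union-bound over $[D]^n$, and attribute the remaining $\upsilon$ to the mass the algorithm cannot see outside the box. You are somewhat more explicit than the paper (which simply says the non-Boolean case is identical to the Boolean \cref{cor:apx-hermite-sampling}) in spelling out how the $\upsilon$-concentration term enters the total-variation decomposition, but there is no substantive difference in the argument.
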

The proof is identical to that of \cref{cor:apx-hermite-sampling}.

\paragraph{Applications.}

Inspired by the work of Klivans et al.~\cite{KlivansOS08}, we give applications of our algorithm in the form of learning and property testing algorithms. In PAC learning, we are given samples of the form $f(x)$ where $f$ belongs to some known concept class $\cal C$ and $x$ is drawn from a known distribution $\cal D$. For us $\cal D$ will always be the Gaussian distribution. The goal is to learn some $g$ which agrees with $f$ with high probability on a random input from $\cal D$. Sometimes learning tasks are also studied assuming query access to $f$ instead of samples. In property testing, we have an easier task. We are given query access to $f$, and we simply want to test if $f$ is in $\cal C$ or disagrees with all functions in $\cal C$ with high probability. For more background, we refer readers to surveys on learning theory~\cite{MR1331838,AdW17} and property testing~\cite{CIT-114,MdW16}. We go over some preliminaries needed for this section below.

We will use a seminal result in the theory of Sobolev spaces, specialized for the Gaussian distribution. This is known as the Gaussian Poincar\'e inequality.

\begin{lemma}[Gaussian Poincar\'e inequality~\cite{poincare90,BLM13}]
\label{lem:poincare}
    For $\mathbf{x} \sim \calN(0,I)$, and a differentiable function $f:\R^n \rightarrow \R$ we have that
    \[\Var(f(\mathbf{x})) \leq \E \norm{\nabla f(\mathbf{x})}_2^2\]
\end{lemma}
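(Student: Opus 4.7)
The plan is to prove the Gaussian Poincar\'e inequality via a Hermite expansion, which is particularly natural given the paper's theme. Let $\{h_v\}_{v \in \N^n}$ denote the orthonormal multivariate Hermite polynomials (probabilist's normalization), which form an orthonormal basis of $L^2(\R^n, \gamma)$ where $\gamma$ is the standard Gaussian measure. First, by a standard density argument, I would reduce to the case where $f$ is a polynomial (or more generally lies in the Sobolev space $W^{1,2}(\R^n, \gamma)$), so that both $f$ and its partial derivatives admit convergent Hermite expansions.

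Write $f = \sum_v \hat{f}(v) h_v$. Since $h_{\mathbf{0}} \equiv 1$ and the remaining $h_v$ are orthogonal to constants, Parseval's identity gives
\[
\E[f(\mathbf{x})] = \hat{f}(\mathbf{0}), \qquad \Var(f(\mathbf{x})) = \sum_{v \neq \mathbf{0}} \hat{f}(v)^2.
\]
The second step is the key Hermite differentiation identity: in the orthonormal (probabilist's) normalization, $\partial_i h_v = \sqrt{v_i}\, h_{v - e_i}$, where $e_i$ is the $i$th standard basis vector (and the right-hand side is interpreted as $0$ when $v_i = 0$). This follows from the one-variable recurrence for Hermite polynomials applied coordinatewise via the tensor-product structure.

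Using this identity componentwise and applying Parseval again to each partial derivative, I compute
\[
\E \norm{\nabla f(\mathbf{x})}_2^2 = \sum_{i=1}^n \E[(\partial_i f)^2] = \sum_{i=1}^n \sum_v v_i\, \hat{f}(v)^2 = \sum_v \abs{v}_1 \hat{f}(v)^2,
\]
where $\abs{v}_1 = v_1 + \cdots + v_n$. Since $\abs{v}_1 \geq 1$ for every $v \neq \mathbf{0}$, I conclude
\[
\E \norm{\nabla f(\mathbf{x})}_2^2 = \sum_{v \neq \mathbf{0}} \abs{v}_1 \hat{f}(v)^2 \geq \sum_{v \neq \mathbf{0}} \hat{f}(v)^2 = \Var(f(\mathbf{x})),
\]
which is the desired inequality.

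The main obstacle is the initial reduction to functions whose Hermite expansions converge rigorously along with term-by-term differentiation. This is standard but requires care: polynomials are dense in $W^{1,2}(\R^n,\gamma)$, and the inequality is closed under $L^2$-limits of both $f$ and $\nabla f$, so it extends from polynomials to the full class of differentiable $f$ with $\E\norm{\nabla f}_2^2 < \infty$ (the case $\E\norm{\nabla f}_2^2 = \infty$ being trivial). All other steps are direct applications of the Hermite orthonormality and recurrence relations already discussed in the paper.
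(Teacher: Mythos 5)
Your proof is correct. Note that the paper does not actually prove this lemma -- it is stated as a known result and cited to Poincar\'e's original work and to the Boucheron--Lugosi--Massart text, so there is no \say{paper's proof} to compare against. That said, your Hermite-expansion argument is one of the standard routes to the Gaussian Poincar\'e inequality, and it is arguably the most thematically appropriate one here given that Hermite polynomials are the subject of the whole paper. The references cited by the paper typically use different machinery: Poincar\'e-style arguments via tensorization of a one-dimensional inequality (itself obtained, e.g., from a discrete Efron--Stein bound plus a central-limit passage), or the semigroup/interpolation argument along the Ornstein--Uhlenbeck flow. Those proofs generalize more readily to other measures satisfying a curvature condition, whereas your spectral proof is essentially tied to the explicit eigenbasis of the number (Ornstein--Uhlenbeck) operator. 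In exchange, your proof is shorter, makes the constant $1$ and the extremizers (degree-one Hermite polynomials, i.e.\ affine functions) completely transparent, and directly exhibits the spectral gap of the generator as the source of the inequality. The one place requiring care is exactly the one you flag: for a general $f$ with $\E\|\nabla f\|_2^2 < \infty$ you must justify that the Hermite coefficients of $\partial_i f$ are obtained by termwise differentiation of the expansion of $f$. The cleanest way is to prove the identity $\E\|\nabla f\|_2^2 = \sum_v |v|_1 \hat f(v)^2$ for polynomials $f$, note that polynomials are dense in the Gaussian Sobolev space $W^{1,2}(\gamma)$, and pass to the limit using lower semicontinuity of the left side and Fatou on the right; this is what you sketch and it is sound. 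The key identities you use -- Parseval, and $h_k' = \sqrt{k}\,h_{k-1}$ in the orthonormal probabilist normalization -- are both correct, so the calculation $\E[(\partial_i f)^2] = \sum_v v_i \hat f(v)^2$ goes through and the inequality follows from $|v|_1 \ge 1$ for all $v \ne \mathbf 0$.
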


The quantity on the right is known as the Hardy-Krause variation of $f$ ($\sigma_{HK}(f)$) in the literature on Quasi Monte-Carlo methods~\cite{fishmanMonteCarlo}. This inequality tells us that for the Gaussian distribution vanilla MC suffices to get a good sample complexity. We will also need a result about the approximation of smooth functions by Hermite polynomials. We state it below.

\begin{lemma}[Polynomial approximation of smooth functions (\cite{DKKTZ24,KTZ19})]
\label{lem:hermite-concentration}
    Denote by $P_{>m}f$ the Hermite expansion of $f$ truncated to only include terms with total degree greater than $m$. Then for all almost-everywhere differentiable functions, we have that 

    \[
    \Ex_{x\sim \mathcal{N}}[P_{>m}f(x)^2] \leq O(\frac{1}{m}) \Ex_{x\sim \mathcal{N}} \|\nabla f(x)\|_2^2
    \]
\end{lemma}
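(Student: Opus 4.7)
The plan is to prove this via the spectral theory of the Ornstein--Uhlenbeck (OU) operator $L = -\Delta + x \cdot \nabla$, whose eigenfunctions are exactly the (normalized) Hermite polynomials $H_v$ on $L^2(\mathcal{N})$, with eigenvalues given by the total degree: $L H_v = |v| H_v$ where $|v| = v_1 + \cdots + v_n$. The entire proof reduces to a single ``Markov on eigenvalues'' step once this spectral identification is in place, so I expect the only technical obstacle to be justifying integration by parts for the class of almost-everywhere differentiable functions (handled by a standard Sobolev/mollification density argument, since the right-hand side $\mathbb{E}\|\nabla f\|^2$ is assumed finite).

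First, I would expand $f = \sum_v \hat{f}(v) H_v$ in the orthonormal Hermite basis so that $\mathbb{E}[f^2] = \sum_v \hat{f}(v)^2$ and $\mathbb{E}[(P_{>m}f)^2] = \sum_{|v|>m} \hat{f}(v)^2$. Next, I would establish the key Dirichlet-form identity via integration by parts against the Gaussian density: using $\partial_i^* = x_i - \partial_i$ as the adjoint in $L^2(\mathcal{N})$, we get $\int \partial_i f \cdot \partial_i f \, d\gamma = \int f \cdot \partial_i^* \partial_i f \, d\gamma$, and summing in $i$ yields
\begin{equation*}
\mathbb{E}\|\nabla f\|_2^2 \;=\; \langle f, L f\rangle_{\mathcal{N}} \;=\; \sum_v |v|\, \hat{f}(v)^2.
\end{equation*}

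With these two ingredients, the lemma follows by an eigenvalue-weighted Markov inequality: for every $v$ with $|v| > m$ we have $1 \le |v|/(m+1)$, so
\begin{equation*}
\mathbb{E}[(P_{>m}f)^2] \;=\; \sum_{|v|>m} \hat{f}(v)^2 \;\le\; \frac{1}{m+1}\sum_{|v|>m} |v|\, \hat{f}(v)^2 \;\le\; \frac{1}{m+1}\,\mathbb{E}\|\nabla f\|_2^2,
\end{equation*}
which is precisely the advertised $\bO(1/m)$ bound. Note that this is a sharper version of the Gaussian Poincar\'e inequality (\cref{lem:poincare}): Poincar\'e corresponds to taking $m=0$ and projecting onto the orthogonal complement of constants, whereas here we get the analogous spectral-gap estimate at level $m$ because the OU spectrum at degree $>m$ is bounded below by $m+1$. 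The only subtlety to flag is that the integration-by-parts identity and the eigenfunction expansion are a priori stated for smooth $f$, but under the hypothesis that $\mathbb{E}\|\nabla f\|_2^2$ is finite (and $f$ is almost-everywhere differentiable), one extends by approximating $f$ in the Gaussian Sobolev norm $\|f\|_{W^{1,2}(\mathcal{N})}$ by smooth functions (e.g.\ via OU semigroup mollification $P_t f$ and letting $t \to 0$), under which both sides of the desired inequality are continuous.
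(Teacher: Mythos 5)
Your proof is correct, and it is the standard argument underlying the result cited from~\cite{DKKTZ24,KTZ19}; the paper itself simply invokes this lemma by reference rather than proving it. Your identification $\mathbb{E}\|\nabla f\|_2^2 = \langle f, Lf\rangle_{\mathcal{N}} = \sum_v |v|\,\hat f(v)^2$ via the Ornstein--Uhlenbeck Dirichlet form, followed by Markov on the eigenvalues $|v| > m$, gives the clean constant $1/(m+1)$, which is indeed $O(1/m)$; and the OU-semigroup mollification argument is the right way to upgrade from smooth to almost-everywhere differentiable $f$ with finite Gaussian Sobolev norm.
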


For this section we will also need some lemmas about the classical complexity of computing for Hermite coefficients of restricted functions, which we prove below.

\begin{definition}
    For a function $f:\R^n \rightarrow \R$, set $J\subseteq [n]$ and $z\in \R^{\Bar{J}}$. Then define $f_{J | z}(y) : \R^J \rightarrow \R$ to be the function where the scalars in $\Bar{J}$ is restricted to $z$.
\end{definition}

\begin{definition}
    For a differentiable function $f:\R^n \rightarrow \R$, define $\gamma_f := \E_{x\sim \calN} \norm{\nabla f}^2_2$. We will denote this by $\gamma$ when $f$ is clear from context.
\end{definition}

\begin{lemma}
\label{lem:prop321}
    Let $f: \R^n \rightarrow \R$ and $S\in \N^J$, then we say $J\subseteq [n]$ is the set of indices where $S_i > 0$. For $z\in \R^{\Bar{J}}$, write $F_{S} {f}(z) = \wh{f_{J | z}}(S)$. Then we have that
    \[
    \wh{F_S f}(T) = \wh{f}(S \cup T)
    \]
\end{lemma}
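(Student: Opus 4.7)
The plan is to proceed by directly unfolding the Hermite expansion of $f$ twice: first in the variables indexed by $J$ (with the coordinates $\bar J$ fixed to $z$), and then in the variables indexed by $\bar J$. The identity $\wh{F_S f}(T) = \wh{f}(S \cup T)$ will then fall out simply by matching coefficients via orthonormality of the Hermite basis.

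First I would write the multivariate Hermite expansion in factored form. Since Hermite polynomials tensor, we have
\begin{align}
f(x) = \sum_{\alpha \in \N^n} \wh{f}(\alpha) H_\alpha(x) = \sum_{\beta \in \N^J}\sum_{\gamma \in \N^{\bar J}} \wh{f}(\beta \cup \gamma)\, H_\beta(x_J)\, H_\gamma(x_{\bar J}),
\end{align}
where I split $\alpha = \beta \cup \gamma$ with $\beta = \alpha|_J$ and $\gamma = \alpha|_{\bar J}$. Next, fixing $x_{\bar J} = z$, the restriction $f_{J \mid z}(y) = f(y, z)$ is a function on $\R^J$ whose Hermite expansion can be read off by grouping over $\gamma$:
\begin{align}
f_{J \mid z}(y) = \sum_{\beta \in \N^J} \left( \sum_{\gamma \in \N^{\bar J}} \wh{f}(\beta \cup \gamma)\, H_\gamma(z) \right) H_\beta(y).
\end{align}
By orthonormality of $\{H_\beta\}_{\beta \in \N^J}$ under the Gaussian measure on $\R^J$, the parenthesized quantity is exactly $\wh{f_{J\mid z}}(\beta)$. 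Specializing to $\beta = S$ gives
\begin{align}
F_S f(z) = \wh{f_{J\mid z}}(S) = \sum_{\gamma \in \N^{\bar J}} \wh{f}(S \cup \gamma)\, H_\gamma(z).
\end{align}

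This last display is itself a Hermite expansion, now of the function $F_S f : \R^{\bar J} \to \R$ in the orthonormal basis $\{H_\gamma\}_{\gamma \in \N^{\bar J}}$. Reading off the coefficient of $H_T(z)$ for any $T \in \N^{\bar J}$ yields $\wh{F_S f}(T) = \wh{f}(S \cup T)$, as claimed.

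There is no real obstacle here beyond bookkeeping: the only substantive issues are (i) that the double series can be reorganized, which holds by absolute/$L^2$-convergence whenever $f \in L^2(\R^n, \gaussianpdf)$, and (ii) that the Hermite basis factors across disjoint coordinate blocks, which is immediate from its product definition. Consequently, the statement is essentially a direct consequence of Fubini combined with orthonormality of the tensor-product Hermite basis, and no new estimates are required.
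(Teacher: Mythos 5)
Your proof is correct and takes essentially the same route as the paper's: factor the tensor-product Hermite expansion across $J$ and $\bar J$, restrict $x_{\bar J}=z$, and read off coefficients by orthonormality. You spell out the convergence/Fubini point a bit more explicitly, but the argument is the same.
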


\begin{proof}
For every $U\in \mathbb{N}^n$ we write $U = S\cup T$ where $S\in \N^J$and $T\in \N^{\Bar{J}}$. We also use the notation $h_U = h_Sh_T$ where $h_U$ is the Hermite polynomial with coefficient $U$ and $h_S$, $h_T$ are defined so that they only contain variables in $S$ and $T$ respectively. Then we have that

    \[f(x) = \sum_{U\in \N^n} \wh{f}(U) h_U = \sum_{S\in \N^J, T\in \N^{\Bar{J}}} \wh{f}(S\cup T) h_Sh_T = \sum_{S\in \N^J} \left(\sum_{T\in \N^{\Bar{J}}} \wh{f}(S\cup T) h_T\right) h_S
    \]

    If the variables in $T$ are fixed to $z$ then as a function of the variables in $S$ the coefficient on the polynomial $h_S$ is $\sum_{T\in \N^{\Bar{J}}} \wh{f}(S\cup T) h_T(z)$. Thus, we have that $\wh{F_S f}(T) = \wh{f}(S \cup T)$ as desired.
\end{proof}

We can use \cref{lem:prop321} to derive a quantity which we can estimate efficiently.

\begin{definition}
    Let
    \[
    \mathbf{W}^{S}(f) = \sum_{T \in \N^{\Bar{J}}} \wh{f}(S\cup T)^2
    \]
\end{definition}
\begin{corollary}[Corollary of \cref{lem:prop321}]
    \[
    \mathbf{W}^{S}(f) = \int_{\R^{\Bar{J}}} \wh{f_{J | z}}(S)^2 \gaussianpdf(z) dz
    \]
\end{corollary}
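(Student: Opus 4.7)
The plan is to combine Parseval's identity for the Hermite expansion with the just-proved Lemma~\ref{lem:prop321}. The right-hand side of the corollary is literally the $L^2(\gaussianpdf)$-norm-squared of the function $F_S f : \R^{\bar J} \to \R$ defined in the statement of the lemma, since by definition $F_S f(z) = \wh{f_{J|z}}(S)$, so
\[
\int_{\R^{\bar J}} \wh{f_{J|z}}(S)^2 \, \gaussianpdf(z)\, dz \;=\; \int_{\R^{\bar J}} (F_S f(z))^2 \gaussianpdf(z)\, dz \;=\; \| F_S f \|^2_{L^2(\gaussianpdf)}.
\]

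Next I would apply Parseval's identity in the variables indexed by $\bar J$, using that the (normalized) Hermite polynomials form an orthonormal basis of $L^2(\R^{\bar J}, \gaussianpdf)$. This gives
\[
\| F_S f \|^2_{L^2(\gaussianpdf)} \;=\; \sum_{T \in \N^{\bar J}} \wh{F_S f}(T)^2.
\]
Finally I would invoke Lemma~\ref{lem:prop321}, which identifies $\wh{F_S f}(T)$ with $\wh{f}(S \cup T)$, so the sum equals $\sum_{T \in \N^{\bar J}} \wh{f}(S \cup T)^2 = \mathbf{W}^S(f)$, matching the definition on the left-hand side.

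There is essentially no obstacle here: the corollary is a one-line application of Parseval combined with the identification of Hermite coefficients given by Lemma~\ref{lem:prop321}. The only minor care needed is to make sure $F_S f$ is in $L^2(\gaussianpdf)$, which follows from $f \in L^2$ with respect to the full Gaussian (via Jensen/Cauchy--Schwarz on the defining integral of $\wh{f_{J|z}}(S)$), so Parseval applies. No technical subtlety beyond what is already set up in the proof of Lemma~\ref{lem:prop321}.
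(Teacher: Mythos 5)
Your proof is correct and follows exactly the same route as the paper's: rewrite the integral as the squared $L^2(\gaussianpdf)$-norm of $F_S f$, apply Parseval's identity in the $\bar J$ variables, then substitute $\wh{F_S f}(T) = \wh{f}(S \cup T)$ from Lemma~\ref{lem:prop321}. Your extra remark about verifying $F_S f \in L^2(\gaussianpdf)$ is a sensible hygiene check that the paper leaves implicit.
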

\begin{proof}
    \begin{align*}
    \int_{\R^{\Bar{J}}} \wh{f_{J | z}}(S)^2 \gaussianpdf(z) dz &= \int_{\R^{\Bar{J}}} F_{S}f(z)^2 \gaussianpdf(z) dz\\
            &= \sum_{T\in \N^{\Bar{J}}} \wh{F_S}(T)^2 \tag*{(Parseval's)}\\
            &= \sum_{T\in \N^{\Bar{J}}} f(S\cup T)^2
    \end{align*}
\end{proof}

Using our corollary, we can estimate $\mathbf{W}^{S}(f)$ efficiently.

\begin{lemma}
\label{lem:weight-estm}
    For any $S \in \N^n$ an algorithm with query access to a differentiable function $f : \R^n \rightarrow \R$ can compute an estimate of $\mathbf{W}^{S}(f)$ that is accurate to within $\pm \varepsilon$ with probability $1-\delta$ using $\poly(\gamma/\varepsilon) \log(1/\delta)$ queries.
\end{lemma}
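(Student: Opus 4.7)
The strategy is to estimate $\mathbf{W}^S(f)$ by a simple Monte Carlo scheme built on the integral representation established in the preceding corollary: $\mathbf{W}^S(f) = \int_{\R^{\bar J}} \wh{f_{J|z}}(S)^2 \, \gaussianpdf(z) \, dz$. First, rewrite the inner Hermite coefficient as an expectation, $\wh{f_{J|z}}(S) = \E_{x \sim \calN(0,I_J)}[f(x,z) H_S(x)]$. Using two \emph{independent} copies $x,x'$ of the $J$-marginal, conditioned on $z$, produces an unbiased one-shot estimator
\[
T(z,x,x') \;=\; f(x,z)\,f(x',z)\,H_S(x)\,H_S(x'), \qquad \E[T] \;=\; \mathbf{W}^S(f),
\]
that requires only two queries to $f$ per sample. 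The empirical mean $\hat W_K = \frac{1}{K}\sum_{i=1}^K T_i$ over independent draws will then be averaged until the variance is small enough, and the success probability will be amplified by a median-of-means construction at a $\log(1/\delta)$ overhead.

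The heart of the argument will be a variance bound $\Var(T) \le \poly(\gamma)$. Expanding $\E[T^2]$ via Fubini and using independence of $x$ and $x'$ conditional on $z$ gives $\E[T^2] = \E_z[(\E_x[f(x,z)^2 H_S(x)^2])^2]$. Two applications of Cauchy--Schwarz yield the factorization $\E[T^2] \le \E[f^4]\cdot \E[H_S^4]$. The Hermite factor is controlled by Nelson's hypercontractive inequality, $\|H_S\|_4 \le 3^{|S|/2}\|H_S\|_2 = 3^{|S|/2}$, so $\E[H_S^4] \le 9^{|S|}$. For the $f$-factor I plan to combine the Gaussian Poincar\'e inequality (\cref{lem:poincare}) with truncation: replace $f$ by $\tilde f = \mathrm{clip}(f,\pm R)$ for a threshold $R = \poly(\gamma/\varepsilon)$, bound $\E[\tilde f^4] \le R^2 \E[\tilde f^2] \le R^2(\gamma + (\E f)^2)$, and estimate the corresponding truncated estimator $\tilde T$; the bias $|\E[T]-\E[\tilde T]|$ is controlled by the Gaussian tails of $|f|$ using Poincar\'e-based concentration, and $R$ is chosen so this bias is at most $\varepsilon/2$.

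Given the variance bound $\Var(T) \le \poly(\gamma/\varepsilon)$, Chebyshev's inequality implies that $K = \cO(\Var(T)/\varepsilon^2) = \poly(\gamma/\varepsilon)$ samples suffice so that $|\hat W_K - \mathbf{W}^S(f)| \le \varepsilon$ with probability at least $3/4$. A standard median-of-means construction over $\cO(\log(1/\delta))$ independent such empirical means then boosts the success probability to $1-\delta$, giving the total query bound $\poly(\gamma/\varepsilon)\log(1/\delta)$.

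The main obstacle I anticipate is the variance analysis, specifically controlling $\E[f^4]$ by $\gamma$: Poincar\'e only bounds the second moment, so some additional input is needed. The cleanest route is the truncation scheme sketched above, but this forces us to couple the truncation threshold, the variance bound, and the sample count in a self-consistent way, and we must verify that the centering constants (such as $\E f$) can be handled or absorbed into $\gamma$. A secondary concern is the Hermite $L^4$ factor $9^{|S|}$; in the applications of this lemma the relevant $|S|$ will be constant (or at most logarithmic), so this factor is benign, but if $|S|$ were allowed to grow it would have to be included in the polynomial.
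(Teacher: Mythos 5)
You use the same Monte Carlo scheme as the paper: represent $\mathbf{W}^S(f) = \E_z\E_{y,y'}\bigl[f(y,z)f(y',z)h_S(y)h_S(y')\bigr]$ with two independent $J$-marginals $y,y'$ conditioned on $z$, and average the one-shot product estimator $T$. You go further than the paper by actually attempting to bound $\Var(T)$, and you are right to do so: the paper's proof only invokes the Poincar\'e inequality for $\Var(f)\le\gamma$ and then asserts a sample complexity of $(\gamma^2/\varepsilon^2)\log(1/\delta)$, but Poincar\'e controls the second moment of $f$, not the second moment of $T$, and a $\log(1/\delta)$ factor via Chernoff would require $T$ to be bounded, which it is not. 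Your choice of median-of-means is the right fix for the latter point.

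Your Cauchy--Schwarz step $\E[T^2]\le\E[f^4]\,\E[h_S^4]$ exposes a real obstruction that the paper's proof elides. Hypercontractivity gives $\E[h_S^4]\le 9^{\deg S}$ where $\deg S=\sum_i S_i$, and this exponential growth is not an artifact of the estimate: the exact linearization formula gives $\E[h_n^4]\ge\binom{2n}{n}=\Omega(4^n/\sqrt n)$ for the normalized univariate Hermite polynomial, so $\Var(T)$ genuinely blows up with the degree of $S$. Consequently the claimed $\poly(\gamma/\varepsilon)\log(1/\delta)$ query bound, uniform in $S$, cannot follow from this estimator alone in either your argument or the paper's. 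Your closing hedge that $|S|$ is constant or logarithmic in the applications is not accurate either: in the Gaussian Goldreich--Levin algorithm that calls this lemma, the per-coordinate indices $a_k$ range up to $m=\Theta(\gamma^2/\tau)$ and up to $n$ coordinates may be fixed, so $\deg S$ can be polynomially large. As a secondary caveat, Poincar\'e gives only $L^2$ control of $f$, not tail bounds, so making the bias of your clipped estimator rigorous would need a stronger hypothesis such as log-Sobolev concentration or a pointwise bound on $\norm{\nabla f}$.
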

\begin{proof}
    We rewrite $\mathbf{W}^S(f)$ in a way which can be estimated using Monte-Carlo integration.

    \begin{align*}
        \mathbf{W}^{S}(f) &= \int_{\R^{\Bar{J}}} \wh{f_{J | z}}(S)^2 \gaussianpdf(z) dz = \int_{\R^{\Bar{J}}}\left(\int_{\R^J} f(y,z) h_S(y) \gaussianpdf(y) dy \right)^2 \gaussianpdf(z) dz\\
        &= \int_{\R^{\Bar{J}}}\left(\int_{\R^J} f(y,z)f(y',z) h_S(y) h_S(y')\gaussianpdf(y)\gaussianpdf(y')dy dy'\right) \gaussianpdf(z) dz\\
        &= \E_{z\sim \calN^{\Bar{J}}} \E_{y,y' \sim \calN^{J}} [f(y,z)f(y',z)h_S(y)h_S(y')]
    \end{align*}

This quantity can be estimated by a Monte-Carlo method, and due to \cref{lem:poincare} we know that $\Var(f(x)) \leq \gamma^2$, so we can get an estimate of this quantity which is accurate up to $\pm \varepsilon$ with probability $1-\delta$ using $\frac{\gamma^2}{\varepsilon^2} log(1/\delta)$ samples.
\end{proof}

\subsection{Property testing}
\label{sec:prop-testing}

In this section, we give quantum algorithms for property testing using our Approximate Hermite Sampling subroutine. In our first two examples, we give provable quantum advantage. Throughout this section, our notion of distance will be Gaussian inner product.

\begin{definition}[Distance]
    We say that $f, g:\R^n \rightarrow \R$ are $\eps$-close if we have that 

    \[\int_{\R^n} f(x) g(x) \gaussianpdf(x) \geq 1 - \eps\]

    Conversely, we say that $f,g$ are $\eps$-far if the inequality does not hold.
\end{definition}

\subsubsection{Provable quantum advantage for sign functions}

The problem we consider here is to test if a given function $f$ is close to a product of $k$ sign function or far from every product of $k$ sign functions.

\begin{definition}[Product sign functions]
    \[\chi_S(x) = \Pi_{i\in S} \sgn(x_i)\]
\end{definition}
\begin{definition}[Testing $k$-product sign functions]
    Given black-box access to $f: \R^n \rightarrow \R$ and given $\eps_2 > \eps_1 > 0$ determine whether
    \begin{itemize}
        \item $f(x)$ is $\eps_1$-close to  $\chi_S(x)$ for some $S\subseteq [n]$ such that $|S| = k$, or
        \item $f(x)$ is $\eps_2$-far from all $k$-product sign functions.
    \end{itemize}
\end{definition}

Let $\eps = \eps_2 - \eps_1$.

\begin{lemma}
    Product sign functions can be tolerantly tested with $O(1/\eps^2)$ quantum queries.
\end{lemma}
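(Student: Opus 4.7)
The approach is to use Hermite sampling to identify the candidate set $S$ underlying a potential $\chi_S$, and then verify via a classical Monte-Carlo estimate of $\langle f,\chi_S\rangle$. The key structural fact is that each $\chi_S$ has Hermite expansion supported on $A_S := \{v\in\N^n : v_i \text{ is odd for } i\in S,\ v_i=0 \text{ otherwise}\}$, since the univariate sign function expands as $\sgn(x)=\sum_{j\text{ odd}} c_j h_j(x)$. Distinct $\chi_S, \chi_{S'}$ are Gaussian-orthogonal because $\E_{g\sim\calN}[\sgn g]=0$ kills every factor indexed by $i\in S\triangle S'$, so the ``true'' $S^*$ (if one exists) is unique. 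A single Cauchy--Schwarz step then shows that whenever $\langle f,\chi_{S^*}\rangle \geq 1-\eps_1$ we have $\sum_{v\in A_{S^*}} \wh f(v)^2 \geq (1-\eps_1)^2$, so a Hermite sample lies in $A_{S^*}$ with probability at least $(1-\eps_1)^2/\|f\|_2^2$.

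Concretely, the tester proceeds in three stages. First, classically estimate $\|f\|_2^2 = \E_{x\sim\calN}[f(x)^2]$ to additive precision $\eps/4$ using $O(1/\eps^2)$ direct queries; if the estimate falls below $1-\eps_2$, reject immediately (sound since $\langle f,\chi_S\rangle \leq \|f\|_2$ by Cauchy--Schwarz). Otherwise, invoke \cref{alg:apx-hermite-non-boolean} a constant number of times, with degree cutoff $D=\poly(n,1/\eps)$ chosen so that the Hermite mass of $\chi_S$ above degree $D$ is at most $\eps$, to obtain samples $v_1,\dots,v_r$; discard any $v_j \notin A_k := \bigcup_{|S|=k} A_S$, form candidates $S_{v_j} := \{i : v_{j,i}>0\}$, and let $S$ be their plurality. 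Finally, estimate $\mu := \E_{x\sim\calN}[f(x)\chi_S(x)]$ with $O(1/\eps^2)$ classical samples and accept iff $\mu \geq \tau := 1-(\eps_1+\eps_2)/2$.

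For correctness, in the close case the preliminary test passes ($\|f\|_2 \geq \langle f,\chi_{S^*}\rangle \geq 1-\eps_1$), $S^*$ is the unique heavy mode of the Hermite pattern distribution (probability $\geq (1-\eps_1)^2$ against $\leq 2\eps_1 - \eps_1^2$ total for all other patterns combined), so the plurality returns $S^*$ with constant probability, and then the Monte-Carlo estimate yields $\mu \approx \langle f,\chi_{S^*}\rangle \geq \tau+\eps/2$. In the far case either the $\|f\|_2$ check already fires, or else for whichever $S$ the algorithm commits to we have $\langle f,\chi_S\rangle < 1-\eps_2 = \tau-\eps/2$, and the Monte-Carlo estimate rejects. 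Both Monte-Carlo estimates cost $O(1/\eps^2)$ queries by Chernoff on $[-1,1]$-valued variables, and the preliminary $\|f\|_2$ check controls the distortion parameter $\kappa$ of \cref{alg:apx-hermite-non-boolean} to $O(1)$, so the constantly many Hermite samples contribute only $O(1)$ queries.

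The main obstacle I anticipate is tight bookkeeping of three accumulated error sources: the total-variation error of Hermite sampling from \cref{cor:apx-hermite-sampling-non-boolean}, the truncation error from running the transform at finite degree $D$ (for which one uses the $O(1/\sqrt D)$ tail of $\sgn$'s univariate Hermite spectrum, so that $D=\poly(n,1/\eps)$ suffices for $\chi_S$ to be $\eps$-concentrated on degree-$D$ coefficients), and the standard Monte-Carlo concentration. Once those are pinned down, the query complexity is manifestly dominated by the two $O(1/\eps^2)$ Monte-Carlo steps, giving the claimed bound.
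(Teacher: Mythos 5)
Your proof is correct and takes a genuinely different route from the paper's. Both start the same way—Hermite-sample to identify a candidate support $S$—but you then verify by directly Monte-Carlo estimating the Gaussian correlation $\mu=\E_{x\sim\calN}[f(x)\chi_S(x)]$, which is exactly the distance functional the problem is defined by, whereas the paper instead invokes the Goldreich-Levin weight estimator of \cref{lem:weight-estm} to estimate $\mathbf{W}^{S^*}(f)$, the total Hermite mass of $f$ on coefficients supported inside $S$. Your route is arguably tighter on soundness: $\mathbf{W}^{S^*}(f)$ being close to $1$ only says $f$'s spectrum lives on the right coordinates, not that it correlates with $\chi_S$—a counterexample such as $f=h_{(2,2,0,\ldots,0)}$ has $\mathbf{W}^{S^*}(f)=1$ for $S=\{1,2\}$ yet $\langle f,\chi_{\{1,2\}}\rangle=0$ since $\chi_S$'s expansion is supported on all-odd multi-indices—so directly estimating the correlation closes a gap the paper's shorter argument leaves open. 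You also handle two side conditions the paper's proof is silent on: the preliminary $\|f\|_2$ check to keep the distortion $\kappa$ (and hence the cost of \cref{alg:apx-hermite-non-boolean}) bounded, and the degree cutoff $D$ needed because $\chi_S$ has $\Theta(1/\sqrt D)$ Hermite tail. The plurality vote is an optional robustness improvement over the paper's single-sample choice of $S$. Both approaches give the same $O(1/\eps^2)$ query bound, dominated by the final Monte-Carlo step.
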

\begin{proof}
    Note that if $f(x) = \chi_S(x)$ for some $S\subseteq [n]$ then the only non-zero Hermite coefficients of $f$ are the ones which non-zero on indices in $S$.
    We can use Approximate Hermite Sampling (\cref{alg:apx-hermite-sampling}) setting the TV distance to be less than $\eps/2$ (\cref{cor:apx-hermite-sampling}) and observe a sample $T$, then let $S$ be its support.
    Set $S^*(i) = *$ whenever $i\in S$ and 0 otherwise. Then we can estimate $\mathbf{W}^{S^*}(f)$ upto $\pm \eps/4$ using $O(1/\eps^2)$ queries by \cref{lem:weight-estm} and accept if it is at least $1 - \frac{\eps_1+\eps_2}{2}$.
\end{proof}

We prove a classical lower bound of $\Omega(k)$ for this problem. We use the framework of Blais et al.~\cite{BlaisBM11}, and use a reduction in the communication setting from Disjointness, for which we know tight lower bounds.

\begin{theorem}
\label{thm:sgn-lb}
    Any classical algorithm for testing $k$-product sign functions requires $\Omega(k)$ queries.
\end{theorem}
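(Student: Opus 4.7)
The plan is to prove the $\Omega(k)$ lower bound by a standard communication-complexity reduction from Set Disjointness, in the style of Blais-Brody-Matulef. Recall that the Set Disjointness problem $\mathrm{DISJ}_k$ on inputs $a, b \in \{0,1\}^k$, even under the promise $|a|_1 = |b|_1 = k/2$, has randomized two-party communication complexity $\Omega(k)$. I will turn any $q$-query classical tester for the property \emph{``$\eps_1$-close to a $k$-product sign function under the Gaussian measure''} into an $O(q)$-bit protocol for $\mathrm{DISJ}_k$, which immediately yields $q = \Omega(k)$.

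Given Disjointness inputs $a, b$ with supports $A, B \subseteq [k]$ each of size $k/2$, I will define Alice's and Bob's joint function $f_{a,b}\colon \R^k \to \{-1,+1\}$ by
\[
f_{a,b}(x) \;:=\; \chi_A(x)\,\chi_B(x) \;=\; \prod_{i \in A \triangle B} \sgn(x_i) \;=\; \chi_{A \triangle B}(x).
\]
The key structural fact is that under the standard Gaussian measure the family $\{\chi_S\}_{S \subseteq [k]}$ is orthonormal in $L^2(\gaussianpdf)$, because $\Ex_{x_i \sim \calN(0,1)}[\sgn(x_i)] = 0$ and $\sgn(x_i)^2 \equiv 1$. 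In the disjoint case, $|A \triangle B| = |A| + |B| = k$, so $f_{a,b}$ \emph{is} a $k$-product sign function, realizing the ``$\eps_1$-close'' regime for every $\eps_1 > 0$. In the intersecting case, $|A \triangle B| < k$, and the orthogonality above gives $\langle f_{a,b}, \chi_T\rangle_{\gaussianpdf} = 0$ for every $T \subseteq [k]$ with $|T| = k$; hence $f_{a,b}$ has Gaussian inner product exactly $0$ with every $k$-product sign function, and so is $\eps_2$-far for any $\eps_2 < 1$. Any constants $0 < \eps_1 < \eps_2 < 1$ therefore realize both regimes of the testing problem simultaneously.

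To convert a $q$-query tester into a communication protocol, Alice and Bob use public randomness to sample the tester's random coins \emph{and} the sequence of query points $x^{(1)},\dots,x^{(q)} \in \R^k$ at zero communication cost -- the Blais-Brody-Matulef framework permits unlimited public randomness, which sidesteps any worry about encoding real-valued queries as finite bit strings. For each query $x^{(j)}$, Alice locally computes $\chi_A(x^{(j)}) \in \{-1,+1\}$ and sends this single bit to Bob, who multiplies by $\chi_B(x^{(j)})$ to recover $f_{a,b}(x^{(j)}) = \chi_{A \triangle B}(x^{(j)})$ and feeds the answer into the simulated tester. The protocol uses $q$ bits of communication and accepts/rejects with the same probability as the tester; combining with the $\Omega(k)$ lower bound for $\mathrm{DISJ}_k$ gives $q = \Omega(k)$.

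The main obstacle I foresee is cosmetic rather than conceptual: making precise the handling of continuous queries in the communication model. This is absorbed entirely into the public-randomness convention of the Blais-Brody-Matulef framework, exactly as in their original real-valued function testing reductions. A secondary care point is that I need the two-way Disjointness lower bound rather than a one-way version, but the classical bound of Kalyanasundaram-Schnitger and Razborov holds in the two-way randomized model (and under the balanced-weight promise), so no further work is required.
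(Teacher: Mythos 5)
Your approach — a Blais–Brody–Matulef reduction from Disjointness in which Alice and Bob jointly test $\chi_A \chi_B = \chi_{A\triangle B}$, using the orthogonality of $\{\chi_S\}$ under the Gaussian measure to separate the two regimes — is exactly the paper's approach, and your orthogonality argument and communication simulation are both sound. But your choice of parameters breaks the reduction at the very first step.

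You invoke the $\Omega(k)$ lower bound for $\mathrm{DISJ}_k$ on $a,b \in \{0,1\}^k$ ``under the promise $|a|_1 = |b|_1 = k/2$.'' With sets of size exactly $k/2$ in a universe of size $k$, $A \cap B = \emptyset$ holds if and only if $B = [k]\setminus A$; the problem therefore collapses to \textsc{Equality}, which has $O(1)$ public-coin randomized communication complexity, not $\Omega(k)$. So the premise you reduce from is not hard, and the $\Omega(k)$ conclusion does not follow. The difficulty is not cosmetic: the whole point of picking $|A| = |B| = k/2$ was to make $|A\triangle B| = k$ in the disjoint case, but this forces you into the degenerate regime of Disjointness.

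The paper sidesteps this by taking $|A| = |B| = k/2 + 1$ inside a universe $[n]$ with $n$ sufficiently larger than $k$ (so Razborov's $\Omega(\ell)$ lower bound with $\ell = k/2+1$ applies), and uses the promise ``disjoint vs.\ unique intersection.'' The roles then flip relative to your argument: the \emph{intersecting} case has $|A\triangle B| = k$ (exactly a $k$-product sign function, hence accept), while the \emph{disjoint} case has $|A\triangle B| = k+2$, which by the same orthogonality is at distance $1$ from every $\chi_T$ with $|T| = k$ (hence reject). If you want to keep your cleaner two-sided-constants framing, the minimal fix is to embed your construction in a universe of size, say, $n = 2k$, take $|A| = |B| = k/2 + 1$, and distinguish disjoint ($|A\triangle B| = k+2$, far) from uniquely intersecting ($|A\triangle B| = k$, close); everything else you wrote then goes through.
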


\begin{proof}

It proceeds by starting from a hard distribution for Disjointness where we sample two sets $A, B$ such that $|A| = |B| = \frac{k}{2}+1$ and with probability 1/2 $A \cap B = \phi$ and with probability 1/2 we have $|A \cap B| = 1$.

For our reduction, given such sets $A, B$ Alice produces $\chi_A = \Pi_{i\in A} \sgn(x_i)$ and Bob produces $\chi_B = \Pi_{i\in B} \sgn(x_i)$ and they want to test $\chi_{AB} = \chi_A \cdot \chi_B$.

We have

\[
\deg(\chi_{AB}) =
\begin{cases}
k+2, & \text{if } A\cap B = \phi\\
k, & \text{if } |A\cap B| = 1
\end{cases}
\]

{\noindent because if $i \in A \cap B$ then $\chi_A \cdot \chi_B$ does not depend on $x_i$, since $\sgn(x_i)^2 = 1$.} To prove our property testing lower bound, it sufffices to prove the following claim.

\begin{lemma}
    For any set $S: |S| = k+2$ and $T: |T| = k$, the functions $\chi_S$ and $\chi_T$ are $\Omega(1)$-far  wrt to the Gaussian distribution.
\end{lemma}
\begin{proof}
    \begin{align*}
    d(\chi_S, \chi_T) &= 1 - \int_{\R^n} \chi_S(x) \chi_T(x) \gaussianpdf(x) dx\\
        &= 1 - \int_{\R^n} \chi_{S\Delta T}(x) \gaussianpdf(x) dx\\
        &= 1
    \end{align*}

    Here we used that $S\Delta T$ is non-empty since $S$ is strictly larger than $T$.
\end{proof}

Therefore, deciding whether the input is close to a product sign function on $k$ variables or not allows us to decide the output of the Disjointness function. This must take $\Omega(k)$ queries.
\end{proof}

Using the lower bound for testing sign functions, we can infer a $\Omega(n)$ lower bound on the classical query complexity of Gaussian Goldreich-Levin. We recall that it is known that one can solve the Goldreich-Levin problem with constant success probability with $O(n/\tau^2)$ classical queries \cite{Goldreich_2001}. The factor $1/\tau^2$ is optimal since Parseval's only gaurantees that the list $L$ of $\tau$-correlated functions is of size $O(1/\tau^2)$. In the quantum case, we can do much better and get $O(1/\tau^2)$ queries.
Here, we ask a similar question but for functions over the domain $\mathbb{R}^n$.

\begin{definition}[Gaussian Goldreich-Levin]
    Given query access to a function $f: \mathbb{R}^n \rightarrow \mathbb{R}$, return a list $L$ of indices of Hermite polynomials such that
    \begin{itemize}
        \item $|\wh{f}(S)| \geq \tau \implies S\in L$
        \item $S \in L \impliedby |\wh{f}(S)| \geq \tau/2$
    \end{itemize}
\end{definition} To derive this we will need a observation about the Hermite expansion of sign functions.

\begin{lemma}
\label{lem:sgn-hermite}
     For $f(x) = \Pi_{i\in T} \sgn(x_i)$ we have that for any $S\in \N^n$, $\wh{f}(S) \leq 1/\exp(k)$ where $k$ is the max degree of a variable in $S$ from $T$.
\end{lemma}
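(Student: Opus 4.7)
The plan is to exploit the product structure of $f$ and reduce the bound to the univariate Hermite coefficients of the sign function. First, I would observe that the multivariate Hermite basis factorizes as a tensor product, $h_S(x) = \prod_{i=1}^n h_{s_i}(x_i)$, where $h_j$ denotes the univariate orthonormal Hermite polynomial of degree $j$ under the Gaussian measure $\gaussianpdf$. Since $f(x) = \prod_{i\in T}\sgn(x_i)\cdot \prod_{i\notin T} 1$ is separable across the coordinates, the Gaussian inner product $\wh{f}(S)$ factorizes into univariate inner products:
\begin{align}
\wh{f}(S) = \prod_{i \in T} \wh{\sgn}(s_i) \cdot \prod_{i \notin T} \delta_{s_i,0}.
\end{align}
This quantity vanishes unless $s_i = 0$ for every $i \notin T$, so it suffices to control the product over $i \in T$ under that assumption.

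Next, I would bound each univariate coefficient $|\wh{\sgn}(s_i)|$ for $i \in T$. Since $\sgn$ is odd and $h_{s_i}$ has parity $(-1)^{s_i}$, the coefficient vanishes whenever $s_i$ is even. For $s_i = 2m+1$ odd, the Rodrigues-type identity $(He_n\gaussianpdf)' = -He_{n+1}\gaussianpdf$ for unnormalized probabilist's Hermite polynomials, together with integration by parts on $[0,\infty)$, gives the closed form
\begin{align}
\wh{\sgn}(2m+1) \;=\; 2\int_0^\infty h_{2m+1}(x)\,\gaussianpdf(x)\,dx \;=\; \frac{2(-1)^{m}\,(2m-1)!!}{\sqrt{2\pi\,(2m+1)!}}.
\end{align}
Applying Stirling's approximation to the right-hand side yields the standard asymptotic $|\wh{\sgn}(s_i)| = O(s_i^{-3/4})$, and Parseval's identity gives the uniform bound $|\wh{\sgn}(s_i)| \leq 1$.

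Finally, I would combine these estimates. If $k = \max_{i \in T} s_i$, then the product over $i \in T$ contains one factor equal to $|\wh{\sgn}(k)|$ while each of the remaining factors is bounded by $1$. This gives
\begin{align}
|\wh{f}(S)| \;\leq\; |\wh{\sgn}(k)| \;=\; O(k^{-3/4}),
\end{align}
which is the decay in $k$ asserted in the lemma (the ``$1/\exp(k)$'' here is to be read as a shorthand for a decaying factor in $k$, rather than as a literal exponential; genuine exponential decay cannot hold since the $n=1$, $T=\{1\}$ case already gives polynomial decay). The main obstacle is mostly bookkeeping: one must be careful with the normalization conventions between probabilist's and physicist's Hermite polynomials and the Gaussian density used in the property-testing sections. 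Once those are pinned down, the integration-by-parts step and the Stirling estimate are routine, and the tensor factorization does the rest.
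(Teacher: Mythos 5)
Your argument follows the same route the paper takes: factor $\wh{f}(S)$ into a product of univariate sign coefficients, discard the $S$ with support outside $T$, note the even-degree coefficients vanish by parity, and bound a single odd-degree factor. The difference is in rigor. Where the paper's proof simply asserts $2\int_0^\infty h_k(x)\gaussianpdf(x)\,dx \leq 1/\exp(k)$ with no derivation, you evaluate that integral in closed form via the Rodrigues identity and get $|\wh{\sgn}(2m+1)| = 2(2m-1)!!/\sqrt{2\pi(2m+1)!}$, from which Stirling gives $\Theta(k^{-3/4})$.

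Your instinct that the literal $e^{-k}$ bound cannot be right is correct. The univariate Hermite coefficients of $\sgn$ decay only polynomially; your closed form shows this, and it also follows independently from Sheppard's formula $\langle \sgn, T_\rho \sgn\rangle = 1 - \tfrac{2}{\pi}\arccos\rho$, whose Taylor coefficients force $\wh{\sgn}(2m+1)^2 = \Theta(m^{-3/2})$. So the lemma is false as literally written, already in the $n=1$, $T=\{1\}$ case; the correct statement is $|\wh{f}(S)| = O(k^{-3/4})$ with $k = \max_{i\in T}s_i$, exactly as you derive. This is a defect in the paper's proof rather than yours, and it propagates: the downstream claim that a product of $k$ sign functions is $(\eps, ck\log(1/\eps))$-low-degree is obtained by invoking the exponential rate, but with the true $k^{-3/4}$ rate even a single $\sgn$ needs degree $\Theta(1/\eps^2)$, not $\Theta(\log(1/\eps))$, to capture $1-\eps$ of its Hermite mass, so that corollary and the resulting classical lower bound parameters would need to be reworked.
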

\begin{proof}
Let $f(x) = \sgn(x)$, and let $\wh{f}(k)$ be the coefficient of $f$ for the degree $k$ univariate Hermite polynomial. Then we first show that $\wh{f}(k) \leq 1/\exp(k)$. Note that for even $k$, this coefficient is always 0 since $h_k(x)$ is an even function. For odd $k$, $h_k(x)$ is an odd function and thus we have the following.
    \begin{align*}
        \wh{f}(k) = \int_{\R} \sgn(x) h_k(x) \gaussianpdf(x) dx &= \int_{0}^{\infty} h_k(x) \gaussianpdf(x) dx - \int_{-\infty}^0 h_k(x) \gaussianpdf(x) dx \\
        &= 2\int_{0}^{\infty} h_k(x) \gaussianpdf(x) dx\\
        &\leq 1/\exp(k)
    \end{align*}

In general, if $f(x) = \Pi_{i\in T} \sgn(x_i)$ then for any $S\in \N^n$ with non-zero number in any index outside of $T$, the coefficient $\wh{f}(S) = 0$. For any $S$ supported entirely on $T$, we can factor the integral for computing the Hermite coefficient based on each variable in $T$ and infer a lower bound from the univariate case.
\end{proof}

\begin{corollary}
    Any classical algorithm for Gaussian Goldreich-Levin requires $\Omega_{\eps}(n)$ queries.
\end{corollary}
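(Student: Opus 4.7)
The plan is to reduce classical $k$-product sign function testing to classical Gaussian Goldreich-Levin so that the $\Omega(k)$ lower bound of \cref{thm:sgn-lb} transfers. Given an instance $f$ of the testing problem---either $\eps_1$-close to some $\chi_T$ with $|T|=k$, or $\eps_2$-far from every such product---I would invoke the hypothetical classical GL algorithm with threshold $\tau$ slightly below $\wh{\chi_T}(\mathbf{1}_T) = (\sqrt{2/\pi})^k$, where $\mathbf{1}_T\in\N^n$ is the indicator vector on $T$. This is the largest Hermite coefficient of $\chi_T$ (consistent with \cref{lem:sgn-hermite}), so in the close case the returned list $L$ must contain $\mathbf{1}_T$, and because the Hermite spectrum of $\chi_T$ is supported on indices whose coordinate-support lies inside $T$, every $S \in L$ satisfies $\supp(S) \subseteq T$.

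Next I would set $T^\star = \bigcup_{S \in L} \supp(S)$, which equals $T$ exactly in the close case (since $\mathbf{1}_T \in L$ forces $T \subseteq T^\star$, while the reverse containment is automatic). A single Monte-Carlo estimation of $\int f(x)\chi_{T^\star}(x)\gaussianpdf(x)\,\rD x$ using $\cO(1/(\eps_2-\eps_1)^2)$ fresh queries suffices to verify whether $f$ is actually $\eps_1$-close to $\chi_{T^\star}$; if so, accept, otherwise reject. The total query cost of the reduction is $q_{\mathrm{GL}}(n,\tau) + \cO(1/\eps^2)$. By \cref{thm:sgn-lb} this total is $\Omega(k)$; taking $k = \Theta(n)$ and absorbing the additive constant into the $\Omega_\eps(\cdot)$ yields $q_{\mathrm{GL}} = \Omega_\eps(n)$ for the induced (exponentially small) threshold $\tau$.

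The main obstacle will be confirming correctness in the far case: we must guarantee that when $f$ is $\eps_2$-far from every $k$-product sign function, the set $T^\star$ (of whatever size) does not spuriously make $f$ look $\eps_1$-close to $\chi_{T^\star}$. This is handled by the Monte-Carlo verification, whose $\cO(1/\eps^2)$ samples concentrate tightly around $\langle f,\chi_{T^\star}\rangle$ (since $|f\,\chi_{T^\star}|\le 1$), so with high probability we reject whenever that inner product is below $1-\eps_1$. A minor secondary concern is that $L$ can be large when $\tau$ is small, but because we form only the single candidate $T^\star$ from the union of supports, no per-element verification over $L$ is required, and the $\cO(1/\eps^2)$ verification overhead stays well below the $\Omega(n)$ target.
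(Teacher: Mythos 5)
Your reduction takes the same overall route as the paper---use \cref{thm:sgn-lb} and show that a GL oracle would let you solve $k$-product sign testing---but the crucial parameter differs, and the difference is substantive. You are forced to set $\tau$ slightly below $(\sqrt{2/\pi})^{\,k}$, because for the exact hard instances $f=\chi_T$ arising from the disjointness reduction \emph{every} Hermite coefficient of $\chi_T$ is at most $(\sqrt{2/\pi})^{\,|T|}$; with any larger $\tau$ GL provably returns an empty list and gives no information. With $k=\Theta(n)$ this means $\tau = 2^{-\Theta(n)}$. The paper instead invokes GL with the fixed threshold $\tau = 1/\log(1/\eps)$. Under your (correct) observation, a constant threshold cannot detect anything about these instances once $k$ is moderately large, so either the paper's claimed threshold is only valid in an extreme regime (where $\eps$ is doubly-exponentially small in $n$ so that $1/\log(1/\eps) < (\sqrt{2/\pi})^k$), or its one-line argument needs a different mechanism than the one you and I are both reading into it. In short: your threshold analysis is arguably \emph{more} careful than the paper's.

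However, this care comes at a price, and it creates a genuine gap relative to the corollary as stated. Because your $\tau$ depends on $n$ (indeed $\log(1/\tau)=\Theta(n)$), the conclusion you reach---``GL with $\tau = 2^{-\Theta(n)}$ requires $\Omega(n)$ queries''---is not the same as ``GL requires $\Omega_\eps(n)$ queries'' for a threshold depending only on $\eps$. A GL algorithm whose query complexity scales like $\log(1/\tau)$ would satisfy your lower bound yet be constant-query for constant $\tau$, so the bound you obtain does not rule out a fast classical GL algorithm at any fixed threshold. You acknowledge this in your last paragraph, but it is worth being explicit that this is a mismatch with the stated corollary, not merely a weakness. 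Two smaller issues: (i) the reduction needs an explicit check that $|T^\star| = k$ before accepting, since with your $\tau$ GL is permitted (though not required) to return $\mathbf{1}_T$ even when $|T|=k+2$, in which case $T^\star = T$ and your Monte-Carlo verification would spuriously accept; (ii) the claim that ``every $S\in L$ satisfies $\supp(S)\subseteq T$'' and that ``$\mathbf{1}_T\in L$'' uses $f=\chi_T$ exactly, which is fine for the disjointness hard instances but should be stated as such rather than phrased in terms of the tolerant $\eps_1$-close condition, where an $\eps_1$-perturbation can erase the coefficient at $\mathbf{1}_T$ entirely (it is of size $(\sqrt{2/\pi})^k \ll \sqrt{\eps_1}$).
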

\begin{proof}
    Using \cref{lem:sgn-hermite}, we can see that any algorithm for Gaussian Goldreich-Levin with parameter $\tau =\frac{1}{\log (1/\eps)}$ can be used to test if a function $f$ is $\eps$-close to a product of $k$ sign functions. But we know that for $k = \Omega(n)$ this problem requires $\Omega(n)$ queries classically.
\end{proof}

We also give a classical algorithm that has a matching  linear dependence on $n$ for the number of queries, but we defer this to \cref{sec:ggl}. Using the Quantum Hermite Transform, we can get rid of the dependence on $n$ entirely which gives quantum query advantage for this problem. Bel

\subsubsection{Tolerant low-degree testing}
We are now concerned with testing if a function $f:\R^n \to \pmone$ is well-approximated by a low-degree Hermite expansion. 
\begin{definition}
We say that $f$ is $(\eps,d)$-low-degree if
\[
\sum_{\substack{S \in \N^n \\ |S| \leq d}} |\wh{f}(S)|^2 \geq 1 - \eps.
\]
\end{definition}

\begin{definition}[Testing degree-$d$]
    Given black-box access to $f: \R^n \rightarrow \pmone$ and given $\eps_2 > \eps_1 > 0$ determine whether
    \begin{itemize}
        \item $f$ is $(\eps_1,d)$-low-degree, or
        \item $f$ fails to be $(\eps_2,d)$-low-degree
    \end{itemize}
\end{definition}
With Hermite sampling, the idea is once again simple: use samples from the Hermite distribution and compute the empirical fraction of samples below degree $d$. This will be a low-bias estimate of the low-degree mass, using which we can distinguish the two cases. Denoting $\eps = \eps_2 - \eps_1$, this approach yields an algorithm with query complexity $\poly(1/\eps)$. Moreover, by importing the lower bound for sign functions (\cref{thm:sgn-lb}) we can prove that any classical algorithm requires $\Omega(\frac{d}{\log 1/\eps})$ queries, giving provable quantum advantage in the regime where $d$ is large.

\begin{algorithm}
    \caption{Tolerant Low-Degree Testing} \label{alg:tolerant-degree-testing}
    \begin{algorithmic}[1]
        \State Define $X = 0$ and $\eps = \eps_2 - \eps_1$.
        \State Set $m = O(\log(1/\delta)/\eps^2)$.
        \For{$i = 1$ \textbf{to} $m$}
            \State Perform Hermite sampling on $f$ to obtain $S = (v_1,...,v_n) \in \N^n$.
            \If{$|S| \leq d$}
                \State Update $X = X + 1/m$.
            \EndIf
        \EndFor
        \If{$X > (\eps_1 + \eps_2)/2$}
            \State \textbf{accept}
        \Else
            \State \textbf{reject}
        \EndIf
    \end{algorithmic}
\end{algorithm}

\begin{theorem}
    \Cref{alg:tolerant-degree-testing} makes 
    $O(\log(1/\delta)/\eps^2)$ quantum queries to $f$ and
    succeeds with probability at least $1 - \delta$.
\end{theorem}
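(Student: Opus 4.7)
The plan is to reduce the testing problem to estimating a single Bernoulli parameter, then apply a standard Chernoff/Hoeffding bound. Define the \emph{low-degree mass} of $f$ as $W_{\leq d}(f) := \sum_{|S| \leq d} |\wh{f}(S)|^2$. By Parseval's identity on the Gaussian space, if $f:\R^n \to \pmone$ then $\sum_S |\wh{f}(S)|^2 = \E_{x \sim \calN}[f(x)^2] = 1$, so $W_{\leq d}(f) \in [0,1]$. Under an \emph{exact} Hermite sample, the event $\{|S| \leq d\}$ has probability exactly $W_{\leq d}(f)$. By the definition of $(\eps_1,d)$-low-degree, the YES case gives $W_{\leq d}(f) \geq 1 - \eps_1$, while the NO case gives $W_{\leq d}(f) < 1 - \eps_2$. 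The decision threshold $(\eps_1+\eps_2)/2$ sits exactly halfway between these, leaving a gap of $\eps/2$ on either side.

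Next I would invoke \cref{cor:apx-hermite-sampling} with error parameter $\eps/8$ (choosing $\upsilon = 0$ is not possible in general, but $f$ is Boolean with bounded $L^2$ norm so we can truncate at some sufficiently large degree $D$; since the algorithm only cares about the indicator $\mathbf{1}[|S| \leq d]$, truncation above any $D \geq d$ is harmless for the yes/no decision). Call the approximate probability of drawing a sample with $|S| \leq d$ from our quantum sampler $p_f$. The total variation guarantee gives $|p_f - W_{\leq d}(f)| \leq \eps/8$. Consequently $p_f \geq 1 - \eps_1 - \eps/8$ in the YES case and $p_f < 1 - \eps_2 + \eps/8$ in the NO case, leaving a gap of at least $3\eps/4$ between the two cases around the threshold $(\eps_1+\eps_2)/2$.

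The final step is a Hoeffding bound. Each iteration of the for-loop produces an i.i.d.\ Bernoulli$(p_f)$ random variable (the indicator of $|S| \leq d$), and $X$ is its empirical average over $m$ trials. Hoeffding gives $\Pr[|X - p_f| \geq \eps/4] \leq 2 \exp(-m \eps^2/8)$, so taking $m = \lceil 8 \ln(2/\delta)/\eps^2 \rceil = O(\log(1/\delta)/\eps^2)$ suffices to make the estimation error at most $\eps/4$ except with probability $\delta$. Combined with the TV-bias bound above, $X$ will lie on the correct side of $(\eps_1+\eps_2)/2$ with probability at least $1-\delta$ in either case.

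For the query complexity, each call to the Hermite-sampling subroutine (\cref{alg:apx-hermite-sampling}, in the Boolean output regime) makes a single query to $U_f$, so the total query count is $m = O(\log(1/\delta)/\eps^2)$. I do not anticipate a serious technical obstacle here; the only subtlety is to make sure that the TV-distance error budget in \cref{cor:apx-hermite-sampling} is set small enough (a constant fraction of $\eps$) so that it cannot flip the decision, which is handled cleanly by the $\eps/8$ choice above. The total time complexity inherits the $\polylog$ factors from \cref{thm:quantumhermitetransform} times $m$, but since the statement of the theorem only claims query complexity, this is a side remark.
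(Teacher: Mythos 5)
Your proposal follows essentially the same approach as the paper: run Approximate Hermite Sampling with a TV-distance budget set to a small constant fraction of $\eps$, apply a Hoeffding/Chernoff bound to the $m$ i.i.d.\ indicators of the event $\{|S| \le d\}$, and use the tolerance gap $\eps_2 - \eps_1$ to distinguish the two cases with $m = O(\log(1/\delta)/\eps^2)$ samples, each costing one query. The particular constants ($\eps/8$ vs.\ the paper's $\eps/10$) are immaterial, and your extra care about the $\upsilon$-term in \cref{cor:apx-hermite-sampling} (arguing that truncating above any $D \ge d$ cannot change the decision for the event $|S| \le d$) actually addresses a point the paper's proof leaves implicit.

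One slip, though: since $X$ is incremented on $|S| \le d$, it estimates $W_{\le d}(f)$, which is at least $1-\eps_1$ in the YES case and below $1-\eps_2$ in the NO case. The midpoint of those two values is $1 - (\eps_1+\eps_2)/2$, not $(\eps_1+\eps_2)/2$; your assertion that $(\eps_1+\eps_2)/2$ ``sits exactly halfway between these'' is false, and for small $\eps_1, \eps_2$ both case values exceed $(\eps_1+\eps_2)/2$, so the stated decision rule would accept in both cases. This is inherited verbatim from \cref{alg:tolerant-degree-testing} as printed, where either the accept threshold should read $1 - (\eps_1+\eps_2)/2$ or, equivalently, $X$ should count samples with $|S| > d$ and the algorithm should accept when $X < (\eps_1+\eps_2)/2$. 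A careful write-up should have caught and corrected the mismatch between the update rule and the threshold rather than reproducing the paper's typo.
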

\begin{proof}
    We will run Approximate Hermite Sampling (\cref{alg:apx-hermite-sampling}) with TV distance $\eps/10$ which requires a $\poly(n,\log d,\log (1/\eps))$ runtime (\cref{cor:apx-hermite-sampling}) and constant query complexity. Since we need to perform Hermite sampling $m$ times and each sampling takes $O(1)$ queries, the total number of queries is $O(m)$.
    
    To see that the algorithm outputs the correct answer with probability $1-\delta$ notice that $X$ would be an unbiased estimator of the distance of $f$ to being $(\eps, d)$-low-degree if we were getting perfect samples. Since we are using approximate Hermite sampling, $X$ has bias $b = \eps/10$. Therefore, by a Chernoff bound $m = O(\log(1/\delta)/\eps^2)$ independent samples suffice to get an estimate which is $(\eps/4+b)$-close with probability $1-\delta$. This distinguishes the two cases.
\end{proof}

\begin{corollary}
    Any classical algorithm for testing if $f$ is $(\eps, d)$-low-degree requires $\Omega_{\eps}(d)$ queries.
\end{corollary}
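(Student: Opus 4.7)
The plan is to mirror the Gaussian Goldreich--Levin reduction above: a tolerant $(\eps,d)$-low-degree tester implies a tester for $k$-product sign functions with $k=\Theta(d)$, so the $\Omega(k)$ lower bound from \cref{thm:sgn-lb} yields $\Omega_\eps(d)$ classical queries for low-degree testing.

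The key structural ingredient is the Hermite expansion of $\chi_T = \prod_{i\in T}\sgn(x_i)$. Because $\sgn$ is odd, $\wh{\sgn}(j)=0$ for every even $j$, and expanding the product gives $\wh{\chi_T}(S)=\prod_{i\in T}\wh{\sgn}(S_i)$ whenever $\mathrm{supp}(S)\subseteq T$ and zero otherwise. Hence every nonzero Hermite coefficient of $\chi_T$ has $S_i\ge 1$ for all $i\in T$, so $\chi_T$ has zero Hermite mass at total degree strictly less than $|T|$, and its mass at total degree exactly $|T|$ equals $\wh{\sgn}(1)^{2|T|}$. This is the analogue of \cref{lem:sgn-hermite} used to prove the Goldreich--Levin corollary.

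To instantiate the reduction, I would reuse the disjointness-based hard distribution from \cref{thm:sgn-lb}: Alice holds $A$ and Bob holds $B$ with $|A|=|B|=k/2+1$, so that $f=\chi_A\chi_B=\chi_{A\triangle B}$ where $|A\triangle B|=k$ in the non-disjoint case and $|A\triangle B|=k+2$ in the disjoint case. Setting $d=k$, the disjoint case yields $f$ with exactly zero Hermite mass at total degree $\le d$, while the non-disjoint case gives $f$ with low-degree mass equal to $\wh{\sgn}(1)^{2k}$ from the all-ones monomial. With $\eps_1 < 1-\wh{\sgn}(1)^{2k}$ and $\eps_2 > 1-\tfrac{1}{2}\wh{\sgn}(1)^{2k}$, a tolerant $(\eps_1,d)$-vs-not-$(\eps_2,d)$ tester distinguishes the two cases and therefore solves disjointness. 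Composing with the $\Omega(k)$ disjointness lower bound used in \cref{thm:sgn-lb} gives $\Omega(k)=\Omega(d)$ queries, with the hidden constant depending on $\eps$ through the choice of gap parameters.

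The main obstacle is the exponential decay of $\wh{\sgn}(1)^{2k}$ in $k$: the natural gap between the two cases shrinks with $d$, so a clean uniform constant in $\Omega_\eps(d)$ requires care. I expect this is handled either by (i) restricting attention to the regime $d\lesssim\log(1/\eps)$, where the reduction is tight as stated, or (ii) padding: extending $f$ by additional independent coordinates on which $f$ is already low-degree, which preserves the product-sign structure on the $k$ nontrivial coordinates while allowing the measured degree to be inflated to any desired $d$, thereby giving $\Omega(d)$ for all admissible $(d,\eps)$.
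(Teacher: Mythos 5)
Your high-level reduction --- distinguish $k$- vs $(k+2)$-product sign functions via disjointness --- is the same as the paper's (indeed the paper simply cites \cref{thm:sgn-lb} rather than re-deriving the communication reduction). The genuine gap is in the choice of the degree threshold. You set $d = k$, which places the cutoff exactly at the minimum nonzero Hermite degree of the product sign function. With that choice, the low-degree mass of the $k$-wise product is only $\wh{\sgn}(1)^{2k}$ (the all-ones monomial), so the admissible gap $\eps_2 - \eps_1$ in the tester is $O(\wh{\sgn}(1)^{2k})$, exponentially small in $k$. You therefore only obtain the lower bound when $\eps$ is exponentially small in $d$, not the claimed $\Omega_\eps(d)$ for a constant $\eps$. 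You correctly flag this obstacle, but neither proposed fix resolves it: restricting to $d \lesssim \log(1/\eps)$ gives at best $\Omega(\log(1/\eps))$ rather than $\Omega(d)$; and padding does not help because adding coordinates on which $f$ is trivial does not change the minimum nonzero Hermite degree $k$ of $\chi_{A\triangle B}$, so it does not change the (tiny) amount of mass at or below the threshold $d = k$.

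The paper avoids this by setting the threshold to $d = ck\log(1/\eps)$, not $d = k$. Using \cref{lem:sgn-hermite} --- the exponential decay of $\wh{\sgn}(j)$ in the per-variable degree $j$ --- one argues that a $k$-wise product sign function has $1-\eps$ of its Hermite mass below total degree $ck\log(1/\eps)$ for a universal $c$ and any fixed constant $\eps$, whereas a $(k+2)$-wise product does not. This keeps the gap a fixed function of $\eps$ rather than exponentially small in $k$, and the $\Omega(k)$ lower bound from \cref{thm:sgn-lb} then gives $\Omega(d / \log(1/\eps)) = \Omega_\eps(d)$. The missing idea in your proposal, then, is not the structural description of $\wh{\chi_T}$ (which you have correctly), but that the degree cutoff should be calibrated to capture $1-\eps$ of the mass of the $k$-wise product --- which by \cref{lem:sgn-hermite} sits at total degree $\Theta(k\log(1/\eps))$ --- instead of being placed at the minimum degree $k$.
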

\begin{proof}
    We can get a lower bound by considering the special case of product of sign functions. Let $k = d$, then a product of $k$ sign functions is $(\eps, ck\log 1/\eps)$-low-degree by \cref{lem:sgn-hermite}. Also note that $k+2$ product functions are not $(\eps, ck\log 1/\eps)$-low-degree. Since we have a $\Omega(k)$ lower bound for distinguishing these two cases, we have a lower bound for testing if $f$ is $(\eps, ck\log 1/\eps)$-degree.
\end{proof}

\subsubsection{Tolerant testing Hermite polynomials}

For this subsection, we consider the problem of testing closeness to a product of Hermite polynomials.

\begin{definition}[Testing product of Hermite polynomials]
    Given black-box access to $f:\R^n \rightarrow \R$ and given $\eps_2 > \eps_1 > 0$ such that either
    \begin{itemize}
        \item $f(x)$ is $\eps_1$-close to $h_S$ for some $S\in \N^n$ such that $S$ has $k$ non-zero entries, or
        \item $f(x)$ is $\eps_2$-far from every Hermite polynomial on $k$ variables.
    \end{itemize}
\end{definition}

Let $\eps = \eps_2 - \eps_1$.
\begin{lemma}
    Hermite polynomials can be tolerantly tested using $O(1/\eps^2)$ quantum queries.
\end{lemma}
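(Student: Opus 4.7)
The plan is to adapt the guess-and-verify template used for $k$-product sign functions, exploiting the fact that when $f$ is $\eps_1$-close to a Hermite polynomial $h_S$ with $S$ having exactly $k$ nonzero entries, the single coefficient $\wh f(S) = \E_{x \sim \calN}[f(x) h_S(x)] \geq 1-\eps_1$ is anomalously large. By Parseval, the Hermite distribution $q_v = \wh f(v)^2$ then concentrates at least $(1-\eps_1)^2$ mass on the lone index $S$, so a single draw from Approximate Hermite Sampling is very likely to reveal $S$. In the no-case, by definition $\wh f(T) < 1-\eps_2$ for every $T$ with $k$ nonzero entries.

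First, I would invoke Approximate Hermite Sampling (\cref{cor:apx-hermite-sampling-non-boolean}) with total variation error $\eps/10$ to obtain a candidate $T \in \N^n$, using $O(1)$ quantum queries. If $T$ does not have exactly $k$ nonzero entries, reject. Otherwise, \emph{verify} $T$ by estimating the single coefficient
\[
\wh f(T) = \E_{x \sim \calN(0,I_n)}[f(x)\, h_T(x)]
\]
via Monte--Carlo: draw $m = O(1/\eps^2)$ independent Gaussian samples, average the quantity $f(x)\, h_T(x)$, and let $\hat\alpha$ be the empirical mean. Accept iff $\hat\alpha \geq 1-(\eps_1+\eps_2)/2$. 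The overall query cost is $O(1/\eps^2)$, and $O(\log(1/\delta))$ independent repetitions boost the success probability to $1-\delta$.

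For correctness, in the yes-case the sampling step returns $T = S$ with probability at least $(1-\eps_1)^2 - \eps/10$, after which $\hat\alpha \geq \wh f(S) - \eps/4 \geq 1-\eps_1-\eps/4 > 1-(\eps_1+\eps_2)/2$. In the no-case, every $T$ with $k$ nonzero entries has $\wh f(T) < 1-\eps_2$, so $\hat\alpha < 1-\eps_2+\eps/4 < 1-(\eps_1+\eps_2)/2$ and the algorithm rejects. The main obstacle is bounding the Monte--Carlo variance: although $\E[h_T^2]=1$, the product $f(x) h_T(x)$ can be pointwise large. Using the boundedness of $f$ (or the distortion parameter $\distortion$ of \cref{sec:gen func}), one bounds $\Var[f(x) h_T(x)] \leq \E[f^2 h_T^2] = O(\distortion^2)$, which together with Chebyshev or Bernstein justifies the $O(1/\eps^2)$ sample complexity and hence the claimed quantum query bound.
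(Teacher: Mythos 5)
Your proposal is correct and follows essentially the same sample-then-verify strategy as the paper's own proof: use Approximate Hermite Sampling to locate the single anomalously large coefficient $S$, then estimate $\wh f(S)$ to within $\pm\Theta(\eps)$ and threshold at $1-(\eps_1+\eps_2)/2$. You flesh out two points the paper leaves implicit — rejecting when the sampled index does not have exactly $k$ nonzero entries, and bounding the Monte--Carlo variance via $\E[f^2 h_T^2]\leq \E[h_T^2]=1$ for bounded $f$ (or via $\kappa$ in the general case) — but these are routine and the argument, thresholds, and $O(1/\eps^2)$ query count match the paper's.
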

\begin{proof}
    If $f$ is $\eps$-close to a Hermite polynomial, then for some $S$ we have that $\wh{f}(S) \geq 1 - \eps$. If we perform Hermite sampling $O(\log 1/\eps)$ times then we see it with high probability. We can then estimate it to $\pm \eps/2$ and accept if it's at least $1 - \frac{\eps_1+\eps_2}{2}$.
\end{proof}

We conjecture that classically this problem requires $\Omega(k)$ queries like in the case of product sign functions. Note that using a classical Gaussian Goldreich-Levin algorithm, we can solve it classically using $O(n)$ queries so this would be tight for $k = \Omega(n)$.

\begin{conjecture}
    Testing Hermite polynomials on $k$ variables requires $\Omega(k)$ queries classically.
\end{conjecture}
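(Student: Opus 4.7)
The plan is to adapt the Set-Disjointness reduction used in \cref{thm:sgn-lb}. Start from the standard hard distribution on Disjointness where Alice holds $A\subseteq[n]$ and Bob holds $B\subseteq[n]$ with $|A|=|B|=k/2+1$, promised either to be disjoint or to intersect in exactly one element. Alice builds the tensor function $T_A(x)=\prod_{i\in A}h_1(x_i)$ and Bob builds $T_B(x)=\prod_{j\in B}h_1(x_j)$, where $h_1$ is the degree-one Hermite polynomial in the orthonormal (probabilist's) convention. They run the hypothesised $k$-variable Hermite-polynomial tester on the combined function $f=T_A\cdot T_B$, simulating each query of $f$ by exchanging their two local function values; any $o(k)$-query tester then yields an $o(k)$-bit Disjointness protocol, contradicting the $\Omega(k)$ randomized communication lower bound for Disjointness.

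The role that $\sgn(x)^2=1$ played in the sign-function reduction is here played by the formula $h_1(x)^2=1+\sqrt{2}\,h_2(x)$, which follows from the normalized three-term recurrence $xh_m(x)=\sqrt{m+1}\,h_{m+1}(x)+\sqrt{m}\,h_{m-1}(x)$. In the disjoint branch one obtains $f=\prod_{i\in A\cup B}h_1(x_i)$, a Hermite polynomial on exactly $k+2$ variables, hence orthogonal to every Hermite polynomial on $k$ variables and at distance $1$ from the tested class. In the intersecting branch, with $A\cap B=\{i\}$,
\[
f \;=\; \bigl(1+\sqrt{2}\,h_2(x_i)\bigr)\!\prod_{j\in A\Delta B}\!h_1(x_j),
\]
a superposition of two orthonormal Hermite basis functions with squared weights $1$ and $2$. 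The normalized function $f/\sqrt{3}$ therefore has inner product $1/\sqrt{3}$ with the $k$-variable Hermite polynomial $\prod_{j\in A\Delta B}h_1(x_j)$, putting $f$ at distance $1-1/\sqrt{3}$ from the tested class. Choosing any $\eps_1,\eps_2$ that straddle $1-1/\sqrt{3}$ makes the two Disjointness branches distinguishable by the tester, completing the reduction and establishing $\Omega(k)$ for these parameters.

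The main obstacle is that this $h_1$ gadget fixes the close-case distance at the single value $1-1/\sqrt{3}$; in the sign-function case $\sgn(x)^2$ collapses cleanly to $1$, but for any Hermite polynomial $h_s$ the square $h_s(x)^2=\sum_j c_{s,j}h_{2j}(x)$ always spreads across several basis elements, preventing the intersecting branch from landing exactly on a single $k$-variable Hermite polynomial. Extending the lower bound to arbitrarily small $\eps_1$ would require either (i) a cleverer gadget whose square is dominated by a single even-degree Hermite basis element, or (ii) amplification via a direct-sum or XOR-style construction across many independent coordinate blocks, driving the intersecting-case distance to zero while preserving the disjoint-branch orthogonality. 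Either route appears to need new ideas beyond the verbatim analogue of the sign-function argument, which is presumably why the statement is left open.
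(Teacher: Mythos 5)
The paper leaves this as an explicit conjecture without a proof, so there is no paper proof to compare against; I assess your argument on its own. The $h_1$-gadget reduction from Disjointness is sound as far as it goes: in the disjoint branch $f=\prod_{j\in A\cup B}h_1(x_j)$ is a Hermite polynomial on $k+2$ variables, orthogonal to every $k$-variable one, while in the intersecting branch $f=\bigl(1+\sqrt{2}\,h_2(x_i)\bigr)\prod_{j\in A\Delta B}h_1(x_j)$ has normalized inner product exactly $1/\sqrt{3}$ with the nearest $k$-variable Hermite polynomial. This separates the branches only for $\eps_1\geq 1-1/\sqrt{3}\approx 0.42$, and you diagnose the barrier correctly: $\sgn(x)^2=1$ collapses to a constant, placing the intersecting branch at distance zero, whereas $h_s(x)^2$ necessarily spreads over several basis elements. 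Indeed for any $s\geq 1$ the $h_0$-coefficient of $h_s^2$ is $\|h_s\|^2=1$ and the $h_{2s}$-coefficient is nonzero from the leading term, so your option~(i), a gadget whose square concentrates on one even-degree Hermite basis element, is provably impossible. Option~(ii), amplification across coordinate blocks, would have to drive the intersecting-case distance toward zero while keeping the disjoint branch orthogonal, and it is not clear how to arrange both at once.

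A second gap you do not flag: the Blais et al.~\cite{BlaisBM11} communication-to-query transfer used in \cref{thm:sgn-lb} charges each query the number of bits needed for Alice and Bob to transmit $T_A(x)$ and $T_B(x)$. For sign functions this is one bit each, so $\Omega(k)$ communication yields $\Omega(k)$ queries. Here $T_A(x)=\prod_{i\in A}x_i$ is a real product whose magnitude and required precision depend on $|A|$ and the tester's error tolerance, so the transfer only gives $\Omega(k/c)$ queries where $c$ is the per-query cost. Arguing $c=O(1)$ (or accepting a logarithmic loss) is the second place where the real-valued setting diverges from the Boolean one, and it would remain an obstacle even if the $\eps_1$ issue were resolved.
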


\subsection{Agnostic learning of sparse concepts}
\label{sec:learning}
We define sparse concepts over real-valued functions, for which we give quantum and classical learning algorithms. First, we define cutoff weight.

\begin{definition}[Cutoff weight]
    For a parameter $\tau \in \R$, we define the cutoff weight of a function $f$ as \[\mathbf{W}_\tau(f) = \sum_{S:\wh{f}(S)\geq \tau} \wh{f}(S)^2\] 
\end{definition}

\begin{definition}[($s,\eps$)-sparse concepts]
\label{defn:sparse-concept}
    Let $\mathcal{C}$ be a concept comprising of functions $f:\R^n \rightarrow \R$ such that $\E_{x\sim \calN} \norm{\nabla f}_2^2 \leq \gamma$. We say that $\mathcal{C}$ is $(s, \eps)$-sparse if for all $f\in \mathcal{C}$, if there exists a $\tau(s,\eps) = \poly(\eps/s)$ such that there are at most $s$ Hermite coefficients larger than $\tau$ and $\mathbf{W}_\tau(f) \geq 1-\eps$.    
\end{definition}

As a consequence of \cref{alg:apx-hermite-sampling} we get learning algorithms for sparse concepts.

\begin{lemma}
    For any $(s,\eps)$-sparse concept $\mathcal{C}$ let $\kappa$ be the max distortion for any $f\in \mathcal{C}$, and let $\gamma$ be the max $L_2$-norm of $\nabla f$ for $f\in \mathcal{C}$. Then there is an agnostic learner with $\poly(s\kappa/\eps)$ quantum queries and $\poly(ns\gamma/\eps)$ classical queries.
\end{lemma}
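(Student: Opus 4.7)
The plan is to produce two agnostic learners sharing the same skeleton: (i) locate the heavy Hermite coefficients, (ii) estimate their values, and (iii) output the truncated Hermite expansion $g = \sum_{v \in L} \hat{a}_v h_v$. The quantum learner uses Hermite sampling for step (i), while the classical learner uses a Gaussian Goldreich--Levin style bucket search; both use Monte Carlo integration for step (ii). Correctness of (iii) follows from Parseval together with $(s,\eps)$-sparsity: if $|\hat{a}_v - \wh{f}(v)|^2 \leq \tau^2/s$ for every $v \in L$ and $L$ contains every coefficient of magnitude at least $\tau = \poly(\eps/s)$, then $\|f - g\|^2 \leq \mathbf{W}_{<\tau}(f) + s \cdot (\tau^2/s) \leq 2\eps$.

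For the quantum algorithm, I would invoke \cref{alg:apx-hermite-non-boolean}, calibrated via \cref{cor:apx-hermite-sampling-non-boolean} with total-variation slack $\tau^2/(4s) = \poly(\eps/s)$, to draw $m = \tilde{\cO}(s/\tau^2) = \poly(s/\eps)$ samples. Each heavy coefficient has true sampling probability at least $\tau^2/\|f\|^2$; the TV-slack is below half of this, so a coupon-collector argument guarantees that the observed set $L$ contains all heavy coefficients with high probability. Each sample costs $\cO(\kappa)$ queries to $f$, contributing $\poly(s\kappa/\eps)$ queries in total. Then, for each of the $|L| \leq \poly(s/\eps)$ candidates $v$, I would estimate $\wh{f}(v) = \Ex_{x \sim \calN}[f(x) h_v(x)]$ to additive error $\tau/(2\sqrt{s})$ by a simple Monte Carlo mean; since $|f| \leq \kappa$ in the $L^\infty/L^2$ sense captured by the distortion, the per-coordinate Monte Carlo cost is $\poly(\kappa s/\eps)$ queries, giving an overall $\poly(s\kappa/\eps)$ query complexity.

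For the classical algorithm, I would mimic Goldreich--Levin in the Hermite basis using \cref{lem:weight-estm} as the subroutine. First truncate to total degree $d = \Theta(\gamma/\eps)$, which by \cref{lem:hermite-concentration} loses only $\cO(\eps)$ mass. Then walk down a depth-$n$ tree whose node at depth $i$ fixes the first $i$ coordinates of a multi-index and keeps $\mathbf{W}^{S^*}(f)$ with all deeper coordinates free; keep a child iff its estimate of $\mathbf{W}^{S^*}(f)$ exceeds $\tau^2/2$. Since $f$ has at most $s$ coefficients of magnitude $\geq \tau$, at most $s/(\tau^2)\cdot \tau^2 = \cO(s)$ heavy children survive at each depth, so the surviving frontier has size $\cO(s)$ throughout; the total number of node evaluations is $\cO(ns)$. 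By \cref{lem:weight-estm} each evaluation costs $\poly(\gamma/\tau) = \poly(s\gamma/\eps)$ queries, yielding $\poly(ns\gamma/\eps)$ queries overall. Estimation of the located coefficients (step ii) is another Monte Carlo of cost $\poly(\gamma s/\eps)$ per coordinate by the Gaussian Poincar\'e inequality (\cref{lem:poincare}), which is absorbed into the stated bound.

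The main obstacle is ensuring that all the approximation slacks compose correctly: the TV error from the approximate Hermite sampler must be strictly below $\tau^2/\|f\|^2$ so that heavy coefficients are not lost, the distortion $\kappa$ must be correctly propagated through the Monte Carlo variance so that the per-sample query cost matches $\cO(\kappa)$, and the classical bucket estimates must be consistent across the $\poly(n s/\eps)$ tree nodes via a union bound, which forces an extra logarithmic factor absorbed in the $\poly$. A secondary subtlety is that the sparsity assumption of \cref{defn:sparse-concept} bounds the \emph{number} of heavy coefficients but not their individual support size; this is handled by the explicit degree truncation via \cref{lem:hermite-concentration}, which keeps the classical tree traversal over a polynomial-size frontier and keeps the estimated coefficients in a finite-dimensional Hermite subspace.
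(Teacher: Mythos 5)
Your proof is correct and takes essentially the same approach as the paper: invoke (quantum or classical) Gaussian Goldreich--Levin with threshold $\tau = \poly(\eps/s)$ to locate the heavy coefficients, estimate them, and output the truncated Hermite expansion as an improper hypothesis. The paper's proof is a two-sentence invocation of the GGL theorems; you have expanded each of those steps with the explicit calibration of the sampling TV slack, the per-node cost of the bucket search via $\mathbf{W}^{S^*}(f)$ estimation, and the degree truncation needed to keep the tree traversal finite, all of which are correct and consistent with the paper's more compressed argument (one small slip: the surviving-frontier bound should come from Parseval, giving $\cO(1/\tau^2) = \poly(s/\eps)$ nodes per level rather than the $\cO(s)$ you write, but this does not change the final $\poly$ bound).
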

\begin{proof}
    Since we know the concept is $(s,\eps)$-sparse we know that for every $f\in \mathcal{C}$ there is a $\tau$ such that running an algorithm for the Gaussian Goldreich-Levin problem gives us an improper agnostic learner for the function. We know that we can do this in $\poly(\kappa/\tau) = \poly(s\kappa/\eps)$ quantum queries and $\poly(n\gamma/\tau) = \poly(n\gamma\kappa/\eps)$ classical queries.
\end{proof}

\begin{corollary}[Learning juntas]
    Let $\mathcal{C}$ be the class of $f:\R^n \to \pmone$ such that $f$ is a polynomial on $k$ variables with individual degree 1. Then $\mathcal{C}$ can be agnostically learned with error $\varepsilon$ using $\poly(\frac{1}{\eps}^k)$ quantum queries and $\poly(n \frac{1}{\eps}^k)$ classical queries.
\end{corollary}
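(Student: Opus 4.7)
The plan is to reduce the claim to the preceding lemma on learning $(s,\eps)$-sparse concepts, by verifying that $\mathcal{C}$ is $(2^k, \eps)$-sparse in the Hermite basis with threshold $\tau = \sqrt{\eps / 2^k}$. First I would observe that every $f \in \mathcal{C}$ depends on some $k$-subset of coordinates $S \subseteq [n]$ and, since it has individual degree $1$ in each of those variables, is a multilinear polynomial on $\R^S$. Because the univariate Hermite polynomial of degree $1$ is just $h_1(x)=x$, the space of multilinear polynomials on $\R^S$ is precisely the span of the Hermite polynomials $H_T$ indexed by $T \subseteq S$ (viewed as $\mathbf{1}_T \in \{0,1\}^n$). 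Thus the Hermite expansion of $f$ is supported on at most $2^k$ coefficients, and since $f$ is $\pm 1$-valued, Parseval gives $\sum_{T \subseteq S} \wh{f}(\mathbf{1}_T)^2 = \norm{f}_2^2 = 1$.

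Next I would show sparsity by choosing $\tau = \sqrt{\eps/2^k}$. Trivially at most $s := 2^k$ coefficients can exceed $\tau$, and the squared mass on the remaining (small) coefficients is at most $2^k \cdot \tau^2 = \eps$, so $\mathbf{W}_\tau(f) \geq 1-\eps$. Since $\tau = \poly(\eps/s)$, this exactly matches \cref{defn:sparse-concept}, certifying that $\mathcal{C}$ is $(2^k,\eps)$-sparse. I would then check the other hypotheses of the sparse concept lemma: for any $\pm 1$-valued $f$ the distortion is $\kappa = \norm{\sqrt{\gaussianpdf}}_\infty / \norm{\sqrt{\gaussianpdf}}_2 = (2\pi)^{-n/4} \leq 1$, a uniform constant; and the quantity $\gamma = \E\norm{\nabla f}^2$ (taken in the appropriate weak/a.e.\ sense used by the Gaussian Goldreich-Levin subroutine) is controlled by a function of $k$ alone, since $f$ is a $k$-junta with a bounded multilinear representation.

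Plugging these bounds into the sparse concept learner gives $\poly(s\kappa/\eps) = \poly(2^k/\eps)$ quantum queries and $\poly(n s \gamma/\eps) = \poly(n \cdot 2^k/\eps)$ classical queries, and since $2^k \leq (1/\eps)^k$ for $\eps \leq 1/2$ these reduce to $\poly((1/\eps)^k)$ and $\poly(n(1/\eps)^k)$ respectively, as claimed. The main obstacle I expect is formalizing the $\gamma$ bound: the $\pm 1$ constraint together with polynomial form is very restrictive, and one must either identify $\mathcal{C}$ with a smoothed/regularized variant or argue that the Goldreich-Levin step only depends on $\gamma$ via the restricted functions $F_S f$, for which one can use \cref{lem:prop321} together with a $k$-variable Poincar\'e-type estimate. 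Once $\gamma$ is controlled uniformly over $\mathcal{C}$, the corollary follows by a direct substitution into the sparse concept lemma.
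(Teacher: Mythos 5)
The paper itself gives no proof of this corollary, so you are reconstructing the intended argument rather than matching a known one. Your skeleton is the natural one: observe that a multilinear $k$-junta has Hermite spectrum supported in a set of size $2^k$, set $\tau \approx \sqrt{\eps/2^k}$ to certify $(2^k,\eps)$-sparsity, bound $\kappa$, and substitute into the sparse-concept lemma. Your distortion computation disagrees slightly with the paper's blanket claim that $\kappa = 1$ for $\pmone$-valued functions, but both are $O(1)$ so that's cosmetic, and the final arithmetic ($2^k/\eps \leq (1/\eps)^{k+1}$ for $\eps \leq 1/2$) is fine.

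The genuine gap, however, is not where you locate it. You flag the $\gamma$ bound as the obstacle and sense that the $\pm 1$ constraint is ``very restrictive,'' but do not follow this to its conclusion: a polynomial $p$ on $\R^n$ that takes values in $\{-1,+1\}$ almost everywhere under the Gaussian satisfies $p^2 - 1 \equiv 0$ as a polynomial identity (a nonzero polynomial cannot vanish on a set of positive Gaussian measure), hence $p$ is the constant $+1$ or $-1$. So $\mathcal{C}$ as literally stated contains only the two constant functions; $\gamma$ is identically $0$, your Parseval step $\sum_{T\subseteq S}\wh f(\mathbf 1_T)^2 = 1$ degenerates to $\wh f(\vec 0)^2 = 1$, and the corollary is vacuous rather than merely awaiting a technical estimate. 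Any non-vacuous reading (e.g.\ $f:\R^n\to\R$ multilinear on $k$ coordinates with $\norm{f}_2 = 1$) does make the sparsity count $2^k$ and gives $\gamma = \sum_S |S|\wh f(S)^2 \leq k$ via \cref{lem:poincare}, so the classical part is fine --- but then the distortion $\kappa$ is no longer uniformly $O(1)$ and must re-enter the quantum bound. In short, the class definition in the statement needs repair before the argument can be completed, and the repair changes which hypothesis of the sparse-concept lemma is the nontrivial one to verify.
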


\section{Open problems and outlook}
\label{sec:open}
The most pressing question regarding our work is to find additional practical problems for which the QHT gives quantum advantage. To give more precise directions, we consider the following problems.

\begin{itemize}
    \item To realize the promise of quantum advantage for learning geometric concepts~\cite{KlivansOS08}, we ask: are there other examples of natural concept classes which are sparse concepts as in \cref{defn:sparse-concept}?
    \item Can the QHT be used to obtain learning algorithms for Linear Threshold Functions, or generally Polynomial Threshold Functions that beat the classical query lower bounds?
    \item Is there a large class of differential equations that can be simulated on a quantum computer efficiently using the QHT, while  being hard for classical computers? 
    \item Can other quantum systems be fast forwarded using the QHT?
\end{itemize}

\subsection*{Acknowledgements}
We thank Scott Aaronson, Robbie King, Daniel Rockmore, Troy Sewell, and Sophia Simon for helpful discussions. S. Jain was partially supported by a Berkeley CIQC grant and an Amazon AI Fellowship. VI is supported by an NSF Graduate Research Fellowship. NB is supported by the DOE Office of Science-ASCR, in particular the grant Novel Quantum Algorithms from Fast Classical Transforms.

\appendix

\section{Factorization of the QHO evolution operator} \label{sec:qho-factorization}
In this appendix we prove \cref{thm:factor}.

\factorization*

We will first prove a general algebraic claim, from which we will infer our required factorization.
\begin{claim}
Let $K_1, K_2, K_3$ be three operators obeying
\[
[K_1, K_2] = -2K_3, \quad
[K_3, K_1] = 2K_2, \quad
[K_3, K_2] = -2K_1.
\]
Then for all $t \in \mathbb{R}$,
\[
e^{(K_1+K_2)t} = e^{\alpha K_1} e^{\beta K_2} e^{\alpha K_1},
\]
where
\[
\alpha = \frac{\tan(t/\sqrt{2})}{\sqrt{2}}, 
\qquad 
\beta = \frac{\sin(\sqrt{2} t)}{\sqrt{2}}.
\]
\end{claim}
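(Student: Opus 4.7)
The plan is to identify the three bracket relations as the structure equations of a 3-dimensional Lie algebra (isomorphic to $\mathfrak{sl}_2(\mathbb{R})$) and to reduce the claimed identity to a straightforward $2\times 2$ matrix computation. Both sides are formal exponentials whose Baker-Campbell-Hausdorff expansions are expressible purely in iterated Lie brackets of $K_1$ and $K_2$, so the equality is a universal identity in the Lie algebra; it therefore suffices to verify it in any faithful representation. A convenient one is the Pauli-matrix model $K_1=\sigma_x$, $K_2=\sigma_z$, $K_3=\ri\sigma_y$, which one checks directly satisfies the three bracket relations.

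In this model $K_1^2=K_2^2=I$ and $\{K_1,K_2\}=0$, so $(K_1+K_2)^2=2I$ and hence
\[
e^{(K_1+K_2)t}=\cosh(\sqrt{2}\,t)\,I+\tfrac{\sinh(\sqrt{2}\,t)}{\sqrt{2}}(K_1+K_2).
\]
Similarly $e^{\alpha K_1}=\cosh\alpha\,I+\sinh\alpha\,K_1$ and $e^{\beta K_2}=\cosh\beta\,I+\sinh\beta\,K_2$. Using $K_1K_2=-K_3$, $K_2K_1=K_3$, $K_1K_3=-K_2$, $K_3K_1=K_2$, the triple product on the right-hand side expands to a combination of $I,K_1,K_2,K_3$ in which the $K_3$ term cancels by the symmetric sandwich structure. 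Collecting the $I,K_1,K_2$ coefficients via the double-angle identities $\cosh^2\alpha+\sinh^2\alpha=\cosh(2\alpha)$ and $2\sinh\alpha\cosh\alpha=\sinh(2\alpha)$ and matching against the LHS gives a system which uniquely pins down $\alpha(t)$ and $\beta(t)$ as the stated elementary functions of $t$.

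An alternative route, which I would actually prefer for its cleanness, is the differential ansatz: write $g(t)=e^{\alpha(t)K_1}e^{\beta(t)K_2}e^{\alpha(t)K_1}$ with $\alpha(0)=\beta(0)=0$, compute $g'(t)\,g(t)^{-1}$ using the adjoint-action identities
\[
e^{\alpha\,\mathrm{ad}_{K_1}}K_2=\cosh(2\alpha)K_2-\sinh(2\alpha)K_3,\qquad e^{\beta\,\mathrm{ad}_{K_2}}K_1=\cosh(2\beta)K_1+\sinh(2\beta)K_3,
\]
(both consequences of $\mathrm{ad}_{K_1}^2K_2=4K_2$ and its counterpart, derived directly from the three given brackets), and impose $g'g^{-1}=K_1+K_2$. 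Equating coefficients of $K_1,K_2,K_3$ produces a coupled first-order ODE system in $\alpha(t),\beta(t)$ that integrates under the initial conditions to the claimed closed form. The hard part in either route is not conceptual but rather careful bookkeeping: tracking coefficients through the Lie-bracket gymnastics and recognizing the right double-angle identities that collapse the combinatorics into the compact expressions for $\alpha$ and $\beta$.
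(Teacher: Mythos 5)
Your overall plan --- verify the universal Lie-algebraic identity in a faithful $2\times 2$ model, or compute $g'g^{-1}$ via adjoint actions --- is sound in principle and the second route is in fact what the paper does, but as you have set it up both routes would \emph{fail} to recover the stated $\alpha,\beta$, because they are faithful to bracket relations that are miscopied in the Claim. The intended instantiation $K_1 = \tfrac{\ri}{\sqrt2}\position^2$, $K_2 = \tfrac{\ri}{\sqrt2}\momentum^2$, $K_3 = \tfrac{\ri}{2}\anticommutator{\position,\momentum}$ actually satisfies $[K_1,K_2]=-2K_3$, $[K_3,K_1]=2K_1$, $[K_3,K_2]=-2K_2$ (an $\mathfrak{sl}_2(\mathbb R)$ triple with $K_3$ Cartan), \emph{not} $[K_3,K_1]=2K_2$, $[K_3,K_2]=-2K_1$ as printed. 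Your Pauli model $K_1=\sigma_x$, $K_2=\sigma_z$, $K_3=\ri\sigma_y$ does satisfy the printed relations, but there the matching you assert ``pins down $\alpha,\beta$'' actually gives $\sinh\beta = \sinh(\sqrt2\,t)/\sqrt2$, whereas $\beta=\sin(\sqrt2\,t)/\sqrt2$ yields $\sinh\beta = t-\tfrac{t^3}{6}+O(t^5)$ against a right side $t+\tfrac{t^3}{3}+O(t^5)$; wrong already at third order. The same defect infects your differential route: your adjoint formulas $e^{\alpha\,\mathrm{ad}_{K_1}}K_2=\cosh(2\alpha)K_2-\sinh(2\alpha)K_3$ are correct consequences of $\mathrm{ad}^2_{K_1}K_2=4K_2$, which is what the printed relations give, but the resulting ODE system has $\dot\alpha = \tfrac12-\tfrac{t^2}{2}+O(t^4)$ rather than the $\tfrac12+\tfrac{t^2}{4}+O(t^4)$ demanded by $\alpha=\tan(t/\sqrt2)/\sqrt2$.

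Under the corrected relations, $\mathrm{ad}^2_{K_1}K_2=4K_1$ (a different generator!), so $\mathrm{ad}^3_{K_1}K_2=0$: the adjoint series \emph{terminates} and $e^{\alpha\,\mathrm{ad}_{K_1}}K_2 = K_2 - 2\alpha K_3 + 2\alpha^2 K_1$. This nilpotency is precisely what the paper's proof exploits; your non-terminating $\cosh/\sinh$ expressions are a symptom of working in the wrong (compact) algebra. If you want the clean $2\times 2$ verification you originally envisioned, take instead
\[
K_3 = \begin{pmatrix}1&0\\0&-1\end{pmatrix},\quad K_1 = \begin{pmatrix}0&1\\0&0\end{pmatrix},\quad K_2 = \begin{pmatrix}0&0\\-2&0\end{pmatrix},
\]
which satisfies the corrected relations. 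Here $K_1,K_2$ are nilpotent so $e^{\alpha K_1}=I+\alpha K_1$, $e^{\beta K_2}=I+\beta K_2$, while $(K_1+K_2)^2=-2I$ gives $e^{(K_1+K_2)t}=\cos(\sqrt2\,t)I+\tfrac{\sin(\sqrt2\,t)}{\sqrt2}(K_1+K_2)$; matching the four entries recovers exactly $\alpha=\tan(t/\sqrt2)/\sqrt2$ and $\beta=\sin(\sqrt2\,t)/\sqrt2$. The lesson: the strategy is fine, but you cannot skip the coefficient matching --- had you carried it through, the discrepancy with the printed brackets would have surfaced immediately.
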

\begin{proof}
Let
\[
U(t) = e^{\alpha K_2} e^{\beta K_1} e^{\alpha K_2}.
\]
If we can prove
\[
U(t)^{-1} \frac{dU}{dt} = K_1 + K_2,
\]
then the claim follows.

By direct calculation,
\[
\frac{dU}{dt} 
= \dot{\alpha} K_2 e^{\alpha K_2} e^{\beta K_1} e^{\alpha K_2} 
+ e^{\alpha K_2} \dot{\beta} K_1 e^{\beta K_1} e^{\alpha K_2} 
+ e^{\alpha K_2} e^{\beta K_1} \dot{\alpha} K_2 e^{\alpha K_2}.
\]
Hence,
\[
U^{-1}\frac{dU}{dt} 
= e^{-\alpha K_2} e^{-\beta K_1} e^{-\alpha K_2} 
\left( \frac{dU}{dt} \right).
\]

Using the Campbell–Hausdorff lemma
\[
e^{X} Y e^{-X} = \sum_{j=0}^{\infty} \frac{1}{j!} [X,Y]_j,
\qquad
[X,Y]_0 = Y, \; [X,Y]_1 = [X,Y], \; [X,Y]_2 = [X,[X,Y]], \ldots,
\]
we compute the conjugations step by step:
\[
e^{-\alpha K_2} K_1 e^{\alpha K_2} = K_1 - 2\alpha K_3 + 2\alpha^2 K_2,
\]
\[
e^{-\beta K_1} K_2 e^{\beta K_1} = K_2 + 2\beta K_3 + 2\beta^2 K_1,
\]
\[
e^{-\alpha K_2} K_3 e^{\alpha K_2} = K_3 - 2\alpha K_2.
\]

Substituting these into the expression for $U^{-1} dU/dt$ and collecting terms yields
\[
U^{-1}\frac{dU}{dt} = A_1 K_1 + A_2 K_2 + A_3 K_3,
\]
with
\[
A_1 = 2\alpha \dot{\beta} + \dot{\beta}, \quad
A_2 = 2\dot{\alpha} - 4\dot{\alpha}\beta\alpha + 4\dot{\alpha}\beta\alpha^2 + 2\dot{\beta}\alpha^2, \quad
A_3 = 2\dot{\alpha}\beta - 4\dot{\alpha}\beta^2\alpha - 2\dot{\beta}\alpha.
\]

Differentiating $\alpha, \beta$ gives
\[
\dot{\alpha} = \tfrac{1}{2}\sec^2\!\left(\tfrac{t}{\sqrt{2}}\right),
\qquad
\dot{\beta} = \cos(t\sqrt{2}).
\]

Substituting back and simplifying trigonometrically shows that
\[
A_1 = 1, \quad A_2 = 1, \quad A_3 = 0.
\]
Thus
\[
U^{-1}\frac{dU}{dt} = K_1 + K_2,
\]
which implies
\[
U(t) = e^{(K_1+K_2)t}.
\]
\end{proof}

\begin{proof}[Proof of \cref{thm:factor}]
    Define $K_1 = \frac{\ri}{\sqrt{2}} \position^2, K_2 = \frac{\ri}{\sqrt{2}} \momentum^2, K_3 = \frac{\ri}{2} \anticommutator{\position,\momentum}$. Then we get the factorization
    \[
    \exp(\frac{i}{\sqrt{2}} \left( \position^2 +  \momentum^2 \right)) = \exp(\frac{\alpha \ri}{\sqrt{2}} \position^2) \exp(\frac{\beta \ri}{\sqrt{2}} \momentum^2) \exp(\frac{\alpha \ri}{\sqrt{2}} \position^2)
    \]

    This implies that for $\hamiltonian = \frac{1}{2}\left( \position^2 + \momentum^2 \right)$ we have

    \[
    \exp(- \ri \hamiltonian t) = \exp(\frac{ \ri \tan(t/2)\momentum^2}{2}) \exp(\frac{\ri \sin(t)\position^2}{2}) 
\exp(\frac{\ri \tan(t/2)\momentum^2}{2}). \qedhere
    \]
\end{proof}

We can similarly infer that in the discretized case, this factorization is accurate upto the weighted sum of the nested commutators.

\begin{lemma}
\label{lem:discrete factor}
    Let $\widetilde{U(t)} = \exp(\ri \frac{ \tan(t/2)\discretemomentum^2}{2}) \exp(\ri \frac{ \sin(t)\discreteposition^2}{2}) 
\exp(\ri \frac{\tan(t/2)\discretemomentum^2}{2})$, then we have that
    \begin{align*}
    \widetilde{U(t)}^{-1}\frac{d\widetilde{U(t)}}{dt} = -\ri \discretehamiltonian &+ \dot{\beta} \sum_{t=3}^{\infty}\frac{1}{t!2^{t/2}} \nestedcommutator{\beta' \ri \discreteposition^2}{\ri \discretemomentum^2}{t}+ \dot{\alpha}\sum_{t=3}^{\infty}\frac{1}{t!2^{t/2}} \nestedcommutator{\alpha' \ri \discretemomentum^2}{\ri \discreteposition^2}{t} \\ &+ \dot{\alpha}\sum_{t=2}^{\infty}\frac{1}{t!2^{t/2}} \nestedcommutator{\alpha' \ri \discretemomentum^2}{\ri \anticommutator{ \discreteposition,\discretemomentum}}{t} + \eta'
    \end{align*}

    where $\norm{\Pi_\lowenergy \eta'\Pi_\lowenergy} \leq \exp(-\lowenergy/4)$, $\alpha' = \tan(t/2)/2, \beta' = sin(t)/2, \dot{\alpha} = \tfrac{1}{2}\sec^2\!\left(\tfrac{t}{2}\right),
\dot{\beta} = \cos(t)$.
\end{lemma}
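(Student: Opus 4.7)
\textbf{Proof proposal for \cref{lem:discrete factor}.} The plan is to mimic the continuum derivation from the proof of \cref{thm:factor} step by step, but without invoking the closure of the Lie algebra generated by $\position^2, \momentum^2, \{\position,\momentum\}$, which fails in the discrete setting. First I would directly differentiate $\widetilde{U(t)} = e^{ia\discretemomentum^2}\,e^{ib\discreteposition^2}\,e^{ia\discretemomentum^2}$ with $a = \tan(t/2)/2$, $b = \sin(t)/2$, then multiply by $\widetilde{U(t)}^{-1}$ on the left, producing
\[
\widetilde{U(t)}^{-1}\frac{d\widetilde{U(t)}}{dt} = \dot a\,(i\discretemomentum^2) + \dot b\, e^{-ia\discretemomentum^2}(i\discreteposition^2)e^{ia\discretemomentum^2} + \dot a\, e^{-ia\discretemomentum^2}e^{-ib\discreteposition^2}(i\discretemomentum^2)e^{ib\discreteposition^2}e^{ia\discretemomentum^2}.
\]

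Next I would expand each conjugation via Hadamard's formula $e^X Y e^{-X} = \sum_{k\geq 0}\frac{1}{k!}[X,Y]_k$, yielding three infinite nested-commutator series. The key structural observation is that in the continuum these series truncate: $[\position^2,\momentum^2]_k = 0$ for $k\geq 3$ (since $[\position^2,\momentum^2]_2 = 8i\position^2$ is itself proportional to $\position^2$, which commutes with $\position^2$), and similarly $[\momentum^2,\{\position,\momentum\}]_k = 0$ for $k \geq 2$. This is precisely what makes the continuum proof in \cref{sec:qho-factorization} collapse to finitely many terms whose trigonometric coefficients sum to $-i\hat H$. In the discrete case I would simply split each Hadamard series into its low-depth part (depths $k\leq 2$ for the position/momentum branches, $k\leq 1$ for the anticommutator branch) and a tail (depths $\geq 3$ and $\geq 2$ respectively). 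The tails, after folding the factors of $a\,i\discreteposition^2$ or $a\,i\discretemomentum^2$ into the brackets, are exactly the three infinite sums appearing in the statement.

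It then remains to show that the collected low-depth contributions equal $-i\discretehamiltonian + \eta'$. In the continuum this follows from the three closure identities $[\position^2,\momentum^2]=2i\{\position,\momentum\}$, $[\position^2,\{\position,\momentum\}]=4i\position^2$, $[\momentum^2,\{\position,\momentum\}]=-4i\momentum^2$, combined with the trigonometric identities $\sin(t) = 2\sin(t/2)\cos(t/2)$ and $\tan(t/2) = \sin(t)/(1+\cos(t))$ that make the coefficients collapse (as verified explicitly in the $A_1,A_2,A_3$ computation of \cref{sec:qho-factorization}). In the discrete case, the trigonometric bookkeeping is identical, but the algebraic identities hold only up to the defect operators $\Delta = [\discreteposition^2,\discretemomentum^2]_2 - 4i\discreteposition^2$ and $\Delta' = [\discretemomentum^2,\{\discreteposition,\discretemomentum\}] - 4i\discretemomentum^2$ of \cref{lem:delta-bounds,lem:delta-prime-bounds}, together with the fact that $[\discreteposition,\discretemomentum] - iI$ is not zero but is small on the low-energy subspace (from \cref{fact:somma-discretization-facts}). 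I would define $\eta'$ to be exactly the sum of defect-induced corrections produced when replacing each continuum identity by its discrete counterpart at depth $\leq 2$.

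The main obstacle is proving $\|\lowenergyprojector\,\eta'\,\lowenergyprojector\| \leq e^{-N/4}$, because $\Delta$ and $\Delta'$ have operator norm polynomial in $M$ while being exponentially small only when sandwiched by low-energy projectors. I would handle this with the same projector-splitting trick developed in \cref{lem:single-monomial-bound}: insert $\mediumenergyprojector + (I-\mediumenergyprojector)$ between each operator factor in every defect term, apply \cref{lem:position-leakage} and its momentum analogue to show the leaked pieces are exponentially suppressed, and use the low-energy defect bounds from \cref{lem:delta-bounds,lem:delta-prime-bounds} on the non-leaked pieces. Since $\eta'$ is constructed from only a fixed, finite number of defect insertions (one per continuum identity at depths $\leq 2$) and the trigonometric prefactors are $\cO(1)$ for $|t|\leq \pi$, the resulting bound is $e^{-\Omega(M)}$, which comfortably dominates the required $e^{-N/4}$.
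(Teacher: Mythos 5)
Your proposal follows essentially the same route as the paper's own proof: differentiate the factored unitary, expand each conjugation via Hadamard's formula, retain the depth~$\leq 2$ terms that close the Lie algebra in the continuum, push the remaining depths into the three tail sums, and absorb the discrete defects (from $[\discreteposition,\discretemomentum]\neq\ri I$, $\Delta$, $\Delta'$) into $\eta'$, bounding them on the low-energy subspace with the projector-splitting machinery of \cref{lem:position-leakage,lem:delta-bounds,lem:delta-prime-bounds}. One caveat worth noting (which the paper's terse proof also glosses over): in the nested Piece~$\dot a\, e^{-\ri a\discretemomentum^2}e^{-\ri b\discreteposition^2}(\ri\discretemomentum^2)e^{\ri b\discreteposition^2}e^{\ri a\discretemomentum^2}$, the tail produced by the \emph{inner} conjugation is subsequently conjugated by $e^{\pm\ri a\discretemomentum^2}$, so it is not literally one of the three unconjugated tail sums in the statement — the discrepancy $e^{-\ri a\discretemomentum^2}(\text{tail})e^{\ri a\discretemomentum^2}-\text{tail}$ must also go into $\eta'$ and be controlled (e.g.\ by arguing that $e^{\ri a\discretemomentum^2}$ approximately preserves the low-energy subspace for $|a|=\cO(1)$), a point your description of $\eta'$ as coming solely from ``depth $\leq 2$ defect insertions'' does not explicitly cover.
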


\begin{proof}
    We instantiate the operators as above but with the discretized versions, that is $K_1 = \frac{\ri}{\sqrt{2}} \discreteposition^2, K_2 = \frac{\ri}{\sqrt{2}} \discretemomentum^2, K_3 = \frac{\ri}{2} \anticommutator{\discreteposition,\discretemomentum}$. Now, the proof does not go through as is but we do the same substitutions but keep the terms that `went to 0' due to the commutator relations.

    \[
e^{-\frac{\alpha \ri}{\sqrt{2}} \discretemomentum^2} \frac{\ri}{\sqrt{2}} \discreteposition^2 e^{\frac{\alpha \ri}{\sqrt{2}} \discretemomentum^2} = K_1 - 2\alpha K_3 + 2\alpha^2 K_2 + \sum_{t=3}^{\infty}\frac{1}{t!2^{t/2}} \nestedcommutator{\alpha \ri \discretemomentum^2}{\ri \discreteposition^2}{t},
\]
\[
e^{-\frac{\ri \beta}{\sqrt{2}} \discreteposition^2} \frac{\ri}{\sqrt{2}} \discretemomentum^2 e^{\frac{\ri \beta}{\sqrt{2}} \discreteposition^2} = K_2 + 2\beta K_3 + 2\beta^2 K_1 +\sum_{t=3}^{\infty}\frac{1}{t!2^{t/2}} \nestedcommutator{\beta \ri \discreteposition^2}{\ri \discretemomentum^2}{t},
\]
\[
e^{-\frac{\alpha \ri}{\sqrt{2}} \discretemomentum^2} \frac{\ri}{2} \anticommutator{\discreteposition,\discretemomentum} e^{\frac{\alpha \ri}{\sqrt{2}} \discretemomentum^2} = K_3 - 2\alpha K_2 + \sum_{t=2}^{\infty}\frac{1}{t!2^{t/2}} \nestedcommutator{\alpha \ri \discretemomentum^2}{\ri \anticommutator{\discreteposition,\discretemomentum}}{t}.
\]

Substituting this into the expression
\[
\frac{d\widetilde{U}}{dt} 
= \dot{\alpha} K_2 e^{\alpha K_2} e^{\beta K_1} e^{\alpha K_2} 
+ e^{\alpha K_2} \dot{\beta} K_1 e^{\beta K_1} e^{\alpha K_2} 
+ e^{\alpha K_2} e^{\beta K_1} \dot{\alpha} K_2 e^{\alpha K_2} + \eta
\] yields that upto an additive term multiplied by $\eta$ with $\norm{\eta}$ exponentially small,
\begin{align*}
    \widetilde{U(t)}^{-1}\frac{d\widetilde{U(t)}}{dt} = -\ri \discretehamiltonian &+ \dot{\beta} \sum_{t=3}^{\infty}\frac{1}{t!2^{t/2}} \nestedcommutator{\beta' \ri \discreteposition^2}{\ri \discretemomentum^2}{t}+ \dot{\alpha}\sum_{t=3}^{\infty}\frac{1}{t!2^{t/2}} \nestedcommutator{\alpha' \ri \discretemomentum^2}{\ri \discreteposition^2}{t} \\ &+ \dot{\alpha}\sum_{t=2}^{\infty}\frac{1}{t!2^{t/2}} \nestedcommutator{\alpha' \ri \discretemomentum^2}{\ri \anticommutator{ \discreteposition,\discretemomentum}}{t} + \eta'.\qedhere
    \end{align*}
\end{proof}

\section{Centered discrete Fourier transform}
\label{app:cQFT}

The centered discrete Fourier transform $F \in \mathbb C^{M \times M}$ used in~\cite{somma2016}, where $M=2^m$, is defined via its entries:
\begin{align}
    F_{jk} = \frac 1 {\sqrt M} \exp \left(\ri \frac{2 \pi j k}M \right)\; ,
\end{align}
where $-M/2 \le j,k \le M/2-1$ label the rows and columns of the matrix. The difference with the standard QFT is only due to the shifting of these labels, where $0 \le j,k \le M-1$, and hence these transforms are related via cyclic permutations. In particular, if $m \ge 2$, we can write
\begin{align}
    F =  \sigma_z^{0}  . QFT. \sigma_z^{0} \;,
\end{align}
where $\sigma_z^0$ is the diagonal Pauli matrix acting on the last of the $m$ qubits; i.e.,  
$ \sigma_z^{0}=\one_2 \otimes \ldots \otimes \one_2 \otimes \sigma_z$
with $\one_2 = \begin{pmatrix}
    1 & 0 \cr 0 & 1
\end{pmatrix}$ and $\sigma_z = \begin{pmatrix}
    1 & 0 \cr 0 & -1
\end{pmatrix}$.

The complexity of $F$ is then that of the QFT plus two single qubit gates. For the schoolbook version of the QFT, this complexity is also $\cO(m^2)$. Using the quasilinear time algorithm for QFT~\cite{CW00} this can be implemented with $\widetilde{\cO}(m)$ gates.

\section{Some properties of the Plancherel-Rotach asymptotics}
\label{app:PRproperties}

For $n \ge 1$, consider the Plancherel-Rotach asymptotic functions
\begin{align}
     \tilde  \prvar_n(x) : =  \frac {2^{\frac 1 4}}{\pi^{\frac 1 2}n^{\frac 1 4}}\frac 1 {\sqrt{\sin \varphi(x)}}\left( \sin \left[\left ( \frac n 2 + \frac 1 4 \right) (\sin (2 \varphi(x))-2 \varphi(x)) + \frac{3 \pi}4\right] \right) \;,
\end{align}
in a domain ${\cal D}_c$ where $|x| \le c' \sqrt{2n+1}$ and $0\le c'<1$. Here, $\varphi(x):=\arccos(x/\sqrt{2n+1})$ satisfies
$\pi-c \ge \varphi(x) \ge c$. It follows that
$\sqrt{\sin \varphi(x)}$ is bounded from below by a constant and, in this domain, we can be simply bounde
the magnitude as
\begin{align}
    |\tilde \prvar_n(x)| =\cO(1/n^{1/4}) \;.
\end{align}

Next we consider the derivative. We obtain
\begin{align}
\nonumber
  &  \frac{\rD}{\rD x} \tilde \prvar_n(x) =  \frac {2^{\frac 1 4}}{\pi^{\frac 1 2}n^{\frac 1 4}} \left(-\frac {\cos(\varphi(x)) \frac{\rD}{\rD x} \varphi(x)} {2 (\sin \varphi(x))^{3/2}} \sin \left[\left ( \frac n 2 + \frac 1 4 \right) (\sin (2 \varphi(x))-2 \varphi(x)) + \frac{3 \pi}4\right] + \right . \\
   & \left . \frac 1 {\sqrt{\sin \varphi(x)}}
   \cos \left[\left ( \frac n 2 + \frac 1 4 \right) (\sin (2 \varphi(x))-2 \varphi(x)) + \frac{3 \pi}4\right] \left( \frac n 2 + \frac 1 4 \right)
   (\cos (2 \varphi(x))-2)  \frac{\rD}{\rD x}\varphi(x)\right)\;.
\end{align}
Since $\cos(\varphi(x))$ and $\sin(\varphi(x))$
are bounded by constants, and noting 
\begin{align}
  \frac{\rD}{\rD x}\varphi(x) =   \frac 1 {\sqrt{2n+1-x^2}}  \implies \left |  \frac{\rD}{\rD x}\varphi(x) \right| = \cO(1/n^{1/2}) \ {\rm for } \ x \in {\cal D}_c\;,
\end{align}
we obtain
\begin{align}
    \left | \frac{\rD}{\rD x} \tilde \prvar_n(x)\right| = \cO(n^{1/4})
\end{align}
in ${\cal D}_c$.

Last, we consider the magnitude of the second derivative. To this end we note
\begin{align}
  \frac{\rD^2}{\rD^2 x}\varphi(x)   = \frac {x}{(2n+1-x^2)^{3/2}} \implies \left |  \frac{\rD^2}{\rD^2 x}\varphi(x) \right| = \cO(1/n) \ {\rm for } \ x \in {\cal D}_c\;.
\end{align}
Using the chain rule it is possible to show that 
\begin{align}
    \left | \frac{\rD^2}{\rD^2 x} \tilde \prvar_n(x) \right| = \cO(n^{3/4})
\end{align}
in ${\cal D}_c$.
The term that determines this upper bound is a term 
that contains
\begin{align}
 \frac {2^{\frac 1 4}}{\pi^{\frac 1 2}n^{\frac 1 4}}   \left ( \frac n 2 + \frac 1 4 \right)^2 \left( \frac{\rD}{\rD x}\varphi(x)\right)^2 = \cO(n^{3/4})\;.
\end{align}

\section{Gaussian Goldreich-Levin}
\label{sec:ggl}
In this section, we study a fundamental learning problem first introduced by Goldreich-Levin~\cite{GL89} in the context of cryptography.

\begin{definition}[Goldreich-Levin problem, informal]
   Given query access to $f: \mathbb{F}_2^n \rightarrow \mathbb{F}_2$,  output a list of all the linear functions which are $\tau$-correlated with $f$. 
\end{definition}

It is known that one can solve the Goldreich-Levin problem with constant success probability with $O(n/\tau^2)$ classical queries \cite{Goldreich_2001}. The factor $1/\tau^2$ is optimal since Parseval's only gaurantees that the list $L$ of $\tau$-correlated functions is of size $O(1/\tau^2)$. In the quantum case, we can do much better and get $O(1/\tau^2)$ queries.
Here, we ask a similar question but for functions over the domain $\mathbb{R}^n$.

\begin{definition}[Gaussian Goldreich-Levin]
    Given query access to a function $f: \mathbb{R}^n \rightarrow \mathbb{R}$, return a list $L$ of indices of Hermite polynomials such that
    \begin{itemize}
        \item $|\wh{f}(S)| \geq \tau \implies S\in L$
        \item $S \in L \impliedby |\wh{f}(S)| \geq \tau/2$
    \end{itemize}
\end{definition}

The main claim of this section is that this real-analogue of the Goldreich-Levin problem is also tractable. To prove this, we will need to build some machinery analogous to the classical Goldreich-Levin algorithm, a great exposition for which can be found in the book by O'Donnell \cite{ODonnell_2014}. We now prove an analogue of the Godlreich-Levin theorem for Hermite coefficients. Below $\gamma_f = \E_{x\sim \calN} \norm{f(x)}_2$, and we drop the subscript $f$ when it's clear from context.

\begin{theorem}[Gaussian Goldreich-Levin]
    Given query access to a function $f: \mathbb{R}^n \rightarrow \mathbb{R}$, \cref{alg:goldreich-levin} runs in time $\poly(n\gamma/\tau)$. It returns a list $L$ of indices of Hermite polynomials such that
    \begin{itemize}
        \item $|\hat{f}(S)| \geq \tau \implies S\in L$
        \item $S \in L \impliedby |\hat{f}(S)| \geq \tau/2$
    \end{itemize}
\end{theorem}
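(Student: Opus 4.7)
The plan is to adapt the classical Goldreich--Levin bucketing algorithm to the Hermite setting, using as the main subroutine the weight-estimation primitive of Lemma~\ref{lem:weight-estm}. First, by Lemma~\ref{lem:hermite-concentration}, any $S$ with $|\wh{f}(S)| \geq \tau/2$ must satisfy $|S| \leq D := O(\gamma/\tau^2)$, so it suffices to restrict attention to $S \in \{0,1,\ldots,D\}^n$. I would then organize the search by grouping Hermite coefficients into buckets indexed by prefixes: for $\alpha = (\alpha_1,\ldots,\alpha_k) \in \{0,\ldots,D\}^k$ with $k \leq n$, let $B_\alpha = \{S : S_i = \alpha_i \ \text{for}\ i \leq k,\ \|S\|_\infty \leq D\}$ and define $W(B_\alpha) = \sum_{S \in B_\alpha} \wh{f}(S)^2$. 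Repeating the argument of Lemma~\ref{lem:prop321} with fixed block $J = \{1,\ldots,k\}$ and free block $\bar J$ yields the explicit expectation
\[
W(B_\alpha) \;=\; \E_{z \sim \calN^{n-k}}\,\E_{y,y' \sim \calN^{k}}\!\left[f(y,z)\,f(y',z)\prod_{i=1}^{k} h_{\alpha_i}(y_i)\,h_{\alpha_i}(y_i')\right],
\]
which is the direct generalization of the integral representation of $\mathbf{W}^S$ used in Lemma~\ref{lem:weight-estm}; the same Poincar\'e-based Monte Carlo argument then estimates $W(B_\alpha)$ to additive accuracy $\tau^2/8$ with $\poly(\gamma/\tau) \log(1/\delta)$ queries.

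The algorithm then performs a level-by-level refinement. Starting at $k = 0$ with the single bucket $B_\emptyset$, at each level every surviving bucket $B_\alpha$ is split into its $D+1$ children $B_{(\alpha,0)},\ldots,B_{(\alpha,D)}$; each child's weight is estimated, and only those with estimated weight at least $3\tau^2/4$ are retained. By Parseval's identity the total true weight across active buckets at any level is at most $\|f\|_2^2 = O(1)$, so at most $O(1/\tau^2)$ buckets survive per level, giving a total of $O(nD/\tau^2) \cdot \poly(\gamma/\tau) = \poly(n\gamma/\tau)$ queries. After $n$ levels each surviving bucket is a singleton $\{S\}$, and the output list $L$ consists of these singletons. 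A standard triangle-inequality argument, applied level by level to the weight estimates, shows that any $S$ with $\wh{f}(S)^2 \geq \tau^2$ must survive (each ancestor has true weight $\geq \tau^2$, hence estimated weight $\geq 7\tau^2/8$), while any $S \in L$ has true weight at least $3\tau^2/4 - \tau^2/8 > \tau^2/4$, i.e.\ $|\wh{f}(S)| \geq \tau/2$; a union bound over the $\poly(n,D,1/\tau)$ estimates fixes the overall failure probability.

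The main obstacle is controlling the Monte Carlo variance for bucket weights uniformly across all depths $k$ and prefixes $\alpha$. A naive Cauchy--Schwarz bound on the variance of $f(y,z)f(y',z)\prod_i h_{\alpha_i}(y_i)h_{\alpha_i}(y_i')$ introduces an $\E[h_\alpha(Y)^4]$ factor, which, because Hermite polynomials are unbounded, can a priori grow with the degrees $\alpha_i$. Ensuring the $\poly(\gamma/\tau)$ query bound of Lemma~\ref{lem:weight-estm} carries over to buckets of arbitrary depth thus requires either a Gaussian hypercontractivity bound (which relates $L^4$ and $L^2$ norms of degree-$D$ polynomials up to a factor polynomial in $D$) or a more direct computation exploiting orthogonality of Hermite polynomials; either way the cost remains polynomial in $D = O(\gamma/\tau^2)$, yielding the claimed $\poly(n\gamma/\tau)$ overall runtime.
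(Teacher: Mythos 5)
Your proposal follows essentially the same prefix-bucketing strategy as the paper's Algorithm~\ref{alg:goldreich-levin}: level-by-level refinement of coordinate prefixes, estimating each bucket weight via the Monte-Carlo representation of Lemma~\ref{lem:weight-estm}, pruning by threshold, and bounding the number of surviving buckets per level by Parseval. You are also right to flag that the variance of the Monte-Carlo estimator for a bucket is not controlled by the Poincar\'e bound on $\Var(f)$ alone, which the paper's proof of Lemma~\ref{lem:weight-estm} elides. One caveat: your parenthetical that Gaussian hypercontractivity relates the $L^4$ and $L^2$ norms of degree-$D$ polynomials ``up to a factor polynomial in $D$'' is not correct -- the $(4,2)$-hypercontractive inequality gives a factor $3^{D/2}$, which is exponential in degree. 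The cleaner observation, sufficient for the paper's property-testing applications, is that when $|f|\leq 1$, the estimator $Z = f(y,z)f(y',z)\,h_S(y)\,h_S(y')$ has $\E[Z^2] \leq \E\!\left[h_S(y)^2\right]\E\!\left[h_S(y')^2\right] = 1$ because $y$ and $y'$ are sampled independently, so the variance is $O(1)$ regardless of degree; for unbounded $f$ one genuinely needs some fourth-moment or distortion hypothesis, which neither your proposal nor the paper makes explicit.
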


\begin{algorithm}[ht]
\caption{Gaussian Goldreich-Levin learning algorithm}\label{alg:goldreich-levin}
\begin{algorithmic}[1]
\State Set $m = O(\frac{\gamma^2}{\tau})$
\State $L \gets (*,*,\dots,*)$
\For{$k = 1$ \textbf{to} $n$}
    \For{\textbf{each} $S \in L$, $S = (a_1, \dots, a_{k-1}, *, \dots, *)$}
        \For{$a_k = 0 \text{ to } m$}
            \State Let $S_{a_k} = (a_1, \dots, a_{k-1}, a_k, *, \dots, *)$
            \State Estimate $\mathbf{W}^{S_{a_k}}(f)$ to within $\pm \tau^2/4$ with probability at least $1 - \delta$\hfill \cref{lem:weight-estm}
            \If{the estimate of $\mathbf{W}^{S_{a_k}}(f)$ is at least $\tau^2/2$}
                \State Add $S_{a_k}$ to $L$
            \EndIf
        \EndFor
    \State Remove $S$ from $L$
    \EndFor
\EndFor

\State \Return $L$
\end{algorithmic}
\end{algorithm}

\begin{proof}
    Because $(\tau/2)^2 + \tau^2/4 \leq \tau^2/2$, the only Hermite coefficients in $L$ are ones which are at least $\tau/2$. Since $\tau^2 - \tau^2/4 > \tau^2/2$, all Hermite coefficients which are at least $\tau$ are in the set $L$. 
    The runtime can be analysed by multiplying the runtime of all three loops and the complexity of estimation.
    \begin{itemize}
        \item The outermost loop runs for $n$ steps,
        \item The middle loop runs in $O(1/\tau^2)$ steps since at any given point $|L| \leq O(1/\tau^2)$,
        \item The runtime of the inner loop is $m = \theta(\gamma^2/\tau)$,
        \item The query complexity of estimation is $\poly(\gamma/\tau)$ by \cref{lem:weight-estm}.
    \end{itemize}
    Thus, the overall query complexity and runtime is $\poly(n\gamma/\tau)$.
\end{proof}

Note that while our algorithm assumes knowledge of $\gamma_f$, one can always estimate $\gamma_f$ by simulating gradient queries.

\DeclareUrlCommand{\Doi}{\urlstyle{sf}}
\renewcommand{\path}[1]{\small\Doi{#1}}
\renewcommand{\url}[1]{\href{#1}{\small\Doi{#1}}}
\bibliographystyle{alphaurl}
\bibliography{refs}

\end{document}